\renewcommand{\S}{\mathcal{S}}
\newcommand{\A}{\mathrm{LY}}
\newcommand{\D}{\mathbb{D}}
\newcommand{\Q}{\mathbb{Q}}
\newcommand{\N}{\mathbb{N}}
\newcommand{\Z}{\mathbb{Z}}
\newcommand{\R}{\mathbb{R}}
\newcommand{\C}{\mathbb{C}}
\renewcommand{\P}{\mathbb{P}}
\newcommand{\weak}{\stackrel{\mathcal{D}}{\rightarrow}}
\newcommand{\inv}{^{\dagger}}
\newcommand{\T}{\mathbb{T}}
\renewcommand{\H}{\mathcal{H}}
\newcommand{\x}{\mathbf{x}}
\newcommand{\balpha}{\mathbf{\alpha}}
\newcommand{\bphi}{\mathbf{\phi}}
\newcommand{\bd}{\mathbf{d}}
\newcommand{\z}{\mathbf{z}}
\newcommand{\mm}{\mathbf{m}}
\newcommand{\lv}{\ell}
\newcommand{\xv}{\mathbf{x}}
\newcommand{\yv}{\mathbf{y}}
\DeclarePairedDelimiterX{\floor}[1]{\lfloor}{\rfloor}{#1}
\DeclarePairedDelimiterX{\ceil}[1]{\lceil}{\rceil}{#1}
\DeclarePairedDelimiterX{\card}[1]{\ellert}{\rvert}{#1}
\DeclarePairedDelimiterX{\abs}[1]{\ellert}{\rvert}{#1}
\DeclarePairedDelimiterX{\norm}[1]{\ellert}{\rVert}{#1}
\DeclarePairedDelimiterX{\tuple}[1]{\lparen}{\rparen}{#1}
\DeclarePairedDelimiterX{\parens}[1]{\lparen}{\rparen}{#1}
\DeclarePairedDelimiterX{\brackets}[1]{\lbrack}{\rbrack}{#1}
\DeclarePairedDelimiterX{\set}[1]\{\}{#1}
\let\Pr\relax
\DeclarePairedDelimiterXPP{\Pr}[1]{\mathbb{P}}[]{}{#1}
\DeclarePairedDelimiterXPP{\PrX}[2]{\mathbb{P}_{#1}}[]{}{#2}
\DeclarePairedDelimiterXPP{\Ex}[1]{\mathbb{E}}[]{}{#1}
\DeclarePairedDelimiterXPP{\ExX}[2]{\mathbb{E}_{#1}}[]{}{#2}
\DeclarePairedDelimiterX{\xt}[1]{\lbrack}{\rbrack}{#1}
\theoremstyle{plain}
\newtheorem{thm}{Theorem}[section]
\newtheorem{lem}[thm]{Lemma}
\newtheorem{prop}[thm]{Proposition}
\newtheorem{cor}[thm]{Corollary}
\newtheorem*{thm*}{Theorem} 
\theoremstyle{definition}
\newtheorem{Def}[thm]{Definition}
\newtheorem{example}[thm]{Example}
\newtheorem{remark}[thm]{Remark}
\DeclareMathOperator{\diag}{diag}
\DeclareMathOperator{\im}{Im}
\DeclareMathOperator{\supp}{supp}
\DeclareMathOperator{\reg}{reg}
\DeclareMathOperator{\sing}{sing}
\DeclareMathOperator{\mult}{m}
\DeclareMathOperator{\vol}{vol}
\newcommand{\addresseshere}{%
	\enddoc@text\let\enddoc@text\relax
}
\begin{document}
		\title[Gap distributions of Fourier quasicrystals]{Gap distributions of Fourier quasicrystals\\ via Lee-Yang polynomials}
		\author{Lior Alon and Cynthia Vinzant}
	 
\begin{abstract}
Recent work of Kurasov and Sarnak provides a method for constructing one-dimensional Fourier quasicrystals (FQ) from the torus zero sets of a special class of multivariate polynomials called Lee-Yang polynomials. In particular, they provided a non-periodic FQ with unit coefficients and uniformly discrete support, answering an open question posed by Meyer.
Their method was later shown to generate all one-dimensional Fourier quasicrystals with $\mathbb{N}$-valued coefficients ($ \mathbb{N} $-FQ). 

In this paper, we characterize which Lee-Yang polynomials  
give rise to non-periodic $ \mathbb{N} $-FQs with unit coefficients and uniformly discrete support, and show that this property is generic among Lee-Yang polynomials. We also show that the infinite sequence of gaps between consecutive atoms of any $\mathbb{N}$-FQ 
has a well-defined distribution, which, under mild conditions, is absolutely continuous. This generalizes previously known results for the spectra of quantum graphs to arbitrary $\mathbb{N}$-FQs. Two extreme examples are presented: first, a sequence of $\mathbb{N}$-FQs whose gap distributions converge to a Poisson distribution. Second, a sequence of random Lee-Yang polynomials that results in random  $\mathbb{N}$-FQs whose empirical gap distributions converge to that of a random unitary matrix (CUE).
\end{abstract}

\maketitle

\section{Introduction}
An $ \N $-FQ is a counting measure of a discrete set (with multiplicities), whose Fourier transform is also supported on a discrete set and has moderate growth. A recent sequence of works \cite{KurasovSarnak20jmp,OlevUlan20,AloCohVin} established that all one-dimensional $ \N $-FQs arise from the torus zero sets of a special class of multivariate polynomials called Lee-Yang polynomials.

\begin{figure}[h!]
	\parbox[m]{5in}{
		\begin{center}
			\includegraphics[width=0.6\textwidth]{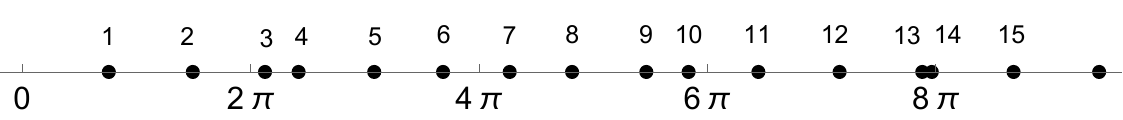}
	\end{center}\strut } \vspace{-.3in}
	\parbox[m]{5in}{  
		\begin{center}
			\includegraphics[width=0.3\textwidth]{un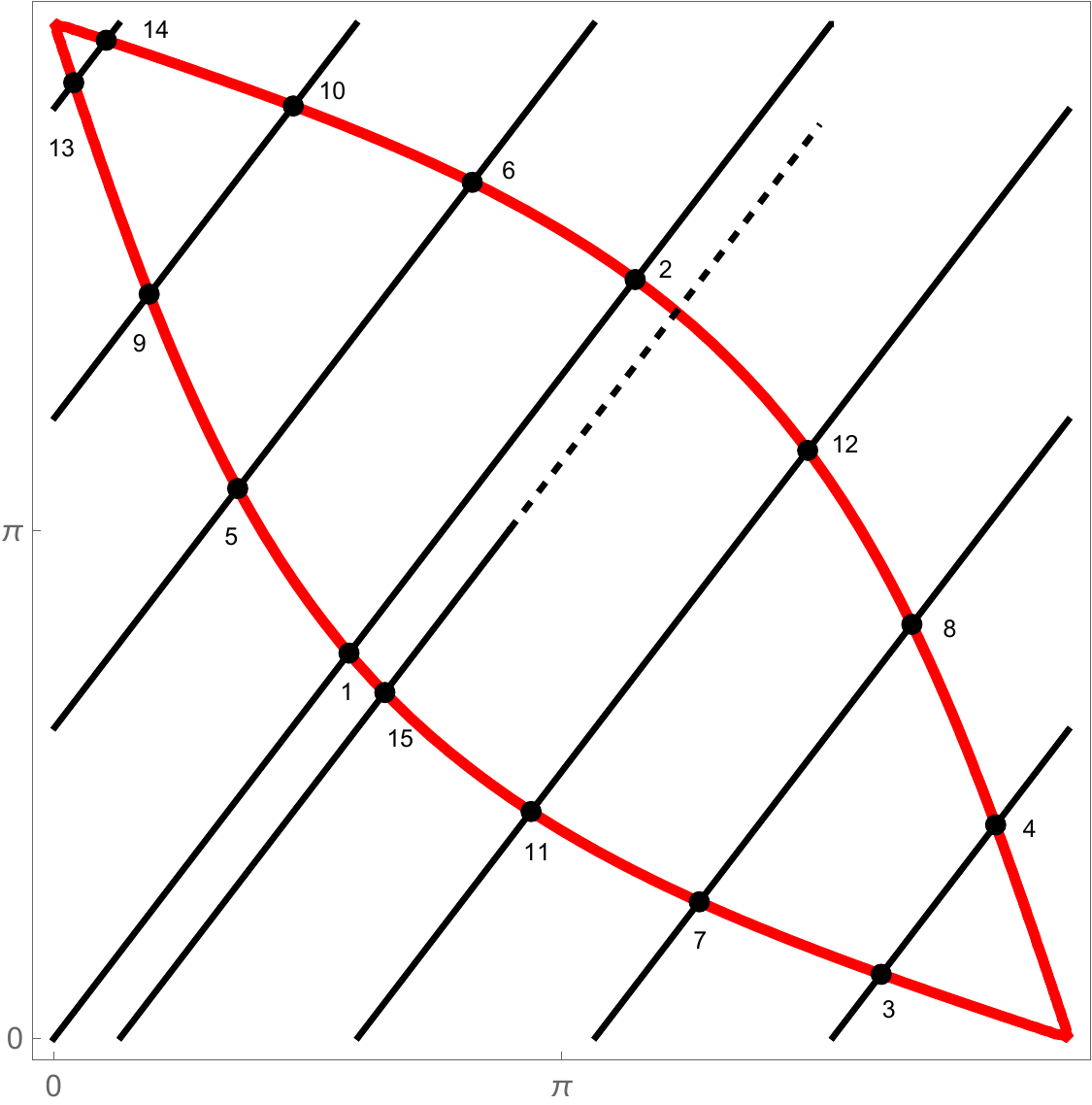} \hspace{.4in}
			\includegraphics[width=0.288\textwidth]{folded.pdf} 
		\end{center}
	\strut}\vspace{.1in}
	\caption{The Kurasov-Sarnak construction of an $\N$-FQ from the zero-set of a Lee-Yang polynomial
		in the torus $\T^2$. See \Cref{ex:running}.
	}
	\label{fig: flow on secman}
\end{figure}


Given a discrete periodic set $\Lambda\subset\mathbb{R}$ with period $\Delta>0$, the Poisson summation formula states that 
  \[\sum_{x\in\Lambda}f(x)=\frac{2\pi}{\Delta}\sum_{k\in\Lambda^{*}}\hat{f}(k),\quad\text{for all}\ f\in\mathcal{S}(\mathbb{R}),\]
  where $ \Lambda^{*}=\{k\in\R\ : \ \forall x\in\Lambda, e^{ikx}=1 \}$ and 
$\mathcal{S}(\mathbb{R})$ is the space of Schwartz functions: smooth functions on $ \R $ that rapidly decay to zero at $ \pm\infty $ (properly defined in \Cref{sec:backgroundCrystals}). Fourier quasicrystals are generalizations of the Poisson summation formula to sets which are not periodic but exhibit similar features.   
\subsection{Fourier quasicrystals } Elements in the dual space $\mathcal{S}'(\mathbb{R})$ are called \emph{tempered distributions}, and the Fourier transform of $\mu\in\mathcal{S}'(\mathbb{R})$ is the tempered distribution $\hat{\mu}$ defined by duality, $\int f d\hat{\mu}:=\int\hat{f}d\mu$. For example, if $ \Lambda $ is periodic as above, then $\mu=\sum_{x\in\Lambda}\delta_x$ is tempered and its Fourier transform is $\hat{\mu}=\frac{2\pi}{\Delta}\sum_{k\in\Lambda^{}}\delta_k$ by the Poisson summation formula. However, in general, if a tempered distribution $\mu$ is supported on a non-periodic discrete set $\Lambda$, i.e., $\mu=\sum_{x\in\Lambda}a_x\delta_x$ for some complex coefficients $(a_x)_{x\in\Lambda}$, it is unlikely that $ \hat{\mu} $ will also be supported on a discrete set. 

If $\mu\in\mathcal{S}'(\mathbb{R})$ satisfies the condition that both $\mu$ and $\hat{\mu}$ are supported on discrete (locally finite) sets, then $\mu$ is called a \emph{crystalline measure} \cite{Meyer2016measures}. A crystalline measure $\mu=\sum_{x\in\Lambda}a_x\delta_x$ with Fourier transform $\hat{\mu}=\sum_{k\in S}c_k\delta_k$ (so that $ S\subset\R $ is some discrete set) is called a \emph{Fourier Quasicrystal} if  $|\mu|=\sum_{x\in\Lambda}|a_x|\delta_x$ and $|\hat{\mu}|=\sum_{k\in S}|c_k|\delta_k$
 are tempered as well \cite{LevOlev2016quasicrystals}. We say that $ \mu $ is $\mathbb{N}$-valued if $a_x\in\mathbb{N}$ for all $x\in\Lambda$, and we abbreviate $\mathbb{N}$-valued Fourier quasicrystals as $\mathbb{N}$-FQs.

\subsection{Lee-Yang polynomials} Following \cite{Ruelle2010LeeYang}, we call a polynomial $p  \in \C[z_1, \hdots, z_n]$ a 
\emph{Lee-Yang polynomial} if it has no zeros in the product $\D^n$ of the open unit disk, $\D = \{z\in \C : |z|<1\}$, and it has no zeros in the product of the outer disk $(\C\setminus\overline{\D})^n$. One fundamental example is a determinant
\[p(z_{1},z_{2},\ldots,z_{n})=\det({\rm diag}(z_{1},\hdots, z_{n}) + U),\] 
where $U$ is an $n\times n$ unitary matrix. The name Lee-Yang polynomials refers to the elegant proof of the Lee-Yang Circle Theorem \cite{BB1, BB2} by Br\"and\'en and Borcea. Lee-Yang polynomials are intimately related, by Mobius tranformations to the class of real stable polynomials, i.e., $ p\in\C[z_{1},z_{2},\ldots,z_{n}] $ with the property that  $p({\bf a})$ is nonzero whenever ${\bf a} = (a_1, \hdots, a_n)\in \C^n$ 
has imaginary part $ \im (a_j)>0 $ for all $j=1, \hdots, n$ or $\im (a_j)<0$ for all $j=1, \hdots, n$. Br\"and\'en and Borcea developed a classification of linear operations preserving stability and used this to prove the Lee-Yang Circle Theorem \cite{BB1, BB2}, among many other things. See \cite{Wagner} for a survey of these techniques. Many properties of determinants, especially those involving eigenvalues, also hold and have elegant proofs 
for general real stable polynomials. See, for example, \cite{BGLS}. 

\subsection{$ \N $-FQs and Lee-Yang polynomials}

Motivated by physical quasicrystals, Meyer posed an intriguing question:
Are there any non-periodic crystalline measures $ \mu=\sum_{x\in\Lambda}\delta_{x} $, with unit coefficients ($ a_{x}\equiv1 $) and uniformly discrete\footnote{A set $ \Lambda\subset\R $ is said to be \emph{uniformly discrete} if $ \exists r>0 $ such that $ |x-x'|\ge r>0 $ for any distinct $ x,x'\in\Lambda $.} support $ \Lambda $? 

In their notable work \cite{KurasovSarnak20jmp}, Kurasov and Sarnak presented a general construction of $\mathbb{N}$-FQs. Using this construction, they answered Meyer's question by providing an explicit example of a non-periodic FQ $ \mu $ with unit coefficients and a uniformly discrete support. A question addressed in this paper is whether these properties are common among all $\mathbb{N}$-FQs.  

To describe the Kurasov-Sarnak construction, suppose that $ p(z_{1},z_{2},\ldots,z_{n})=\sum_{\boldsymbol{\alpha}} c_{\boldsymbol{\alpha}}\z^{\boldsymbol{\alpha}}$ is a Lee-Yang polynomial, where we use the multi-index notation $ \z^{\boldsymbol{\alpha}}=\prod_{j=1}^{n}z_{j}^{\alpha_{j}} $, and let $ \lv=(\lv_{1},\ldots,\lv_{n})\in\R_{+}^{n} $. Then, the univariate exponential polynomial
\[
f(x) = p(\exp(ix\lv)) = p(e^{ix\ell_1}, \hdots, e^{ix\ell_n})=\sum_{\boldsymbol{\alpha}} c_{\boldsymbol{\alpha}}e^{ix\langle\boldsymbol{\alpha},\lv\rangle},
\]
is \emph{real-rooted}, namely $ f(x)=0\Rightarrow\im(x)=0 $, since $ \exp(ix\lv)\in\D^{n}\cup(\C\setminus\overline{\D})^{n} $ when $ \im(x)\ne 0 $. If $ f(x)=0 $ let $ m(x) $ denote the multiplicity\footnote{The multiplicity of a zero $ x $ of an analytic function $ f $ is the minimal $ n\in\N $ for which the $ n$-th derivative is non-zero $ f^{(n)}(x)\ne0 $.} of $ x $ as a zero of $ f $.

\begin{thm}[Kurasov-Sarnak construction\cite{KurasovSarnak20jmp}\label{thm: KS}]
Given a positive vector $ \lv\in\R_{+}^{n} $ and a Lee-Yang polynomial $ p(z_{1},z_{2},\ldots,z_{n}) $, let $ \Lambda $ denote the zero set of $ f(x)=p(\exp(ix\lv)) $ and $ m(x) $ the multiplicity of $ x\in\Lambda $. Then, the measure 
\[\mu_{p,\lv}:=\sum_{x\in\Lambda}m(x)\delta_{x},\]
is an $ \N $-FQ.
\end{thm}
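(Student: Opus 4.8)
The plan is to represent $\mu_{p,\lv}$ as a distributional boundary value of the logarithmic derivative $h:=f'/f$ and to evaluate that boundary value on each side of $\R$ using the two zero-free regions of the Lee-Yang polynomial. Write $\bd=(d_1,\dots,d_n)$ for the multidegree of $p$, set $L:=\langle\bd,\lv\rangle$, and let $q(w_1,\dots,w_n):=\prod_j w_j^{d_j}\,p(1/w_1,\dots,1/w_n)$ be the reverse polynomial. I would begin with two elementary consequences of the Lee-Yang hypotheses, both proved by a Hurwitz/irreducibility argument: the constant term $p(0)$ is nonzero (it is the value of $p$ at $0\in\D^n$), and $q$ has no zeros on $\D^n$ with $q(0)\neq0$ (equivalently the coefficient of $\z^{\bd}$ in $p$ is nonzero) --- indeed, if $q$ vanished at some $a\in\D^n$, the irreducible component of $\{q=0\}$ through $a$ could not be contained in the coordinate hyperplanes because $\deg_{z_j}p=d_j$, so it would meet $\D^n$ at a point with all coordinates nonzero, contradicting that $p$ has no zeros on $(\C\setminus\overline{\D})^n$. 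Consequently $f(x)=p(\exp(ix\lv))$ is a not-identically-zero exponential polynomial whose exponents $\langle\alpha,\lv\rangle$ lie in $[0,L]$ with both endpoints attained.

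Since $f$ is real-rooted (as recalled in the excerpt), it is a nonzero entire function, so $\Lambda$ is discrete and locally finite; moreover $h=f'/f$ is holomorphic on each open half-plane $\{\pm\im>0\}$ and across $\R$ away from $\Lambda$, with a simple pole of residue $m(x_0)$ at each $x_0\in\Lambda$. A bound of the form $|h(x+iy)|\lesssim 1+|y|^{-N}$, uniform in $x\in\R$, shows that $h$ admits tempered distributional boundary values $h^{\pm}$ from $\{\pm\im>0\}$; by the Sokhotski--Plemelj formula these agree off $\Lambda$ and differ by $2\pi i\,m(x_0)\delta_{x_0}$ near each $x_0\in\Lambda$, whence
\[
\mu_{p,\lv}=\frac{1}{2\pi i}\bigl(h^{-}-h^{+}\bigr)\qquad\text{in }\S'(\R).
\]
In particular $\mu_{p,\lv}$ is tempered; being a positive measure, $|\mu_{p,\lv}|=\mu_{p,\lv}$ is tempered as well, and its coefficients $m(x)$ are positive integers.

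It remains to show $\widehat{\mu_{p,\lv}}$ is a discrete measure with $|\widehat{\mu_{p,\lv}}|$ tempered, for which I would compute $\widehat{h^{\pm}}$. Fix $\varepsilon>0$. When $\im x=\varepsilon$ the point $\exp(ix\lv)$ lies in the polydisc $\D^n$, where $p$ has no zeros, so $i\sum_j\lv_j z_j\,\partial_j\log p(z)$ is analytic there with vanishing constant term; expanding it as a power series and substituting $z=\exp(ix\lv)$ yields an absolutely convergent expansion $h(x+i\varepsilon)=\sum_{\omega>0}A_\omega e^{-\varepsilon\omega}e^{i\omega x}$ indexed by the positive values $\omega=\langle\alpha,\lv\rangle$, $\alpha\in\Z_{\ge0}^n\setminus\{0\}$. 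When $\im x=-\varepsilon$ the point $\exp(ix\lv)$ lies in $(\C\setminus\overline{\D})^n$; substituting $w_j=\exp(-ix\lv_j)\in\D$ and using $z_j\partial_{z_j}=-w_j\partial_{w_j}$ together with $p(1/w)=\prod_j w_j^{-d_j}q(w)$ gives $h(x-i\varepsilon)=iL-i\sum_j\lv_j\,w_j\partial_{w_j}\log q(w)$, and expanding $1/q$ about $0$ (permissible by the preliminary fact) produces $h(x-i\varepsilon)=iL+\sum_{\nu<0}B_\nu e^{\varepsilon\nu}e^{i\nu x}$ indexed by negative values $\nu=-\langle\alpha,\lv\rangle$. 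Taking Fourier transforms of these almost periodic functions and letting $\varepsilon\downarrow0$ --- legitimate because $h(\cdot\pm i\varepsilon)\to h^{\pm}$ in $\S'(\R)$ and pairing against compactly supported test functions identifies the limit term by term --- gives $\widehat{h^{+}}=2\pi\sum_{\omega>0}A_\omega\delta_\omega$ and $\widehat{h^{-}}=2\pi iL\,\delta_0+2\pi\sum_{\nu<0}B_\nu\delta_\nu$. Hence
\[
\widehat{\mu_{p,\lv}}=L\,\delta_0+i\sum_{\omega>0}A_\omega\delta_\omega-i\sum_{\nu<0}B_\nu\delta_\nu ,
\]
a measure supported on the locally finite set $\{0\}\cup\{\pm\langle\alpha,\lv\rangle:\alpha\in\Z_{\ge0}^n\setminus\{0\}\}$; and since $\widehat{\mu_{p,\lv}}\in\S'(\R)$, its coefficients grow at most polynomially, so $|\widehat{\mu_{p,\lv}}|$ is tempered. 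This shows $\mu_{p,\lv}$ is an $\N$-FQ.

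The step I expect to be the main obstacle is the passage to the boundary and the term-by-term identification of $\widehat{h^{\pm}}$: one must control $h=f'/f$ near the zeros of $f$ to obtain the moderate-growth bound, prove that $h(\cdot\pm i\varepsilon)\to h^{\pm}$ in $\S'(\R)$, and check that the Fourier transform of the limit is the naive $\varepsilon\to0$ limit of the half-plane expansions --- effectively that the power-series coefficients of $1/p$ and $1/q$, which a priori only grow subexponentially on $\D^n$, reorganize into polynomially bounded Fourier coefficients (equivalently, that $|\widehat{\mu_{p,\lv}}|$ is tempered). Everything else is either the structural input on Lee-Yang polynomials or routine once the representation $\mu_{p,\lv}=(2\pi i)^{-1}(h^{-}-h^{+})$ is in place.
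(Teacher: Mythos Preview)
The paper does not give its own proof of this theorem: it is stated as the ``Kurasov--Sarnak construction'' with a citation to \cite{KurasovSarnak20jmp} and then used as input throughout. So there is no in-paper proof to compare against; your outline is essentially the argument of the cited source.

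Your overall strategy --- writing $\mu_{p,\lv}=(2\pi i)^{-1}(h^{-}-h^{+})$ for $h=f'/f$, then expanding $h$ from above via $\log p$ on $\D^{n}$ and from below via $\log q$ (the reversed polynomial) on $\D^{n}$ --- is exactly the Kurasov--Sarnak approach, and the preliminary facts you list ($p(0)\neq0$, coefficient of $\z^{\bd}$ nonzero, $f$ real-rooted) are correct and sufficient to set it up. The identification of the support of $\widehat{\mu_{p,\lv}}$ with a subset of $\{\langle\alpha,\lv\rangle:\alpha\in\Z^{n}\}$, which is locally finite because $\lv\in\R_{+}^{n}$, is also right.

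You have put your finger on the one genuine issue. Knowing that $h(\cdot\pm i\varepsilon)\to h^{\pm}$ in $\S'(\R)$ and testing against compactly supported functions shows that $\widehat{h^{+}}$ agrees with $2\pi\sum_{\omega}A_{\omega}\delta_{\omega}$ as a distribution and hence that the latter is tempered; but this by itself does \emph{not} give polynomial growth of $\sum_{|\omega|\le T}|A_{\omega}|$, which is what ``$|\widehat{\mu_{p,\lv}}|$ tempered'' demands. Cauchy estimates on a polydisc of radius $r<1$ only give $|a_{\alpha}|\le C r^{-|\alpha|}$, which is useless here. In the original proof this step is handled separately, using the positivity and translation-boundedness of $\mu_{p,\lv}$ (the growth estimate that appears in this paper as \Cref{thm: zeros density and upper bound}) together with the lattice structure of the spectrum $S\subset\{\langle\alpha,\lv\rangle:\alpha\in\Z^{n}\}$, which gives $|S\cap[-T,T]|=O(T^{n})$; from these one extracts a uniform bound on the Fourier coefficients via an approximate-identity/F\'ejer-type argument. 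So your diagnosis of where the work lies is accurate, and the rest of the sketch is sound.
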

\begin{example}\label{ex:running}
The polynomial $p(z_1,z_2) = 16(1+z_1^2z_2^2) - 8(z_1+z_2+z_1^2z_2+z_1z_2^2) + (z_{1}-z_{2})^{2}$ is Lee-Yang and the vector $ \lv=(\frac{5}{22}\pi,1)$ has $ \Q $-linearly independent entries. Let $ \Lambda=\{t\in\R:p(\exp(it\lv))=0\} $ be the support of $ \mu_{p,\lv} $. \Cref{fig: flow on secman} (top) shows the points of $ \Lambda $ in the interval $ [0,6\pi] $. (bottom left) shows the zero set of $p(e^{ix},e^{iy})$ and line $(x,y)=t\lv$ for $ 0\le t\le 6\pi $. (bottom right) the image of these sets in $\R^2/(2\pi\Z)^2$.
\end{example}

Olevskyii and Ulanovskii \cite{OlevUlan20} proved that any one-dimensional $ \N $-FQ has the form $ \mu=\sum_{x\in\Lambda}m(x)\delta_{x} $, where $ \Lambda $ and $ (m(x))_{x\in\Lambda} $ are the zero set and multiplicities for some real-rooted exponential polynomial $ f $. Together with Cohen \cite{AloCohVin}, the current authors showed that every real-rooted exponential polynomial $ f $ is of the form\footnote{Up to a non-vanishing factor.} $ f(x) = p(\exp(ix\lv)) $, for some Lee-Yang polynomial $ p $ and positive vector $ \lv\in\R_{+}^{n} $ that has $ \Q $-linearly independent entries.
All together this gives: 

\begin{thm}[Inverse result]\cite{OlevUlan20, AloCohVin}\label{thm: every NFQ is mupl}
	Let $ \mu\in\S'(\R) $ be an $ \N $-FQ. Then, $ \mu=\mu_{p,\lv} $ as in the Kurasov-Sarnak construction, for some $ n\in\N $, a Lee-Yang polynomial $ p\in\C[z_{1},z_{2},\ldots,z_{n}] $ and a positive vector $ \lv\in\R_{+}^{n} $ whose entries are $ \Q $-linearly independent.  
\end{thm}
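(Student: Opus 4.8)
The plan is to deduce the statement by simply chaining the two results recalled just above: the Olevskii--Ulanovskii structure theorem for $\mathbb{N}$-FQs, and the Alon--Cohen--Vinzant realization of real-rooted exponential polynomials by Lee-Yang polynomials. First, given an $\mathbb{N}$-FQ $\mu\in\mathcal{S}'(\mathbb{R})$, invoke the Olevskii--Ulanovskii theorem to write $\mu=\sum_{x\in\Lambda}m(x)\delta_x$, where $\Lambda\subset\mathbb{R}$ is the zero set of some real-rooted exponential polynomial $f$ and $m(x)$ is the order of vanishing of $f$ at $x\in\Lambda$. This step is where the defining features of a Fourier quasicrystal get used: the temperedness hypothesis on $\mu$ and $|\mu|$ controls the density of $\Lambda$, it is the discreteness and moderate growth of $\hat\mu$ that force the generating function to be a genuine exponential polynomial rather than an arbitrary entire function, and the requirement $\Lambda\subset\mathbb{R}$ is exactly the real-rootedness of $f$.

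Next, apply the Alon--Cohen--Vinzant theorem to this particular $f$: it produces an integer $n$, a positive vector $\ell=(\ell_1,\dots,\ell_n)\in\mathbb{R}_+^n$ whose entries are $\mathbb{Q}$-linearly independent, and a Lee-Yang polynomial $p\in\mathbb{C}[z_1,\dots,z_n]$ with $f(x)=u(x)\,p(\exp(ix\ell))$ for all $x\in\mathbb{R}$, where $u$ is a non-vanishing factor (necessarily $u(x)=c\,e^{i\beta x}$ for some $c\neq 0$, $\beta\in\mathbb{R}$, although only $u\neq 0$ is needed below). Finally, observe that this factor is harmless: since $u$ is entire with no zeros, the entire functions $f$ and $x\mapsto p(\exp(ix\ell))$ have exactly the same zeros with exactly the same multiplicities. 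Hence $\{x:p(\exp(ix\ell))=0\}=\Lambda$ and the order of vanishing of $p(\exp(ix\ell))$ at each $x\in\Lambda$ equals $m(x)$, so by definition $\mu_{p,\ell}=\sum_{x\in\Lambda}m(x)\delta_x=\mu$, which is the claim. (As a consistency check, \Cref{thm: KS} already guarantees that this $\mu_{p,\ell}$ is an $\mathbb{N}$-FQ.)

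Given the two cited inputs, the only real point to verify is this last compatibility --- that passing ``up to a non-vanishing factor'' affects neither the zero locus nor the local multiplicities --- which is immediate, so I expect no genuine obstacle in the combination itself. If instead one wanted a self-contained proof, the difficulty would lie entirely inside the two inputs: the Olevskii--Ulanovskii step (extracting a finite exponential sum from the spectral discreteness of $\hat\mu$, via a Fourier--Laplace and almost-periodicity analysis) and the Alon--Cohen--Vinzant step (lifting a one-variable real-rooted exponential polynomial to a multivariate Lee-Yang polynomial with $\mathbb{Q}$-linearly independent frequency vector, which relies on the Br\"and\'en--Borcea classification of stability-preserving operators).
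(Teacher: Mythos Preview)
Your proposal is correct and matches the paper's approach exactly: the paper does not give a standalone proof of this theorem but states it as the immediate combination of the Olevskii--Ulanovskii structure theorem and the Alon--Cohen--Vinzant realization result, precisely as you have laid out. The only additional content you supply is the explicit observation that the non-vanishing factor $u$ does not affect zeros or multiplicities, which the paper leaves implicit.
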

Given a set $ A\subset\R $ let $ \dim_{\Q}(A) $ denote the dimension (as a $ \Q $-vector space) of the $ \Q $-linear span of the elements of $ A $. For a vector $ \lv\in\R^{n} $, $ \dim_{\Q}(\lv)=n $ means that its entries are $ \Q $-linearly independent. 
\begin{thm}\cite[Theorem 3]{KurasovSarnak20jmp}
	  Any $\N$-FQ, say $\mu=\sum_{x\in\Lambda}m(x)\delta_{x}$, has uniformly bounded coefficients, and has two integers $ r,c\ge0 $ such that its support $ \Lambda=L_{1}\cup L_{2}\cup\ldots\cup L_{r}\cup N $ is the union of $ r $ infinite arithmetic progressions and a set $ N $ which, if not empty, has $ \dim_{\Q}(N)=\infty $ and $ |N\cap L|\le c $ for any arithmetic progression $ L $.
\end{thm}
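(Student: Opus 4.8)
I would begin by invoking \Cref{thm: every NFQ is mupl} to write $\mu=\mu_{p,\lv}$ for a Lee-Yang polynomial $p\in\C[z_1,\dots,z_n]$ and a positive $\lv$ with $\Q$-linearly independent entries, so that $\Lambda$ is the zero set of $f(x)=\sum_{\alpha\in\supp(p)}c_\alpha e^{i\langle\alpha,\lv\rangle x}$ and $m(x)$ is the order of vanishing of $f$ at $x$. The boundedness of the coefficients is then immediate: $\Q$-linear independence of $\lv$ makes the frequencies $\lambda_\alpha:=\langle\alpha,\lv\rangle$ ($\alpha\in\supp(p)$) pairwise distinct, and if $m(x_0)=m$ then $f(x_0)=f'(x_0)=\dots=f^{(m-1)}(x_0)=0$ puts the nonzero vector $\bigl(c_\alpha e^{i\lambda_\alpha x_0}\bigr)_\alpha$ into the kernel of the $m\times|\supp(p)|$ matrix $\bigl(\lambda_\alpha^{\,j}\bigr)_{0\le j<m,\ \alpha}$; since a square Vandermonde matrix in distinct nodes is invertible this forces $m\le|\supp(p)|-1$ at every $x_0\in\Lambda$.

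For the structure of $\Lambda$ I would pass to the torus: set $\gamma(t)=(e^{it\ell_1},\dots,e^{it\ell_n})$ and $\Sigma=\{\theta\in\T^n:p(e^{i\theta_1},\dots,e^{i\theta_n})=0\}$, so that $\Lambda=\gamma^{-1}(\Sigma)$, $\Sigma$ is a proper closed subset, and $\{\gamma(t):t\in\R\}$ is equidistributed in $\T^n$ by the Kronecker--Weyl theorem. After discarding a monomial factor, write $p=B\cdot q$ where $B$ is the product of the irreducible binomial factors of $p$ and $q$ has none; a short argument shows an irreducible binomial factor of a Lee-Yang polynomial must, up to a monomial and a scalar, have the form $\z^{\beta}-\lambda$ with $\beta\in\Z_{\ge0}^n$ primitive and $|\lambda|=1$ (a mixed-sign exponent or $|\lambda|\neq1$ yields a zero in $\D^n$ or in $(\C\setminus\overline\D)^n$). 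Calling these factors $\z^{\beta^{(j)}}-\lambda_j$ for $j=1,\dots,r$, the torus zero set of the $j$-th is a coset of the codimension-one subtorus $\{\theta:\langle\beta^{(j)},\theta\rangle\equiv0\}$, whose $\gamma$-preimage is the infinite arithmetic progression $L_j:=\frac{\arg\lambda_j}{\langle\beta^{(j)},\lv\rangle}+\frac{2\pi}{\langle\beta^{(j)},\lv\rangle}\Z$. Since $q$ is a product of factors of a Lee-Yang polynomial it is Lee-Yang; putting $f_q(x):=q(\exp(ix\lv))$ and using $Z(f)=Z(f_q)\cup(L_1\cup\dots\cup L_r)$, I define $N:=\Lambda\setminus(L_1\cup\dots\cup L_r)=Z(f_q)\setminus(L_1\cup\dots\cup L_r)$.

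The crux is the following \emph{Key Lemma}: every infinite arithmetic progression contained in $\Lambda$ lies inside $L_1\cup\dots\cup L_r$. If $\{a+k\Delta\}_{k\in\Z}\subseteq\Lambda$ then $\gamma(a)\gamma(\Delta)^k\in\Sigma$ for all $k$, so by closedness the whole coset $\gamma(a)H\subseteq\Sigma$ with $H=\overline{\langle\gamma(\Delta)\rangle}$; let $T$ be the identity component of $H$, a subtorus with $\dim T\le n-1$ since $\Sigma\neq\T^n$ (the case $n=1$ being trivial, as then $p$ is already a product of binomials). If $\dim T\le n-2$, the lattice of characters trivial on $T$ has rank $\ge2$, so one can pick $\Q$-linearly independent $\beta,\beta'$ in it; then $\gamma^{-1}(\gamma(a)H)$ lies in a finite union of progressions of period $2\pi/\langle\beta,\lv\rangle$ intersected with a finite union of progressions of period $2\pi/\langle\beta',\lv\rangle$, and $\Q$-independence of $\lv$ makes these periods incommensurable (commensurability would force $\beta\parallel\beta'$), so the preimage is finite --- impossible, as it contains $\{a+k\Delta\}_k$. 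Hence $\dim T=n-1$, so $T=\{\theta:\langle\beta,\theta\rangle\equiv0\}$ for a primitive $\beta$, the finitely many cosets of $T$ comprising $\gamma(a)H$ are codimension-one cosets inside $\Sigma$, each forcing a binomial factor $\z^{\beta}-\lambda\mid p$ (the ideal of such a coset in $\C[z_1^{\pm},\dots,z_n^{\pm}]$ is generated by that binomial), and thus $\{a+k\Delta\}_k\subseteq\gamma^{-1}(\gamma(a)H)\subseteq L_1\cup\dots\cup L_r$.

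Finally I would analyse $N$. Given any progression $L=\{a+k\Delta\}$, the sequence $k\mapsto f_q(a+k\Delta)=\sum_{\alpha\in\supp(q)}\bigl(c_\alpha e^{i\lambda_\alpha a}\bigr)\bigl(e^{i\lambda_\alpha\Delta}\bigr)^k$ is a linear recurrence of order $\le|\supp(q)|$ with characteristic roots on the unit circle, so by the quantitative Skolem--Mahler--Lech theorem its zero set is a union of at most $l_0$ infinite progressions plus at most $c_0$ further points, with $l_0,c_0$ depending only on $|\supp(q)|\le|\supp(p)|$; the infinite progressions, read back in $\R$, lie in $L_1\cup\dots\cup L_r$ by the Key Lemma, so $N\cap L$ is contained in the $\le c_0$ exceptional points, and we take $c:=c_0$ (a finite progression is contained in an infinite one, so all $L$ are covered). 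If $q$ is constant then $N=\emptyset$; otherwise $f_q$ is a non-constant real-rooted exponential polynomial, hence has zeros of positive density, and by the Key Lemma it meets each $L_j$ in a finite set (an infinite intersection would be an infinite progression in $\Lambda$ forcing a binomial factor of $q$), so $N$ is infinite, and $\dim_{\Q}N=\infty$ --- for if $\dim_{\Q}N<\infty$ then $\dim_{\Q}Z(f_q)<\infty$, so the Fourier quasicrystal $\mu_{q,\lv}$ (\Cref{thm: KS}) has support of finite $\Q$-dimension and is therefore a finite union of periodic measures, which by the Key Lemma and the binomial-factor correspondence would make $q$ a product of binomials, contrary to construction. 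The two steps I expect to fight hardest with are the uniformity in the Skolem--Mahler--Lech estimate --- needing the bound on the sporadic zeros of a short linear recurrence to be independent of the recurrence, i.e.\ the quantitative subspace theorem together with a specialization from $\C$ to a number field --- and the structural input that a Fourier quasicrystal with support of finite $\Q$-dimension is a finite union of periodic measures, which is the genuine rigidity distinguishing Fourier quasicrystals from arbitrary almost-periodic zero sets; the rest is soft torus/equidistribution bookkeeping.
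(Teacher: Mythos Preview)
Your approach is sound in its skeleton but diverges from the paper's in one essential place, and the divergence leaves a genuine gap.

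The paper (in \Cref{sec: proof of elaboration}, proving the stronger \Cref{thm: irreduicible}) decomposes $p$ into \emph{all} irreducible factors, not just binomial versus non-binomial, and for each non-binomial irreducible factor $q_j$ invokes Lang's $G_m$ theorem (the Evertse--Schlickewei--Schmidt quantitative version): the intersection of the zero set $\Lambda_j$ with \emph{any} set $\Gamma$ of $\dim_{\Q}(\Gamma)=m$ is bounded by a constant depending only on $m$ and the degree. This single input delivers both conclusions at once: the arithmetic-progression bound is the case $m=2$, and $\dim_{\Q}(\Lambda_j)=\infty$ follows immediately because $\Lambda_j$ is infinite. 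The short \Cref{lem: torus cosets} (a coset of an algebraic torus of dimension $\le n-2$ meets the orbit $\{\exp(ix\ell)\}$ at most once when $\ell$ is $\Q$-independent) is the only torus bookkeeping needed beyond Lang's theorem. Your torus-closure ``Key Lemma'' is a pleasant by-hand version of this coset analysis, and your Vandermonde multiplicity bound is correct though weaker than the paper's $m(x)\le|\bd|$ from the density estimate.

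The gap is in your $\dim_{\Q}(N)=\infty$ step. Skolem--Mahler--Lech only controls intersections with arithmetic progressions (rank-one subgroups), so after establishing $|N\cap L|\le c$ you cannot conclude $\dim_{\Q}(N)=\infty$ from the infinitude of $N$ alone: there exist infinite sets of finite $\Q$-dimension meeting every arithmetic progression in a bounded set. You patch this with the claim that a Fourier quasicrystal whose support has finite $\Q$-dimension is a finite sum of periodic measures, but that statement is not a standard result you can cite, and proving it amounts to re-deriving exactly the higher-rank unit-equation bound that Lang's $G_m$ theorem supplies. In other words, the ``structural input'' you flag is not an independent soft fact about Fourier quasicrystals; it is the missing number-theoretic depth, and once you import it you have essentially reproduced the paper's argument. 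The cleanest fix is to replace quantitative SML by Lang's $G_m$ from the start, which simultaneously gives the uniform $c$ for arbitrary $m$ and makes $\dim_{\Q}(N)=\infty$ automatic.
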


 We elaborate on \cite[Theorem 3]{KurasovSarnak20jmp} and the relation between the decomposition of the measure and the decomposition of the polynomial into irreducible factors (a proof provided in \Cref{sec: proof of elaboration}). A polynomial is said to be \emph{binomial} if it has only two monomials. 

\begin{thm}[Decomposition and non-periodicity]\label{thm: irreduicible} Given an $ \N $-FQ $ \mu $, there are $ p $ and $ \lv $, a Lee-Yang polynomial and a positive $ \Q $-linearly independent vector, such that $ \mu=\mu_{p,\lv} $. The polynomial $ p$ decomposes into distinct irreducible Lee-Yang polynomials $ p=\prod_{j=1}^{N}q_{j}^{c_{j}} $, each factor $ q_{j} $ appears with a power $ c_{j}\in\N $. Let $ \Lambda $ be the support of $ \mu $ and $ \Lambda_{j} $ be the support of $ \mu_{q_{j},\lv} $ for each $ q_{j} $. Then, 
	\[\mu_{p,\lv}=\sum_{j=1}^{N}c_{j}\mu_{q_{j},\lv}\quad\text{ and }\quad  \Lambda=\bigcup_{j=1}^{N}\Lambda_{j}. \]
	If $ q_{j} $ is binomial, then $ \mu_{q_{j},\lv} $ has unit coefficients and $ \Lambda_{j} $ is an infinite arithmetic progression. \\
	If $ q_{j} $ is non-binomial, let $ D $ denote its total degree and let $ \mu_{q_{j},\lv}=\sum_{x\in\Lambda_{j}}m_{j}(x)\delta_{x} $, then
	\begin{enumerate}
		\item (almost all unit coefficients): The coefficients are bounded by $ m_{j}(x)\le D $, and $ m_{j}(x)=1 $ for almost every $ x\in\Lambda_{j} $:   
		\[\lim_{R\to\infty}\frac{\left|\{|x|<R\ : \ x\in\Lambda_{j},\ m_{j}(x)=1\}\right|}{\left|\{|x|<R\ : \ x\in\Lambda_{j}\}\right|}=1.\]
		\item (dimension over $ \Q $): The support has $ \dim_{\Q}(\Lambda_{j})=\infty $ with uniform bounds $ |\Lambda_{j}\cap A|\le c=c(m,D) $ for any set $ A\subset\R $ with $ \dim_{\Q}(A)=m $.
	\end{enumerate}
\end{thm}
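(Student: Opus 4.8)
The plan is to obtain $p$ and $\lv$ from \Cref{thm: every NFQ is mupl}, factor $p=\prod_{j=1}^{N}q_{j}^{c_{j}}$ into pairwise non-associate irreducibles in $\C[z_{1},\dots,z_{n}]$ with $c_{j}\in\N$, and analyze each factor separately. Since the zero set of $p$ in $\C^{n}$ is $\bigcup_{j}V(q_{j})$ while $p$ has no zeros in $\D^{n}$ or in $(\C\setminus\overline{\D})^{n}$, the same holds for each $q_{j}$, so every $q_{j}$ is Lee-Yang and, by \Cref{thm: KS}, each $\mu_{q_{j},\lv}$ is an $\N$-FQ. Writing $\gamma(x)=\exp(ix\lv)$, $f(x)=p(\gamma(x))$ and $f_{j}(x)=q_{j}(\gamma(x))$, we get $f=\prod_{j}f_{j}^{c_{j}}$, so the orders of vanishing satisfy $m(x)=\sum_{j}c_{j}m_{j}(x)$ for every $x$; summing $m(x)\delta_{x}$ over $x$ gives $\mu_{p,\lv}=\sum_{j}c_{j}\mu_{q_{j},\lv}$, and since all $c_{j},m_{j}\ge 0$ there is no cancellation, so $\Lambda=\bigcup_{j}\Lambda_{j}$. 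I would also note that no $q_{j}$ is a monomial (a monomial factor of a Lee-Yang polynomial would be a unit); equivalently, replacing $q_{j}$ by a monomial multiple changes neither $\Lambda_{j}$ nor $m_{j}$.

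If $q_{j}$ is binomial, then after multiplying by a monomial $f_{j}(x)=c\,e^{i\langle\alpha-\beta,\lv\rangle x}+c'$ with $\langle\alpha-\beta,\lv\rangle\neq 0$, since the $\lv_{k}$ are $\Q$-linearly independent and $\alpha\neq\beta$. Real-rootedness of $f_{j}$ — which holds because $q_{j}$ is Lee-Yang — forces $|c|=|c'|$, so the zeros of $f_{j}$ form the arithmetic progression $\{x:\langle\alpha-\beta,\lv\rangle x\in\arg(-c'/c)+2\pi\Z\}$, each simple because $\tfrac{d}{dx}e^{i\langle\alpha-\beta,\lv\rangle x}\neq 0$. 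Hence $\mu_{q_{j},\lv}$ has unit coefficients and $\Lambda_{j}$ is an infinite arithmetic progression.

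The substance is the case where $q_{j}$ is irreducible, non-binomial, of total degree $D$. For the bound $m_{j}(x)\le D$: since $\gamma$ is a smooth analytic curve not contained in $V(q_{j})$ (else $f_{j}\equiv 0$), $m_{j}(x_{0})$ equals the local intersection number of $\gamma$ with $V(q_{j})\subset(\C^{*})^{n}$ at $\gamma(x_{0})$, which a degree/real-rootedness argument — extending the elementary eigenvalue bound for determinants $\det(\diag(z)+U)$ via the M\"obius correspondence with real stable polynomials — bounds by $D$. For ``$m_{j}(x)=1$ for almost every $x$'': the bad set $\{x\in\Lambda_{j}:m_{j}(x)\ge 2\}$ lies in $\gamma^{-1}(W)$ with $W:=V(q_{j})\cap V(\widetilde q_{j})$ and $\widetilde q_{j}:=\sum_{k}i\lv_{k}z_{k}\,\partial_{k}q_{j}$, because at such $x$ both $f_{j}$ and $f_{j}'$ vanish; as $q_{j}$ is irreducible and non-monomial, it does not divide $\widetilde q_{j}$ — divisibility would make $f_{j}$ a single exponential — so $W$, and likewise its union with the singular locus $V(q_{j})_{\sing}$, has complex codimension $\ge 2$. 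I would then show $\gamma^{-1}(W)$ has density $0$ in $\Lambda_{j}=\gamma^{-1}(V(q_{j}))$: the $\Q$-linear independence of $\lv$ makes the linear flow $t\mapsto t\lv\bmod 2\pi\Z^{n}$ uniquely ergodic, so the return measure it induces on the real hypersurface $V(q_{j})\cap\T^{n}$ gives mass $0$ to the lower-dimensional $W\cap\T^{n}$, while $\left|\{|x|<R:x\in\Lambda_{j}\}\right|$ grows linearly in $R$.

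For part (2): given a $\Q$-subspace $A\subset\R$ with $\dim_{\Q}A=m$, the map $\gamma$ restricts to a group isomorphism of $A$ onto a torsion-free rank-$m$ subgroup $\gamma(A)\le\T^{n}$ (injectivity follows from the $\Q$-linear independence of $\lv$), and $\gamma(\Lambda_{j}\cap A)=V(q_{j})\cap\gamma(A)$. By Laurent's theorem on intersections of subvarieties of $(\C^{*})^{n}$ with finite-rank subgroups, $V(q_{j})\cap\gamma(A)$ is a finite set together with finitely many subtorus cosets $wT\subseteq V(q_{j})$; a coset with $\dim T=n-1$ would make $V(q_{j})$ itself a subtorus coset and hence $q_{j}$ binomial, which is excluded, while for $\dim T\le n-2$ the $\Q$-linear independence of $\lv$ forces $\gamma(A)\cap wT$ to be at most a single point. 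Thus $V(q_{j})\cap\gamma(A)$ is finite of size $\le c(m,D)$ by a quantitative form of Laurent's theorem (or by the self-contained estimates of \cite{KurasovSarnak20jmp}); in particular $\dim_{\Q}(\Lambda_{j})=\infty$, for otherwise $\Lambda_{j}$ would be finite, contradicting that the non-monomial real-rooted exponential polynomial $f_{j}$ has infinitely many zeros. The main obstacle will be the non-binomial case, and within it the two quantitative statements: that multiple zeros have density $0$ (the equidistribution/return-measure argument must be made effective enough to beat the linear growth of $|\Lambda_{j}\cap[-R,R]|$) and that the bound $c(m,D)$ is uniform over all rank-$m$ subgroups $\gamma(A)$ (a quantitative Mordell--Lang input, together with the verification that a non-binomial irreducible Lee-Yang $q_{j}$ contains no codimension-one subtorus coset in its zero set). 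The decomposition, the binomial case, the degree bound $m_{j}(x)\le D$, and the mere finiteness in (2) should be comparatively routine.
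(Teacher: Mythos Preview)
Your proposal is correct and follows essentially the same route as the paper: the decomposition and binomial case are handled identically, part (1) via the ergodic return measure on $\Sigma_{q_j}$ showing the multiple-zero locus has measure zero, and part (2) via a quantitative Lang/Laurent theorem together with the observation that non-binomial irreducible $q_j$ admits no codimension-one torus coset in its zero set (your \Cref{lem: torus cosets} is exactly the paper's). The only cosmetic difference is that you cut out the bad set as $V(q_j)\cap V(\widetilde q_j)$ with $\widetilde q_j=\sum i\ell_k z_k\partial_k q_j$, whereas the paper uses $\sing(\Sigma_{q_j})$ directly; these coincide on $\T^n$ by the same-phase property of the partials of a Lee-Yang polynomial (the paper's \Cref{prop: positive derivatives any degree} and \Cref{lem: all multiplicities agree}), which also gives the clean bound $m_j(x)\le D$ via the univariate restriction $s\mapsto q_j(s\exp(ix\ell))$ and makes your ``lower-dimensional $W\cap\T^n$'' step immediate.
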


Part (2) extends the statement of \cite[Theorem 3]{KurasovSarnak20jmp}, and [Theorem 10.1]\cite{kurasovBook2023}, from arithmetic progressions (encompassed by the case $m=2$) 
to arbitrary $m$, however the proofs of these statements are identical \cite{Sarnak} and similar to the proof of \cite[Theorem 2]{KurasovSarnak20jmp}, which treats a specific instance.  For the sake of completeness, 
we include the proof in \Cref{sec: proof of elaboration}.

\begin{remark}[Quasicrystals and cut-and-project sets]
The mathematical definition of a \emph{quasicrystal} (not to be confused with Fourier quasicrystal) is a set $ \Lambda\subset\R^{n} $ which is uniformly discrete, relatively dense\footnote{Relatively dense means $ \exists R>0 $ such that $ \Lambda $ intersects any ball of radius $ R $.}, and its set of differences $ \Lambda-\Lambda=\{x-y:x,y\in\Lambda\} $ is contained in finitely many translates of $ \Lambda $, see \cite[Definition 6]{Meyer1995quasicrystals}. A \emph{model set} (also known as cut-and-project set) $ \Lambda\subset \R^{n} $ is the projection of a set $ (B\times\R^{n})\cap L  $, where $ L\subset \R^{m}\times\R^{n} $ is a lattice in generic location\footnote{Lattice in $ \R^{m}\times\R^{n} $ such that the projection to $ \R^{m} $ is dense and the projection to $ \R^{n} $ is injective} and $ B\subset\R^{m} $ which is bounded with non-empty interior. Meyer showed that any model set is a quasicrystal, and any quasicrystal lies in finitely many translates of model sets \cite[Theorem 1]{Meyer1995quasicrystals}.  In particular, in such case $ \dim_{\Q}(\Lambda)\le n+m $.
\end{remark}
\begin{cor}
	If $ p $ is an irreducible non-binomial Lee-Yang polynomial, then the support of $ \mu_{p,\lv} $, for any $ \Q $-linearly independent $ \lv\in\R_{+}^{n} $, intersect any quasicrystal and any model set in at most finitely many points. 
\end{cor}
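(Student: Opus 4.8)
The plan is to deduce the corollary directly from part (2) of \Cref{thm: irreduicible}, once one observes that quasicrystals and model sets in $\R$ span only a finite-dimensional $\Q$-vector space. First I would fix an irreducible non-binomial Lee-Yang polynomial $p$ of total degree $D$ and a $\Q$-linearly independent positive vector $\lv\in\R_{+}^{n}$, and let $\Lambda$ be the support of $\mu_{p,\lv}$, which is an $\N$-FQ by \Cref{thm: KS}. Since $p$ is non-binomial and constitutes its own irreducible factorization, part (2) of \Cref{thm: irreduicible} applies to $\mu_{p,\lv}$ directly and produces a bound $|\Lambda\cap A|\le c(m,D)$ valid for every $A\subset\R$ with $\dim_{\Q}(A)=m$.

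Granting this, it suffices to prove that every quasicrystal and every model set $Y\subset\R$ has $\dim_{\Q}(Y)<\infty$; then taking $A=Y$ and $m=\dim_{\Q}(Y)$ gives $|\Lambda\cap Y|\le c(\dim_{\Q}(Y),D)<\infty$. For a model set $Y\subset\R$, the remark preceding the corollary already records the bound $\dim_{\Q}(Y)\le 1+m$, where $m$ is the internal (cut-and-project) dimension, so $\dim_{\Q}(Y)$ is finite. For a quasicrystal $Y\subset\R$, Meyer's structure theorem (cited in the same remark) provides finitely many translates $t_{1},\ldots,t_{k}\in\R$ and model sets $Y_{1},\ldots,Y_{k}\subset\R$ with $Y\subseteq\bigcup_{i=1}^{k}(t_{i}+Y_{i})$; combining the elementary facts $\dim_{\Q}(A\cup B)\le\dim_{\Q}(A)+\dim_{\Q}(B)$ and $\dim_{\Q}(t+Y_{i})\le\dim_{\Q}(Y_{i})+1$ with the model-set bound yields $\dim_{\Q}(Y)\le\sum_{i=1}^{k}(\dim_{\Q}(Y_{i})+1)<\infty$. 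Since every model set is in particular a quasicrystal, the model-set case is subsumed by the quasicrystal case, so no separate argument is needed there.

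I do not expect a genuine obstacle here: all of the real content sits inside part (2) of \Cref{thm: irreduicible}, whose proof is deferred to \Cref{sec: proof of elaboration}, and the remaining point — that quasicrystals and model sets have finite $\Q$-dimension — is an immediate reading of Meyer's structure theorem together with the two one-line subadditivity facts above. The only subtlety worth spelling out is that part (2) must be applied to the prescribed polynomial $p$ itself, rather than to a polynomial extracted abstractly from the $\N$-FQ $\mu_{p,\lv}$; this is legitimate because $p$ is assumed irreducible and non-binomial, hence it is the unique nontrivial factor of itself, and part (2) is in effect a statement about $\mu_{q,\lv}$ for an arbitrary irreducible non-binomial Lee-Yang polynomial $q$ and arbitrary $\Q$-linearly independent $\lv$.
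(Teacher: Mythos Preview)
Your proposal is correct and matches the paper's intended argument: the corollary is stated without proof immediately after the remark bounding $\dim_{\Q}$ of model sets (and, via Meyer's theorem, quasicrystals), precisely because it follows by combining that finite $\Q$-dimension with part (2) of \Cref{thm: irreduicible}. Your unpacking of the quasicrystal case via subadditivity of $\dim_{\Q}$ under unions and translates is the natural way to make the implicit step explicit.
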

\begin{remark}[non-uniqueness of the decomposition]
	The decomposition in \Cref{thm: irreduicible} is not unique due to the fact that the ring of exponential polynomials is not a factorization ring. For $n=1$, one can factor the exponential polynomial $1 - \exp(ix)$ into arbitrarily many factors, starting
	\[1 - \exp(ix) = (1 + \exp(ix/2))(1 - \exp(ix/2))
	= (1 + \exp(ix/2))(1 + \exp(ix/4))(1 - \exp(ix/4))
	\]
	and so on. For $n> 1$, this decomposition can also fail to be unique in non-trivial ways. The Lee-Yang polynomial $ p(z_{1},z_{1}) $ in \Cref{ex:running} is irreducible, however $p(z_1^2,z_2^2)$ factors as the product of four Lee-Yang polynomials $q_{\sigma} = 2 + \sigma_1 z_1 + \sigma_2 z_2 + 2\sigma_1 \sigma_2 z_1 z_2$ for $(\sigma_1, \sigma_2) \in \{\pm 1\}^2$. Therefore, for any $\ell\in \R_+^2$,
	\[\mu_{p, \ell}=\sum_{\sigma\in\{\pm1\}^{2}}\mu_{q_{\sigma},\frac{1}{2}\lv}.\]
\end{remark}

\subsection{Main Results}
To set up notations, let $ \A_{\bd}(n) $ denote the set of Lee-Yang polynomials $ p\in\C[z_{1},\ldots,z_{n}] $ of degrees $ \bd=(d_{1},\ldots,d_{n}) $, i.e., $ p $ that has degree $ d_{j} $ in every $ z_{j} $. Let $ |\bd|=d_{1}+d_{2}+\ldots+d_{n} $ denote the total degree. Let $ \T^{n}=\{\z\in\C^{n}:|z_{j}|=1\ \forall j\} $.  
\begin{thm}[density and maximal gap]\label{thm: zeros density and upper bound}
	Let $ p\in\A_{\bd}(n) $ and $ \lv\in\R_{+}^{n} $. Then,
	\begin{enumerate}
		\item The growth rate of $ \mu_{p,\lv} $ is 
		\[\mu_{p,\lv}([x,x+T])=\frac{\langle\bd,\lv\rangle}{2\pi}T+\mathrm{err}(x,T),\qquad\mbox{for all}\quad x\in\R,\ T>0.\]
		with a uniformly bounded error term $ |\mathrm{err}(x,T)|\le|\bd| $. 
		\item The gap between any pair of consecutive atoms in $ \mu_{p,\lv} $ is at most $2\pi\frac{|\bd|}{\langle\bd,\lv\rangle} $.
	\end{enumerate}
\end{thm}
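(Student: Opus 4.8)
The plan is to deduce part (2) from part (1), and to prove part (1) by counting, via the argument principle, the real zeros of $f(w)=p(\exp(iw\lv))$ (with multiplicity) inside the interval. Part (2) follows quickly: by part (1) any interval of length $T>2\pi|\bd|/\langle\bd,\lv\rangle$ has $\mu_{p,\lv}([x,x+T])\ge\tfrac{\langle\bd,\lv\rangle}{2\pi}T-|\bd|>0$, so it contains an atom; a slightly shrunk copy of any gap longer than $2\pi|\bd|/\langle\bd,\lv\rangle$ would then have to contain an atom, a contradiction. For part (1), $\mu_{p,\lv}([x,x+T])$ is the number of zeros of $f$ in $[x,x+T]$ counted with multiplicity; I would reduce to bounding $N:=\mu_{p,\lv}((a,b))$ for $a<b$ avoiding the (locally finite) zero set of $f$, the closed-interval statement and general endpoints following by a routine limiting argument with $a\to x$, $b\to x+T$ from outside and from inside. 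Two elementary properties of a Lee-Yang $p$ of degrees $\bd$ enter throughout: $c_{\mathbf 0}=p(0)\neq0$ since $0\in\D^n$, and $c_{\bd}\neq0$ (for instance because $\z^{\bd}\overline{p(1/\bar z_1,\dots,1/\bar z_n)}$ is again Lee-Yang, with constant term $\bar c_{\bd}$); and $f$ has no non-real zeros, exactly as in \Cref{thm: KS}.

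Fix such $a<b$. Since every zero of $f$ is real, all $N$ zeros of $f$ with real part in $(a,b)$ lie in the rectangle $R_H=[a,b]\times[-H,H]$ for every $H>0$, and $f$ is zero-free on $\partial R_H$, so $2\pi N=\Delta_{\partial R_H}\arg f$. I would let $H\to\infty$ and treat the four sides. On the bottom side $w=x-iH$ we have $f(x-iH)=\sum_{\balpha}c_{\balpha}e^{ix\langle\balpha,\lv\rangle}e^{H\langle\balpha,\lv\rangle}$; since $\lv\in\R_+^n$ the exponent $\langle\balpha,\lv\rangle$ over $\supp(p)$ is uniquely maximized at $\balpha=\bd$, so $f(x-iH)=c_{\bd}e^{ix\langle\bd,\lv\rangle}e^{H\langle\bd,\lv\rangle}(1+o(1))$ uniformly in $x\in[a,b]$ and this side contributes $\langle\bd,\lv\rangle(b-a)+o(1)$; dually the exponent is uniquely minimized at $\balpha=\mathbf 0$, so on the top side $f(x+iH)\to c_{\mathbf 0}$ uniformly and that side contributes $o(1)$. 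Hence $2\pi N=\langle\bd,\lv\rangle T+\Delta_{\mathrm{left}}+\Delta_{\mathrm{right}}+o(1)$, with $\Delta_{\mathrm{left}},\Delta_{\mathrm{right}}$ the argument changes along the vertical sides.

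The crux — and the step I expect to demand the most care — is to bound the argument change along a vertical side $\{w=c+iy\}$ by $|\bd|\pi$ in absolute value, uniformly in $H$ and $c$. Writing $t=e^{-y}\in(0,\infty)$ and $\zeta_j=e^{ic\lv_j}\in\T$, this side traces $G(t)=p(\zeta_1 t^{\lv_1},\dots,\zeta_n t^{\lv_n})$; all coordinates have modulus $t^{\lv_j}$, so $G(t)\neq0$ for $t\neq1$ (the point lies in $\D^n$ or in $(\C\setminus\overline\D)^n$) and also at $t=1$ when $f(c)\neq0$, while $G(t)\to c_{\mathbf 0}$ as $t\to0^+$ and $\arg G(t)\to\arg(c_{\bd}\zeta^{\bd})$ as $t\to\infty$; thus the total argument change $\Psi$ of $G$ over $(0,\infty)$ is a well-defined finite number in a fixed coset of $2\pi\Z$. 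Deform $\lv$ to $(1,\dots,1)$ along $\lv^{(s)}=(1-s)\lv+s(1,\dots,1)$: this keeps $G$ zero-free off $t=1$ and preserves its endpoint behavior, and a uniform estimate on the tails $t\to0$ and $t\to\infty$ shows $\Psi=\Psi(s)$ is continuous in $s$, hence (being coset-valued) constant. At $s=1$, $G(t)=\widetilde P(t):=p(\zeta_1 t,\dots,\zeta_n t)$ is a genuine univariate polynomial in $t$ of degree exactly $|\bd|$ (leading coefficient $c_{\bd}\zeta^{\bd}$) which is Lee-Yang, so $\widetilde P(t)=c_{\bd}\zeta^{\bd}\prod_{m=1}^{|\bd|}(t-e^{i\beta_m})$ with each $e^{i\beta_m}\neq1$; each factor $t\mapsto t-e^{i\beta_m}$ runs along a horizontal half-line avoiding $0$, so contributes an argument change in $(-\pi,\pi)$, giving $|\Psi|<|\bd|\pi$.

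Combining, $|2\pi N-\langle\bd,\lv\rangle T|<2\pi|\bd|$, i.e.\ $\bigl|N-\tfrac{\langle\bd,\lv\rangle}{2\pi}T\bigr|<|\bd|$; letting $a,b\to x,x+T$ then gives $\mu_{p,\lv}([x,x+T])=\tfrac{\langle\bd,\lv\rangle}{2\pi}T+\mathrm{err}(x,T)$ with $|\mathrm{err}(x,T)|\le|\bd|$, which is part (1), and part (2) follows as above. Apart from the homotopy-continuity point already flagged, the remaining details are routine — in particular the uniformity in $x$ of the $o(1)$ terms on the horizontal sides is immediate from $\langle\balpha,\lv\rangle<\langle\bd,\lv\rangle$ for $\balpha\neq\bd$ and $\langle\balpha,\lv\rangle>0$ for $\balpha\neq\mathbf 0$.
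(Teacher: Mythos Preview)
Your proof is correct and takes a genuinely different route from the paper's. The paper does not use the argument principle at all: it invokes the phase functions $\theta_1,\dots,\theta_{|\bd|}$ of \Cref{prop: phase functions}, observes that $\mu_{p,\lv}([a,b])=\sum_{j}|\{x\in[a,b]:\theta_j(x\lv)\in2\pi\Z\}|$, and that each $x\mapsto\theta_j(x\lv)$ is a strictly decreasing bijection onto $[\theta_j(b\lv),\theta_j(a\lv)]$, so each summand differs from $(\theta_j(a\lv)-\theta_j(b\lv))/2\pi$ by at most $1$; summing and using the linearity $\sum_j\theta_j(\xv)=-\langle\bd,\xv\rangle+\text{const}$ gives the main term with $|\mathrm{err}|\le|\bd|$. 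Part (2) is then derived exactly as you do.

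Your approach replaces the phase-function machinery with a direct contour argument. The horizontal-side analysis is standard; the interesting step is your bound on each vertical side via the homotopy $\lv^{(s)}=(1-s)\lv+s\mathbf{1}$, which reduces the problem to the univariate restriction $p_{\zeta}(t)=p(\zeta_1 t,\dots,\zeta_n t)$ where the factorization over the unit-circle roots gives $|\Psi|<|\bd|\pi$ at once. The continuity of $\Psi(s)$ is indeed the only point needing care, and your tail-uniformity remark handles it. What you gain is self-containment: you need only the definition of Lee-Yang plus $c_{\mathbf 0},c_{\bd}\neq0$, and you never touch \Cref{prop: phase functions} or the stability-preserver theory behind it. What the paper's approach gains is economy within the paper itself: the phase functions are developed anyway for the layer structure of $\Sigma_p$ and the ergodic results, so \Cref{thm: zeros density and upper bound} falls out in a few lines.
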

\begin{remark}
	The bounds in \Cref{thm: zeros density and upper bound} are tight. For any choice of $ n\in\N $ and $ \bd\in\N^{n} $, let $ p(\z)=\prod_{j=1}^{n}(1-z_{j})^{d_{j}}\in\A_{\bd}(n) $ and $ \lv=(2\pi,2\pi,\ldots,2\pi) $. Then, $ \mu_{p,\lv}=\sum_{x\in\Z}|\bd|\delta_{x} $, with gaps that are all equal to $ 1= 2\pi\frac{|\bd|}{\langle\bd,\lv\rangle}$. The error term for $ \mu_{p,\lv}([-\epsilon,\epsilon])=|\bd| $ is $ \mu_{p,\lv}([-\epsilon,\epsilon])-\frac{\langle\bd,\lv\rangle}{2\pi}2\epsilon=  |\bd|-2|\bd|\epsilon $ for arbitrary small $ 1>\epsilon>0 $.
\end{remark}
The next theorem shows that generically, an $ \N $-FQ enjoys the desired properties of having uniformly discrete support and having all unit coefficients.
\begin{thm}[minimal gap for generic FQ]\label{thm: minimal gap} Let $ n\ge 2 $ be an integer and $ \bd\in\Z_{>0}^{n} $. For any $ \Q $-linearly independent $ \lv\in\R_{+}^{n} $ and $ p\in\A_{\bd}(n) $, the measure $ \mu_{p,\lv} $ is non-periodic with unit coefficients and uniformly discrete support if and only if $ p $ satisfies:
	\begin{enumerate}
		\item[(i)]$ \nabla p(\z)\ne0 $ whenever $ \z\in\T^{n} $ such that $ p(\z)=0 $, and
		\item[(ii)] $ p $ has a non-binomial factor.   
	\end{enumerate} 
The set of Lee-Yang polynomials in $ \A_{\bd}(n) $ that satisfies both (i) and (ii) is a semi-algebraic open dense subset of $ \A_{\bd}(n) $. 
 Furthermore, we provide an explicit perturbation taking any $ p\in\A_{\bd}(n)  $ to a one parameter family polynomials $ p_{\lambda}=p+\sum_{j=1}^{|\bd|-1}\lambda^{j}q_{j} $, such that $ p_{0}=p $ and $ p_{\lambda} $ satisfies (i) for any $ \lambda>0 $.
\end{thm}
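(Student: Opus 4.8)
\emph{The engine: a monotonicity lemma.} The plan is to base everything on the following structural fact about the torus zero set $Z_p:=\{\z\in\T^n:p(\z)=0\}$ of a Lee--Yang polynomial. At a point $\z_0\in Z_p$ with $\nabla p(\z_0)\neq 0$, the set $Z_p$ is a smooth hypersurface near $\z_0$, and, in the angle coordinates $\z=e^{i\theta}$, the unit normal to $Z_p$ at $\z_0$ can be oriented to lie in the closed positive orthant $\R_{\ge 0}^n$, with its $j$-th entry strictly positive precisely when $\partial_{z_j}p(\z_0)\neq 0$ (so at least one entry is positive). I would prove this by restricting $p$ to the $z_j$-coordinate circle through $\z_0$: the resulting univariate polynomial $q_j$ has, by Hurwitz's theorem and the Lee--Yang property, all its roots on $\T$ and none inside $\D$; using $\re\bigl(\tfrac{w}{w-\zeta}\bigr)=\tfrac12$ for distinct $w,\zeta\in\T$ one gets that $\partial_{\theta_j}\arg p$ is constant and equal to $\tfrac12(\text{number of roots of }q_j)\ge 0$ off $Z_p$, while the argument principle forces $\arg q_j$ to drop by $\pi$ times the multiplicity across each root; comparing this jump with the local model $p(\z)\approx(a+ib)\langle\nu,\theta-\theta_0\rangle$ pins down the sign of $\nu_j$. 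Two consequences will be used repeatedly: (a) a smooth sheet of $Z_p$ — which near any of its points is, by the lemma, a ``decreasing graph'' $\theta_n=G(\theta_1,\dots,\theta_{n-1})$ (after relabeling) with $\partial_i G\le 0$ — is met by a line $t\mapsto\theta_0+t\lv$ with $\lv\in\R_{>0}^n$ in at most one point, since the height defect along the line is strictly increasing; (b) if all of $Z_p$ is smooth, the pairing $\langle\nu,\lv\rangle$ is continuous and positive on the compact set $Z_p$, hence bounded below by some $\delta>0$.

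\emph{The equivalence.} For ``$\Leftarrow$'': assume (i) and (ii). By (i) every point of $Z_p$ is smooth, so by the lemma the line $t\mapsto\exp(it\lv)$ is transversal to $Z_p$ at each of its intersection points (the normal pairs strictly positively with $\lv$); hence every zero of $f(x)=p(\exp(ix\lv))$ is simple, giving unit coefficients. The bound $\langle\nu,\lv\rangle\ge\delta>0$ together with a tubular-neighborhood argument yields a uniform lower bound on the distance between consecutive intersections, so the support is uniformly discrete. Finally (ii) supplies a non-binomial irreducible factor of $p$, so by \Cref{thm: irreduicible} a subset $\Lambda_j\subseteq\supp\mu_{p,\lv}$ has $\dim_\Q(\Lambda_j)=\infty$; since a periodic set has finite $\dim_\Q$, $\mu_{p,\lv}$ is non-periodic. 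For ``$\Rightarrow$'' I argue the contrapositive. If (ii) fails, then by \Cref{thm: irreduicible} every factor of $p$ is binomial, so $\supp\mu_{p,\lv}$ is a finite union of arithmetic progressions with common differences $2\pi/\langle\alpha_i-\beta_i,\lv\rangle$: if two are incommensurable the union is not uniformly discrete, if all are commensurable then $\mu_{p,\lv}$ is periodic, and a repeated factor would already force a coefficient $\ge 2$ — so one of the three properties fails. If (i) fails, pick $\z_0\in Z_p$ with $\nabla p(\z_0)=0$: if $p$ is not square-free then \Cref{thm: irreduicible} gives $\mu_{p,\lv}=\sum_j c_j\mu_{q_j,\lv}$ with some $c_j\ge 2$, forcing a coefficient $\ge 2$; if $p$ is square-free, then $Z_p$ is a pure $(n-1)$-dimensional real-analytic hypersurface, singular at $\z_0$, whose smooth locus carries a normal field valued in $\pm\R_{\ge 0}^n$ — this monotone-graph structure rules out cuspidal or tangential degeneracies and forces $Z_p$ near $\z_0$ to be a union of at least two decreasing-graph sheets through $\z_0$; a line in direction $\lv$ through a nearby point meets each sheet exactly once (consequence (a)), so it carries at least two zeros of $f$ that coalesce as the basepoint — which the equidistributed $\lv$-line realizes arbitrarily close to $\z_0$ — tends to $\z_0$, and the support is not uniformly discrete.

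\emph{Genericity and the perturbation.} The set $\mathcal B_1:=\{p\in\A_\bd(n):p\text{ violates (i)}\}$ is the image of the compact real-algebraic set $\{(p,\z)\in\A_\bd(n)\times\T^n:p(\z)=0,\ \nabla p(\z)=0\}$ under projection, hence closed, and semi-algebraic by Tarski--Seidenberg. The set $\mathcal B_2:=\{p\in\A_\bd(n):p\text{ violates (ii)}\}$ equals the set of products of binomials of multidegrees summing to $\bd$ (the irreducible factors of a binomial are binomials); it is closed in $\A_\bd(n)$ — a limit of such products cannot acquire a monomial factor, because a Lee--Yang polynomial and its coordinatewise reversal both have nonzero constant term, so $p$ has nonzero constant and leading coefficients and the binomial factors stay in a compact family — and it is semi-algebraic of dimension $O(|\bd|)$, strictly below $\dim\A_\bd(n)$ exactly when $n\ge 2$ (for $n=1$ every polynomial is a product of binomials, explaining the hypothesis). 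Thus $\{\text{(i) and (ii)}\}=\A_\bd(n)\setminus(\mathcal B_1\cup\mathcal B_2)$ is open and semi-algebraic; it is dense once $\mathcal B_1$ is known to be nowhere dense, which is what the explicit perturbation delivers. For that last point, by the lemma (i) is equivalent to smoothness of $Z_p$, a condition on $p$ alone; one resolves a singularity of $Z_p$ by perturbing inside $\A_\bd(n)$, modelled on splitting a root of multiplicity $m$ of a univariate polynomial with roots on $\T$ into $m$ distinct roots on $\T$ — after a Cayley transform $u\mapsto x$ carrying the root to $0$, replacing the local factor $x^m$ by $\prod_k(x-\lambda r_k)$ for fixed distinct reals $r_k$ with $\sum_k r_k=0$ and one $r_k=0$, which is \emph{polynomial} in $\lambda$ of degree $m-1$. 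Since a zero of $f$ has multiplicity at most $|\bd|$ by \Cref{thm: zeros density and upper bound}(1), performing all such resolutions simultaneously produces a family $p_\lambda=p+\sum_{j=1}^{|\bd|-1}\lambda^j q_j$ which stays Lee--Yang of multidegree $\bd$ for small $\lambda>0$ and has $Z_{p_\lambda}$ smooth, hence satisfies (i), for every $\lambda>0$.

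\emph{The main obstacle.} The delicate step is the square-free-singular case of ``$\Rightarrow$'': showing that a singular point of $Z_p$ forces arbitrarily close pairs of zeros of $f$. Because $\lv$ is only assumed $\Q$-linearly independent and not generic, a line in direction $\lv$ could a priori be tangent to, or parallel to a branch of, $Z_p$ at the singular point, so a naive transversality or dimension count does not suffice; the argument must genuinely exploit the monotone-graph structure from the lemma — both to restrict the possible local geometry at the singularity and to guarantee that positive-direction lines meet each branch exactly once — in order to extract the coalescing pair. A secondary difficulty is the bookkeeping that pins the perturbation's $\lambda$-degree to exactly $|\bd|-1$ while keeping the whole family Lee--Yang of the prescribed multidegree.
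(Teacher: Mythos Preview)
Your treatment of the equivalence and of the semi-algebraic/open/dense claims is essentially the paper's argument recast in geometric language: your ``monotonicity lemma'' is Proposition~3.6/4.3 (the phase functions $\theta_j$ have $\nabla\theta_j\in\R_{\le 0}^n$), your uniform transversality bound $\langle\nu,\lv\rangle\ge\delta>0$ is the paper's $\inf\tau_\lv>0$ from Lemma~6.5, and your coalescing-sheets picture at a singular point is exactly how the paper proves the ``$\sing(\Sigma_p)\neq\emptyset\Rightarrow\inf\tau_\lv=0$'' direction of that lemma (via the functions $t_{j+1}$ near a multiplicity-$m$ point). So the ``main obstacle'' you flag is real but already handled by the machinery you set up; the paper just packages it as Lemma~6.5 and then invokes density of the $\lv$-orbit. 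Your dimension count for $\mathcal B_2$ is the same as the paper's $\dim B_\alpha<\dim\A_\bd$ computation.

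The genuine gap is the explicit perturbation. Your proposal --- Cayley-transform to send a singular root to $0$, then replace a ``local factor $x^m$'' by $\prod_k(x-\lambda r_k)$ --- is a univariate desingularization, and there is no meaningful ``local factor'' of a multivariate $p\in\A_\bd(n)$ to replace. Even granting some multivariate analogue, nothing in your construction forces the perturbed polynomial to remain Lee--Yang (you yourself only claim this for \emph{small} $\lambda$, which contradicts the requirement that $p_\lambda$ satisfy (i) for \emph{all} $\lambda>0$), and ``performing all such resolutions simultaneously'' at the continuum of singular points of $Z_p$ is not a well-defined polynomial operation. The paper's mechanism is completely different: it conjugates Nuij's stability-preserving operator $\mathcal D_\lambda(q)=q+\lambda\sum_j y_j\partial_{z_j}q$ through a M\"obius transformation $\phi$ sending $\H_+\to\D$ to obtain a linear operator $\mathcal D_{\lambda,\xv}$ on $\C[\z]_{\le\bd}$ that provably preserves $\A_\bd$, and shows (Proposition~3.14) that $(\mathcal D_{\lambda,\xv})^{|\bd|}p\in\A_\bd^\circ$ for every $\lambda>0$. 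This is what yields a one-parameter family polynomial in $\lambda$ with the claimed degree, Lee--Yang for all $\lambda>0$, and with smooth torus zero set for all $\lambda>0$.
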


\Cref{fig: perturbation}  shows the effect of the perturbation $ p\mapsto p_{\lambda} $ for the \Cref{ex:running}. 
\begin{remark}
	There is no loss of generality by considering only $ \Q $-linearly independent $ \lv $'s, due to \cite{AloCohVin}. Nevertheless, we point out that if $ p$ satisfies (i) then $ \mu_{p,\lv} $ will have unit coefficients and uniformly discrete support for any $ \lv\in\R_{+}^{n} $.
	Kurasov and Sarnak also exploit this fact for a specific instance in \cite[Theorem~2]{KurasovSarnak20jmp} 
	and more general Lee-Yang polynomials coming from metric graphs in a forthcoming  paper \cite{KS_forthcoming}. 
	\end{remark}

\begin{figure}[h!]
	\includegraphics[width=0.4\textwidth]{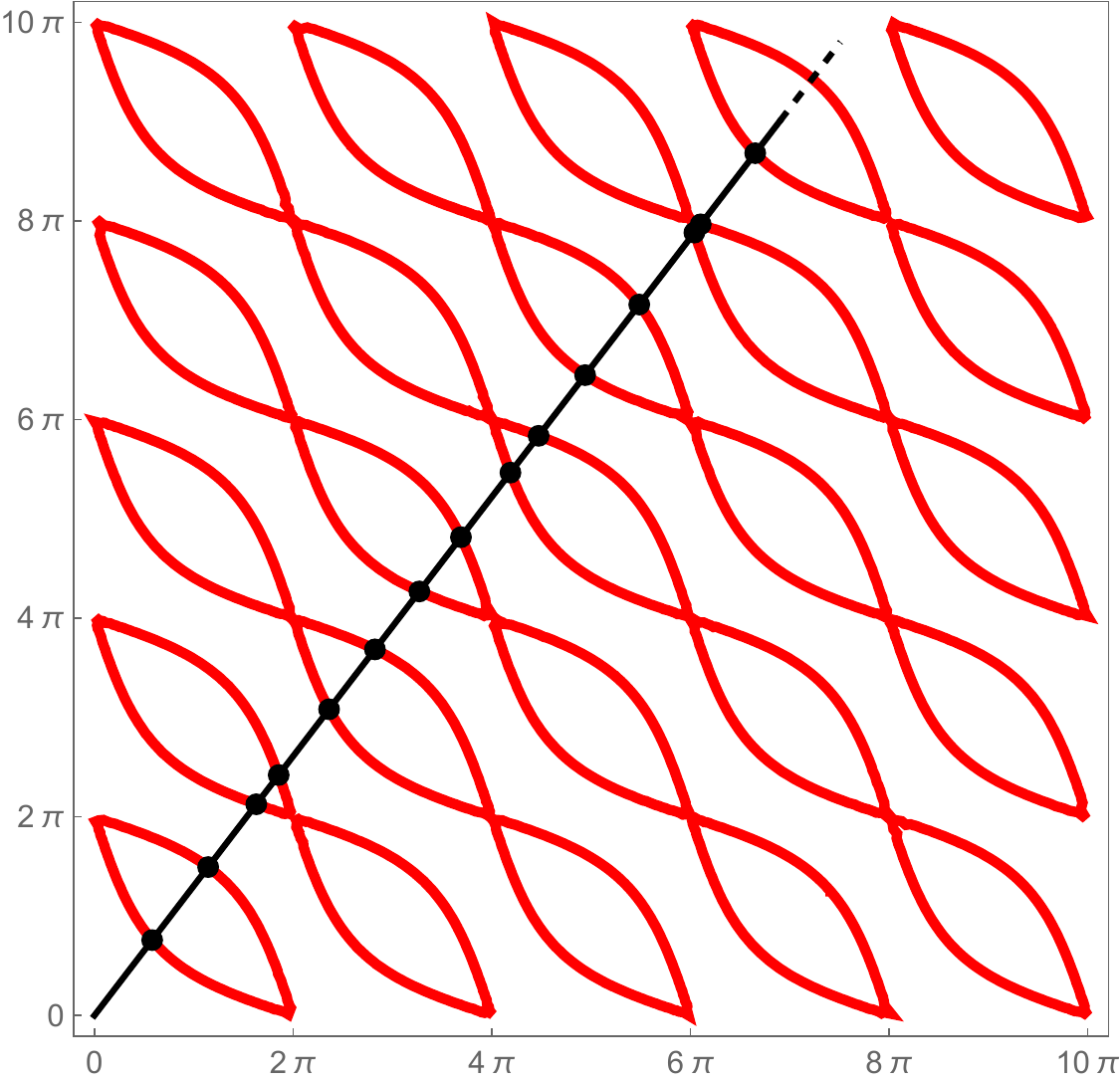}  \hspace{.5in}	\includegraphics[width=0.4\textwidth]{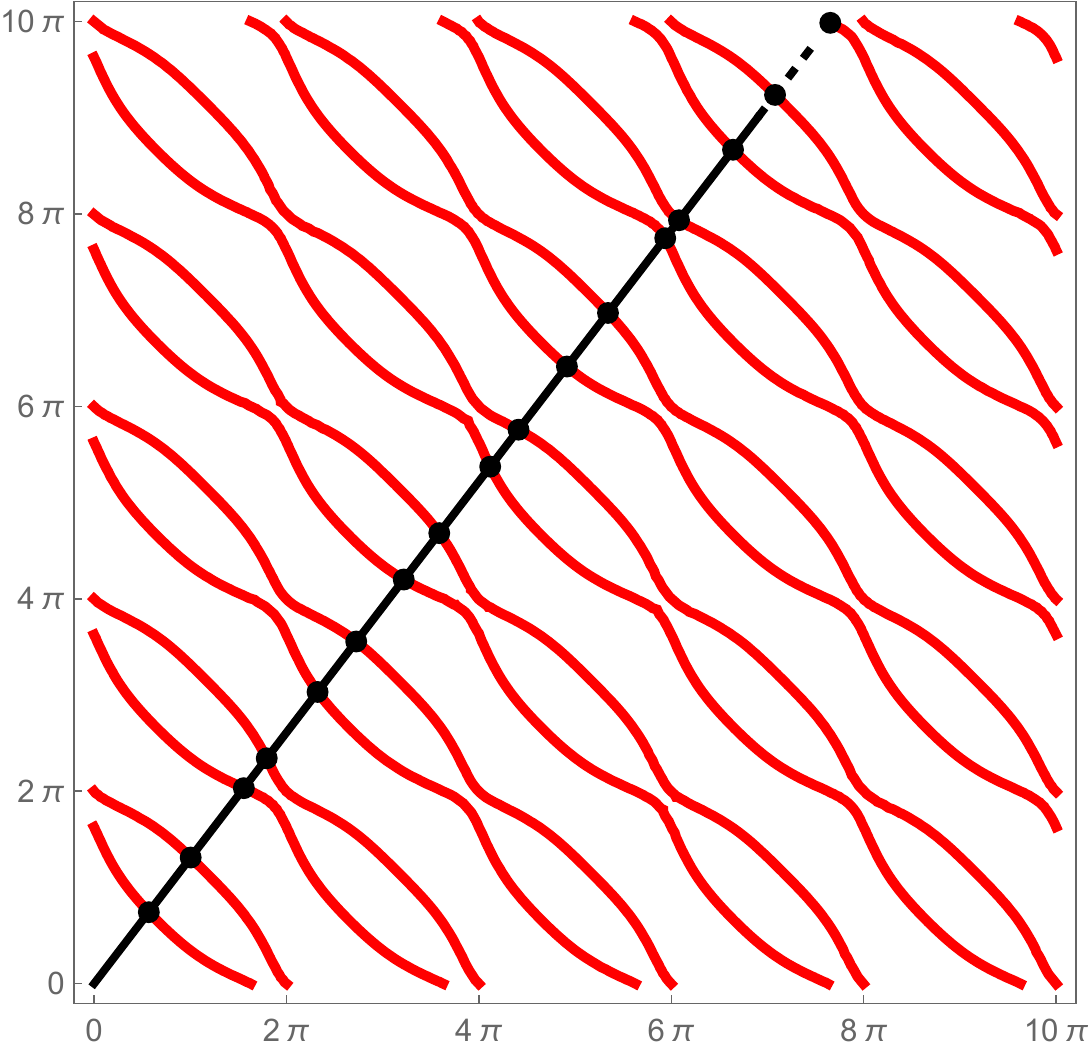} 
	\caption{
		(Left) The singular zero set of $ p $  and the line in direction $ \lv $, as in \Cref{fig: flow on secman}. (Right) The regular zero set of the perturbed polynomial $ p_{\lambda} $ for $ \lambda=0.2 $ and the same line in direction $ \lv $.}
	\label{fig: perturbation}
\end{figure}

Given $ p\in\A_{\bd}(n)$ and $ \lv\in\R_{+}^{n}$, let $ (x_{j})_{j\in\Z} $ be the zeros $ p(\exp(ix\lv)) $, numbered increasingly with multiplicity, so that $ \mu_{p,\lv}=\sum_{j\in\Z}\delta_{x_{j}} $. A random measure of the form $ \sum_{j\in\Z}\delta_{x_{j}}  $, for random $ x_{j} $'s, is called a \emph{point process} and it can be defined it in terms of the gaps $ \Delta_{j}=x_{j+1}-x_{j} $ which are often taken to be i.i.d $ \Delta_{j} \sim \rho$ for some probability distribution $ \rho$. The next theorem shows that the deterministic gaps between atoms in $ \mu_{p,\lv} $ obey a well defined ``gaps-distribution'' $ \rho=\rho_{p,\lv} $. We say that $ \mu_{p,\lv} $ has a \emph{gap distribution} $ \rho $, if 
\[\frac{1}{N}\sum_{j=1}^{N}\delta_{(x_{j+1}-x_{j})}\weak \rho,\]
where $ \weak $ stands for convergence in distribution, which means that for any continuous $ f$,
\begin{equation}\label{eq: gap distribution def}
	\lim_{N\to\infty}\frac{1}{N}\sum_{j=1}^{N}f(x_{j+1}-x_{j})=\int fd\rho.
\end{equation}

\begin{thm}[Existence of gaps distribution]\label{thm: existance of gaps distribution} Every $ \N $-valued FQ $ \mu $ has a well-defined gap distribution $ \rho $ with the following properties:
	\begin{enumerate}
		\item It has finitely many atoms, say $ (r_{j})_{j=1}^{M} $, such that $ \rho=\rho_{ac}+\sum_{j=1}^{M}\rho(\{r_{j}\})\delta_{r_{j}} $, and $ \rho_{ac} $ is absolutely continuous with respect to the Lebesgue measure on $ \R $. 
		\item $ \rho_{ac} =0$ if and only if $ \mu $ is periodic.
		\item If $ \Delta\ge0 $ is any gap between consecutive atoms of $ \mu $, then $ \rho(I)>0 $ for any open neighborhood $ I\subset\R $  of $ \Delta$.
	\end{enumerate} 
\end{thm}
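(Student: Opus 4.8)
\medskip

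The plan is to push the problem onto the torus and exploit the structure of $\mu_{p,\lv}$ as the pullback of a torus zero set under a linear flow, together with the ergodicity of that flow. By \Cref{thm: every NFQ is mupl} and \Cref{thm: irreduicible}, write $\mu = \mu_{p,\lv}$ with $p$ Lee-Yang and $\lv$ having $\Q$-linearly independent entries, and factor $p = \prod_j q_j^{c_j}$. Consider the toric hypersurface $Z = \{\z\in\T^n : p(\z)=0\}$, which carries a natural ``arclength'' measure coming from the fact that along the line $t\mapsto \exp(it\lv)$ the zeros of $f(x)=p(\exp(ix\lv))$ are simple outside a measure-zero set (by the part~(1) ``almost all unit coefficients'' statement applied factor by factor). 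The key object is the first-return map of the linear flow $\phi_t(\z)=\z\cdot\exp(it\lv)$ on $\T^n$ to (a desingularized model of) $Z$: starting at a zero $x_j$, the gap $\Delta_j = x_{j+1}-x_j$ is exactly the return time of the flow from the point $\exp(ix_j\lv)\in Z$ back to $Z$. Since $\lv$ has $\Q$-independent entries, the flow is uniquely ergodic on $\T^n$, and the induced first-return system on $Z$ is ergodic with respect to the normalized arclength measure $\nu$ on $Z$ (with the return-time function $\tau:Z\to\R_{>0}$ playing the role of the roof function).

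\medskip

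First I would make precise the statement that $\frac1N\sum_{j=1}^N f(\Delta_j) \to \int_Z f(\tau)\,d\nu$. This is a Birkhoff ergodic theorem for the first-return map $T:Z\to Z$: the orbit $\exp(ix_j\lv)$, $j\in\Z$, is the $T$-orbit of a generic starting point, and by unique ergodicity of the ambient flow every starting point in $Z$ is generic for $T$ (not merely almost every one). Care is needed at the singular locus of $Z$, where $\tau$ may be discontinuous and where zeros of higher multiplicity sit; but that locus has $\nu$-measure zero and, crucially, is hit by the flow only countably often, so it does not affect the Cesàro averages. Multiplicities are handled by listing each zero $x_j$ with multiplicity $m(x_j)$, which corresponds to passing to a ramified cover of $Z$ near the singular locus — again a measure-zero modification. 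This establishes that $\mu$ has a gap distribution $\rho = (\tau)_*\nu$, the pushforward of normalized arclength on $Z$ under the return-time map.

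\medskip

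Next I would read off the three properties from this description. For (1): decompose $\nu$-a.e. point of $Z$ according to whether $\tau$ is locally constant there. On the regular part of $Z$ where $p$ meets the flow transversally, $\tau$ is a real-analytic function of the point, so $(\tau)_*$ of arclength is absolutely continuous except on the set where $\tau$ is locally constant; that exceptional set consists of connected components of $Z$ on which $\tau\equiv$ const. Because $Z$ is a real algebraic variety of bounded complexity (degree controlled by $\bd$), it has finitely many such components, giving finitely many atoms $r_1,\dots,r_M$; everything else contributes the absolutely continuous part $\rho_{ac}$. For (3): if $\Delta\ge 0$ is a gap actually realized, then $\tau$ takes the value $\Delta$ at some point $\z_0\in Z$; by continuity of $\tau$ on the regular locus (or by the fact that $\z_0$ lies in the support of $\nu$) every neighborhood $I$ of $\Delta$ pulls back to a set of positive $\nu$-measure, hence $\rho(I)>0$. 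For (2): if $\mu$ is periodic then its support is a finite union of arithmetic progressions, so only finitely many gap values occur and $\rho$ is purely atomic, i.e.\ $\rho_{ac}=0$. Conversely, if $\rho_{ac}=0$ then $\tau$ is locally constant $\nu$-a.e., so on each irreducible component the return time is constant; using \Cref{thm: irreduicible}, a non-binomial irreducible factor $q_j$ forces $\dim_\Q(\Lambda_j)=\infty$ and in particular $\tau$ cannot be constant on the corresponding component of $Z$ (a constant return time would make $\Lambda_j$ an arithmetic progression up to the finitely many singular points, contradicting the infinite $\Q$-dimension) — so all factors are binomial and $\mu$ is periodic.

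\medskip

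The main obstacle I anticipate is the analysis at the singular locus of $Z$: giving a clean definition of the first-return system (choosing the right desingularization or the right cover so that the return-time roof function is well defined and $\nu$-integrable, and so that multiplicities are correctly accounted), and verifying that unique ergodicity of the linear flow on $\T^n$ genuinely transfers to the induced map on $Z$ for \emph{every} orbit — including orbits that pass through or accumulate near singular points. A secondary technical point is justifying absolute continuity of $\rho_{ac}$ quantitatively: one wants that $\{\z\in Z_{\mathrm{reg}} : \tau(\z)\in [a,b]\}$ has $\nu$-measure $O(b-a)$ away from the finitely many ``flat'' components, which amounts to a coarea/transversality estimate for the analytic map $\tau$ on a real-analytic variety of controlled degree — plausibly the place where the degree bounds from $\bd$ must be invoked to rule out pathological level-set behavior.
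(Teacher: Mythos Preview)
Your overall approach is exactly the paper's: represent $\mu=\mu_{p,\lv}$, define the first-return time $\tau_\lv$ on $\Sigma_p$, push forward a natural transverse measure, and invoke unique ergodicity of the linear flow; analyticity of $\tau_\lv$ on the regular locus then feeds into the coarea argument for absolute continuity. You also correctly flag the singular-locus analysis as the main technical hurdle.

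There is, however, a genuine gap in your argument for (2). You claim that if $\tau$ is constant on the component $\Sigma_{q_j}$ then $\Lambda_j$ is (essentially) an arithmetic progression. This does not follow: $\tau_\lv$ is the first return time to \emph{all of} $\Sigma_p$, so $\tau_\lv|_{\Sigma_{q_j}}\equiv\Delta$ only says that from each point of $\Sigma_{q_j}$ the flow hits $\Sigma_p$ after time $\Delta$, and that next hit may lie on a different component $\Sigma_{q_i}$. The paper's route is different: if $\tau_\lv\equiv\Delta>0$ on an open set, analyticity plus irreducibility force a relation $q_j(\z)=q_i(\exp(i\Delta\lv)\z)$ between two irreducible factors; finitely many factors then gives finitely many atoms, and if $q_i=q_j$ one deduces $q_i$ is binomial. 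From there (2) follows: non-periodic $\mu$ has either two binomial factors with incommensurable periods or a non-binomial factor with $\dim_\Q(\Lambda_j)=\infty$, and in either case infinitely many gap values occur, forcing $\rho_{ac}\neq 0$. Two smaller points you should also address: your ``ramified cover'' for multiplicities is too vague---the paper instead adds an explicit atom $c_0\delta_0$ with $c_0=\langle\bd-\bd^{\mathrm{red}},\lv\rangle/\langle\bd,\lv\rangle$ and averages the function $h(\xv)=\frac{m(\xv)-1}{m(\xv)}f(0)+\frac{1}{m(\xv)}f(\tau_\lv(\xv))$ so that repeated zeros contribute the correct zero gaps; and your argument for (3) does not cover $\Delta=0$ (where $\tau_\lv$ never vanishes) or the case where the realizing point is singular (where $\tau_\lv$ need not be continuous)---the paper handles both via the facts that $\inf_{\reg(\Sigma_p)}\tau_\lv=0$ whenever $\sing(\Sigma_p)\neq\emptyset$ and that $\tau_\lv$-values at singular points are limits of values at regular points.
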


As discussed above, every $ \N $-valued FQ $ \mu $ can 
be written as $\mu_{p, \ell}$ for some Lee-Yang polynomial $p$ and vector $\ell\in \R_+^n$ where $\ell$ has linearly independent entries over $\Q$.  
We explore the dependence of the gap distribution $\rho_{p,\ell}$ on both the polynomial $p$ and vector $\ell$.  
First we note that the gap distribution is independent of torus actions on $p$ and give conditions on the factorization of $p$ under which the $\rho_{p,\ell}$  has atoms. 

\begin{thm}[$ p $ dependence of the gap distribution]\label{thm: p dependence} Suppose $ p\in\A_{\bd}(n) $ and let $ \lv\in\R_{+}^{n} $ with $ \Q $-linearly independent entries, then
	\begin{enumerate}
		\item For any fixed $ \xv\in\R^{n} $, the polynomial $ q(\z)=p(\exp(i\xv)\z)=p(e^{ix_{1}}z_{1},\ldots,e^{ix_{n}}z_{n}) $ is in $ \A_{\bd}(n) $, and $ \rho_{q,\lv}=\rho_{p,\lv} $.
	\item The distribution $ \rho_{p,\lv} $ has an atom at $ \Delta\ge0 $ if and only if there are two irreducible factors of $ p $, say $ q_{i} $ and $ q_{j} $, such that $ q_{j}(\z)=q_{i}(\exp(i\Delta\lv)\z) $. Moreover,
	\item if $ \Delta>0 $ and $ q_{i}=q_{j} $, namely $q_{i}(\z)= q_{i}(\exp(i\Delta\lv)\z) $, then $ q_{i} $ is binomial. 		
		\end{enumerate}     
\end{thm}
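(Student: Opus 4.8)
All three parts follow from the dynamical picture underlying \Cref{thm: existance of gaps distribution}. Identify $\T^n=\R^n/2\pi\Z^n$; then the zeros of $f_p(x)=p(\exp(ix\lv))$, counted with multiplicity, are exactly the times at which the orbit of the translation flow $\phi_t(\theta)=\theta+t\lv$ on $\T^n$ meets the torus zero set $Z_p=\{\theta:p(e^{i\theta})=0\}$, and $\rho_{p,\lv}$ is the distribution of the first-return time $\tau$ of $\phi$ to $Z_p$ with respect to the canonical (transverse-speed-weighted surface) hitting measure on $Z_p$. Since the entries of $\lv$ are $\Q$-linearly independent, $\phi$ is uniquely ergodic on $\T^n$ with Haar measure its only invariant probability measure, and I will use this throughout. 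For Part (1): as $|e^{ix_k}|=1$, the rotation $R_{\x}\colon\z\mapsto\exp(i\x)\z$ maps each of $\D^n$, $\T^n$, $(\C\setminus\overline{\D})^n$ into itself, so $q=p\circ R_{\x}\in\A_{\bd}(n)$; and since $f_q(x)=p(\exp(i\x)\exp(ix\lv))$, the zeros of $f_q$ are the return times to $Z_p$ of the $\phi$-orbit started at $\x$ rather than at $0$. By unique ergodicity the averages in \eqref{eq: gap distribution def} have the same limit for every starting point, so $\rho_{q,\lv}=\rho_{p,\lv}$.

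For Part (2), factor $p=\prod_{k=1}^N q_k^{c_k}$ into irreducibles as in \Cref{thm: irreduicible}, so that $Z_p=\bigcup_k Z_{q_k}$ and the hitting measure is $\sum_k c_k\nu_k$ with $\nu_k$ the hitting measure on $Z_{q_k}$. Then $\rho_{p,\lv}$ has an atom at $\Delta\ge0$ exactly when $\nu_k(E_k(\Delta))>0$ for some $k$, where $E_k(\Delta)=\{y\in Z_{q_k}:\tau(y)=\Delta\}$. Assume first $\Delta>0$. Since $E_k(\Delta)\subseteq Z_{q_k}\cap\bigcup_j\phi_{-\Delta}(Z_{q_j})$ and $\nu_k$ charges only full-dimensional subsets of $Z_{q_k}$, positivity of $\nu_k(E_k(\Delta))$ forces the hypersurfaces $Z_{q_k}$ and $\phi_{-\Delta}(Z_{q_j})=\{q_j(\exp(i\Delta\lv)\z)=0\}$ to share an irreducible component for some $j$; as both $q_k$ and $\z\mapsto q_j(\exp(i\Delta\lv)\z)$ are irreducible, this gives $q_k(\z)=\lambda\,q_j(\exp(i\Delta\lv)\z)$ for some $\lambda\ne0$, which (irreducible factors being defined up to scalar) is the stated relation after renaming. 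Conversely, if $q_j(\z)=q_i(\exp(i\Delta\lv)\z)$ then $\phi_\Delta(Z_{q_j})=Z_{q_i}\subseteq Z_p$, so $\tau\le\Delta$ on all of $Z_{q_j}$, and one must show that $\tau=\Delta$ on a set of positive $\nu_j$-measure; equivalently, that the complement in $Z_{q_j}$ of $\bigcup_{0<t<\Delta}\phi_{-t}(Z_p)$ has positive $\nu_j$-measure, i.e. that the remaining sheets of $Z_p$ do not entirely screen $Z_{q_j}$ from $Z_{q_i}$ along the flow. For $\Delta=0$: an atom at $0$ means a positive density of repeated atoms of $\mu_{p,\lv}$; by \Cref{thm: irreduicible} a binomial factor contributes unit multiplicities and a non-binomial factor with $c_k=1$ contributes unit multiplicities almost everywhere, while for two \emph{distinct} irreducible factors the intersection $Z_{q_k}\cap Z_{q_{k'}}$ has codimension $\ge2$ in $\T^n$ (or is empty) and so is met by the $\phi$-orbit in density zero; hence an atom at $0$ occurs iff $c_k\ge2$ for some $k$, i.e. iff the relation holds with $q_i=q_j$ and $\Delta=0$.

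For Part (3), suppose $\Delta>0$ and $q:=q_i$ satisfies $q(\exp(i\Delta\lv)\z)=q(\z)$; set $g=\exp(i\Delta\lv)\in\T^n$, so that $q$ is invariant under every power of $g$ and hence, by continuity, under $H=\overline{\langle g\rangle}$. The relation lattice $R=\{\mm\in\Z^n:g^{\mm}=1\}=\{\mm:\langle\mm,\lv\rangle\in\tfrac{2\pi}{\Delta}\Z\}$ has rank at most $1$: the map $\mm\mapsto\langle\mm,\lv\rangle$ is injective on $\Z^n$ because the entries of $\lv$ are $\Q$-linearly independent, and its image meets the discrete group $\tfrac{2\pi}{\Delta}\Z$ in a cyclic subgroup. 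If $R=\{0\}$ then $H=\T^n$ and $q$ is a single monomial; but $\z^\alpha$ with $\alpha\ne0$ vanishes in $\D^n$, so $q$ cannot be a non-constant Lee-Yang polynomial --- contradiction. Hence $R$ has rank exactly $1$; pick a primitive $\mm_0$ with $R\subseteq\Z\mm_0$, so that $H$ contains the codimension-one subtorus $H_0=\{e^{i\theta}:\langle\mm_0,\theta\rangle\in2\pi\Z\}$. Comparing coefficients in $q(w\z)=q(\z)$ for $w\in H_0$ gives $w^\alpha=1$ for every exponent $\alpha$ of $q$, hence, $\mm_0$ being primitive, every exponent of $q$ lies in $\Z\mm_0$; since $q$ is an honest polynomial, $\mm_0$ must have all coordinates of one sign (otherwise $q$ would be constant), and after replacing $\mm_0$ by $-\mm_0$ if needed we get $q(\z)=h(\z^{\mm_0})$ for a one-variable polynomial $h$, with $h(0)\ne0$ because the irreducible non-monomial $q$ is not divisible by any $z_k$. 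Writing $h(u)=h(0)\prod_k(1-u/\xi_k)$ displays $q$ as a nonzero constant times $\prod_k(\xi_k-\z^{\mm_0})$, a product of binomials which are irreducible since $\mm_0$ is primitive; irreducibility of $q$ then forces $h$ to be linear, so $q$ is a binomial.

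The crux is the converse half of Part (2) for $\Delta>0$: showing that the relation $q_j(\z)=q_i(\exp(i\Delta\lv)\z)$ really produces first returns of length exactly $\Delta$ on a positive-measure portion of $Z_{q_j}$, rather than systematically shorter returns caused by the other sheets of $Z_p$. I expect this to require a careful analysis of how the finitely many hypersurfaces $Z_{q_k}$ interleave along the flow, again exploiting the genericity afforded by the $\Q$-linear independence of $\lv$ to exclude degenerate coincidences among the sheets.
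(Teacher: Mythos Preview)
Your treatment of Part~(1) and the forward (atom $\Rightarrow$ relation) direction of Part~(2) matches the paper's: unique ergodicity of the linear flow makes the gap average independent of the starting point, and a positive-measure level set of the first-return time forces two irreducible torus hypersurfaces to coincide after translation by $\Delta\lv$. Your Part~(3), however, is genuinely different and more elementary than the paper's. The paper argues that $q_i(\z)=q_i(\exp(i\Delta\lv)\z)$ places an infinite arithmetic progression inside the zero set of $x\mapsto q_i(\exp(ix\lv))$ and then invokes \Cref{thm: irreduicible}(2) (hence, ultimately, Lang's $G_m$ theorem) to rule out non-binomial $q_i$. Your character-theoretic argument---compute the annihilator of $\overline{\langle\exp(i\Delta\lv)\rangle}$ in $\Z^n$, show it has rank at most one, and conclude that every exponent of $q_i$ lies on a single primitive ray---avoids that machinery entirely. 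One small correction: when $R=\{0\}$ the invariance under all of $\T^n$ forces $q_i$ to be a \emph{constant}, not merely a monomial, but the contradiction is the same.

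The gap you flag in the converse of Part~(2) for $\Delta>0$ is real, and you should not try to close it: the ``if'' direction is false as stated. Take $n=1$, $\ell=1$, and $p(z)=(1-z)(1-e^{-i\alpha}z)(1-e^{-i\beta}z)$ with $0<\alpha<\beta<2\pi$ generic; the gap distribution is $\tfrac{1}{3}(\delta_\alpha+\delta_{\beta-\alpha}+\delta_{2\pi-\beta})$, yet the relation $q_1(z)=q_3(e^{i\Delta}z)$ holds with $\Delta=\beta$, and $\beta$ is not an atom. Your worry about ``screening'' by intermediate sheets is exactly what happens. The paper's proof of \Cref{lem:step2} in fact establishes only the forward implication, and that is the only direction used downstream (for instance in bounding the number of atoms in \Cref{cor: atoms}); the ``if and only if'' in the statement should be read as ``only if''.
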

\begin{cor}\label{cor: atoms}Suppose $ p\in\A_{\bd}(n) $ and let $ \lv\in\R_{+}^{n} $ with $ \Q $-linearly independent entries.
	\begin{enumerate}
			\item If $ p $ is irreducible and not binomial then $ \rho_{p,\lv} $ is absolutely continuous.   
			\item If $ p $ is binomial (i.e, $ \mu_{p,\lv} $ is a Dirac Comb), then $ \rho_{p,\lv} $ is the atomic measure at $ \frac{2\pi}{\langle\bd,\lv\rangle} $. 
			\item $ \rho_{p,\lv} $ has an atom at $ 0 $ if and only if $ p $ has a square factor.
			\item Suppose that $ p $ has $ N+M $ distinct irreducible factors, $ M $ which are binomial and $ N $ non-binomial. Then, $ \rho_{p,\lv} $ has at most $ { N \choose 2 }+M+1 $ atoms.
	\end{enumerate}   
\end{cor}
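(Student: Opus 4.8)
The plan is to deduce all four parts from \Cref{thm: p dependence} and \Cref{thm: existance of gaps distribution} by inspecting the factorization $p=\prod_j q_j^{c_j}$ of $p$ into distinct irreducible Lee-Yang factors. Write $\tau_\Delta q(\z):=q(\exp(i\Delta\lv)\z)$ for the torus translation appearing in \Cref{thm: p dependence}; the criterion there says, roughly, that the atoms of $\rho_{p,\lv}$ are the $\Delta\ge 0$ for which $\tau_\Delta q_i=q_j$ for some irreducible factors $q_i,q_j$ of $p$ (at $\Delta=0$ this means $q_i=q_j$ is a repeated factor). Two elementary observations will be used throughout. First, $\tau_\Delta$ only rescales each coefficient of $q$ by a unit, hence preserves its monomial support: it carries binomials to binomials with the same exponent vector, and never relates a binomial factor to a non-binomial one. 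Second, a binomial Lee-Yang polynomial has nonzero constant and top-degree coefficients (else it would vanish somewhere in $\D^n$ or in $(\C\setminus\overline{\D})^n$), so it equals $c(\z^{\bd}-\omega)$, and being Lee-Yang forces $|\omega|=1$; using \Cref{thm: p dependence}(1) one may therefore assume $\omega=1$ when convenient.

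Parts (1)--(3) are short. For (1): an irreducible non-binomial $p$ is its own unique irreducible factor, with multiplicity one; an atom at $\Delta>0$ would give $\tau_\Delta p=p$, hence $p$ binomial by \Cref{thm: p dependence}(3), and an atom at $\Delta=0$ would require a repeated factor; so $\rho_{p,\lv}$ has no atoms and is absolutely continuous by \Cref{thm: existance of gaps distribution}(1). For (2): by the second observation and \Cref{thm: p dependence}(1) we may take $p=c(\z^{\bd}-1)$, so $p(\exp(ix\lv))=c(e^{ix\langle\bd,\lv\rangle}-1)$ has simple zeros exactly at $\frac{2\pi}{\langle\bd,\lv\rangle}\Z$; hence $\mu_{p,\lv}$ is the Dirac comb with step $2\pi/\langle\bd,\lv\rangle$ and $\rho_{p,\lv}=\delta_{2\pi/\langle\bd,\lv\rangle}$ (one may instead combine arithmetic-progression support with the density in \Cref{thm: zeros density and upper bound}(1)). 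For (3): applying the criterion at $\Delta=0$, and using $\exp(0)=\mathbf 1$, an atom at $0$ is equivalent to the existence of a repeated irreducible factor, i.e.\ a square factor of $p$.

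For (4) I would bound the (finitely many, by \Cref{thm: existance of gaps distribution}(1)) atoms of $\rho_{p,\lv}$ by $\binom{N}{2}+M+1$, splitting them according to their witnessing pair. The atom at $0$ accounts for the ``$+1$''. A positive atom $\Delta$ witnessed by $\tau_\Delta q_i=q_j$ has, by the first observation, $q_i$ and $q_j$ either both non-binomial or both binomial. In the non-binomial case \Cref{thm: p dependence}(3) gives $q_i\ne q_j$ and a trivial self-translation group for $q_i$, so $(q_i,q_j)$ determines $\Delta$ uniquely and the reversed pair yields $-\Delta$; thus each of the $\binom{N}{2}$ unordered pairs of distinct non-binomial factors contributes at most one positive atom. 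The remaining positive atoms, witnessed by binomial pairs, I would bound by $M$: grouping the binomial factors by exponent vector $\mathbf e$ and writing them $c_i(\z^{\mathbf e}-\omega_i)$ with distinct $\omega_i$ on the unit circle, the relation $\tau_\Delta q_i=q_j$ becomes $e^{i\Delta\langle\mathbf e,\lv\rangle}=\omega_i/\omega_j$, and the claim to establish is that within a group of $m$ such factors only the $m$ cyclic gaps of the phases $\arg\omega_i/\langle\mathbf e,\lv\rangle$ modulo $2\pi/\langle\mathbf e,\lv\rangle$ occur with positive frequency, giving at most $m$ atoms per group and $M$ in all.

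I expect this last step to be the crux. The relation $\tau_\Delta q_i=q_j$ is satisfied by a whole coset of shifts $\Delta$ for each pair of binomial factors of a common exponent vector---binomials have a nontrivial self-translation group---so the criterion names many more than $M$ candidate shifts, and one must show that all but the cyclically consecutive ones fail to be atoms. I would handle this by returning to the equidistribution mechanism behind \Cref{thm: existance of gaps distribution}: realize $\rho_{p,\lv}(\{\Delta\})$ as the measure of the set of torus points at which the relevant translated zero sets meet two positions at distance $\Delta$ with no zero of any factor strictly between them, and then use the three-distance phenomenon for finite unions of arithmetic progressions to verify that, outside the consecutive configurations, the interpolating zeros of the remaining factors do separate the two positions. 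Given \Cref{thm: p dependence}, parts (1)--(3) and the non-binomial bookkeeping in (4) are, by comparison, immediate.
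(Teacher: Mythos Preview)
Your treatment of parts (1)--(3) and the non-binomial count in (4) is the paper's argument. For the binomial bound in (4) you also have the right pieces but make the last step sound harder than it is. You already propose to realize an atom at $\Delta>0$ via the first-return mechanism behind \Cref{thm: existance of gaps distribution}: a positive-measure set of $\xv\in\Sigma_{q_i}$ with $\xv+\Delta\lv\in\Sigma_{q_j}$ and no point of $\Sigma_p$ strictly between. Once you have ``nothing of $\Sigma_p$ strictly between'', no three-distance argument is needed: the torus zero sets of all $M_\alpha$ binomial factors of the common multi-degree $\alpha$ already lie in $\Sigma_p$, so along the line through $\xv$ the points $0$ and $\Delta$ are automatically consecutive in the periodic subset $\Lambda_\alpha$ (the union of those $M_\alpha$ arithmetic progressions of step $2\pi/\langle\alpha,\lv\rangle$), and $\Delta$ is one of its at most $M_\alpha$ cyclic gap values. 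Summing over $\alpha$ gives $M$. This is exactly what the paper does, passing directly to $\Lambda_\alpha$ and counting its consecutive gaps; your proposed detour through equidistribution and three-distance is correct in spirit but unnecessary.
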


Next we show that the gap distribution $\rho_{p,\ell}$ varies 
continuously in $\ell$ when we restrict to vectors $\ell$ with 
$\Q$-linearly independent entries.  
For arbitrary $\ell\in \R_+^n$, this gives rise to a well-defined 
limiting distribution $\nu_{p,\ell}$ that agrees when $\rho_{p,\ell}$
when $\ell$ has $\Q$-linearly independent entries. 
The limiting measure $ \nu_{p,\lv} $ is defined explicitly in \Cref{defn: nu}.

\begin{thm}[$ \lv $ dependence of the gap distribution]\label{thm: l dependenece } Let $ p\in\A_{\bd}(n) $ and $ \lv\in\R_{+}^{n} $. Then $ \rho_{p,\lv} $ is supported inside $ [0,2\pi\frac{|\bd|}{\langle\bd,\lv\rangle}] $. 
There is a distribution $\nu_{p,\ell}$ 
such that for any converging sequence $\lv^{(j)}\to\lv$ in which each $ \lv^{(j)} $ has $ \Q $-linearly independent entries, 
	\[\rho_{p,\lv^{(j)}}\weak \nu_{p,\lv}.\]
In particular, $ \nu_{p,\lv}=\rho_{p,\lv} $ whenever $ \lv $ has $ \Q $-linearly independent entries.  
\end{thm}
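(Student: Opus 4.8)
The support statement is immediate. By \Cref{thm: zeros density and upper bound}(2) every gap $x_{j+1}-x_{j}$ of $\mu_{p,\lv}$ lies in the closed interval $[0,2\pi|\bd|/\langle\bd,\lv\rangle]$, so the portmanteau theorem forces any weak limit of $\frac1N\sum_{j\le N}\delta_{x_{j+1}-x_{j}}$ to be a probability measure on that interval; this handles $\rho_{p,\lv}$, and the construction of $\nu_{p,\lv}$ below makes its support there manifest.

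For the continuity statement, the description I would use is the torus picture underlying \Cref{thm: existance of gaps distribution}. Let $q$ be the radical of $p$ (a square-free Lee--Yang polynomial with the same torus zero set $Z:=\{p=0\}\cap\T^{n}=\{q=0\}\cap\T^{n}$ and with degree vector $\mathbf{e}\le\bd$), let $Z_{\mathrm{reg}}$ be the smooth locus of the real hypersurface $Z$ with unit normal field $\mathbf{n}$, and let $\phi^{\lv}_{t}(\z)=(e^{it\ell_{1}}z_{1},\dots,e^{it\ell_{n}}z_{n})$. Put $\bar\sigma_{\lv}=c(\lv)^{-1}|\langle\lv,\mathbf{n}\rangle|\,d\mathcal{H}^{n-1}|_{Z_{\mathrm{reg}}}$ with $c(\lv)=\int_{Z_{\mathrm{reg}}}|\langle\lv,\mathbf{n}\rangle|\,d\mathcal{H}^{n-1}$, a positive quantity proportional to $\langle\mathbf{e},\lv\rangle$ (this is \Cref{thm: zeros density and upper bound}(1) applied to $q$), and let $\tau_{\lv}(\z)=\inf\{t>0:\phi^{\lv}_{t}(\z)\in Z\}$ be the first-return time, which is finite $\bar\sigma_{\lv}$-a.e. and bounded by $2\pi|\bd|/\langle\bd,\lv\rangle$ (apply \Cref{thm: zeros density and upper bound}(2) to the translates $p(e^{i\xv}\z)$, using \Cref{thm: p dependence}(1)). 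When $\lv$ has $\Q$-linearly independent entries the linear flow is uniquely ergodic and, separating multiplicities by \Cref{thm: irreduicible}, one obtains
\[
\rho_{p,\lv}=\Bigl(1-\tfrac{\langle\mathbf{e},\lv\rangle}{\langle\bd,\lv\rangle}\Bigr)\delta_{0}+\tfrac{\langle\mathbf{e},\lv\rangle}{\langle\bd,\lv\rangle}\,(\tau_{\lv})_{*}\bar\sigma_{\lv},
\]
the first term accounting for the density of repeated zeros, in agreement with \Cref{cor: atoms}(3), and the second being the distinct-gap distribution of $\mu_{q,\lv}$, computed by equidistribution of the hitting points on $Z$. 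All three objects $c(\cdot)$, $\bar\sigma_{(\cdot)}$, $\tau_{(\cdot)}$ are defined for an arbitrary $\lv\in\R_{+}^{n}$ without using unique ergodicity (after discarding any $\lv$-flow-invariant irreducible factor of $p$, which contributes nothing to $\mu_{p,\lv}$; see the last paragraph), and I would take $\nu_{p,\lv}$, matching \Cref{defn: nu}, to be the measure given by the displayed formula at the given $\lv$. Its support lies in $[0,2\pi|\bd|/\langle\bd,\lv\rangle]$ by the bound on $\tau_{\lv}$, and when $\lv$ is $\Q$-linearly independent the formula is $\rho_{p,\lv}$ itself, which yields the last sentence of the theorem.

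It then remains to prove $\rho_{p,\lv^{(j)}}\weak\nu_{p,\lv}$ for $\lv^{(j)}\to\lv$ with each $\lv^{(j)}$ $\Q$-linearly independent. The scalar prefactors converge, so it suffices to show $(\tau_{\lv^{(j)}})_{*}\bar\sigma_{\lv^{(j)}}\weak(\tau_{\lv})_{*}\bar\sigma_{\lv}$, and I would establish two facts. First, \emph{$\bar\sigma_{\lv^{(j)}}\to\bar\sigma_{\lv}$ in total variation}: their densities $|\langle\lv^{(j)},\mathbf{n}\rangle|/c(\lv^{(j)})$ with respect to $\mathcal{H}^{n-1}|_{Z_{\mathrm{reg}}}$ converge pointwise to $|\langle\lv,\mathbf{n}\rangle|/c(\lv)$ (the denominators stay bounded away from $0$ since $c$ is proportional to $\langle\mathbf{e},\cdot\rangle$ and $\lv,\mathbf{e}>0$), on the finite-measure space $Z_{\mathrm{reg}}$, and all integrate to $1$, so Scheff\'e's lemma applies. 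Second, \emph{$\tau_{\lv^{(j)}}\to\tau_{\lv}$ pointwise off a $\bar\sigma_{\lv}$-null set}: if the $\lv$-orbit of $\z$ meets $Z$ for the first time transversally at a smooth point, the implicit function theorem applied to $t\mapsto q(\phi^{\lv'}_{t}(\z))$ gives $\tau_{\lv'}(\z)\to\tau_{\lv}(\z)$; the complementary ``bad set'' $B_{\lv}$ consists of those $\z$ whose $\lv$-orbit, before its first transversal crossing, grazes $Z$ tangentially or meets $Z_{\mathrm{sing}}$, and must be shown $\mathcal{H}^{n-1}$-null. Given these, for continuous $f$,
\[
\Bigl|\textstyle\int f{\circ}\tau_{\lv^{(j)}}\,d\bar\sigma_{\lv^{(j)}}-\int f{\circ}\tau_{\lv}\,d\bar\sigma_{\lv}\Bigr|\le\|f\|_{\infty}\,\|\bar\sigma_{\lv^{(j)}}-\bar\sigma_{\lv}\|_{\mathrm{TV}}+\textstyle\int|f{\circ}\tau_{\lv^{(j)}}-f{\circ}\tau_{\lv}|\,d\bar\sigma_{\lv},
\]
and the right side tends to $0$ by Scheff\'e, bounded convergence, and uniform continuity of $f$ on the compact interval containing all return times.

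The main obstacle is the nullity of $B_{\lv}$. When $p$ satisfies condition (i) of \Cref{thm: minimal gap} one has $Z_{\mathrm{sing}}=\emptyset$ and the tangency locus is a proper real-analytic subset of $Z$, hence null, so the real work is the non-generic case. I would argue by real analyticity: the incidence set of pairs $(\z,t)$ with $\z\in Z_{\mathrm{reg}}$, $0\le t\le 2\pi|\bd|/\langle\bd,\lv\rangle$, and $\phi^{\lv}_{t}(\z)$ either in $Z_{\mathrm{sing}}$ or a tangential zero of $q$, is a real-analytic subvariety of $Z_{\mathrm{reg}}\times\R$; on each connected component of $Z$ its projection is therefore either nowhere dense (hence $\mathcal{H}^{n-1}$-null) or all of that component, and the latter would make the component invariant under the $\lv$-flow. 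But an $\lv$-flow-invariant irreducible factor $g$ of $q$, being constant along every orbit and in particular along $x\mapsto e^{ix\lv}$ (here we use $\lv\in\R_{+}^{n}$), is a nonzero constant on that curve and may be dropped without changing $\mu_{p,\lv}$; assuming at the outset that $p$ has no such factor, $B_{\lv}$ is null. The behavior of the perturbed orbits near $Z_{\mathrm{sing}}$ and near points lying on several components of $Z$ needs a separate local analysis, but affects only an $\mathcal{H}^{n-1}$-null set of starting points, so it does not change the conclusion.
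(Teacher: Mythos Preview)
Your strategy coincides with the paper's: define $\nu_{p,\lv}$ by the explicit formula (the paper's \Cref{defn: nu}), identify it with $\rho_{p,\lv}$ when $\lv$ is $\Q$-independent, and prove $\lv\mapsto\nu_{p,\lv}$ is weakly continuous. The difference is entirely in how you establish continuity of $\lv\mapsto\int_{\Sigma_{p}}f\circ\tau_{\lv}\,d\mm_{\lv}$, and here the paper is considerably cleaner. It uses two facts: $\mm_{\lv}=\sum_{j}\lv_{j}\mm_{e_{j}}$ is \emph{linear} in $\lv$ (\Cref{lem: BG measure}), so no Scheff\'e argument is needed; and $(\xv,\lv)\mapsto\tau_{\lv}(\xv)$ is \emph{jointly continuous on all of} $\reg(\Sigma_{p})\times\R_{+}^{n}$ (\Cref{lem: tau}), with no exceptional bad set. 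The second fact is proved using the phase functions of \Cref{prop: phase functions}: for $\xv\in\reg(\Sigma_{p})$ one arranges $\theta_{1}(\xv)=0<\theta_{2}(\xv)$, and then $\tau_{\lv}(\xv)$ is the unique $t$ solving $\theta_{2}(\xv+t\lv)=0$. Since $\theta_{2}$ is continuous and strictly monotone along every positive direction, continuity of $\tau$ follows from the implicit function theorem for monotone continuous functions, and this works even when $T_{\lv}(\xv)\in\sing(\Sigma_{p})$. The proof of the theorem then reduces to one line of dominated convergence.

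Your route works in outline but carries extra baggage, and the justification of the key step is shaky. First, for Lee--Yang polynomials there are \emph{no} tangential crossings: the normal to $\reg(\Sigma_{p})$ lies in $\R_{\ge 0}^{n}$ (\Cref{prop: positive derivatives any degree}), so $\langle\hat{n},\lv\rangle>0$ everywhere and your tangency locus is empty. Second, your argument that $B_{\lv}$ is null is not convincing as written: the projection of a real-analytic subset of $Z_{\mathrm{reg}}\times\R$ is only subanalytic, and the dichotomy ``nowhere dense or equals the whole component'' needs a reason. Your flow-invariance contradiction is also off: an $\lv$-invariant component of $\Sigma_{q}$ does not make $q$ constant along orbits, it makes $\lv$ tangent to that component, which already contradicts $\langle\hat{n},\lv\rangle>0$. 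The honest way to see $\mm_{\lv}(B_{\lv})=0$ is that $B_{\lv}=T_{\lv}^{-1}(\sing(\Sigma_{p}))$ and $T_{\lv}$ preserves $\mm_{\lv}$ (or, equivalently, flow transversality gives that the $\lv$-saturation of $\sing(\Sigma_{p})$ meets $\reg(\Sigma_{p})$ in dimension $\le n-2$). With those fixes your proof goes through, but the phase-function argument of \Cref{lem: tau} removes the need for any of this.
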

\begin{center}
\begin{figure}[h!]
	\includegraphics[width=0.4\textwidth]{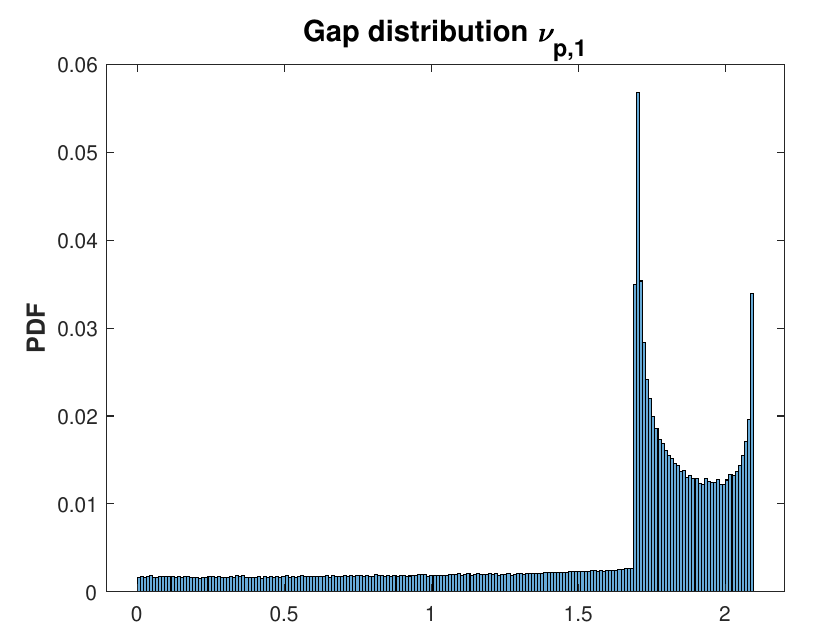}  \hspace{.5in}	\includegraphics[width=0.4\textwidth]{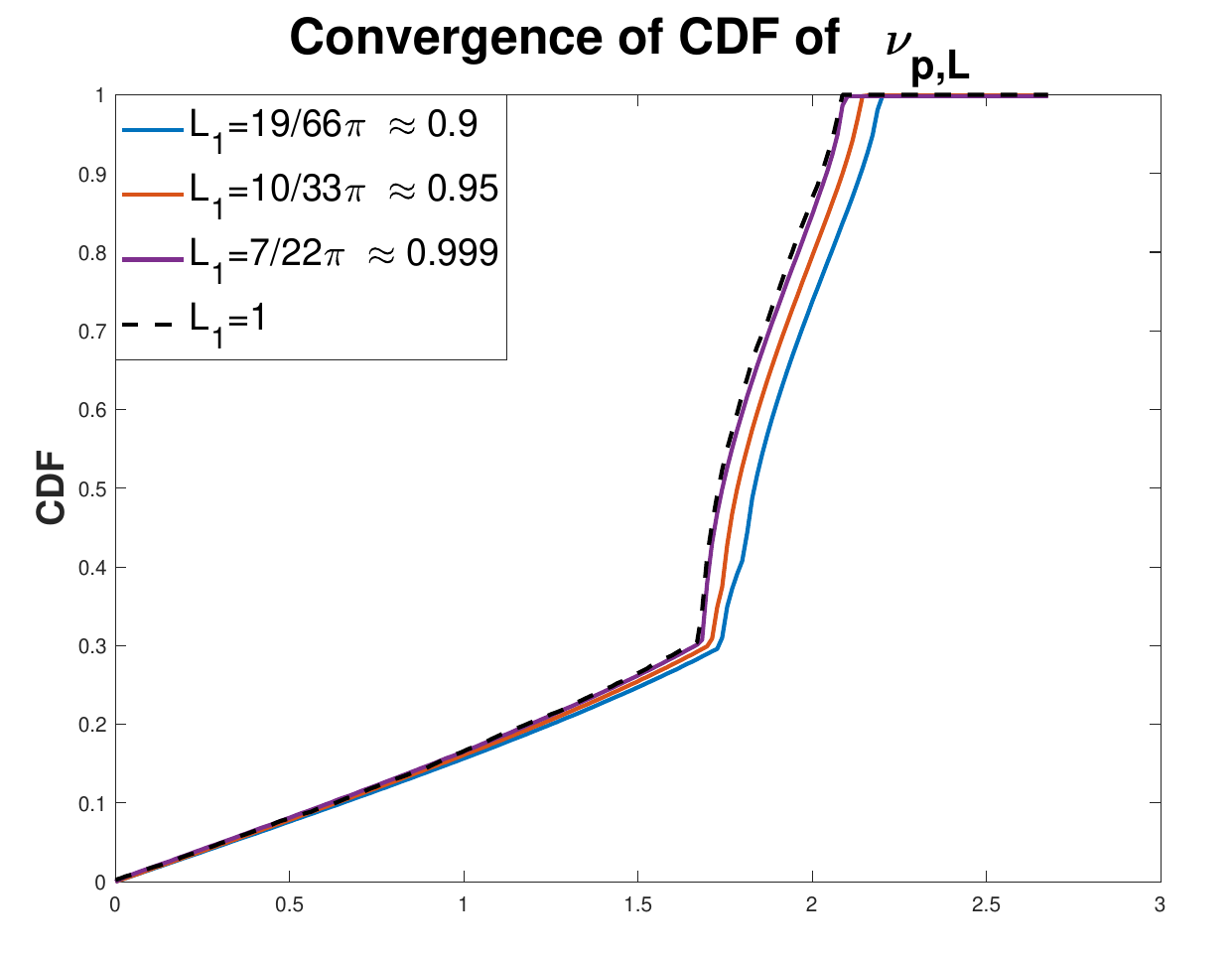} 
		\caption {Example of $ \rho_{p,\lv}\weak \nu_{p,\textbf{1}} $. The gap distributions for $ p $ as in \Cref{fig: flow on secman}, and $ \lv=(L_{1},1) $ with $ L_{1} $ converging to $ 1 $. (Left) the probability distribution function of $ \nu_{p,\textbf{1}} $. (Right) the cumulative distribution functions of $ \nu_{p,\textbf{1}} $ (dashed) and of $ \rho_{p,\lv} $ for three different values of $ L_{1} $. Each $ \rho_{p,\lv} $ was computed from the gaps in the interval $ [0,10^{4}] $, while $ \nu_{p,\lv} $ was computed as in \Cref{thm: the l to 1 case}, by sampling $ 10^4 $ random points on the torus.}
	\label{fig: gap distribution convergence}
\end{figure}
\end{center}
A particularly interesting case is the limit $ \nu_{p,\textbf{1}} $ for $ \lv=\textbf{1}:=(1,1,\ldots,1) $, which can be calculated explicitly, as follows. \Cref{fig: nup1} displays the distributions $ \nu_{p,\textbf{1}} $ for two important examples of Lee-Yang polynomials $p$. \begin{Def}
	Let $ p\in\A_{N}(1) $ be a univariate Lee-Yang polynomial of degree $ N $ and denote its roots by $ \{e^{i\theta_{j}}\}_{j=1}^{N} $ with $ 0\le\theta_{1}\le\ldots\le\theta_{N}<2\pi $. By convention $ \theta_{N+1}=\theta_{1}+2\pi $. Then the \emph{gap distribution} of $ p $ is a probability measure on $ [0,2\pi] $ given by 
	\[\mathrm{gaps}(p)=\frac{1}{N}\sum_{j=1}^{N}\delta_{\theta_{j+1}-\theta_{j}}.\]
	If $ U $ is a unitary matrix, then $ p(s)=\det(s-U) $ and $ q(s)=\det(1-sU) $ have the same gap distribution, and we denote it by $ \mathrm{gaps}(U) $.
\end{Def}
For a fixed $ p\in\A_{\bd}(n) $ and a fixed point $ \xv\in[0,2\pi]^{n} $, define the univariate polynomial $ p_{\xv}(s):=p(se^{ix_{1}},se^{ix_{2}},\ldots,se^{ix_{n}}) $ so that $ p_{\xv}\in\A_{N}(1) $ with $ N=|\bd| $. We may then take $ \xv $ uniformly at random.
\begin{thm}[$ \lv\to\textbf{1} $]\label{thm: the l to 1 case}
	Let $ p\in\A_{\bd}(n) $. Let $ \xv $ be a uniformly random point in $ [0,2\pi]^{n} $. Then $ \nu_{p,\textbf{1}} $, for $ \lv=\textbf{1}$, is given by 
	\[\nu_{p,\textbf{1}}=\mathbb{E}\left[\mathrm{gaps}(p_{\xv})\right].\]
	Namely, for any sequence $ \lv^{(j)}\to\textbf{1} $, such that each $ \lv^{(j)} $ has $ \Q $-linearly independent entries, 
	\[\lim_{j\to\infty}\int f d\rho_{p,\lv^{(j)}}=\frac{1}{(2\pi)^{n}}\int_{\xv\in[0,2\pi]^{n}}\left[\frac{1}{|\bd|}\sum_{j=1}^{|\bd|}f(\theta_{j+1}(\xv)-\theta_{j}(\xv))\right]d\xv, \quad \forall f\in C(\R),\] 
	
	where $ \{e^{i\theta_{j}(\xv)}\}_{j=1}^{|\bd|} $ are the ordered roots of $ p_{\xv} $ for every $ \xv $, and $ \theta_{|\bd|+1}(\xv):=\theta_{1}(\xv)+2\pi $.  
\end{thm}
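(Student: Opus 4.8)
Here is how I would approach it.

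The plan is to combine the preceding theorem on the $\lv$-dependence of the gap distribution — which already guarantees that for \emph{any} sequence $\lv^{(j)}\to\textbf{1}$ of $\Q$-linearly independent vectors one has $\rho_{p,\lv^{(j)}}\weak\nu_{p,\textbf{1}}$ — with a direct evaluation of $\lim_j\int f\,d\rho_{p,\lv^{(j)}}$ obtained by averaging over the ``fast'' diagonal direction. Since every measure in sight is supported in the fixed compact interval $[0,2\pi]$, it suffices to fix a Lipschitz $f$ and prove that $\int f\,d\rho_{p,\lv}\to \frac{1}{(2\pi)^{n}}\int_{[0,2\pi]^{n}}\frac{1}{|\bd|}\sum_{k}f(\theta_{k+1}(\xv)-\theta_{k}(\xv))\,d\xv$ as $\lv\to\textbf{1}$ through $\Q$-linearly independent vectors; by the $\lv$-dependence theorem this limit is then $\nu_{p,\textbf{1}}$, which is what we must show.

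First I would analyze the univariate ``fibers'' $p_{\xv}(s)=p(se^{ix_{1}},\dots,se^{ix_{n}})$. Note $c_{\bd}\ne0$ for $p\in\A_{\bd}(n)$: by \Cref{thm: zeros density and upper bound} the zero density of $x\mapsto p(\exp(ix\lv))$ equals $\frac{1}{2\pi}\langle\bd,\lv\rangle$, whereas the classical density formula for zeros of an exponential polynomial gives it as $\frac{1}{2\pi}(\max_{c_{\balpha}\ne0}\langle\balpha,\lv\rangle-\min_{c_{\balpha}\ne0}\langle\balpha,\lv\rangle)$ for $\Q$-generic $\lv$; equating and specializing to $\lv=\textbf{1}$ forces $\max_{c_{\balpha}\ne0}|\balpha|=|\bd|$, attained only at $\balpha=\bd$. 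Hence each $p_{\xv}$ has degree exactly $N:=|\bd|$, is Lee-Yang (for $|s|\ne1$ its argument lies in $\D^{n}\cup(\C\setminus\overline{\D})^{n}$), has $N$ unimodular roots $e^{i\theta_{k}(\xv)}$ with multiplicity, $\xv\mapsto\mathrm{gaps}(p_{\xv})$ is weakly continuous, and $\mathrm{gaps}(p_{\xv})$ is invariant under the diagonal circle $\mathrm{diag}\subset\T^{n}$ (since $p_{\xv+c\textbf{1}}(s)=p_{\xv}(se^{ic})$). I will also use the \emph{uniform} lower bound $|p_{\xv}(e^{i\phi})|=|c_{\bd}|\prod_{k}2|\sin\tfrac{\phi-\theta_{k}(\xv)}{2}|\ge|c_{\bd}|(\tau/2)^{N}$ whenever $\mathrm{dist}(\phi,\{\theta_{k}(\xv)\})\ge\tau$.

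Now fix $\lv$ close to $\textbf{1}$ and $\Q$-linearly independent, write $\lv=a\textbf{1}+\mathbf u$ with $a=\tfrac1n\langle\lv,\textbf{1}\rangle\to1$, $\mathbf u\perp\textbf{1}$, $\|\mathbf u\|\to0$, and partition $\R$ into windows $W_{m}=[\tfrac{2\pi m}{a},\tfrac{2\pi(m+1)}{a})$, each one full turn of the fast rotation $e^{iax}$, with ``slow position'' $\xv_{m}:=\tfrac{2\pi m}{a}\mathbf u\bmod2\pi\Z^{n}$. On $W_{m}$ the exponential polynomial $f(x)=p(\exp(ix\lv))$ is fed arguments differing coordinatewise from those of the frozen trigonometric polynomial $\phi\mapsto p_{\xv_{m}}(e^{i\phi})$ (with $\phi=a(x-\tfrac{2\pi m}{a})$) only by unimodular factors $e^{i\phi u_{j}/a}=1+O(\|\mathbf u\|)$, so $\sup|f-p_{\xv_{m}}(e^{i\cdot})|=O(\|\mathbf u\|)$ on a fixed complex strip. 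Combining this with the lower bound above, Rouché's theorem on a period-rectangle avoiding the $\theta_{k}(\xv_{m})$, and the real-rootedness of $f$, I expect to conclude: once $\lv$ is close enough to $\textbf{1}$, any window whose $\xv_{m}$ avoids an $O(\|\mathbf u\|^{1/N})$-neighborhood of the measure-zero set $\widetilde Y=\{\xv:p(\exp(i\xv))=0\}$ contains exactly $N$ zeros of $f$, each within $O(\|\mathbf u\|^{1/N})$ of $\tfrac{2\pi m}{a}+\tfrac1a\theta_{k}(\xv_{m})$, while every other window has at most $2N+1$ zeros by the error bound of \Cref{thm: zeros density and upper bound}. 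Ordering all zeros increasingly, the $N$ consecutive gaps attached to a ``good'' window are therefore, up to $o_{\lv\to\textbf{1}}(1)$, the rescaled cyclic gaps $\tfrac1a(\theta_{k+1}(\xv_{m})-\theta_{k}(\xv_{m}))$.

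Finally I would average. The points $\xv_{m}\equiv m\cdot\tfrac{2\pi}{a}\lv\bmod2\pi\Z^{n}$ do not equidistribute in $\T^{n}$, but they do in the quotient $\T^{n}/\mathrm{diag}$: for $\mathbf k\in\Z^{n}\setminus\{0\}$ with $\langle\mathbf k,\textbf{1}\rangle=0$, $\Q$-linear independence of $\lv$ rules out $\tfrac1a\langle\mathbf k,\lv\rangle\in\Z$. Since $\mathrm{gaps}(p_{\xv})$ is $\mathrm{diag}$-invariant and the Haar measure of $\T^{n}/\mathrm{diag}$ is the pushforward of the uniform measure on $[0,2\pi]^{n}$, Weyl equidistribution plus the continuity of the fibers give, for fixed $\lv$, $\tfrac1M\sum_{m<M}\sum_{k}f(\tfrac1a(\theta_{k+1}(\xv_{m})-\theta_{k}(\xv_{m})))\to\tfrac{1}{(2\pi)^{n}}\int_{[0,2\pi]^{n}}\sum_{k}f(\tfrac1a(\theta_{k+1}(\xv)-\theta_{k}(\xv)))\,d\xv$; dividing the sum of the first $\sim MN$ gaps by their number ($=\tfrac{\langle\bd,\lv\rangle}{a}M+O(1)$ by \Cref{thm: zeros density and upper bound}), letting $X\to\infty$ and then $\lv\to\textbf{1}$ (using $\tfrac{\langle\bd,\lv\rangle}{a}\to N$, that bad windows contribute $O(\|f\|_{\infty})$ times a vanishing measure, and dominated convergence to replace $\tfrac1a(\theta_{k+1}-\theta_{k})$ by $\theta_{k+1}-\theta_{k}$) yields the desired limit, hence $\nu_{p,\textbf{1}}=\mathbb{E}[\mathrm{gaps}(p_{\xv})]$. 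The main obstacle is the uniformity in the previous step — bounding, \emph{independently of $m$}, the number of zeros of $f$ in a single window and their distance to the roots of the frozen fiber, even though fibers can have high-multiplicity unimodular roots and a root near $\phi=0$ can leak a zero of $f$ into the neighbouring window; the multiplicity obstruction is defeated by the explicit uniform lower bound (where $c_{\bd}\ne0$ is essential) together with Rouché and real-rootedness, while the leakage is a measure-zero effect (the bad $\xv_{m}$ cluster near $\widetilde Y$) absorbed via the equidistribution of $(\xv_{m})_{m}$ and the globally bounded zero count of \Cref{thm: zeros density and upper bound}.
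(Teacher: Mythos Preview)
Your two-timescale windowing approach is a genuinely different route from the paper, and the outline is sound, but the paper's argument is far more direct.

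The paper never re-examines the limit $\lv\to\textbf{1}$ at all. Since $\nu_{p,\textbf{1}}$ is already \emph{defined} as $c_{0}\delta_{0}+c_{\tau}(\tau_{\textbf{1}})_{*}\mm_{\textbf{1}}$ (\Cref{defn: nu}), the proof just evaluates this push-forward measure using three facts already on the table: (i) in the layer parameterisation $\varphi_{j}(\yv)=(\yv,0)+\theta_{j}(\yv,0)\textbf{1}$ of \Cref{prop: Layer structure}, the first-return time along $\textbf{1}$ is \emph{exactly} $\tau_{\textbf{1}}(\varphi_{j}(\yv))=\theta_{j+1}(\yv,0)-\theta_{j}(\yv,0)$, because $\theta_{k}(\xv+t\textbf{1})=\theta_{k}(\xv)-t$; (ii) for $\lv=\textbf{1}$ the Jacobian in \Cref{lem: BG measure} collapses to $1$, so $\int_{\Sigma_{p,j}}h\,d\mm_{\textbf{1}}=\int_{(0,2\pi]^{n-1}}h(\varphi_{j}(\yv))\,d\yv$; and (iii) the same shift invariance promotes the $(n{-}1)$-fold integral over $\yv$ to the $n$-fold integral over $\xv$. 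No Rouch\'e, no equidistribution, no double limit.

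Your argument trades this machinery for a self-contained computation, which is a reasonable exchange, but the uniformity bookkeeping is heavier than you indicate. One concrete gap: the sample points $\xv_{m}=\tfrac{2\pi m}{a}\mathbf{u}$ do not equidistribute in $\T^{n}$ but in the codimension-one subtorus $S=\{\langle\xv,\textbf{1}\rangle\in 2\pi\Z\}$ (this is what your ``$\T^{n}/\mathrm{diag}$'' statement actually gives), and your bad-window indicator ``$\xv_{m}$ near $\Sigma_{p}$'' is \emph{not} $\mathrm{diag}$-invariant. Hence the asymptotic bad-window fraction is $\mu_{S}(B_{\epsilon}\cap S)$, not $\vol_{n}(B_{\epsilon})/(2\pi)^{n}$; this vanishes with $\epsilon$ only if $\Sigma_{p}\cap S$ is $\mu_{S}$-null, which fails precisely when $p$ has a binomial factor of multidegree proportional to $\textbf{1}$ (e.g.\ $p=1-z_{1}\cdots z_{n}$ gives $\Sigma_{p}=S$, so every window is bad). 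The fix --- shift the window origin so that the section is a generic coset of $S$ --- is easy, but must be said. A smaller aside: your density argument for $a_{\bd}\neq 0$ is unnecessary; it follows from $p^{\dagger}(\textbf{0})=\overline{a_{\bd}}$ and the fact that $p^{\dagger}$ is again Lee-Yang, hence nonvanishing at $\textbf{0}\in\D^{n}$.
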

Using \Cref{thm: the l to 1 case} we can provide examples of limiting gap distributions that correspond to the following special distributions.
\begin{example}[Poisson]
 If $ p(z_{1},\ldots,z_{n})=\prod_{j=1}^{n}(1-z_{j}) $, then $ \nu_{p,\textbf{1}} $ is the distribution of gaps between $ n $ random points in the circle, chosen unifromly and independently. It is well known that this distribution converges to Poisson distribution in the limit of $ n\to\infty $.   
\end{example}

\begin{example}[CUE]
 Given a fixed unitary $ n\times n $ matrix $ u $, let $ p_{u}(z_{1},\ldots,z_{n}):=\det(1-\diag(z_{1},\ldots,z_{n})u) $. Then,
		\[\nu_{p_{u},\textbf{1}}=\mathbb{E}\left[\mathrm{gaps}(\diag(\exp(i\xv))u)\right],\quad\xv\sim U([0,2\pi]^{n}).\] 
		For a random $ u $, Haar uniformly from $ U(n) $, the \emph{empirical gap distribution} is 
		\[\mathbb{E}(\nu_{p_{u},\textbf{1}})=\mathbb{E}\left[\mathrm{gaps}(\diag(\exp(i\xv))u)\right]=\mathbb{E}\left[\mathrm{gaps}(u)\right],\quad u\sim\mathrm{Haar}(U(n)).\]
The distribution $ \mathbb{E}\left[\mathrm{gaps}(u)\right] $ for $ u\sim\mathrm{Haar}(U(n)) $ is well known, and as $ n\to\infty $ it converges to the CUE (circular unitary ensemble) gap distribution. 
\end{example}
\begin{remark}[Computation of gap distributions]
	For any positive constant vector $ t\textbf{1} $, the measure $\nu_{p,t\textbf{1}}$ scales by $\nu_{p,t\textbf{1}}([a,b])=\nu_{p,\textbf{1}}([\frac{a}{t},\frac{b}{t}])$. For any rational $ \textbf{q}=\frac{1}{m}(k_{1},\ldots,k_{n})\in\Q_{+}^{n} $, the polynomial $ \tilde{p}(\z)=p(z_{1}^{k_{1}},\ldots,z_{n}^{k_{n}}) $ is Lee-Yang, and $ \nu_{p,\textbf{q}}=\nu_{\tilde{p},m\textbf{1}} $.
	
	According to \Cref{thm: l dependenece } any gap distribution $ \rho_{p,\lv} $, with $ \Q $-linearly independent $ \lv $, can be approximated by $ \nu_{p,\textbf{q}} $ for a rational approximation  $ \textbf{q} $ of $ \lv $. Moreover, $ \nu_{p,\textbf{q}} $ can be efficiently computed as an average through \Cref{thm: the l to 1 case} and the argument above.
\end{remark}

\begin{figure}
	 \includegraphics[width=0.4\textwidth]{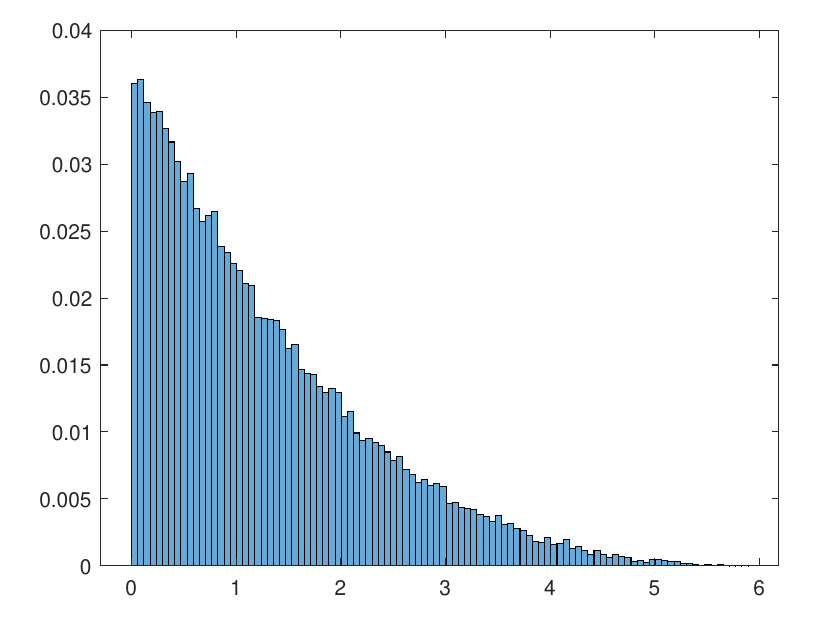}\hspace*{0.5in}\includegraphics[width=0.4\textwidth]{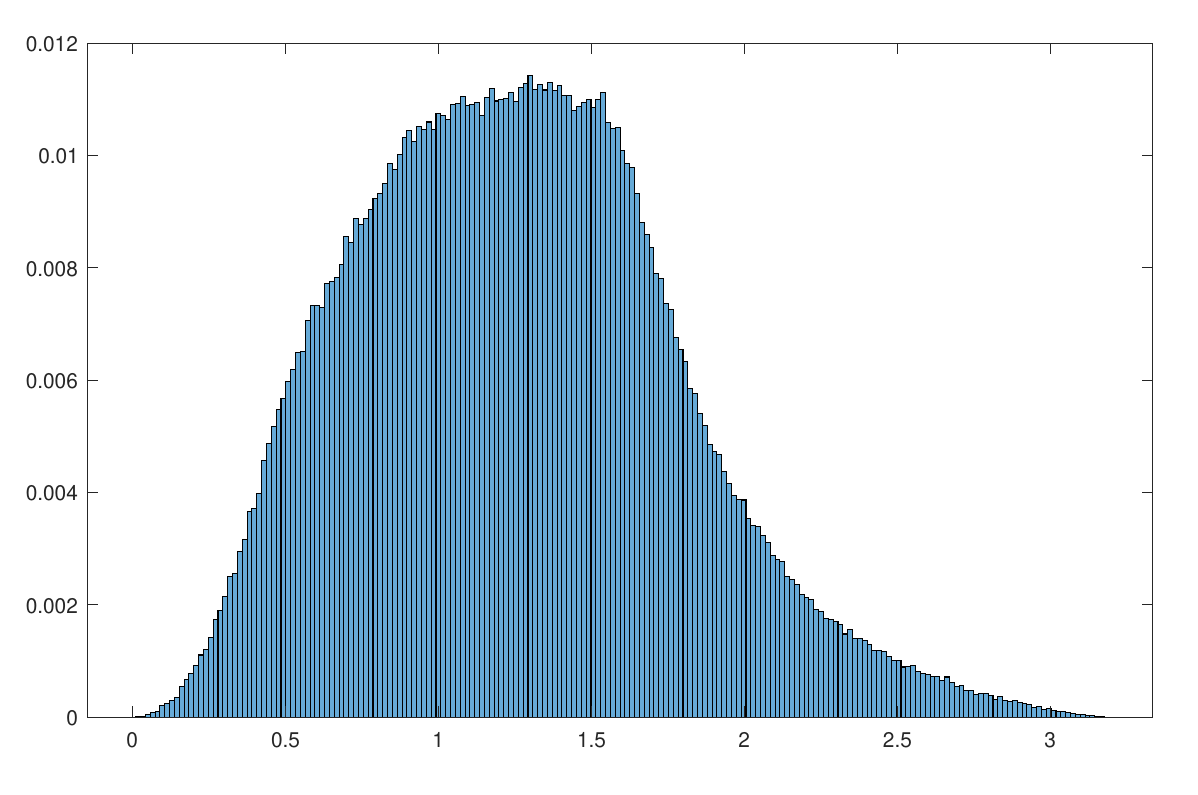}
	 \caption{(Left) $ \nu_{p,\textbf{1}} $ for $ p(\z)=\prod_{j=1}^{5}(z_{j}-1) $. (Right) $ \nu_{p,\textbf{1}} $ for $ p(\z)=\det(1-\diag(\z)U) $ for a fixed $ 5\times5 $ unitary matrix $ U $, chosen at random (Haar uniformly). Both calculated with $ 10^{4} $ random points from the torus.}\label{fig: nup1}
\end{figure}
       

The paper is organized as follows. The first two sections provides background and preliminary results,  \Cref{sec:backgroundCrystals} on crystalline measures and FQ's, and \Cref{sec:backgroundStable} on Lee-Yang polynomials and real stable polynomials. The torus zero sets of Lee-Yang polynomials are analyzed in \Cref{sec: torus zero set}. \Cref{thm: zeros density and upper bound}, the growth rate and upper bound on the gaps, is proved in \Cref{sec: proof of thm zeros density and upper bound}. In \Cref{sec: ergodicity} an ergodic dynamical system is defined on the torus zero set, which is being used in the subsequent sections. \Cref{thm: irreduicible}, decomposition and non-periodicity, is proven in \Cref{sec: proof of elaboration}. \Cref{thm: minimal gap}, minimal gap and genericity, is proved in \Cref{sec: minimal gap}. \Cref{sec: gap distributions} focus on gap distributions, in which \Cref{thm: existance of gaps distribution}, \Cref{thm: p dependence}, \Cref{thm: l dependenece }, and \Cref{thm: the l to 1 case}, are proved.
This is summarized as follows:

\setcounter{tocdepth}{1}
\tableofcontents

{\bf Acknowledgements.} The work began while both authors were visitors at the Institute for Advanced Studies. 
We would like to thank Peter Sarnak and Pavel Kurasov for many enlightening discussions.  
The first author is supported by the Simons Foundation Grant 601948, DJ.
The second author was partially supported by NSF grant No.~DMS-2153746 and a Sloan Research Fellowship. 
This material is based upon work directly supported by the National Science Foundation Grant No.~DMS-1926686, and indirectly supported by the National Science Foundation Grant No.~CCF-1900460. 

\section{Preliminaries on crystalline measures and FQ}\label{sec:backgroundCrystals}
A \emph{Schwartz function} on $\R$ is a smooth function $ f\in\C^{\infty}(\R,\C) $ that decays, as $|x|\to\infty  $, faster than any polynomial in $ |x| $, and so does any of its derivatives. The \emph{Schwartz space} $ \mathcal{S}(\R) $ is the infinite dimensional vector space of Schwartz functions. It can be defined in terms of the seminorms $ \|f\|_{n,m}:= \sup_{x\in\R}|x^{n}\left(\frac{d}{d x}\right)^{m}f(x)|$,  
\begin{equation*}
	\mathcal{S}(\R):=\{f\in\C^{\infty}(\R,\C)\ :\ \|f\|_{n,m}<\infty\quad \text{for all}\ m,n\in\Z_{\ge0}\},
\end{equation*}
and it is a complete metric space with respect to the metric $ d(f,g):=\sum_{n,m=0}^{\infty}\frac{\|f\|_{n,m}}{2^{n+m}(1+\|f\|_{n,m})} $. A ($\C$ valued) Borel measure $ \mu $ on $ \R $ is \emph{tempered} if $ \langle f,\mu\rangle:=\int f d\mu $ is finite for all $ f\in\mathcal{S}(\R) $. The vector space of tempered measures is the dual of $\S(\R)$ and is denoted by $\S^{'}(\R)$. The \emph{Fourier transform} $\mathcal{F}(f):=\hat{f}$, with $ 	\hat{f}(k):= \int_{-\infty}^{\infty}f(x)e^{-ikx}dx $, is a linear automorphism of $ \mathcal{S}(\R) $, and it defines an automorphism on the dual space. Given a measure $ \mu\in\S^{'}(\R) $, its Fourier transform is the measure $ \hat{\mu}\in \S^{'}(\R) $ defined by $ \langle f,\hat{\mu}\rangle:=\langle \hat{f},\mu\rangle$ for all $ f\in\S(\R) $.
Let $ \delta_{x}\in\S^{'}(\R)$ denote the atom at $ x\in\R $ (also known as Dirac delta at $ x $), which is defined by $ \langle f, \delta_{x}\rangle:=f(x)$. We say that a measure $ \mu $ is \emph{discrete} if it is supported on a discrete (locally finite) set, in which case it can be written as  
\begin{equation}\label{eq: discrete measusre}
	\mu=\sum_{x\in\Lambda}a_{x}\delta_{x}:=\lim_{T\to\infty}\sum_{x\in\Lambda\cap[-T,T]}a_{x}\delta_{x}
\end{equation}
with complex coefficients $a_{x}\in\C$ and discrete support $\Lambda\subset\R$. Whenever we write an infinite sum as in \eqref{eq: discrete measusre}, it should be understood as the $ T\to\infty $ limit of the $ [-T,T] $ truncated sum. One can check that a discrete measure $ \mu $ is tempered, i.e.  $ \mu\in\S^{'}(\R) $ , if and only if $ \mu([-T,T]) $ is bounded by some polynomial in $ T $, namely if there exist $ C>0 $ and $ m\in\N $ such that
\begin{equation*}
	\left|\sum_{x\in\Lambda\cap[-T,T]}a_{x}\right|\le C(1+T^{m}),\qquad \forall T>0.
\end{equation*}
If $ \mu $ is a complex valued measure given by \eqref{eq: discrete measusre}, then $ |\mu|:=\sum_{x\in\Lambda}|a_{x}|\delta_{x} $. 
\begin{Def}[crystalline measure and FQ]\cite{Meyer2016measures,LevOlev2016quasicrystals}
	A \emph{crystalline measure} is a discrete tempered distribution whose Fourier transform is also discrete\footnote{The Fourier transform is tempered by definition.}. A \emph{Fourier quasicrystal} (FQ) is a crystalline measure $ \mu $ with the further restriction that $ |\mu| $ and $ |\hat{\mu}| $ are also tempered. To write it explicitly, $ \mu $ is an FQ if there exists discrete (locally finite) sets $ \Lambda, S $, and complex coefficients $ (a_{x})_{x\in\Lambda}, (c_{k})_{k\in S} $, such that
	\begin{equation}\label{eq: crystalline measure}
		\mu=\sum_{x\in\Lambda}a_{x}\delta_{x},\quad \hat{\mu}=\sum_{k\in S}c_{k}\delta_{k},\quad \text{and}\quad \sum_{x\in\Lambda\cap[-T,T]}|a_{x}|+\sum_{k\in S\cap[-T,T]}|c_{k}|\le C(1+T^{m}),
	\end{equation}
	for some $ C>0,m\in\N $, for all $ T>0 $. 
\end{Def}
For example, the measure $ \mu=\sum_{x\in\Lambda}\delta_{x} $ for any periodic $ \Lambda $ is an FQ due to the Poisson summation formula.  


\section{Preliminaries on Lee-Yang polynomials}\label{sec:backgroundStable}
Let $\C[\z]$ denote the space $\C[z_1, \hdots, z_n]$ of polynomials in indeterminates 
$\z = (z_1, \hdots, z_n)$. 
For a nonnegative integer vector $\balpha = (\alpha_1, \hdots, \alpha_n)\in \Z_{\ge}^n$, 
we use $\z^{\balpha}$ to denote the monomial $\prod_{j=1}^n z_j^{\alpha_j}$. 
The degree of a polynomial $p = \sum_{\balpha}a_{\balpha}\z^{\balpha}$ in $\C[\z]$ 
in the variable $z_j$, denoted  $\deg_j(p)$, is the maximum value of $\alpha_j$ appearing in a monomial 
with nonzero coefficient $a_{\alpha}\neq 0$. 
For ${\bf d}=(d_{1},\ldots,d_{n})\in \N^{n}$, let $\C[\z]_{\le {\bf d}}$ 
denote the $\C$-vectorspace of polynomials with $\deg_j(p)\leq d_j$ in each variable $z_j$, i.e. 
$\C[\z]_{\leq \bd} = \left\{ \sum_{0\le\balpha\le\bd}a_{\balpha}\z^{\balpha} : a_{\balpha} \in \C\right\},$
where $\balpha \leq \bd$ is taken coordinate-wise. 

Given a circular region in the complex plane $C\subseteq \C$, we say that $p$ is \emph{stable}
with respect to $C$ if $p$ has no zeroes in $C^n$. For us, the circular regions  of interest
will be the upper half plane $\H_+ = \{z\in \C : {\rm Im}(z)>0\}$, lower half plane $\H_- = \{z\in \C : {\rm Im}(z)<0\}$, and the open unit disk 
$\D = \{ z \in \C : |z|<1\}$. Stability with respect to $\D$ is often known as \emph{Schur stability}. 
We use $\T$ to denote the unit circle $\{ z \in \C : |z|=1\}$ in $\C$ and $\overline{\D}$ for the closed 
unit disk $\D \cup \T$.  Of particular interest are polynomials stable with respect $\D$ and its inverse $\C\backslash \overline{\D}$. 

\begin{Def}
	We say that $p\in \C[\z]$ is a \emph{Lee-Yang polynomial} if it is stable with respect to both $\D$ and $\C\backslash \overline{\D}$
	and use  $\A_{\bd}$ to denote the set of Lee-Yang polynomials in $\C[\z]_{\le \bd}$ of multidegree equal to ${\bf d}$.
	That is, $\A_{\bd}$ is the set of polynomials $p = \sum_{0\le\balpha\le\bd}a_{\balpha}\z^{\balpha}$ 
	so that $\deg_{j}(p) =d_j$ for all $j$ with the property that $p(z_1, \hdots, z_n)\neq 0$ whenever
	$|z_j|<1$ for all $j$ or $|z_j|>1$ for all $j$. When $ n $ is not clear from the context, we will write $ \A_{\bd}(n) $. 
\end{Def}

One property of stability that we will often use is that the set of multivariate polynomials 
that is stable with respect to either an open disk or halfplane is closed in the Euclidean topology 
on $\C[{\bf z}]_{\leq \bd}$. This follows immediately from Hurwitz's Theorem: 

\newtheorem*{thmHurwitz*}{Hurwitz's Theorem}
\begin{thmHurwitz*}[Theorem 1.3.8 of \cite{MR1954841}] Let $\Omega \subseteq \C^m$ be a connected open set and $(f_n)_{n\in \N}$ a 
sequence of functions, each analytic and nonvanishing on $\Omega$, that converges to a limit $f$ uniformly on compact subsets of $\Omega$. Then $f$ is either nonvanishing on $\Omega$ or identically zero. 
\end{thmHurwitz*}

M\"obius transformations  map between circular regions in $\C$. 
Given a tuple of M\"obius transformations  $\phi = (\phi_j(z_j))_j$ where $\phi_j(z)= \frac{a_jz + b_j}{c_jz+d_j}$ and polynomial $p\in \C[\z]_{\leq \bd}$, define 
\[
\bphi\cdot p = \prod_{j=1}^n (c_jz_j + d_j)^{\deg_j(p)} \cdot p(\phi_1(z_1), \hdots, \phi_n(z_n)) \in \C[\z]_{\leq \bd}.
\]
We will sometimes abuse notation and, for a single M\"obius transformations  $\phi(z)= \frac{az + b}{cz+d}$, 
use $\phi\cdot p$ to denote $(\phi, \hdots, \phi)\cdot p$. 
Then $p$ is stable with respect to a region $C$ if and only if $\phi\cdot p$ 
is stable with respect to $\phi^{-1}(C)$.  See \cite[Lemma 1.8]{BB2}. 
Now we will often fix $\phi$ to be a M\"obius transformation taking $\H_+$ to $\D$.
Explicitly, for any fixed $\theta\in [0,2\pi)$, consider the pair
\begin{equation} \label{eq:Mobius}
	\phi(z) = \frac{e^{i\theta}(z-i)}{z+i} \ \ \text{ and } \ \ 
	\phi^{-1}(z) = \frac{-i (z+ e^{i\theta})}{z-e^{i\theta}} \ \ 
	\text{ with } 
	\rho(x) = \phi^{-1}(e^{ix}) = \cot\!\left(\frac{\theta-x}{2}\right).
\end{equation}
The derivative of $\rho$, $\rho'(x) = \frac{1}{2}\csc^2((\theta - x)/2)$, is strictly positive  everywhere it is defined, which is for $x\not\in \theta + 2\pi \Z$. 
In particular, we can always choose $\theta$ so that $\rho$ and its derivative 
are defined at any finite set $a_1, \hdots, a_n\in \R$. 

The following are straightforward from the definitions of stability: 

\begin{prop}\label{prop:TorusToReal}
	For $p\in \C[\z]_{\leq \bd}$ the following are equivalent: 
	\begin{itemize}
		\item[(a)] $p$ belongs to $\A_{\bd}$ 
		\item[(b)] for every $\ell = (\ell_1, \hdots, \ell_n) \in \R_+^n$ and $x\in \C$, 
		$p(\exp(ix\lv))=0$ implies $x\in \R$, and
		\item[(c)] for $\phi$ as in \eqref{eq:Mobius}, $\phi\cdot p$ is stable with respect to $\H_+$ and $\H_-$.   
	\end{itemize}
\end{prop}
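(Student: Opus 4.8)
The plan is to establish the cycle of implications (a) $\Rightarrow$ (b) $\Rightarrow$ (c) $\Rightarrow$ (a), using at each step only the definition of stability (no zeros in the relevant product of regions) together with the elementary geometry of the maps $z\mapsto e^{ixz}$ and the M\"obius transformation $\phi$ of \eqref{eq:Mobius}.

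First I would prove (a) $\Rightarrow$ (b). Fix $\lv\in\R_+^n$ and suppose $x\in\C$ with $p(\exp(ix\lv))=0$. The key observation is the one already flagged in the text after Theorem \ref{thm: KS}: if $\im(x)>0$, then $|e^{ix\ell_j}| = e^{-\ell_j\im(x)} < 1$ for every $j$ (using $\ell_j>0$), so $\exp(ix\lv)\in\D^n$; and if $\im(x)<0$, then $|e^{ix\ell_j}|>1$ for every $j$, so $\exp(ix\lv)\in(\C\setminus\overline{\D})^n$. Since $p\in\A_{\bd}$ has no zeros in either product region, $p(\exp(ix\lv))\neq 0$ in both cases, forcing $\im(x)=0$, i.e. $x\in\R$. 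Conversely, for (b) $\Rightarrow$ (a) one argues the contrapositive: if $p$ has a zero $\z^{(0)}\in\D^n$ (resp. in $(\C\setminus\overline{\D})^n$), write $z^{(0)}_j = r_j e^{i\varphi_j}$ with $0<r_j<1$ (resp. $r_j>1$). I would like to realize this point as $\exp(ix\lv)$ for suitable $x\in\C\setminus\R$ and $\lv\in\R_+^n$; the mild subtlety is that a single complex $x$ must simultaneously satisfy $e^{-\ell_j\im(x)} = r_j$ and $e^{i\ell_j\re(x)} = e^{i\varphi_j}$, which over-determines things for generic $\z^{(0)}$. The clean fix is to note that it suffices to contradict (b) for \emph{some} $\lv\in\R_+^n$, and to instead use a homotopy/continuity argument: the restriction of $p$ to the complex line $x\mapsto p(\exp(ix\lv))$ for varying $\lv$, or better, reduce to the univariate case by the standard trick of substituting $z_j = w^{t_j}$-type scalings — alternatively, simply cite that (a) and (b) are both equivalent to the statement ``$p$ has no zeros in $\D^n\cup(\C\setminus\overline{\D})^n$'' once one checks that the union of the sets $\{\exp(ix\lv): \im(x)\neq 0,\ \lv\in\R_+^n\}$ covers $\D^n\cup(\C\setminus\overline{\D})^n$, which it does: given $\z^{(0)}\in\D^n$, pick $\lv = (-\log r_1,\dots,-\log r_n)\in\R_+^n$ and $x = i + \re(x)$ with $\re(x)$ chosen (using $\Q$-density or just directly) so the phases match approximately, then pass to a limit via Hurwitz if exact matching fails. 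I expect this direction to be the main obstacle, and the cleanest writeup is probably to prove (a)$\Leftrightarrow$(c) and (a)$\Rightarrow$(b) cleanly, then get (b)$\Rightarrow$(a) by contraposition using the covering remark above.

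For the equivalence (a) $\Leftrightarrow$ (c), I would invoke the M\"obius-transformation principle stated just before the proposition: $p$ is stable with respect to a circular region $C$ if and only if $\bphi\cdot p$ is stable with respect to $\phi^{-1}(C)$ (this is \cite[Lemma 1.8]{BB2}, quoted in the excerpt). Taking $\phi$ as in \eqref{eq:Mobius}, which maps $\H_+$ to $\D$ and hence $\H_-$ to $\C\setminus\overline{\D}$ (and fixes $\T$ setwise, mapping $\R\cup\{\infty\}$ to $\T$), we get: $p$ has no zeros in $\D^n$ iff $\phi\cdot p$ has no zeros in $\H_+^n$, and $p$ has no zeros in $(\C\setminus\overline{\D})^n$ iff $\phi\cdot p$ has no zeros in $\H_-^n$. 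Conjoining the two equivalences gives exactly ``$p\in\A_{\bd}$'' $\Leftrightarrow$ ``$\phi\cdot p$ stable with respect to both $\H_+$ and $\H_-$,'' which is (a) $\Leftrightarrow$ (c). One should note the degree bookkeeping: the prefactor $\prod_j(c_jz_j+d_j)^{\deg_j p}$ in $\bphi\cdot p$ is a nonvanishing monomial-type factor on the relevant regions, so it does not introduce or remove zeros there; and since $\phi$ is a bijection of $\widehat{\C}$, the multidegree is preserved so $\phi\cdot p\in\C[\z]_{\le\bd}$ as claimed.

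The only genuine care needed, beyond citing the two black-box facts (the sign/modulus computation for $e^{ix\ell_j}$ and the M\"obius lemma), is the covering statement in (b)$\Rightarrow$(a) — namely that every point of $\D^n\cup(\C\setminus\overline{\D})^n$ lies in the closure of $\{\exp(ix\lv):\lv\in\R_+^n,\ \im(x)\neq 0\}$ — combined with Hurwitz's Theorem to promote ``zero in the closure'' to ``zero,'' or equivalently to rule out a zero appearing only in a limit. Since Hurwitz is already stated in the excerpt, this costs nothing. I would organize the final writeup as: (i) the $e^{ix\ell_j}$ modulus computation giving (a)$\Rightarrow$(b); (ii) the covering remark plus Hurwitz giving (b)$\Rightarrow$(a); (iii) the M\"obius lemma applied twice giving (a)$\Leftrightarrow$(c). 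Each step is short, and I expect step (ii) to be the only one requiring a moment's thought.
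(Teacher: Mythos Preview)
The paper gives no proof of this proposition, only the remark that it is ``straightforward from the definitions of stability,'' and in fact only the implication (a)$\Rightarrow$(c) is ever used later (in the proof of \Cref{prop: positive derivatives any degree}). Your arguments for (a)$\Rightarrow$(b) and (a)$\Leftrightarrow$(c) are correct and are exactly what the paper has in mind: the modulus computation $|e^{ix\ell_j}|=e^{-\ell_j\im(x)}$ handles the first, and the M\"obius principle \cite[Lemma~1.8]{BB2} handles the second.

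There is, however, a genuine gap in your treatment of (b)$\Rightarrow$(a). Your covering claim is false: the set $\{\exp(ix\lv):\im(x)\neq 0,\ \lv\in\R_+^n\}$ has real dimension at most $n+1$ (the $2+n$ real parameters $(x,\lv)$ modulo the scaling redundancy $(x,\lv)\mapsto(x/c,c\lv)$), while $\D^n$ has real dimension $2n$, so for $n\ge 2$ the exponential curves cannot cover. Concretely, $\exp(ix\lv)=\z^{(0)}$ forces all the complex numbers $\arg z_j^{(0)}-i\log|z_j^{(0)}|$ (mod $2\pi\Z$) to share a common argument, which fails generically. Your Hurwitz fallback is ill-posed: knowing that $p$ is nonzero on a dense subset of $\D^n$ does not prevent $p$ from vanishing at a limit point --- that is simply not what Hurwitz says. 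Worse, (b)$\Rightarrow$(a) is actually false as literally stated: for $p(z_1,z_2)=z_1+z_2\in\C[\z]_{\le(1,1)}$ one checks that $e^{ix\ell_1}+e^{ix\ell_2}=0$ forces $x=\pi(2k+1)/(\ell_1-\ell_2)\in\R$ for every $\lv\in\R_+^2$, so (b) holds; yet $p(\tfrac12,-\tfrac12)=0$ with $(\tfrac12,-\tfrac12)\in\D^2$, so $p\notin\A_{(1,1)}$, and one also checks (c) fails. The equivalence therefore needs an extra hypothesis such as $p(\mathbf{0})\neq 0$ (automatic once $p\in\A_{\bd}$), and even under that hypothesis the argument is not the covering one you sketch. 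You should flag this rather than pursue the covering route.
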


In order to understand polynomials stable w.r.t. $\D$ and $\C\backslash \overline{\D}$, 
we first recall some useful facts about real polynomials stable w.r.t $\H_+$.

For any $w\in \R^n$ and polynomial $q = \sum_{\alpha}a_{\alpha}{\bf z}^{\alpha}$, we define the support of $q$ to be
the collection of exponents of monomials appearing in $q$, i.e.  $\supp(q) = \{ \alpha\in \Z_{\geq 0}^n : a_{\alpha}\neq 0\}$. 
For any vector $w = (w_1, \hdots, w_n)\in \R^n$, define the $w$-initial form of $q$ to be the sum over all terms  in $q$ maximizing 
$\langle w , \alpha\rangle$. That is, we can define 
\[
\deg_w(q) = \max_{\alpha \in {\rm supp}(q)} \langle w, \alpha \rangle 
\quad\text{ and }\quad 
{\rm in}_w(q) = (t^{\deg_w(q)}q (t^{-w_1}z_1, \hdots, t^{-w_n}z_n))|_{t=0} = \sum_{\alpha \in A}a_{\alpha}{\bf z}^{\alpha},  
\]
where  $A$ is the subset of  $\alpha \in \supp(p)$ maximizing $\langle w, \alpha \rangle $.

\begin{prop}\label{prop:realStable1}
	Let $q = \sum_{\alpha} a_{\alpha} {\bf z}^{\alpha} \in \C[z_1, \hdots, z_n]$ be stable w.r.t $\H_+$ with total degree $d$ and let ${\bf a}\in \R_+^n$ and ${\bf b}\in \R^n$.  
	Then 
	\begin{itemize}
		\item[(a)] for any $w\in \R^n$, ${\rm in}_w(q)$ is stable w.r.t. $\H_+$, 
		\item[(b)] for any ${\bf a}_1, \hdots, {\bf a}_m\in \R_{\ge 0}^n$, 
		$q({\bf b} + y_1{\bf a}_1 + \cdots y_m{\bf a}_m)\in \C[y_1, \hdots, y_m]$ is stable w.r.t $\H_+$,
		\item[(c)] if $q$ is homogeneous, then all its coefficients have the same phase, and
		\item[(d)]  if ${\bf b}\in \R^{n}$ is a real zero of $ q $ of multiplicity $ m $, namely $ q({\bf b})=0 $ and $ \partial^{\alpha}q({\bf b})=0 $ for all $ |\alpha|<m $, then the nonzero entries of $\{\partial^{\alpha}q({\bf b}) : |\alpha| = m\}$ 
		all have the same phase. 
			\end{itemize}
\end{prop}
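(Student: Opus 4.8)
The plan is to derive all four parts from Hurwitz's Theorem together with the elementary fact that $\H_+$ is preserved, coordinatewise, by multiplication by positive reals and by translation by real numbers; part (c) will carry essentially all the weight, and (a), (b), (d) are short. There is no circularity: (a) and (b) use only Hurwitz, (c) uses (a), (b) and the Gauss--Lucas theorem (applied one variable at a time), and (d) uses (a) and (c).

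For (a), set $q_t(\z):=t^{\deg_w(q)}\,q(t^{-w_1}z_1,\dots,t^{-w_n}z_n)$ for $t>0$. Each map $z_j\mapsto t^{-w_j}z_j$ is multiplication by a positive real, hence preserves $\H_+^n$, and scaling by the nonzero constant $t^{\deg_w(q)}$ does not move zeros, so $q_t$ is $\H_+$-stable for every $t>0$. The coefficient of $\z^{\alpha}$ in $q_t$ is $a_{\alpha}t^{\deg_w(q)-\langle w,\alpha\rangle}$, which tends to $a_{\alpha}$ when $\alpha$ maximizes $\langle w,\cdot\rangle$ on $\supp(q)$ and to $0$ otherwise; thus $q_t\to {\rm in}_w(q)$ coefficientwise, hence uniformly on compact subsets of $\C^n$. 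Since ${\rm in}_w(q)$ is a nonzero polynomial it cannot vanish identically on the connected open set $\H_+^n$, so Hurwitz's Theorem forces it to be nonvanishing there, i.e.\ $\H_+$-stable. Part (b) is analogous: for $\epsilon>0$ put $Q_{\epsilon}(\mathbf y):=q\big(\mathbf b+i\epsilon\,\mathbf 1+y_1\mathbf a_1+\dots+y_m\mathbf a_m\big)$; if $\mathbf y\in\H_+^m$, the $j$-th coordinate of the argument has imaginary part $\epsilon+\sum_k(\mathbf a_k)_j\operatorname{Im}(y_k)\ge\epsilon>0$, so $Q_{\epsilon}$ is nonvanishing on $\H_+^m$. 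Letting $\epsilon\to 0^+$ and applying Hurwitz, $Q(\mathbf y):=q(\mathbf b+\sum_k y_k\mathbf a_k)$ is either identically zero or $\H_+$-stable (adopting the usual convention that the zero polynomial is stable).

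The heart of the matter is (c), which I would prove by induction on the degree $d$ of the homogeneous polynomial $q$. For $d=0$ there is nothing to prove, and for $d=1$, write $q=\sum_j c_jz_j$: if two nonzero coefficients $c_i,c_j$ had different phases then $c_i/c_j\notin\R_{>0}$, so the Minkowski sum of the open half-planes $c_i\H_+$ and $c_j\H_+$ is all of $\C$; choosing the remaining variables equal to $i$ one then solves $q(\z)=0$ with $\z\in\H_+^n$, contradicting stability. For $d\ge 2$: for every index $j$ with $z_j$ occurring in $q$, the partial derivative $\partial_j q$ is $\H_+$-stable --- fix all other variables in $\H_+$ and apply Gauss--Lucas to the resulting univariate polynomial, whose zeros lie in $\overline{\H_-}$ --- and it is homogeneous of degree $d-1$, so by the inductive hypothesis all its nonzero coefficients share a phase; equivalently, all coefficients $a_{\gamma}\ne0$ with $\gamma_j\ge1$ share a common phase $\lambda_j$. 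Form the graph on the variables occurring in $q$ with an edge $\{i,j\}$ whenever some monomial of $q$ involves both $z_i$ and $z_j$; along every edge $\lambda_i=\lambda_j$. If this graph were disconnected, $q$ would split as a sum $q_1+q_2$ of two nonzero homogeneous degree-$d$ polynomials in disjoint sets of variables, and restricting via part (b) to a positive direction $\mathbf c^{(\ell)}$ inside each block (with $q_\ell(\mathbf c^{(\ell)})\ne0$) would yield an $\H_+$-stable binomial $\kappa_1y_1^d+\kappa_2y_2^d$ with $\kappa_1,\kappa_2\ne0$; this is impossible for $d\ge 2$, since the $d$ values of $y_1/y_2$ solving $\kappa_1y_1^d=-\kappa_2y_2^d$ are equally spaced in argument and so cannot all be negative reals --- the only ratios $y_1/y_2$ forbidden when $y_1,y_2\in\H_+$. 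Hence the graph is connected, all $\lambda_j$ coincide, and since every monomial of $q$ has positive degree in some variable, every coefficient of $q$ has that common phase.

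Finally (d) follows quickly from (a) and (c). Since $\mathbf b$ is real, translation by $\mathbf b$ preserves $\H_+^n$, so $\tilde q(\mathbf w):=q(\mathbf b+\mathbf w)$ is $\H_+$-stable; its lowest-degree homogeneous part is $\tilde q_m(\mathbf w)=\sum_{|\alpha|=m}\frac{\partial^{\alpha}q(\mathbf b)}{\alpha!}\mathbf w^{\alpha}={\rm in}_{-\mathbf 1}(\tilde q)$, which is $\H_+$-stable by (a) and homogeneous of degree $m$, so by (c) all its nonzero coefficients share a phase; as $\alpha!>0$, so do the nonzero members of $\{\partial^{\alpha}q(\mathbf b):|\alpha|=m\}$. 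I expect the only real obstacle to be the $d\ge 2$ step of (c), and within it the connectivity argument that rules out a split of $q$ into disjoint-variable blocks; the remaining parts are bookkeeping around Hurwitz's Theorem and Gauss--Lucas.
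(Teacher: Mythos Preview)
Your proof is correct. Parts (a), (b), and (d) match the paper's argument essentially line for line: the same Hurwitz limit $t^{\deg_w(q)}q(t^{-w}\z)\to{\rm in}_w(q)$ for (a), the same closure argument from strictly positive directions for (b), and the same reduction of (d) to (a) and (c) via the translation $\tilde q(\mathbf w)=q(\mathbf b+\mathbf w)$ and the initial form ${\rm in}_{-\mathbf 1}(\tilde q)$.

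The genuine difference is in (c). The paper simply invokes \cite[Theorem~6.1]{COSW} and moves on, whereas you supply a self-contained inductive proof: Gauss--Lucas gives stability of each $\partial_j q$, the inductive hypothesis forces all coefficients with $\gamma_j\ge 1$ to share a phase $\lambda_j$, and the connectivity argument (ruling out a split into disjoint-variable blocks via the instability of $\kappa_1 y_1^d+\kappa_2 y_2^d$ for $d\ge 2$) glues the $\lambda_j$ together. Your route buys independence from the cited literature and makes the note self-contained, at the cost of a page of argument; the paper's route is shorter but outsources the key fact. Both are sound.
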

\begin{proof}
	
	(a) Note that for any $t\in \R_{>0}$, the polynomial $t^{\deg_w(q)}q (t^{-w_1}z_1, \hdots, t^{-w_n}z_n)$ is stable w.r.t. $\H_+$. By Hurwitz's Theorem, the set of stable polynomials is closed in the Euclidean topology on $\C[{\bf z}]_{\leq {\bf d}}$, taking the limit as $t\to 0$ 
	shows that ${\rm in}_w(q)$ is stable w.r.t. $\H_+$. 
	
	(b) First, suppose ${\bf a}_1, \hdots, {\bf a}_m\in \R_+^n$. If ${\rm Im}(y_j)>0$ for all $j$, then  
	the imaginary part of ${\bf b}+ \sum_{i=1}^my_i{\bf a}_i$ belongs to $\R_+^n$ and so 
	$q({\bf b}+ \sum_{i=1}^my_i{\bf a}_i) \neq 0$. 
	Hurwitz's Theorem then shows that the polynomial $q({\bf b}+ \sum_{i=1}^my_i{\bf a}_i) \neq 0$ is stable for 
	any ${\bf a}_1, \hdots, {\bf a}_m\in \R_{\geq 0}^n$.	

        (c) This is the content of \cite[Theorem~6.1]{COSW}. 
	
	(d)  Let $m$ denote the multiplicity of $q$ at $\z = {\bf b}$. 
	Note that by replacing $q({\bf z})$ with $ q({\bf z} + {\bf b})$ it suffices to 
	address the case ${\bf b} = (0,\hdots, 0)$. The notation $ \alpha!=\prod_{j=1}^{n}\alpha_{j}! $ allows to write $\partial^{\alpha}q(0) = \alpha! \cdot a_{\alpha}$ and so it is enough to prove that all non zero $ a_{\alpha} $ with $ |\alpha|=m $ share the same phase. Fix $w = (-1, \hdots, -1)$. Because $\alpha! \cdot a_{\alpha}= \partial^{\alpha}q(0) = 0$ for all $|\alpha|< m$, then the $\alpha \in \supp(p)$ that maximize $\langle w, \alpha \rangle=-|\alpha| $ are those with $ |\alpha|=m $, and in particular ${\rm in}_w(q) = \sum_{|\alpha|=m}a_{\alpha}{\bf z}^{\alpha}$.
	By parts (a) and (b), this polynomial is stable and so all of its nonzero coefficients have the same phase, which proves the claim. 
\end{proof}

We translate this statement for derivatives of trigonometric polynomials of the form $F(\xv)=p(\exp{(i\xv)})$
where $p\in\A_{\bd}$. 

First, we need a technical lemma on derivatives of compositions: 
\begin{prop}[Multivariate Chain Rule]\label{prop:chain rule}
	Let $\varphi:\C\to \C$ be a meromorphic function such that $\varphi'(x)$ is nonzero wherever defined.  Consider $f({\bf x}) = g(\varphi({\bf x}))$, where $\varphi$ is applied coordinate-wise. For any ${\bf a}\in \C^n$ at which $\varphi$ is defined, 
	the multiplicity $m$ of 
	$f$ at ${\bf a}$ equals the multiplicity of $g$ at ${\bf b} = \varphi({\bf a})$
	and for any $\alpha\in \N^n$ with $|\alpha| = m$, 
	\[
	\partial^{\alpha}f({\bf a}) = \partial^{\alpha}g({\bf b}) \cdot 
	\prod_{j=1}^n\varphi'(a_j)^{\alpha_j}. 
	\]
\end{prop}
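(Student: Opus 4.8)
The plan is to reduce the multivariate statement to a sequence of one-variable facts about analytic functions, using that $\varphi$ is applied coordinate-wise and that $\varphi'$ is nonvanishing (hence $\varphi$ is locally biholomorphic at every point where it is defined). First I would reduce to the case where $\varphi$ is a genuine local biholomorphism near ${\bf a}$: since $\varphi$ is meromorphic and $\varphi'({\bf a})\neq 0$ coordinatewise, there is a polydisk neighborhood $V$ of ${\bf b}=\varphi({\bf a})$ and a neighborhood $U$ of ${\bf a}$ such that $\varphi:U\to V$ is a biholomorphism (coordinatewise), so $g = f\circ\varphi^{-1}$ on $V$. It then suffices to prove two things: (1) the vanishing order (multiplicity) of $f$ at ${\bf a}$ equals that of $g$ at ${\bf b}$, and (2) the claimed formula for the top-order partials.

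For (1) I would argue that multiplicity is intrinsic: the multiplicity $m$ of an analytic function $h$ at a point ${\bf c}$ is the least total degree of a monomial appearing in the Taylor expansion of $h$ about ${\bf c}$, equivalently the largest $m$ such that $h$ vanishes to order $m$ along every analytic arc through ${\bf c}$ — and this is preserved by the biholomorphic change of coordinates $\varphi$. Concretely, writing the Taylor expansion $g({\bf y}) = \sum_{|\beta|\ge m} c_\beta ({\bf y}-{\bf b})^\beta$ with some $c_\beta\neq 0$ for $|\beta|=m$, substituting ${\bf y} = \varphi({\bf x})$ and expanding $\varphi_j(x_j)-b_j = \varphi_j'(a_j)(x_j - a_j) + O((x_j-a_j)^2)$ shows $f({\bf x})$ has no terms of total degree $<m$ and that its degree-$m$ part is obtained from the degree-$m$ part of $g$ by the invertible diagonal linear substitution $(x_j-a_j)\mapsto \varphi_j'(a_j)(x_j-a_j)$; since that substitution is invertible, the degree-$m$ part of $f$ is nonzero, giving multiplicity exactly $m$.

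For (2), once we know the degree-$m$ homogeneous part of $f$ about ${\bf a}$ equals the degree-$m$ homogeneous part of $g$ about ${\bf b}$ with each ${\bf y}$-monomial $({\bf y}-{\bf b})^\beta$ replaced by $\prod_j \varphi_j'(a_j)^{\beta_j}\,({\bf x}-{\bf a})^\beta$ (plus higher-order terms, which do not affect $\partial^\alpha$ at ${\bf a}$ for $|\alpha|=m$), the formula $\partial^\alpha f({\bf a}) = \partial^\alpha g({\bf b})\prod_{j=1}^n \varphi'(a_j)^{\alpha_j}$ follows by reading off the coefficient of $({\bf x}-{\bf a})^\alpha$ and using $\partial^\alpha h({\bf c}) = \alpha!\,[\text{coeff of }(\cdot-{\bf c})^\alpha]$. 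I would present this cleanly by induction on $n$: peel off one variable at a time, applying the single-variable Faà di Bruno / chain rule observation that if $u$ vanishes to order $k$ at $a$ and $\psi$ is a local biholomorphism with $\psi(a)=b$, then $u\circ\psi^{-1}$ vanishes to order $k$ at $b$ and the leading coefficients differ by $\psi'(a)^{k}$; the coordinatewise (untangled) nature of $\varphi$ is exactly what makes this induction go through with the product $\prod_j \varphi_j'(a_j)^{\alpha_j}$ appearing.

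The main obstacle is purely bookkeeping: making precise that contributions of order $<m$ cancel and that higher-order terms of $\varphi$ contribute only to partials of total order $>m$, so they are invisible to $\partial^\alpha$ with $|\alpha|=m$. This is routine but must be stated carefully because $g$ itself may have nonzero Taylor terms of all orders $\ge m$; the key point to emphasize is that only the degree-exactly-$m$ part of the expansion of $g$ about ${\bf b}$, composed with only the \emph{linear} part of $\varphi$ about ${\bf a}$, feeds into $\partial^\alpha f({\bf a})$ for $|\alpha|=m$. I do not anticipate any genuine difficulty beyond organizing this truncation argument.
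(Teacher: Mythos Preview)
Your proposal is correct and is essentially the same argument as the paper's, just phrased in terms of Taylor coefficients rather than a direct symbolic chain-rule expansion of $\partial^{\alpha}f$; the paper simply observes that in the chain-rule expansion of $\partial^{\alpha}f({\bf a})$ every term carries a factor $\partial^{\beta}g({\bf b})$ with $|\beta|\le|\alpha|$, the unique term with $|\beta|=|\alpha|$ is $\partial^{\alpha}g({\bf b})\prod_j\varphi'(a_j)^{\alpha_j}$, and all others vanish when $|\alpha|=m$. Your biholomorphism reduction and induction on $n$ are unnecessary scaffolding---the coordinatewise structure already makes the truncation argument go through directly---but they do no harm.
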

\begin{proof}
	The symbolic expansion of 
	$\partial^{\alpha}g(\varphi({\bf a}))$ using the chain rule will be a sum 
	of products of factors $\partial^{\beta}g({\bf b})$ 
	and $\rho^{(k)}(a_j)$ for some $|\beta|\leq |\alpha|$ and  $k\leq \alpha_j$. 
	The unique such term involving $\partial^{\alpha}g$
	is  $\partial^{\alpha}g({\bf b})\prod_{j=1}^n\varphi'(a_j)^{\alpha_j}$
	and all others have a factor of $\partial^{\beta}g({\bf b})$ with $|\beta|<|\alpha|$.  
	If $m$ is the multiplicity of $g$ at ${\bf b}$, then $\partial^{\beta}g({\bf b}) = 0$ for all 
	$|\beta|<m$ and $\partial^{\alpha}g({\bf b}) \neq 0$ for some $|\alpha| = m$. 
	The calculation above shows that $\partial^{\beta}f({\bf a}) = 0$ for all 
	$|\beta|<m$ and $\partial^{\alpha}f({\bf a}) \neq 0$. 
\end{proof}

\begin{prop}\label{prop: positive derivatives any degree}
	Let $p\in\A_{\bd}$ and define $F:\C^n \to \C$ by   $F(\xv)=p(\exp{(i\xv)})$. 
	If ${\bf a}\in \R^n$ is a zero of $F$ of multiplicity $m$, 
	then nonzero elements of $\{\partial^{\alpha}F({\bf a}) : |\alpha| = m\}$ have the same phase.  
\end{prop}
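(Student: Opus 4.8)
The plan is to reduce to the real-stable case via the M\"obius substitution of \eqref{eq:Mobius} and then apply \Cref{prop:realStable1}(d). Fix the zero ${\bf a}=(a_1,\dots,a_n)\in\R^n$ of $F$. Choose $\theta\in[0,2\pi)$ with $\theta\notin a_j+2\pi\Z$ for every $j$, so that $\rho(x)=\cot((\theta-x)/2)$ and $\rho'(x)=\tfrac12\csc^2((\theta-x)/2)$ are holomorphic in a neighbourhood of each $a_j$. Let $\bphi=(\phi,\dots,\phi)$ with $\phi$ as in \eqref{eq:Mobius}, and set $q=\bphi\cdot p$; by \Cref{prop:TorusToReal}(c), $q$ is stable with respect to $\H_+$.

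First I would record the local identity $F=G/H$ near ${\bf a}$, where $G(\xv)=q(\rho(x_1),\dots,\rho(x_n))$ and $H(\xv)=\prod_{j=1}^n(\rho(x_j)+i)^{d_j}$. This comes from unwinding the definition of $\bphi\cdot p$: since $\phi(z)=e^{i\theta}(z-i)/(z+i)$ has denominator coefficients $c=1$, $d=i$, one has $q(\yv)=\prod_j(y_j+i)^{d_j}\,p(\phi(y_1),\dots,\phi(y_n))$, and substituting $y_j=\rho(x_j)=\phi^{-1}(e^{ix_j})$ turns $\phi(y_j)$ into $e^{ix_j}$, hence $p(\phi(\rho(\xv)))=F(\xv)$. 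Because each $\rho(a_j)$ is real, $\rho(a_j)+i\neq 0$, so $H$ is holomorphic and nonvanishing at ${\bf a}$ and $1/H$ is a nonvanishing analytic factor there. Since $1/H$ is analytic and nonzero at ${\bf a}$, the Leibniz rule shows the multiplicity of $F$ at ${\bf a}$ equals that of $G$ at ${\bf a}$, and for $|\alpha|=m$,
\[
\partial^{\alpha}F({\bf a}) \;=\; \frac{1}{H({\bf a})}\,\partial^{\alpha}G({\bf a}),
\]
as every other Leibniz term involves $\partial^{\beta}G({\bf a})$ with $|\beta|<m$, which vanishes.

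Next I apply the multivariate chain rule, \Cref{prop:chain rule}, with $\varphi=\rho$ (meromorphic on $\C$ with $\rho'$ nowhere zero): the multiplicity of $G$ at ${\bf a}$ equals the multiplicity of $q$ at ${\bf b}=\rho({\bf a})\in\R^n$, and for $|\alpha|=m$,
\[
\partial^{\alpha}G({\bf a}) \;=\; \partial^{\alpha}q({\bf b})\cdot\prod_{j=1}^n\rho'(a_j)^{\alpha_j}.
\]
The crucial point is that each $\rho'(a_j)=\tfrac12\csc^2((\theta-a_j)/2)$ is a \emph{positive real number}, so $\prod_j\rho'(a_j)^{\alpha_j}>0$; thus $\partial^{\alpha}G({\bf a})$ and $\partial^{\alpha}q({\bf b})$ have the same phase for each such $\alpha$, and combining with the previous display, $\partial^{\alpha}F({\bf a})$ is $\partial^{\alpha}q({\bf b})$ times the fixed scalar $1/H({\bf a})$ times a positive real (the positive real depending on $\alpha$). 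Finally, ${\bf b}$ is a real zero of $q$ of multiplicity $m$ and $q$ is $\H_+$-stable, so \Cref{prop:realStable1}(d) says the nonzero elements of $\{\partial^{\alpha}q({\bf b}):|\alpha|=m\}$ all share one phase; pulling this back through the two displayed identities gives the same for $\{\partial^{\alpha}F({\bf a}):|\alpha|=m\}$.

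I do not expect a serious obstacle here: the statement is essentially the composition of \Cref{prop:TorusToReal}(c), \Cref{prop:chain rule}, and \Cref{prop:realStable1}(d). The only genuine care needed is bookkeeping — verifying that the denominator $H$ is nonvanishing at ${\bf a}$ (which uses that $\rho$ maps $\R$ into $\R$) and, more importantly, that the chain-rule prefactors $\rho'(a_j)$ are positive reals rather than merely nonzero, since that positivity is exactly what makes "having a common phase" survive both change-of-variables steps.
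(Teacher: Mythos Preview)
Your proof is correct and follows essentially the same route as the paper: pass to the $\H_+$-stable polynomial $q=\bphi\cdot p$, factor $F$ as a nonvanishing analytic function times $q\circ\rho$, use the Leibniz rule to strip the nonvanishing factor and \Cref{prop:chain rule} to pull the derivatives through $\rho$, and conclude via \Cref{prop:realStable1}(d) together with the positivity of $\rho'(a_j)$. The only cosmetic difference is that you write $F=G/H$ with $H=\prod_j(\rho(x_j)+i)^{d_j}$, whereas the paper writes $F=r(\exp(i\xv))\cdot q(\rho(\xv))$; these are the same factorization up to bookkeeping.
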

\begin{proof}	Let $\phi$ be a M\"obius transformation taking $\H_+$ to $\D$, as in \eqref{eq:Mobius}, with $ \theta $ such that $ e^{i\theta}\ne e^{ia_{j}} $ for all the coordinates 
	$e^{ia_1}, \hdots, e^{ia_n}$ of $\exp(i{\bf a})$.  
	By Proposition~\ref{prop:TorusToReal}, $q({\bf z}) = \phi\cdot p(\z)$ 
	is stable w.r.t. $\H_+$ and $\H_-$.
	Then 
	\[
	p({\bf z})= \phi^{-1}\cdot q({\bf z}) =  r({\bf z})\cdot q(\phi^{-1}({\z})) \ \ 
	\text{ and } 
	F({\bf x}) = p(\exp(i{\bf x})) = r(\exp(i{\bf x}))\cdot q\left(\rho({\bf x})\right),
	\]
	where $ \rho(\xv)=(\cot(\frac{a_{1}-x_{1}}{2}),\ldots,\cot(\frac{a_{n}-x_{n}}{2})) $ and $r(\exp(i\xv)) = \prod_{j=1}^n(e^{i x_j} - e^{ia_{j}})^{d_j}$. In particular $ r(\exp(i{\bf a}))\ne0 $, so $ q(\rho(\textbf{a}))=0 $, hence $ \rho(\textbf{a}) $ is a zero of $ q $. An induction argument shows that ${\bf a}$ must be a zero of $q(\rho({\bf x}))$ of multiplicity $m$. That is, $\partial^{\alpha}q(\rho({\bf x}))$ is zero at ${\bf x} = {\bf a}$ for all $|\alpha|<m$ and nonzero for some $|\alpha| = m$. To do so, suppose that $\partial^{\alpha}q(\rho({\bf x}))|_{\xv=\textbf{a}}=0$ for all $|\alpha|\le m'-1$ for $ m'<m $. Then, for any $\alpha\in \Z_{\ge0}^n$ with $ |\alpha|=m' $, 
	\begin{equation}\label{eq:prodRule}
		\partial^{\alpha} F({\bf x})=
		\sum_{\beta + \gamma =\alpha}\partial^{\beta}r(\exp(i\xv))\partial^{\gamma}q(\rho({\bf x})), \end{equation}
	so 
	\[0=\partial^{\alpha} F({\bf x})|_{\xv=\textbf{a}}=r(\exp(i\textbf{a}))\partial^{\alpha}q(\rho({\bf x}))|_{\xv=\textbf{a}}.\]
	Since $ r(\exp(i\textbf{a}))\ne 0 $, then $ \partial^{\alpha}q(\rho({\bf x}))|_{\xv=\textbf{a}}=0 $ for every $ |\alpha|=m' $, and by induction for any $ |\alpha|<m $. Together with Proposition~\ref{prop:chain rule}, this gives that 
	\[
	\partial^{\alpha} F({\bf a}) 
	= r(\exp(i{\bf a}))\partial^{\alpha}q(\rho({\bf x}))|_{{\bf x} = {\bf a}} = 
	r(\exp(i{\bf a})) (\partial^{\alpha}q)|_{{\bf z} = \rho({\bf a})} \cdot 
	\prod_{j=1}^n\rho'(a_j)^{\alpha_j}. 
	\]
	The phase of the non-zero factor $ r(\exp(i{\bf a}))\prod_{j=1}^n\rho'(a_j)^{\alpha_j} $ is independent of $ \alpha $, since $ \rho'(a_j) $ is positive for all $ j $, so the nonzero elements of $\{\partial^{\alpha}F({\bf a}): |\alpha| = m\}$ have the same phase because the nonzero 
	elements of $\{\partial^{\alpha}q|_{{\bf z} = \rho({\bf a})}: |\alpha| = m\}$ have the same phase, by Proposition~\ref{prop:realStable1}(d). 
\end{proof}	
\begin{lem}\label{lem: all multiplicities agree}
	For $t\in \R$, $\ell\in \R_{\ge 0}^n$ and $p\in \A_{\bd}(n)$, the following coincide: 
	\begin{itemize}
		\item[(a)] the multiplicity of $t\in \R$ as a zero of the function $f(t) = p(\exp(it\lv))$,
		\item[(b)] the multiplicity of $\xv=t\lv$ as a zero of $F({\bf x}) = p(\exp(i\xv))$, 
		\item[(c)] the multiplicity of $\z=\exp{(it\ell)}\in \T^n$ as a zero of $p(\z)$, and 
		\item[(d)] the multiplicity of $1$ as a root of the univariate polynomial $q(s)=p(s\exp(it\ell))$.
	\end{itemize}
\end{lem}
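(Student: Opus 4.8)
The plan is to prove the chain of equalities by comparing each of (a), (b), and (d) to (c), the multiplicity $m$ of $p$ at the torus point $\z_{0}:=\exp(it\lv)\in\T^{n}$; set also $\mathbf{a}:=t\lv\in\R^{n}$, so $\exp(i\mathbf{a})=\z_{0}$. The only inputs are Proposition~\ref{prop:chain rule} (the chain rule for multiplicities under a coordinate-wise meromorphic substitution) and Proposition~\ref{prop: positive derivatives any degree} (the nonzero top-order partials of $F(\xv)=p(\exp(i\xv))$ at a real zero share a common phase). Observe first that, for $\lv\in\R_{+}^{n}$, neither $f(t)=p(\exp(it\lv))$ nor $q(s)=p(s\z_{0})$ is identically zero: if $q\equiv 0$ then $p$ vanishes on $\{s\z_{0}:|s|<1\}\subseteq\D^{n}$, and if $f\equiv0$ then $p$ vanishes on $\{\exp(ix\lv):\im x<0\}\subseteq(\C\setminus\overline{\D})^{n}$, each contradicting that $p$ is Lee-Yang. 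Hence all four quantities are finite, and since $F$ is entire the Taylor expansions below converge.

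First I would prove (b) $=$ (c). The substitution $\varphi(x)=e^{ix}$ is entire with $\varphi'(x)=ie^{ix}$ nowhere zero, and $F$ is exactly $p$ with $\varphi$ applied coordinate-wise; so Proposition~\ref{prop:chain rule} gives that the multiplicity of $F$ at $\mathbf{a}$ equals the multiplicity of $p$ at $\varphi(\mathbf{a})=\z_{0}$, which is (c), and moreover
\begin{equation}\label{eq:top-deriv-via-chain}
\partial^{\alpha}F(\mathbf{a})=\partial^{\alpha}p(\z_{0})\prod_{j=1}^{n}\bigl(ie^{ia_{j}}\bigr)^{\alpha_{j}}=i^{m}\,\z_{0}^{\alpha}\,\partial^{\alpha}p(\z_{0})\qquad\text{for all }\alpha\text{ with }|\alpha|=m.
\end{equation}

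For (a) $=$ (b), I would expand $F$ at $\mathbf{a}$ along the direction $\lv$,
\[
f(t+s)=F(\mathbf{a}+s\lv)=\sum_{k\ge0}s^{k}\sum_{|\alpha|=k}\frac{\lv^{\alpha}}{\alpha!}\,\partial^{\alpha}F(\mathbf{a}),
\]
so all coefficients with $k<m$ vanish (as $\mathbf{a}$ is a zero of $F$ of multiplicity $m$), and the coefficient of $s^{m}$ is $\sum_{|\alpha|=m}\frac{\lv^{\alpha}}{\alpha!}\partial^{\alpha}F(\mathbf{a})$. By Proposition~\ref{prop: positive derivatives any degree} the nonzero terms $\partial^{\alpha}F(\mathbf{a})$ with $|\alpha|=m$ all have one common phase $e^{i\psi}$, and the weights $\lv^{\alpha}/\alpha!$ are nonnegative reals, strictly positive when $\lv\in\R_{+}^{n}$; hence this coefficient is $e^{i\psi}$ times a positive real and is nonzero, so the multiplicity of $f$ at $t$ equals $m$. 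For (c) $=$ (d), I would instead expand $p$ at $\z_{0}$ in the radial direction,
\[
q(1+\epsilon)=p(\z_{0}+\epsilon\z_{0})=\sum_{k\ge0}\epsilon^{k}\sum_{|\alpha|=k}\frac{\z_{0}^{\alpha}}{\alpha!}\,\partial^{\alpha}p(\z_{0}),
\]
where again the coefficients below degree $m$ vanish, and by \eqref{eq:top-deriv-via-chain} the coefficient of $\epsilon^{m}$ is $i^{-m}\sum_{|\alpha|=m}\frac{1}{\alpha!}\partial^{\alpha}F(\mathbf{a})$, nonzero by the same common-phase argument (the weights $1/\alpha!$ being positive). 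Hence the multiplicity of $q$ at $1$ equals $m$, and the chain (a) $=$ (b) $=$ (c) $=$ (d) is complete.

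The crux is exactly the nonvanishing of those degree-$m$ coefficients: restricting $F$ to a line, or $p$ to the complex radial line, could a priori raise the multiplicity through cancellation among the $\partial^{\alpha}$'s, and it is the Lee-Yang structure, via the common-phase statement of Proposition~\ref{prop: positive derivatives any degree}, that forbids it. The only additional bookkeeping is the degenerate case $\lv\in\R_{\ge0}^{n}\setminus\R_{+}^{n}$: if $\ell_{j}=0$ then $f$ is independent of $x_{j}$, and unless $f\equiv0$ one freezes $z_{j}=1$, obtaining by Hurwitz's Theorem a Lee-Yang polynomial in fewer variables, and one checks that this reduction leaves all four multiplicities unchanged; I would include these details for completeness.
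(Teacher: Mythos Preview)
Your argument for $\lv\in\R_{+}^{n}$ is correct and is essentially the paper's: (b)$=$(c) via \Cref{prop:chain rule} with $\varphi(x)=e^{ix}$, and (a)$=$(b) by directional Taylor expansion plus the common-phase property of \Cref{prop: positive derivatives any degree}. The one variation is in (c)$=$(d): where you expand $p$ radially and translate $\z_{0}^{\alpha}\,\partial^{\alpha}p(\z_{0})$ back to $\partial^{\alpha}F(\mathbf{a})$ via \eqref{eq:top-deriv-via-chain}, the paper instead observes that $q(e^{is})=p(e^{is}\z_{0})=F(\mathbf{a}+s\mathbf{1})$ and simply reapplies the already-proved equality (a)$=$(b) with $\lv=\mathbf{1}$, together with \Cref{prop:chain rule} for $s\mapsto e^{is}$. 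This reuses earlier work and avoids a second Taylor computation, but your direct route is equally valid.

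Your proposed reduction for the boundary case $\lv\in\R_{\ge0}^{n}\setminus\R_{+}^{n}$ does not work as stated: freezing $z_{j}=1$ when $\ell_{j}=0$ does not leave multiplicity (b) unchanged, since restricting $F$ to the hyperplane $\{x_{j}=0\}$ can only raise the multiplicity. In fact the equality (a)$=$(b) genuinely fails on this boundary: for $p(z_{1},z_{2})=(1-z_{1})(1-z_{2})\in\A_{(1,1)}(2)$, $\lv=(1,0)$, $t=0$, one has $f\equiv 0$ while (b)$=$(c)$=$(d)$=2$. The paper's own proof has the same gap (the weights $\binom{m}{\alpha}\lv^{\alpha}$ are only guaranteed positive when $\lv\in\R_{+}^{n}$), and the lemma is only ever invoked with $\lv\in\R_{+}^{n}$ or $\lv=\mathbf{1}$, so the defect is in the stated hypothesis rather than in your argument.
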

\begin{proof} 
	Note that by replacing $p({\bf z})$ with $p(e^{it\ell_1}z_1, \hdots,e^{it\ell_n}z_n)$
	it suffices to consider $t=0$ for this equivalence. 
	
	(a)$=$(b) Let $D_{\ell}$ denote the differential operator $\sum_{j=1}^n \ell_j \frac{\partial}{\partial x_{j}}$. Then for any $m\in \N$, $D_{\ell}^m = \sum_{|\alpha| = m} \binom{m}{\alpha}\ell^{\alpha}\partial^{\alpha}$, where $\binom{m}{\alpha} = \frac{m!}{\alpha_1!\cdots \alpha_n!}$, and 
	\[
	f^{(m)}(0) = D_{\ell}^m F|_{{\bf x} = (0,\hdots,0)} = \sum_{|\alpha| = m} 
	\binom{m}{\alpha}\ell^{\alpha}\partial^{\alpha}F(0).
	\]
We see that the multiplicity of $(0,\hdots, 0)$ as a zero of $F({\bf x}) = p(\exp(i\xv))$ lower bounds on the multiplicity of  $0$ as a zero of the function $f$. 
	Moreover, by Proposition~\ref{prop: positive derivatives any degree}, when $m$ 
	is the multiplicity of  $(0,\hdots, 0)$ as a zero of $F({\bf x})$, 
	the nonzero values of $\{ \partial^{\alpha}F(0) : |\alpha| =m\}$ have the same phase. 
	By assumption, at least one of these is nonzero, ensuring that their sum, $f^{(m)}(0)$, is non-zero and that $f$ has multiplicity $m$ at $t=0$. 
	
	(b)$=$(c) Follows from Proposition~\ref{prop:chain rule} with $\varphi(x) = e^{ix}$. 
	
	(b)$=$(d) Consider $q(s) = p(s,s,\ldots,s)$ 
	and $h(t) = q(e^{i t}) = p(e^{it},\ldots,e^{i t}) = F(t,t,\ldots,t) $.
	
	By Proposition~\ref{prop:chain rule}, the multiplicity of $q$ at $s=1$ 
	equals the multiplicity of $h$ at $t=0$. By the equivalence 
	(a)$=$(b) with $\ell = {\bf 1}$, this equals the multiplicity 
	of $F({\bf x})$ at ${\bf x} = {\bf 0}$.
\end{proof}

\subsection{Connectivity of $\A_{\bd}$ and perturbations}
One of the early results on real hyperbolic polynomials is Nuij's result 
that space of hyperbolic polynomials of a given degree is simply connected \cite{Nuij}.  
Here we adapt these techniques to better understand $\A_{\bd}$.

Nuij's proof relies on the following operators on univariate polynomial that 
preserve real rootedness.  For $\lambda\in \R_{>0}$, define $\mathcal{D}_{\lambda}:\C[z]\to \C[z]$ by $\mathcal{D}_{\lambda}(q) = q + \lambda q'$. Nuij shows that 
if $q$ is real rooted then $\mathcal{D}_{\lambda}(q)$ is real rooted, $\mathcal{D}_{\lambda}$ decreases the multiplicity of roots of $q$ by $1$ for $\lambda\neq 0$, and all new roots of $\mathcal{D}_{\lambda}(q)$ are simple.  In particular, for any real rooted polynomial $q\in \R[z]$ of degree $d$, the multiplicity of any root of $q$ is at most $d$ and so applying $\mathcal{D}_{\lambda}$ $d$ times to $q$ results in a real rooted univariate polynomial with $d$ simple roots.  The roots of $\mathcal{D}_{\lambda}(q)$ interlace those of $q$ in the following sense: if $a_1, \hdots, a_d$ are the roots of $q$ and $b_1, \hdots, b_d$ are the roots of $\mathcal{D}_{\lambda}(q)$, then $b_j\leq a_j\leq b_{j+1}$ for all $j$.

Let $\C[{\bf y}, {\bf z}]_{\bf d}$ denote the set of polynomials in $y_1, \hdots, y_n$ and $z_1, \hdots, z_n$ that are homogeneous of degree $d_j$ in each set of variables $(y_j, z_j)$.
The zero-set of such polynomials are well-defined subsets of $(\P^1(\C))^n$. 
Here we use $\P^1(K)$ to denote the projective line over a field $K$, which is 
$K^2\backslash \{(0,0)\}$ modulo the equivalence $(a,b) \sim (\lambda a, \lambda b)$ for $\lambda \neq 0$. 
For any polynomial $p\in \C[{\bf y}, {\bf z}]_{\bf d}$ and $\lambda \in (\C^*)^n$, 
$p(\lambda_1y_1, \hdots, \lambda_n y_n, \lambda_1z_1, \hdots, \lambda_nz_n) = \lambda^{\bf d}p(y_1, \hdots, y_n, z_1, \hdots, z_n)$. 
In a slight abuse of notation, we will use $[{\bf a}:{\bf b}]$ to denote a point 
$([a_i:b_i])_{i\in [n]} \in (\P^1(K))^n$, where ${\bf a} = (a_1, \hdots, a_n)$ and ${\bf b} = (b_1, \hdots, b_n)$. Similarly, for a subset $I\subseteq [n]$, 
we use $[{\bf a}_I:{\bf b}_I]$ to denote the point $([a_i:b_i])_{i\in I} \in (\P^1(K))^I$. 

To understand the zero set of $p$ on $(\P^1(K))^n$ we restrict to various affine charts. 
We can partition points $[{\bf a}:{\bf b}] \in (\P^1(K))^n$ by the set $I = \{i\in [n]:a_i\neq 0\}$. 
The affine chart of points $[{\bf a}:{\bf b}] \in (\P^1(K))^n$ with $a_i\neq 0$ for all $i$ is 
isomorphic to $K^n$ via  the coordinate-wise correspondence $[a_i:b_i] \leftrightarrow b_i/a_i$. For $i\in I$, $b_i\neq 0$ and for $j\not\in I$, $a_j\neq 0$, 
and so after rescaling we may take $b_i=1$ and $a_j = 1$.

On this vector space of polynomials define the linear operator \[
\mathcal{D}_{\lambda}:\C[{\bf y},{\bf z}]_{\bf d}\to \C[ {\bf y},{\bf z}]_{\bf d} \ \ \text{ by } \ \ 
\mathcal{D}_{\lambda}(q) = q + \lambda \sum_{j=1}^ny_j\partial_{z_j}q. 
\]
Let $\mathcal{D}^{|\bd|}_{\lambda}$ denote the operator obtained from $\mathcal{D}_{\lambda}$ by applying it $|{\bf d}| = \sum_{j=1}^nd_j$ times.

For each ${\bf d}\in \Z_{\geq 0}^n$, consider the following sets of polynomials: 
\begin{align*}
	\mathfrak{S}_{\bf d} & = 
	\{q\in \C[{\bf y},{\bf z}]_{\bf d} : {\rm coeff}(q, {\bf z}^{\bf d})=1 \text{ and $q({\bf 1},{\bf z})$ is 
		stable w.r.t. $\H_{+}$ and $\H_{-}$}\}\\
	\mathfrak{S}^{\circ}_{\bf d} & = 
	\{q\in \mathfrak{S}_{\bf d}  :  q \text{ and }\nabla q\ \text{have no common zeros in $(\P^1(\R))^n$}\}.
\end{align*}

\begin{prop}$\mathfrak{S}_{\bf d} \subseteq \R[{\bf y},{\bf z}]_{\bf d}$. 
\end{prop}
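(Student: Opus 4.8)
The plan is to reduce the statement to the elementary fact that a polynomial which is real-valued on all of $\R^n$ has real coefficients, and to obtain that real-valuedness from the univariate case by restricting $q$ (in dehomogenized form) to real lines.

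First I would pass to the dehomogenization $Q({\bf z}):=q({\bf 1},{\bf z})$. Since $q$ is homogeneous of degree $d_j$ in each pair $(y_j,z_j)$, every monomial of $q$ has the form ${\bf y}^{{\bf d}-\alpha}{\bf z}^{\alpha}$ with $0\le \alpha\le {\bf d}$, so ${\bf y}^{{\bf d}-\alpha}{\bf z}^{\alpha}\mapsto {\bf z}^{\alpha}$ is a coefficient-preserving bijection between the monomials of $q$ and those of $Q$. Consequently $q\in\R[{\bf y},{\bf z}]_{\bf d}$ if and only if $Q\in\R[{\bf z}]$; moreover ${\bf z}^{\bf d}$ is the unique monomial of $Q$ of total degree $|{\bf d}|$ and has coefficient ${\rm coeff}(q,{\bf z}^{\bf d})=1$, and by the definition of $\mathfrak{S}_{\bf d}$ the polynomial $Q$ is stable with respect to both $\H_+$ and $\H_-$. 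So it suffices to prove $Q\in\R[{\bf z}]$.

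Next I would fix ${\bf a}\in\R^n$ and study the univariate restriction $g(t):=Q({\bf a}+t{\bf 1})\in\C[t]$. If $\im(t)>0$ then every coordinate of ${\bf a}+t{\bf 1}$ lies in $\H_+$, so $g(t)\neq 0$ by stability of $Q$ with respect to $\H_+$; if $\im(t)<0$ then ${\bf a}+t{\bf 1}\in \H_-^n$, so $g(t)\neq 0$ by stability with respect to $\H_-$. Hence all roots of $g$ are real. The coefficient of $t^{|{\bf d}|}$ in $g$ comes only from the monomial ${\bf z}^{\bf d}$ of $Q$ and equals $1$, so $g$ is monic of degree $|{\bf d}|$ and therefore $g(t)=\prod_{k=1}^{|{\bf d}|}(t-r_k)$ with $r_1,\dots,r_{|{\bf d}|}\in\R$. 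In particular $g$ has real coefficients, so $Q({\bf a})=g(0)\in\R$. As ${\bf a}\in\R^n$ was arbitrary, writing $Q=Q_1+iQ_2$ with $Q_1,Q_2\in\R[{\bf z}]$, the polynomial $Q_2$ vanishes identically on $\R^n$ and is therefore the zero polynomial; thus $Q=Q_1\in\R[{\bf z}]$, and hence $q\in\R[{\bf y},{\bf z}]_{\bf d}$.

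I do not expect a serious obstacle here. The only two points that genuinely matter are that the normalization ${\rm coeff}(q,{\bf z}^{\bf d})=1$ keeps $g$ from degenerating — it is monic, hence never identically zero, so the factorization into linear factors is legitimate — and that stability with respect to \emph{both} half-planes, not just one, is exactly what forces every root of $g$ to be real. One could instead invoke Proposition~\ref{prop:realStable1}(b) to see that $Q({\bf a}+t{\bf 1})$ is $\H_+$-stable in $t$ and apply the analogous statement for $\H_-$-stability, but the direct verification above is equally short.
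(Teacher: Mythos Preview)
Your proof is correct and essentially identical to the paper's: both restrict $q({\bf 1},{\bf z})$ to a real line ${\bf a}+t{\bf 1}$, use stability with respect to both half-planes together with the normalization ${\rm coeff}(q,{\bf z}^{\bf d})=1$ to conclude the restriction is monic with all real roots, and then deduce real-valuedness on $\R^n$ and hence real coefficients. The only cosmetic difference is that you pass explicitly to the dehomogenization $Q=q({\bf 1},\cdot)$ before splitting into real and imaginary parts, whereas the paper splits $q=g+ih$ directly in $\R[{\bf y},{\bf z}]_{\bf d}$.
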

\begin{proof}
One can check directly from the definition that the stability of $q({\bf 1},{\bf z})$ implies that for all  
${\bf a}\in \R^n$, the polynomial $q({\bf 1}, {\bf a} + t\textbf{1})\in \C[t]$ has real roots, say $ r_{j}\in\R $ for $ j=1,\ldots,|\bd| $. 
By assumption, the coefficient of $\z^{\bf d}$ in $q$ is $1$, which implies that the coefficient of 
$t^{|{\bf d}|}$ in $q({\bf 1}, {\bf a} + t\textbf{1})$ is $1$, so $ q({\bf 1}, {\bf a} + t\textbf{1})=\prod_{j=1}^{|\bd|}(t-r_{j}) $ and therefore all of its coefficients must be real. If $q = g+ih$ where $g,h\in \R[{\bf y}, {\bf z}]_{\bf d}$, then 
we have shown that $h({\bf 1}, {\bf a} + t\textbf{1})\in \R[t]$ is the zero polynomial for all ${\bf a}\in \R^n$. In particular, $h({\bf 1}, {\bf a})=0 $ for all ${\bf a}\in \R^n$, which implies that $h$ is identically zero. 
 \end{proof}

\begin{prop}\label{prop:interlacer}
If $q\in \C[{\bf z}]_{\leq d}$ is stable with respect to $\H_+$ and $\H_-$, then so
is the polynomial $q + z_{n+1}\sum_{j=1}^n\partial_{z_j}q$ in $\C[z_1, \hdots, z_{n+1}]$.  
Moreover, for any ${\bf a}\in \R^n$ and ${\bf b}\in \R_{\geq 0}^n$, the roots of 
$q + \sum_{j=1}^n\partial_{z_j}q$ interlace those of $q$ when restricted to ${\bf z} = {\bf a} + t {\bf b}$. 
That is, $\lambda_1\leq \mu_1\leq \lambda_2\leq \hdots \leq \lambda_{|{\bf d}|}\leq \mu_{|{\bf d}|}$, where $\{\lambda_j\}_j$ and  $\{\mu_j\}_j$ are the roots of the restrictions of $q + \sum_{j=1}^n\partial_{z_j}q$ and $q$, respectively. 
\end{prop}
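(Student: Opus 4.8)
The plan is to recognize the statement as the multivariate, homogenized form of Nuij's operator $\mathcal{D}_\lambda$, and to establish the two assertions in turn: stability of $\tilde{P}(\z,z_{n+1}) := q + z_{n+1}\,Dq$, where I abbreviate $Dq := \sum_{j=1}^{n}\partial_{z_j}q$, and then the interlacing of the line restrictions as a consequence.

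For stability with respect to $\H_+$, I would fix $(\z, w)$ with $\z \in \H_+^n$, $w \in \H_+$, suppose $\tilde P(\z, w) = 0$, and freeze $\z$ to study the univariate polynomial $g(s) := q(\z + s\mathbf 1)$. Since $q$ is nonvanishing on $\H_+^n$ and $\im(z_j+s) = \im z_j + \im s$, we get $g(s)\neq 0$ whenever $\im s > -\min_j \im z_j$; in particular $g(0) = q(\z)\neq 0$ (so $g\not\equiv 0$) and every root $s_k$ of $g$ has $\im s_k \le -\min_j\im z_j < 0$. If $g'(0) = 0$ then $Dq(\z) = g'(0) = 0$ and $\tilde P(\z,w) = q(\z)\neq 0$, a contradiction; so $g'(0)\neq 0$. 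Expanding $g'(s)/g(s) = \sum_k (s-s_k)^{-1}$ and evaluating at $s=0$ gives $\im(g'(0)/g(0)) = \sum_k \im(s_k)/|s_k|^2 < 0$; since $\tilde P(\z,w) = g(0) + w\,g'(0) = 0$ forces $w = -g(0)/g'(0)$, and $\im(-g(0)/g'(0)) = \im(g'(0)/g(0))\,/\,|g'(0)/g(0)|^2 < 0$, this contradicts $\im w > 0$. Thus $\tilde P$ is nonvanishing on $\H_+^{n+1}$. The $\H_-$ case follows by applying this to $q^-(\z) := q(-\z)$, which is stable with respect to $\H_+$ and satisfies $q^- + z_{n+1}Dq^- = \tilde P(-\z, -z_{n+1})$.

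For the interlacing claim, fix $\mathbf a\in\R^n$, $\mathbf b\in\R_{\ge 0}^n$, and set $Q(t) = q(\mathbf a + t\mathbf b)$, $R(t) = (Dq)(\mathbf a + t\mathbf b)$, so that $q + Dq$ restricts to $\tilde Q := Q + R$ along the line. I may assume $Q\not\equiv 0$ (else all the restrictions vanish identically). To avoid the degeneracy coming from zero coordinates of $\mathbf b$, I would perturb: for $\varepsilon > 0$, $\mathbf b_\varepsilon := \mathbf b + \varepsilon\mathbf 1\in\R_{>0}^n$, so $t\mapsto \mathbf a + t\mathbf b_\varepsilon$ sends $\H_+$ into $\H_+^n$ and $\H_-$ into $\H_-^n$; by the first part, $G_\varepsilon(t,z_{n+1}) := \tilde P(\mathbf a + t\mathbf b_\varepsilon, z_{n+1})$ is then stable with respect to $\H_+$ and $\H_-$. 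Letting $\varepsilon\to 0^+$, the coefficients of $G_\varepsilon$ converge to those of $G(t,z_{n+1}) := Q(t) + z_{n+1}R(t)$, so Hurwitz's Theorem forces $G$ to be stable with respect to $\H_+,\H_-$ or identically zero; as $G(t,0) = Q(t)\not\equiv 0$, it is real stable. Since $G$ has degree $\le 1$ in $z_{n+1}$, the standard theory of real stable pencils (the Hermite--Biehler theorem) gives that $Q$ and $R$ are real-rooted with $R$ in proper position relative to $Q$, and---specializing the real-stable $G$ at $z_{n+1}=1$, which preserves real stability---that $\tilde Q = Q+R$ is real-rooted and interlaces $Q$. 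Checking leading coefficients (or specializing Nuij's univariate computation to the direction $\mathbf 1$) pins the order to $\lambda_1 \le \mu_1 \le \cdots \le \lambda_{|\mathbf d|}\le\mu_{|\mathbf d|}$, with the inequalities surviving (possibly with coincidences) the $\varepsilon\to 0$ limit.

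The conceptual heart is the sign computation in the first step; the part I expect to cost the most care is the bookkeeping around degenerate cases---$Dq$ vanishing identically or at a single point, line restrictions dropping degree, and zero coordinates of $\mathbf b$---together with the normalization (e.g.\ $\mathrm{coeff}(q,\z^{\mathbf d})\neq 0$, as in the definition of $\mathfrak S_{\mathbf d}$) under which $Q$ genuinely has degree $|\mathbf d|$; without it the interlacing statement should be read with $|\mathbf d|$ replaced by $\deg Q$.
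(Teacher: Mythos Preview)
Your argument is correct and takes a genuinely different route from the paper. The paper invokes the Borcea--Br\"and\'en classification of linear stability preservers \cite[Theorem~1.3]{BB1}: it computes the \emph{symbol} $\mathcal{D}((\z+\mathbf{w})^{\bf d})$ of the operator $q\mapsto q+z_{n+1}\sum_j\partial_{z_j}q$, checks directly that this symbol is stable, and concludes that the operator preserves stability; the interlacing is then quoted from \cite[Lemma~1.8]{BB1}. Your approach is instead a direct, self-contained computation: freeze $\z\in\H_+^n$, examine the logarithmic derivative of the univariate restriction $g(s)=q(\z+s\mathbf{1})$ at $s=0$, and read off the sign of $\im(Dq(\z)/q(\z))$ from the location of the roots of $g$ in the lower half-plane. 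This is the classical argument underlying the Hermite--Biehler circle of ideas and avoids any appeal to the stability-preserver machinery. The paper's proof is shorter once one has the Borcea--Br\"and\'en theorem on hand; yours is more elementary and makes the mechanism transparent. Your treatment of the interlacing---perturb $\mathbf{b}$ into $\R_{>0}^n$, restrict, pass to the limit via Hurwitz, and then invoke Obreschkoff/Hermite--Biehler for the degree-one pencil $Q+z_{n+1}R$---is also sound; the only places requiring care, which you correctly flag, are the degenerate cases (zero coordinates of $\mathbf{b}$, possible drop in $\deg Q$ without the normalization ${\rm coeff}(q,\z^{\bf d})\neq 0$) and pinning down the direction of the interlacing, which indeed follows from the sign of the leading coefficients.
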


\begin{proof}
We use the theory of stability preservers by Borcea and Br\"and\'en, \cite[Theorem 1.3]{BB1}. 
The \emph{symbol} of the operator  $\mathcal{D}(q) =  q + z_{n+1}\sum_{j=1}^n\partial_{z_j}q$ is 

\[
\mathcal{D}((\z+\textbf{w})^{\bf d}) =(\z+\textbf{w})^{\bf d}\left(1+ z_{n+1}\sum_{j=1}^n d_{j}(z_{j}+w_{j})^{-1}\right).
\]
We can see by inspection that this polynomial is stable. If ${\rm Im}(z_j)>0$ and ${\rm Im}(w_j)>0$ for all $j$, then ${\rm Im}(-(z_{n+1})^{-1})>0$ and
${\rm Im}((z_j+w_j)^{-1})<0$ for all $ j $, so $\sum_{j=1}^n d_{j}(z_{j}+w_{j})^{-1}\ne -(z_{n+1})^{-1}$, and therefore $ \mathcal{D}((\z+\textbf{w})^{\bf d})\ne0 $. This shows that 
the symbol $\mathcal{D}(({\bf z}+{\bf w})^{\bf d}) \in \R[z_1, \hdots, z_n, z_{n+1}, w_1, \hdots, w_n]$ is stable with respect to $\H_+$ and by the same argument it is also stable with respect to $\H_-$. Then, 
by \cite[Theorem 1.3]{BB1}, the linear operation $\mathcal{D}$ preserves stability. 

The statement of interlacing then follows from \cite[Lemma 1.8]{BB1}.
\end{proof}

\begin{lem} \label{lem:nonsingRestrict}
Let $q\in \mathfrak{S}_{\bf d}$, $I\subseteq [n]$, and let $q_I$ denote 
restriction of $q$ to $y_j=0$ and $z_j=1$ for all $j\not\in I$. Then $q_I\in \mathfrak{S}_{{\bf d}_I}$, i.e. $q_I$ is nonzero and $q_I(\textbf{1}_I, {\bf z}_I)$ is stable w.r.t. $\mathcal{H}_+$ and $\mathcal{H}_-$.
If additionally $q\in \mathfrak{S}_{\bf d}^{\circ}$, then $q_I\in \mathfrak{S}_{{\bf d}_I}^{\circ}$, i.e. $q_I$ and $\nabla q_I$ have no common zeros in $(\P^1(\R))^I$. 
\end{lem}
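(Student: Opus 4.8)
The plan is to treat the two assertions separately; the first is soft, and the second carries the real content.

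For $q_I\in\mathfrak{S}_{{\bf d}_I}$ I would first observe that setting $y_j=0$ and $z_j=1$ for $j\notin I$ extracts from $q$ the coefficient of the monomial $\prod_{j\notin I}z_j^{d_j}$, so that $q_I$ is homogeneous of degree $d_i$ in each pair $(y_i,z_i)$ with $i\in I$, is real since $q\in\R[{\bf y},{\bf z}]_{\bf d}$, and has coefficient of ${\bf z}_I^{{\bf d}_I}$ equal to the coefficient of ${\bf z}^{\bf d}$ in $q$, which is $1$; in particular $q_I\neq 0$. It then remains to see that $q_I({\bf 1}_I,{\bf z}_I)$ is stable with respect to $\H_+$ and $\H_-$. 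This polynomial is the simultaneous leading coefficient, in the variables $z_j$ with $j\notin I$, of the real stable polynomial $q({\bf 1},{\bf z})$, and one peels off one variable at a time: if $P$ is stable with respect to $\H_+$ and $\ell=\deg_{z_j}(P)$, then ${\rm in}_{e_j}(P)=z_j^{\ell}\cdot {\rm coeff}_{z_j^{\ell}}(P)$ is stable with respect to $\H_+$ by Proposition~\ref{prop:realStable1}(a), and since $z_j^{\ell}$ is nonvanishing on $\H_+$ the factor ${\rm coeff}_{z_j^{\ell}}(P)$, which no longer involves $z_j$, is stable with respect to $\H_+$. Iterating over $j\notin I$ gives stability of $q_I({\bf 1}_I,{\bf z}_I)$ with respect to $\H_+$, and being real it is automatically stable with respect to $\H_-$.

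For the second assertion I would argue by contraposition and reduce to removing one coordinate at a time: it suffices to show that if $q\in\mathfrak{S}^{\circ}_{\bf d}$ then $q'=q|_{y_n=0,z_n=1}$ lies in $\mathfrak{S}^{\circ}_{(d_1,\dots,d_{n-1})}$, since iterating this (each intermediate restriction again being of this form, and by symmetry any coordinate may be removed) yields the general case. Suppose $q'$ and $\nabla q'$ have a common zero at $[{\bf a}':{\bf b}']\in(\P^1(\R))^{n-1}$; I claim $P:=([{\bf a}':{\bf b}'],[0:1])$ is then a common zero of $q$ and $\nabla q$, contradicting $q\in\mathfrak{S}_{\bf d}^{\circ}$. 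Indeed $q(P)=q'({\bf a}',{\bf b}')=0$; the operators $\partial_{y_i},\partial_{z_i}$ for $i\le n-1$ commute with the restriction defining $q'$ and hence vanish at $P$; and $\partial_{z_n}q(P)=0$ by the Euler relation $y_n\partial_{y_n}q+z_n\partial_{z_n}q=d_nq$ at a point with $y_n=0,z_n=1$. The only remaining derivative is $\partial_{y_n}q(P)$, and expanding $q$ in powers of $(y_n,z_n)$ identifies it with ${\rm coeff}_{z_n^{d_n-1}}(q({\bf 1},{\bf z}))$ at $[{\bf a}':{\bf b}']$. Restricting the coordinates of $[{\bf a}':{\bf b}']$ that lie at $[0:1]$ — a move that commutes with $\partial_{y_n}$ and preserves real stability by the first part — we may assume $[{\bf a}':{\bf b}']$ is a finite affine point ${\bf c}'$, and the claim becomes the following key step: \emph{if $P({\bf z})=q({\bf 1},{\bf z})$ is real stable, $P_k:={\rm coeff}_{z_n^k}(P)$, and $P_{d_n}$ is singular at ${\bf c}'\in\R^{n-1}$, then $P_{d_n-1}({\bf c}')=0$.} To prove this I would restrict $P$ to a generic real line ${\bf z}'={\bf c}'+t{\bf v}$, ${\bf v}\in\R_{>0}^{n-1}$, chosen so that $P_{d_n}({\bf c}'+t{\bf v})\not\equiv 0$; by Proposition~\ref{prop:realStable1}(b) the polynomial $\widetilde P(t,z_n)=P({\bf c}'+t{\bf v},z_n)$ is real stable in $(t,z_n)$, and its $z_n^{d_n}$-coefficient $P_{d_n}({\bf c}'+t{\bf v})$ has a zero of multiplicity $\ge 2$ at $t=0$ (since $P_{d_n}({\bf c}')=0$ and $\nabla P_{d_n}({\bf c}')\cdot{\bf v}=0$). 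Passing to the reciprocal in the last variable, $\widehat P(t,s)=s^{d_n}\widetilde P(t,1/s)=\sum_k P_k({\bf c}'+t{\bf v})\,s^{d_n-k}$ is again real stable (after the harmless sign change $s\mapsto-s$), with $\widehat P(t,0)=P_{d_n}({\bf c}'+t{\bf v})$ and $\partial_s\widehat P(t,0)=P_{d_n-1}({\bf c}'+t{\bf v})$. Since $\widehat P$ is real stable, $\partial_s\widehat P(t,0)$ interlaces $\widehat P(t,0)$ as real-rooted polynomials in $t$ (a basic property of real stable polynomials; cf.\ \cite[Lemma~1.8]{BB1}), and in an interlacing pair a root of multiplicity $m$ of one member forces a root of multiplicity at least $m-1$ of the other; at $t=0$ with $m\ge 2$ this gives $\partial_s\widehat P(0,0)=P_{d_n-1}({\bf c}')=0$, completing the key step and the lemma.

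The main obstacle is this key step, together with the bookkeeping that reduces a general singular point of $q_I$ to a finite one, and the point where real stability (not merely $\H_+$-stability) is essential: the naive lift of a singular point of $q_I$ to a point of $q$ can fail to be singular when some coordinates are at $[0:1]$, and interlacing is genuinely needed to rule this out — the bivariate analogue of the key step is false for polynomials that are stable with respect to $\H_+$ but not real stable. One could instead run a direct root count: assuming $\partial_s\widehat P(0,0)\neq 0$, the set $\{\widehat P=0\}$ is near $(0,0)$ a smooth graph $s=\psi(t)$ with $\psi$ having a critical point of order $m\ge 2$ at $t=0$, which forces strictly fewer than $m$ real $t$-roots of $\widehat P(\cdot,s_0)$ near $0$ for small real $s_0$, contradicting real-rootedness of the slices; but the interlacing argument is shorter.
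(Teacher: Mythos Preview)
Your argument is correct and tracks the paper's proof closely. Both reduce (after passing to a finite affine point by further restriction) to a bivariate real stable polynomial in which one direction vanishes to order $\ge 2$ while the other has a nonzero first derivative, and derive a contradiction; you organize this as an induction removing one coordinate at a time and extract the contradiction via interlacing (or your alternative root-count), whereas the paper removes $[n]\setminus I$ in one step and derives the same contradiction by showing the initial form ${\rm in}_{(-1,-k)}\tilde q = as^k + bt$ with $k\ge 2$ cannot be stable with respect to both halfplanes. These are equivalent packagings of the same obstruction; the one caveat is that your appeal to \cite[Lemma~1.8]{BB1} for the interlacing of $\hat P(t,0)$ and $\partial_s\hat P(t,0)$ is slightly loose (one first uses that $1+\lambda\partial_s$ preserves real stability and then HKO), but your backup root-count argument makes this airtight.
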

\begin{proof}
Note that $1 = {\rm coeff}(q,{\bf z}^{\bf d})  = q({\bf 0}, {\bf 1}) = q_I({\bf 0}_I, {\bf 1}_I)$, showing that $q_I$ is nonzero and has $1 = {\rm coeff}(q_{I},\prod_{i\in I} z_i^{d_i})$. Note that $\prod_{j\not\in I}z_j^{d_j} \cdot q_I({\bf 1}_I, {\bf z}_I)$ 
is the initial form of $q(1,{\bf z})$ with respect to the vector $({\bf 0}_I, {\bf 1}_{[n]\backslash I})$ and so $q_I({\bf 1}_I, {\bf z}_I)$ is stable by \Cref{prop:realStable1} (a). 

Suppose that $q_I$ is zero at a point $[{\bf a}_I:{\bf b}_I]\in (\P^1(\R))^I$. We will show that for some $i\in I$, 
$\partial_{y_i}q$ or $\partial_{z_i}q_I$ is nonzero at this point. Note that if $a_k = 0$ for some $k\in I$, then we can replace $I$ with $I' = \{i\in I : a_i \neq 0\}$, which is non-empty by assumption. If there is some $i\in I'$ for which $\partial_{y_i}q_{I'}$ or $\partial_{z_i}q_{I'}$ is non-zero at $[{\bf a}_{I'}:{\bf b}_{I'}]$, this proves the claim. Therefore we may assume that for all $i\in I$, $a_i\neq 0$ and take $a_i = 1$.  
Moreover, by replacing $q({\bf y}, {\bf z})$ with its substitution of $z_i\mapsto z_i + b_i y_i$ for all $i \in I$, 
we can assume that $b_i = 0$ for all $i\in I$. 

Since $q\in \mathfrak{S}_{\bf d}^{\circ}$, 
there is some derivative $\partial_{y_j}q$ or $\partial_{z_j}q$ that is nonzero at $[{\bf a}:{\bf b}]$ where ${\bf a} = ({\bf 1}_I, {\bf 0}_{[n]\backslash I})$ 
and ${\bf b} = ({\bf 0}_I, {\bf 1}_{[n]\backslash I})$. If $j\in I$, we are done so take $j\not\in I$ and assume by contradiction that all derivatives with respect to variables labeled by $ I $ are zero.  
Since $q$ is homogeneous of degree $d_j$ in $(y_j, z_j)$, then 
$ y_j \partial_{y_j}q+z_j \partial_{z_j}q=d_j q$. Since $q$ and $y_j$ both vanish at this point and $z_j$ does not, we see that it must be $\partial_{y_j}q$ that 
is nonzero at $[{\bf a}: {\bf b}]$. 

Consider the polynomial $\tilde{q}(s,t) = q({\bf a} - te_j,   {\bf b}+s{\bf 1}_{I})\in \R[s,t]$. We claim that this polynomial is stable. 
To see this, note the upper halfplane is invariant under $\varphi(z)=-1/z$. Let $ \varphi_{I}(\z) $ be the vector with $ i $-th entry $ \varphi(z_{i}) $ if $ i\notin I $ or $ z_{i} $ otherwise, so that $\prod_{j\not\in I}z_{j}^{d_j}\cdot p(\varphi_{I}(\z)) $ is stable with respect to upper and lower halfplanes if and only if $ p(\z) $ is. 
The polynomial 
\[\prod_{j\not\in I}(-z_j)^{d_j}\cdot q({\bf 1}, \varphi_{I}(\z)) =
q(({\bf 1}_I, -{\bf z}_{[n]\backslash I}), ({\bf z}_I, {\bf 1}_{[n]\backslash I})) \in \R[\z]\] 
is stable. We then obtain the polynomial $\tilde{q}$ as a further restriction of $z_j = t$, $z_k=0$ for $k\in [n]\backslash (I \cup\{j\})$ and $z_i = s$ for all $i\in I$. It follows that $\tilde{q}$ is stable by \Cref{prop:realStable1} (b). 

Note that $\tilde{q}(0,0) = q({\bf a}, {\bf b}) = 0$. Moreover, we have that 
\[
\partial_s\tilde{q}|_{(s,t) = (0,0)} = \left(\sum_{i\in I}\partial_{z_i}q\right)|_{({\bf y}, {\bf z}) = ({\bf a}, {\bf b})} = 0 \text{ and } 
 \partial_t\tilde{q}|_{(s,t) = (0,0)} = \left(-\partial_{y_j}q\right)|_{({\bf y}, {\bf z}) = ({\bf a}, {\bf b})} \neq  0.
 \]
The polynomial $\tilde{q}(s,0) = q({\bf a},  {\bf b}+s{\bf 1}_{I})$ has leading term $s^{\sum_{i\in I}d_i}$, since $ q(\textbf{0},\textbf{1})=1 $, and so is nonzero.
Let $k$ be the smallest integer for which $\partial_s^k\tilde{q}|_{(s,t) = (0,0)}$ is nonzero. 
By the arguments above, $k$ exists and $k\geq 2$. This means that all monomials $ s^{\alpha}t^{\beta} $ appearing in $ \tilde{q} $ with non-zero coefficients either have $ \beta\ge 1 $ or $ \alpha\ge k\ge 2 $. In particular $ \langle (-1,-k),(\alpha,\beta)\rangle\le -k $, with equality if and only if $ (\alpha,\beta)=(k,0) $ or $ (\alpha,\beta)=(0,1) $. We conclude that the initial form ${\rm in}_{(-1,-k)}\tilde{q}=as^k + bt$ for some non-zero coefficients $a,b\in \R^*$. By \Cref{prop:realStable1} (a), it is stable with respect to both upper and lower halfplanes. However, since $k\geq 2$, there is some $ c\in\mathcal{H}_{+} $ such that $c^{k}= -\frac{b}{a}i $, and so $ (s,t)=(c,i)\in\mathcal{H}_{+}^{2} $ is a root, contradicting stability. 
\end{proof}

\begin{prop}\label{prop:perturbStable}
 For any $q\in\mathfrak{S}_{\bf d}$ and $\lambda >0$, $\mathcal{D}_{\lambda}(q)\in \mathfrak{S}_{\bf d}$ and 
 $\mathcal{D}^{|{\bf d}|}_{\lambda}(q) \in \mathfrak{S}^{\circ}_{\bf d}$. \end{prop}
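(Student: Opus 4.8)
The plan is to mimic Nuij's argument, transported to the projective/homogeneous setting of $\mathfrak{S}_{\bf d}$ and $\mathfrak{S}_{\bf d}^\circ$. There are two claims: that $\mathcal{D}_\lambda$ preserves $\mathfrak{S}_{\bf d}$, and that after $|{\bf d}|$ applications the result lands in the regular locus $\mathfrak{S}_{\bf d}^\circ$. For the first claim, I would verify the three defining conditions of $\mathfrak{S}_{\bf d}$ for $\mathcal{D}_\lambda(q)$ when $q\in\mathfrak{S}_{\bf d}$. The normalization ${\rm coeff}(\mathcal{D}_\lambda(q),{\bf z}^{\bf d})=1$ is immediate: $\sum_j y_j\partial_{z_j}q$ lowers the $z_j$-degree, so it contributes nothing to the ${\bf z}^{\bf d}$ coefficient (here one uses that $q$ is homogeneous of degree $d_j$ in $(y_j,z_j)$, so the ${\bf z}^{\bf d}$-term is literally $\prod z_j^{d_j}$ and $\partial_{z_j}$ of it times $y_j$ has a $y_j$ factor). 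Stability of $\mathcal{D}_\lambda(q)({\bf 1},{\bf z})$ with respect to $\mathcal{H}_+$ and $\mathcal{H}_-$ is exactly \Cref{prop:interlacer} applied to $q({\bf 1},{\bf z})$: setting $z_{n+1}=\lambda$ specializes the stable polynomial $q+z_{n+1}\sum_j\partial_{z_j}q$ to $\mathcal{D}_\lambda(q)({\bf 1},{\bf z})$, and a specialization of a $\mathcal{H}_+$-stable polynomial at a real value of one coordinate is again $\mathcal{H}_+$-stable (or identically zero, which it is not by the normalization). Hence $\mathcal{D}_\lambda(q)\in\mathfrak{S}_{\bf d}$.

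For the second claim, the key is a quantitative statement: each application of $\mathcal{D}_\lambda$ (for $\lambda>0$) strictly decreases the maximal multiplicity of a common zero of $q$ and $\nabla q$ on $(\P^1(\R))^n$, or more precisely decreases the multiplicity of any singular point of the hypersurface $\{q=0\}$. The cleanest route is to reduce to the univariate interlacing fact. Suppose $[{\bf a}:{\bf b}]\in(\P^1(\R))^n$ is a common zero of $\mathcal{D}_\lambda^{k}(q)$ and its gradient. Using \Cref{lem:nonsingRestrict} I may pass to the chart where all coordinates are finite (replacing $I$ by the indices where $a_i\ne 0$ and translating so $b_i=0$ there); on the complementary indices the point sits at a coordinate hyperplane, but $q_I\in\mathfrak{S}_{{\bf d}_I}$ and the operator $\mathcal{D}_\lambda$ restricts compatibly, so it suffices to treat the affine case. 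Restricting $q$ to the line ${\bf z}={\bf a}+t{\bf 1}$ through the singular point, $q({\bf 1},{\bf a}+t{\bf 1})$ is a real-rooted univariate polynomial having $t=0$ as a root of multiplicity $\ge m$ (the multiplicity of the hypersurface singularity bounds the order of vanishing along a generic line, and one can choose the line generically so that equality holds); then $\mathcal{D}_\lambda$ restricted to this line acts as $q\mapsto q+\lambda q'$ up to lower-order cross terms — the point is that $\mathcal{D}_\lambda(q)({\bf 1},{\bf a}+t{\bf 1})$ has $t=0$ as a root of multiplicity $\ge m-1$ because $\sum_j y_j\partial_{z_j}q$ restricted to the line yields $q'(t)$ plus terms vanishing to the same or higher order. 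Iterating, after $|{\bf d}|$ steps (since $m\le|{\bf d}|$ always, by the degree bound on real-rooted restrictions as in the proof that $\mathfrak{S}_{\bf d}\subseteq\R[{\bf y},{\bf z}]_{\bf d}$) every multiplicity is driven to zero, i.e.\ there are no common zeros of $q$ and $\nabla q$, so $\mathcal{D}_\lambda^{|{\bf d}|}(q)\in\mathfrak{S}_{\bf d}^\circ$.

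The main obstacle is the bookkeeping in the multiplicity-drop step: I need that a common zero of $\mathcal{D}_\lambda(q)$ and $\nabla(\mathcal{D}_\lambda(q))$ of multiplicity $m$ forces a common zero of $q$ and $\nabla q$ of multiplicity $m+1$ — equivalently, that $\mathcal{D}_\lambda$ genuinely lowers the order of every singular point, not merely the order of vanishing along one line. The honest way to see this is the interlacing conclusion of \Cref{prop:interlacer}: on any real line ${\bf z}={\bf a}+t{\bf b}$ with ${\bf b}\in\R_{\ge0}^n$, the roots of $\mathcal{D}_\lambda(q)$ strictly interlace those of $q$, so a root of $\mathcal{D}_\lambda(q)$ of multiplicity $m$ is a root of $q$ of multiplicity exactly $m+1$; since this holds along every such line through the point and the multiplicity of the hypersurface at a point equals the minimum over lines of the order of vanishing (and is achieved, with the generic line computing it, along lines with nonnegative — indeed one may reduce to positive — direction after a coordinate sign change handled as in \Cref{lem:nonsingRestrict}), the singularity multiplicity drops by exactly one at each step. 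Care is needed that the direction vector can be taken in $\R_{\ge0}^n$: this is where the sign-flip trick $z_i\mapsto -z_i$ combined with \Cref{prop:realStable1}(b), already used in \Cref{lem:nonsingRestrict}, is invoked. Once this lemma is in place, the proposition follows by the induction sketched above.
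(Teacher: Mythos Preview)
Your treatment of the first claim, that $\mathcal{D}_\lambda$ preserves $\mathfrak{S}_{\bf d}$, matches the paper's exactly.

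For the second claim, your approach is in the right spirit but takes an unnecessarily hard road, and one step is not quite right. You write that on the line ${\bf z}={\bf a}+t{\bf 1}$ the operator $\mathcal{D}_\lambda$ acts as $q\mapsto q+\lambda q'$ ``up to lower-order cross terms.'' In fact there are \emph{no} cross terms: since $\frac{d}{dt}q({\bf 1},{\bf a}+t{\bf 1})=\sum_j(\partial_{z_j}q)({\bf 1},{\bf a}+t{\bf 1})$, one has the exact identity
\[
(\mathcal{D}_\lambda q)({\bf 1},{\bf a}+t{\bf 1}) \;=\; \Bigl(1+\lambda\tfrac{d}{dt}\Bigr)\bigl[q({\bf 1},{\bf a}+t{\bf 1})\bigr].
\]
This is precisely the commutation that makes the argument short. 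Iterating, $(\mathcal{D}_\lambda^{|{\bf d}|}q)({\bf 1},{\bf a}+t{\bf 1})$ is the univariate Nuij operator applied $|{\bf d}|$ times to a degree-$|{\bf d}|$ real-rooted polynomial, hence has only simple roots. So if $\mathcal{D}_\lambda^{|{\bf d}|}q$ and its gradient had a common zero at $({\bf 1},{\bf a})$, the restriction to the ${\bf 1}$-line through ${\bf a}$ would have a multiple root at $t=0$, a contradiction. This handles the affine chart in one stroke, with no need to track multiplicities step by step, no need for general directions ${\bf b}\in\R_{\ge 0}^n$, and no appeal to ``multiplicity equals the minimum over lines.''

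For points at infinity (some $a_i=0$), the paper does essentially what you suggest: observe that $\mathcal{D}_\lambda$ commutes with the restriction $y_j=0$, $z_j=1$ for $j\notin I$, invoke \Cref{lem:nonsingRestrict} to get $q_I\in\mathfrak{S}_{{\bf d}_I}$, and rerun the affine argument on $q_I$ (noting $|{\bf d}_I|\le|{\bf d}|$, so $|{\bf d}|$ iterations certainly suffice). Your interlacing-on-general-lines route and the sign-flip trick are not needed; once you see the exact commutation on ${\bf 1}$-lines, the proof collapses to the univariate Nuij theorem plus the chart-restriction commutation.
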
 
\begin{proof}
By Proposition~\ref{prop:interlacer}, the operation $q\mapsto q + \lambda \sum_{i=1}^n \partial_{z_i}q$ preserves stability of polynomials in $\R[\z]$. We need to show that  $\mathcal{D}^{|{\bf d}|}_{\lambda}q $ has no common zeros with its gradient on $(\P^1(\R))^n$. By the univariate case discussed above, $\mathcal{D}^{|{\bf d}|}_{\lambda}q({\bf 1}, {\bf b} + t{\bf 1}) \in \R[t]$ has simple roots for all ${\bf b}\in \R^n$. It follows that if $\mathcal{D}^{|{\bf d}|}_{\lambda}q({\bf 1}, {\bf z})$ vanishes at $ \z\in\R^{n} $ then its gradient does not. 
This shows that $\mathcal{D}^{|{\bf d}|}_{\lambda}q({\bf 1}, {\bf z})$ and its gradient 
have no common zeros of the form $[{\bf a}:{\bf b}]$ where $a_j\neq 0$ for all $j$. Assume by contradiction that $\mathcal{D}^{|{\bf d}|}_{\lambda}q({\bf 1}, {\bf z})$ and its gradient 
have some common zero $[{\bf a}:{\bf b}]\in (\P^1(\R))^n$ and let $I = \{i : a_i \neq 0\}$, so $ I\ne[n] $.   
Note that we can assume $b_j=1$ for all $j\not\in I$. If $I = \emptyset$, then $[{\bf a}:{\bf b}] = [{\bf 0}: {\bf 1}]$, at which $\mathcal{D}^{|{\bf d}|}_{\lambda}q({\bf 0}, {\bf 1})={\rm coeff}(\mathcal{D}^{|{\bf d}|}_{\lambda}q,{\bf z}^{\bf d})\ne0 $. Therefore, $\emptyset \subsetneq I \subsetneq [n]$. 

Let $q_I\in \R[y_i, z_i :i\in  I]$ denote the restriction of $q$ to $y_j = 0$ and $z_j=1$ for $j\not\in I$. Note that the operator $\mathcal{D}_{\lambda}$ commutes with the 
	restriction to $y_j = 0$ and $z_j=1$. That is, 
	\[
	(\mathcal{D}_{\lambda}q)|_{\{y_j = 0, z_j=1: j\notin I\}}  \ =  \ \left(q + \lambda \sum_{i=1}^{n}y_i\partial_{z_i}q\right)|_{\{y_j = 0, z_j=1: j\notin I\}}
	\ = \  q_I + \lambda \sum_{i\in I}y_i\partial_{z_i}q_I \ = \ \mathcal{D}_{\lambda}q_I.
	\]
In particular, $ \mathcal{D}_{\lambda}q_I(\textbf{a}_{I},\textbf{1}_{I})=\mathcal{D}_{\lambda}q_I(\textbf{1}_{I},\textbf{a}_{I}^{-1})=0 $. Since $ q_I\in\mathfrak{S}_{\bf d_I} $, by \Cref{lem:nonsingRestrict}, and it has total degree $|\bd_I|\le |\bd| $, then the argument above shows that $ \mathcal{D}_{\lambda}q_I\in\mathfrak{S}_{\bf d_I} $ and that the gradient of $\mathcal{D}_{\lambda}^{|{\bf d}|}(q_I)(1,\z_{I})$ cannot vanish at the zero $ (\textbf{1}_{I},\textbf{a}_{I}^{-1}) $. Hence, there must be some nonzero derivative of $\mathcal{D}_{\lambda}^{|{\bf d}|}(q_I)$ at $[\textbf{1}_{I}:\textbf{a}_{I}^{-1}]=[{\bf a}_I:{\bf b}_I]$, which gives a nonzero derivative of $\mathcal{D}_{\lambda}^{|{\bf d}|}(q)$ at $[{\bf a}:{\bf b}]$.
\end{proof}

\begin{prop} \label{prop:Nuij}
	Both $\mathfrak{S}_{\bf d}$ and $\mathfrak{S}^{\circ}_{\bf d}$
	are contractible and $\mathfrak{S}_{\bf d}$ equals the closure (in the Euclidean topology on $\R[{\bf y},{\bf z}]_{\bf d})$ of $\mathfrak{S}^{\circ}_{\bf d}(\R)$. 
\end{prop}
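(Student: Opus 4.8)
The plan is to exhibit a single explicit homotopy that contracts both $\mathfrak{S}_{\bf d}$ and $\mathfrak{S}^{\circ}_{\bf d}$, and then to read off the density statement directly from \Cref{prop:perturbStable} together with Hurwitz's theorem. I would take as basepoint $q_* := \mathcal{D}^{|{\bf d}|}_1({\bf z}^{\bf d})$, which lies in $\mathfrak{S}^{\circ}_{\bf d}$ by \Cref{prop:perturbStable} once one notes that ${\bf z}^{\bf d}\in\mathfrak{S}_{\bf d}$ (its coefficient of ${\bf z}^{\bf d}$ is $1$, and $\prod_j z_j^{d_j}$ vanishes only on the coordinate hyperplanes, hence is stable with respect to $\H_+$ and $\H_-$). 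The homotopy I would use is
\[
H(q,s) \;=\; \mathcal{D}^{|{\bf d}|}_{1-s}\bigl(q(sy_1,\ldots,sy_n,z_1,\ldots,z_n)\bigr), \qquad s\in[0,1],
\]
which is jointly continuous (linear in the coefficients of $q$, polynomial in $s$), satisfies $H(q,1)=\mathcal{D}^{|{\bf d}|}_0(q)=q$, and satisfies $H(q,0)=\mathcal{D}^{|{\bf d}|}_1\bigl(q({\bf 0},{\bf z})\bigr)=\mathcal{D}^{|{\bf d}|}_1({\bf z}^{\bf d})=q_*$ for every $q$ — here using that homogeneity of $q$ in each pair $(y_j,z_j)$ forces $q({\bf 0},{\bf z})={\rm coeff}(q,{\bf z}^{\bf d}){\bf z}^{\bf d}={\bf z}^{\bf d}$.

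The key verification is that $H$ never leaves the space it should. First I would check that $q\mapsto q(s{\bf y},{\bf z})$ maps $\mathfrak{S}_{\bf d}$ into itself for every $s\in[0,1]$: the coefficient of ${\bf z}^{\bf d}$ is untouched since that monomial involves no $y_j$, and for $s>0$ homogeneity in each $(y_j,z_j)$ gives the identity $q(s{\bf 1},{\bf z})=s^{|{\bf d}|}q({\bf 1},{\bf z}/s)$, which is stable with respect to $\H_+$ and $\H_-$ because rescaling ${\bf z}$ by the positive real $1/s$ preserves those half-plane products, while $s=0$ gives $q({\bf 0},{\bf z})={\bf z}^{\bf d}$, which is stable. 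Then \Cref{prop:perturbStable} applies: for $s\in[0,1)$ we have $1-s>0$, so $H(q,s)=\mathcal{D}^{|{\bf d}|}_{1-s}\bigl(q(s{\bf y},{\bf z})\bigr)\in\mathfrak{S}^{\circ}_{\bf d}$, while $H(q,1)=q$. Consequently $H$ maps $\mathfrak{S}_{\bf d}\times[0,1]$ into $\mathfrak{S}_{\bf d}$, and it maps $\mathfrak{S}^{\circ}_{\bf d}\times[0,1]$ into $\mathfrak{S}^{\circ}_{\bf d}$ (for $s<1$ as just noted, and at $s=1$ because $H(q,1)=q\in\mathfrak{S}^{\circ}_{\bf d}$ by hypothesis). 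Since $q_*\in\mathfrak{S}^{\circ}_{\bf d}\subseteq\mathfrak{S}_{\bf d}$, both spaces are nonempty, and restricting $H$ to each of them exhibits it as contractible onto $q_*$.

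For the density claim $\mathfrak{S}_{\bf d}=\overline{\mathfrak{S}^{\circ}_{\bf d}}$ (closure in $\R[{\bf y},{\bf z}]_{\bf d}$, which contains $\mathfrak{S}_{\bf d}$), I would prove the two inclusions separately. To get $\mathfrak{S}_{\bf d}\subseteq\overline{\mathfrak{S}^{\circ}_{\bf d}}$: for $q\in\mathfrak{S}_{\bf d}$, \Cref{prop:perturbStable} gives $\mathcal{D}^{|{\bf d}|}_{\lambda}(q)\in\mathfrak{S}^{\circ}_{\bf d}$ for every $\lambda>0$, and $\mathcal{D}^{|{\bf d}|}_{\lambda}(q)\to\mathcal{D}^{|{\bf d}|}_{0}(q)=q$ as $\lambda\to 0^+$ by polynomial dependence on $\lambda$. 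The reverse inclusion $\overline{\mathfrak{S}^{\circ}_{\bf d}}\subseteq\mathfrak{S}_{\bf d}$ reduces to closedness of $\mathfrak{S}_{\bf d}$: if $q_k\to q$ with $q_k\in\mathfrak{S}_{\bf d}$, then ${\rm coeff}(q,{\bf z}^{\bf d})=\lim_k{\rm coeff}(q_k,{\bf z}^{\bf d})=1$, and Hurwitz's theorem, applied on the connected open sets $\H_+^n$ and $\H_-^n$ to the nonvanishing polynomials $q_k({\bf 1},{\bf z})$, forces $q({\bf 1},{\bf z})$ to be either nonvanishing there or identically zero — and the latter is ruled out by ${\rm coeff}(q,{\bf z}^{\bf d})=1$.

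I do not expect a serious obstacle in this proposition: the substantive work — controlling the real projective singular locus of $q$ under $\mathcal{D}_{\lambda}$ — is already packaged into \Cref{prop:perturbStable}. The two places that need care are the invariance claim $q(s{\bf y},{\bf z})\in\mathfrak{S}_{\bf d}$, which is handled by the homogeneity rescaling identity $q(s{\bf 1},{\bf z})=s^{|{\bf d}|}q({\bf 1},{\bf z}/s)$ turning it into a positive real rescaling of $q({\bf 1},{\bf z})$, and the dichotomy in Hurwitz's theorem, whose trivial alternative is excluded precisely by the normalization ${\rm coeff}(\cdot,{\bf z}^{\bf d})=1$ built into $\mathfrak{S}_{\bf d}$.
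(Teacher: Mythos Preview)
Your proof is correct and takes essentially the same approach as the paper: the homotopy $H(q,s)=\mathcal{D}^{|{\bf d}|}_{1-s}\bigl(q(s{\bf y},{\bf z})\bigr)$ is exactly the paper's $\mathcal{D}_{1-\lambda}^{|{\bf d}|}G_{\lambda}$ with $\lambda=s$, and both contract onto the same basepoint $\mathcal{D}_{1}^{|{\bf d}|}({\bf z}^{\bf d})$. Your write-up is in fact more complete than the paper's, which states the closure claim but does not spell out either direction; your explicit treatment of density via $\mathcal{D}^{|{\bf d}|}_{\lambda}(q)\to q$ and of closedness via Hurwitz (with the identically-zero alternative ruled out by the leading-coefficient normalization) fills that gap.
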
 
\begin{proof}The proof follows the proof of the main theorem in \cite{Nuij}. 
	For $\mu\in \R$, consider the linear operator $G_{\mu}$ on $\R[{\bf y},{\bf z}]_{\bf d}$
	defined by $G_{\mu}(q) = q( {\mu}{\bf y},{\bf z})$. 
	This operator preserves both stability and the coefficient of ${\bf z}^{\bf d}$. 
	
	For $\lambda \in [0,1]$, consider the map $\mathcal{D}_{1-\lambda}^{|{\bf d}|}G_{\lambda}$. 
	This map preserves stability and, for $\lambda \neq 1$, the image of $\mathfrak{S}_{\bf d}$ 
	under this map belongs to $\mathfrak{S}^{\circ}_{\bf d}$. 
	For $\lambda = 1$, we get the identity map $\mathcal{D}_{0}^{|{\bf d}|}G_{1}(q) = q$ and 
	for $\lambda = 0$ we get $\mathcal{D}_{1}^{|{\bf d}|}({\bf z}^{\bf d})\in \mathfrak{S}^{\circ}_{\bf d}(\R)$. 
	Therefore this gives a deformation retraction of both 
	$\mathfrak{S}_{\bf d}$ and $\mathfrak{S}^{\circ}_{\bf d}$ onto the point  $\mathcal{D}_{1}^{|{\bf d}|}({\bf z}^{\bf d})$. 
\end{proof}

\begin{prop}\label{prop:stableInterior}
The interior of $\mathfrak{S}_{\bf d}$ in $\{q\in \R[{\bf y}, {\bf z}]_{\bf d} : {\rm coeff}(q,{\bf z}^{\bf d}) = 1\}$ is not empty, and in particular, it contains $\mathfrak{S}_{\bf d}^{\circ}$.
\end{prop}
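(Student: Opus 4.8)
The plan is to prove the stronger statement that $\mathfrak{S}^{\circ}_{\bf d}$ lies in the interior of $\mathfrak{S}_{\bf d}$ relative to $A:=\{q\in\R[{\bf y},{\bf z}]_{\bf d}:{\rm coeff}(q,{\bf z}^{\bf d})=1\}$; since $\mathfrak{S}^{\circ}_{\bf d}\neq\emptyset$ by Proposition~\ref{prop:Nuij}, this yields both assertions. Throughout I use the standard description (immediate from the definition of stability, cf.\ Proposition~\ref{prop:realStable1}(b)) that for $q\in A$ one has $q\in\mathfrak{S}_{\bf d}$ if and only if $q({\bf 1},{\bf a}+t{\bf c})\in\R[t]$ is real-rooted for all ${\bf a}\in\R^n$ and ${\bf c}\in\R_{>0}^n$; by Hurwitz's Theorem the same then holds for all ${\bf c}\in\R_{\ge 0}^n\setminus\{0\}$. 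Assume for contradiction that some $q\in\mathfrak{S}^{\circ}_{\bf d}$ is not interior, so there are $q_k\to q$ in $A$ with $q_k\notin\mathfrak{S}_{\bf d}$ for every $k$.

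The first and main step is local: \emph{if $q\in\mathfrak{S}^{\circ}_{\bf d}$, ${\bf a}\in\R^n$, and ${\bf c}\in\R_{>0}^n$, then $g(t):=q({\bf 1},{\bf a}+t{\bf c})$ has $|{\bf d}|$ distinct real roots.} It has $|{\bf d}|$ real roots by the description above, so suppose one of them, $t_0$, has multiplicity $m\ge 2$, and set ${\bf z}_0={\bf a}+t_0{\bf c}\in\R^n$. Since $[{\bf 1}:{\bf z}_0]$ is a real zero of $q$, the definition of $\mathfrak{S}^{\circ}_{\bf d}$ gives $\nabla q({\bf 1},{\bf z}_0)\neq 0$, and the Euler identities $y_j\partial_{y_j}q+z_j\partial_{z_j}q=d_jq$ then force $\partial_{z_n}q({\bf 1},{\bf z}_0)\neq 0$ for some coordinate, which we call $n$. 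Applying Proposition~\ref{prop:realStable1}(b) to the polynomial $q({\bf 1},\cdot)$, which is stable with respect to $\H_+$ and to $\H_-$, at base point ${\bf z}_0$ in the directions ${\bf c},e_n\in\R_{\ge 0}^n$, the bivariate polynomial $h(s,t):=q({\bf 1},{\bf z}_0+t{\bf c}+se_n)\in\R[s,t]$ is real stable. By construction $\partial_s h(0,0)\neq 0$, and $\partial_t^j h(0,0)=0$ for $j<m$ while $\partial_t^m h(0,0)\neq 0$; hence the initial form ${\rm in}_{(-m,-1)}(h)$ equals $\alpha s+\beta t^m$ with $\alpha,\beta\in\R^{*}$, which is stable with respect to $\H_+$ by Proposition~\ref{prop:realStable1}(a). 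But when $m\ge 2$ the set $\{t^m:t\in\H_+\}$ attains every argument, so one can choose $t\in\H_+$ making $-(\beta/\alpha)t^m\in\H_+$, producing a zero of $\alpha s+\beta t^m$ in $\H_+^2$ --- contradicting stability.

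For the concluding step, each $q_k\notin\mathfrak{S}_{\bf d}$ has a zero in $\H_+^n$ or $\H_-^n$, hence, being real, a zero with all imaginary parts positive. Passing to a subsequence these zeros converge in the compact space $(\P^1(\C))^n$, and after a single fixed coordinate-wise real M\"obius transformation --- one mapping $\H_+^n$ to itself, preserving real stability, and preserving the defining condition of $\mathfrak{S}^{\circ}_{\bf d}$, up to rescaling back into $A$ --- we may assume the zeros are finite points ${\bf w}_k={\bf a}_k+i\varepsilon_k{\bf c}_k$ with $\varepsilon_k>0$, $\|{\bf c}_k\|=1$, ${\bf c}_k\in\R_{>0}^n$, and ${\bf a}_k\to{\bf a}^*\in\R^n$, ${\bf c}_k\to{\bf c}^*\in\R_{\ge 0}^n\setminus\{0\}$, $\varepsilon_k\to\varepsilon^*\ge 0$. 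Set $S=\supp({\bf c}^*)$. By Lemma~\ref{lem:nonsingRestrict} and coordinate-wise M\"obius transformations, the restriction $\tilde q({\bf z}_S):=q({\bf 1},{\bf a}^{*}_{[n]\setminus S},{\bf z}_S)$ has homogenization in $\mathfrak{S}^{\circ}_{{\bf d}_S}$ and multidegree exactly ${\bf d}_S$, so in particular it is not identically zero along ${\bf z}_S={\bf a}^*_S+s{\bf c}^*_S$ (the leading $s$-coefficient is $\prod_{j\in S}(c^{*}_{j})^{d_j}\neq 0$). Now $g_k(s):=q_k({\bf 1},{\bf a}_k+s{\bf c}_k)\in\R[s]$ converges coefficient-wise to $g_\infty(s):=\tilde q({\bf 1}_S,{\bf a}^*_S+s{\bf c}^*_S)$, which by the first step has only simple real roots. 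From $g_k(i\varepsilon_k)=0$ we get $g_\infty(i\varepsilon^*)=0$, forcing $\varepsilon^*=0$ and making $0$ a simple root of $g_\infty$. Then for large $k$ Rouch\'e's Theorem gives $g_k$ a unique zero in a small disk about $0$, which must be real since $g_k\in\R[s]$, while $i\varepsilon_k\to 0$ is a zero of $g_k$; hence $\varepsilon_k=0$, a contradiction.

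The step I expect to be the main obstacle is the last one, because of non-compactness: the offending zeros of the $q_k$, and the lines one restricts along, do not range over a compact family, so one must pass to $(\P^1(\C))^n$ and renormalize, and the limiting direction ${\bf c}^*$ can land on the boundary of the positive orthant. This is precisely why the local first step must be invoked for the restricted polynomial $\tilde q\in\mathfrak{S}^{\circ}_{{\bf d}_S}$ along its interior direction ${\bf c}^*_S\in\R_{>0}^S$ (supplied by Lemma~\ref{lem:nonsingRestrict}), rather than for $q$ and ${\bf c}^*$ directly.
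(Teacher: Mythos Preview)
Your first step is correct and clean; the initial-form argument showing that $q({\bf 1},{\bf a}+t{\bf c})$ has simple roots for ${\bf c}\in\R_{>0}^n$ is essentially the same mechanism the paper uses (there it appears as the statement that $q$ and $\mathcal{D}_1q$ share no real zeros). The overall strategy, however, is genuinely different: the paper constructs an explicit open neighborhood of $q$ by interlacing against $\mathcal{D}_1q$ on the compact model $(S^1)^n$, whereas you run a contradiction argument by compactifying in $(\P^1(\C))^n$ and applying Hurwitz to a limiting one-variable restriction.

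The second step has a real gap. Your claim that $\tilde q({\bf z}_S)=q({\bf 1},{\bf a}^*_{[n]\setminus S},{\bf z}_S)$ has multidegree exactly ${\bf d}_S$ and homogenization in $\mathfrak{S}^{\circ}_{{\bf d}_S}$ does \emph{not} follow from Lemma~\ref{lem:nonsingRestrict} by a M\"obius change. That lemma is specific to restricting $[y_j:z_j]=[0:1]$; to reduce your restriction $[1:a^*_j]$ to that form you must M\"obius-transform $q$ first and then rescale into $\mathfrak{S}_{\bf d}$, but the rescaling requires the transformed leading coefficient to be nonzero, and that is exactly the statement you are trying to prove. Concretely, take $q=z_1z_2-y_1y_2\in\mathfrak{S}^{\circ}_{(1,1)}$, $S=\{1\}$, $a^*_2=0$: then $\tilde q(z_1)=-1$, which has degree $0<d_1$ and homogenization $-y_1\notin\mathfrak{S}_{(1)}$.

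What you actually need is weaker and almost within reach: it suffices that $g_\infty(s)=q({\bf 1},{\bf a}^*+s{\bf c}^*)$ be nonzero with only simple real roots. Simplicity follows from your first-step argument verbatim once you allow ${\bf c}^*\in\R_{\ge 0}^n\setminus\{0\}$ (Proposition~\ref{prop:realStable1}(b) already permits this). The missing piece is ruling out $g_\infty\equiv 0$, i.e.\ that the smooth real hypersurface $\{q({\bf 1},\cdot)=0\}$ contains no affine line in a nonnegative direction. This is true but requires its own argument (for instance, if it did, then along that line $\partial_{z_j}q=0$ for every $j\in S$ by Proposition~\ref{prop:realStable1}(d), and one can push this to a contradiction with $q\in\mathfrak{S}^{\circ}_{\bf d}$ via Lemma~\ref{lem:nonsingRestrict}); as written you do not address it.
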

\begin{proof}
Suppose that $q\in \mathfrak{S}_{\bf d}^{\circ}$, so that $ q $ and its gradient have no common zeros in $(\P^1(\R))^n$. Let $(S^1)^n=\{({\bf y}, {\bf z})\in \R^{2n} : y_j^2+z_j^2 = 1 \forall j\}$ and let $ V=\set{(\yv,\z)\in(S^1)^n :  \mathcal{D}_1q(\yv,\z)=0} $, where $ \mathcal{D}_{1}q $ is $ \mathcal{D}_{\lambda} q $ at $ \lambda=1 $. Consider the set of polynomials 
\[
U = \{g\in \R[{\bf y}, {\bf z}]_{{\bf d}}: {\rm coeff}(g, {\bf z}^{\bf d}) = 1, g(\yv,\z)q(\yv,\z)>0 \text{ for all } (\yv,\z)\in V\}.
\] 
The set $ V $ is compact, since $(S^1)^n$ is compact, and so $ \min_{(\yv,\z)\in V}g(\yv,\z)q(\yv,\z) $ is continuous in the coefficients of $ g $, which means that $U$ is open in $\{g\in \R[{\bf y}, {\bf z}]_{\bf d} : {\rm coeff}(g,{\bf z}^{\bf d}) = 1\}$. 
We claim that $q\in U$ and $U\subseteq \mathfrak{S}_{\bf d}^{\circ}$. 

To see that $q\in U$, it suffices to show that $q$ and $\mathcal{D}_1q$ have no common zeros in $(\P^1(\R))^n$. 
We first check this for the points in the affine chart ${\bf y} = {\bf 1}$.
Suppose that $q({\bf 1}, {\bf b})=0$ for $[{\bf 1}: {\bf b}]\in(\P^{1}(\R))^{n} $, so by assumption, there is some $j$ for which $\partial_{z_j}q({\bf 1}, {\bf b})$ is nonzero. By Proposition~\ref{prop:realStable1}, all of the the nonzero entries of $\{\partial_{z_i}q({\bf 1}, {\bf b}): i=1, \hdots, n\}$ have the same phase, which implies that $\sum_{i=1}^n \partial_{z_i}q({\bf 1}, {\bf b})$ is nonzero. 
Since $q({\bf 1}, {\bf b})=0$, it follows that $\mathcal{D}_1q = q + \sum_{i=1}^n \partial_{z_i}q$ is nonzero at $[{\bf 1}: {\bf b}]$.

For any arbitrary point $[{\bf a}: {\bf b}]\in (\P^1(\R))^n$, let $I = \{i\in [n]: a_i\neq 0\}$, which by assumption is non-empty. 
By \Cref{lem:nonsingRestrict}, $q_I$ is stable and has no common zeros with its gradient on  $(\P^1(\R))^I$. The argument above shows that $q_I$ and $\mathcal{D}_1q_I$ cannot both be zero at $[{\bf a}_{I}: {\bf b}_{I}]$, and 
so $q$ and $\mathcal{D}_1q$ cannot both be zero at $[{\bf a}:{\bf b}]$.

To see that $U\subseteq \mathfrak{S}_{\bf d}^{\circ}$, let ${\bf a}\in \R^n$, ${\bf b}\in \R_+^n$ and let $\{\lambda_j\}_j$ denote the roots of $\mathcal{D}_1q({\bf 1}, {\bf a}+{\bf b}t)$ and $\{\mu_j\}_j$ denote the roots of $q({\bf 1}, {\bf a}+{\bf b}t)$. These roots are distinct by the argument above. 

By Proposition~\ref{prop:interlacer},  $\lambda_1< \mu_1< \lambda_2< \hdots < \lambda_{|{\bf d}|}< \mu_{|{\bf d}|}$.  In particular, $q$ must alternate signs on the roots of  $\mathcal{D}_1q({\bf 1}, {\bf a}+{\bf b}t)$. 
If $g\in U$, then $g({\bf 1}, {\bf a}+{\bf b}t)\in \R[t]$ has degree $ |\bd| $ with positive leading coefficient $\textbf{b}^{\bd}$, and it alternates signs on the roots of $\mathcal{D}_1q({\bf 1}, {\bf a}+{\bf b}t)$. Hence it has $|{\bf d}|$ distinct real roots. As this holds for any ${\bf a}\in \R^n$, ${\bf b}\in \R_+^n$, then $ g\in\mathfrak{S}_{\bf d}^{\circ} $. See, for example, \cite[2.3,2.4]{Wagner}.
\end{proof}
We can modify this using the M\"obius transformations $\phi$ from 
\eqref{eq:Mobius} to translate these results to $\A_{\bd}$. 
For any $\xv\in [0,2\pi)^n$ define 
\[
\A_{\bd}(\xv)  = 
\{p\in \A_{\bd} : p(\exp({i{\xv}}))\neq 0\}  
\ \ \text{ and }\]
 \[\A_{\bd}^{\circ}({\xv})  = 
\{p\in\A_{\bd}({\xv})  : p \text{ and }\nabla p\ \text{have no common zeros in } \T^{n}\}.
\]
One can check that $q\in \C[{\bf y}, {\bf z}]_{\bf d}$ belongs to $\mathfrak{S}_{\bf d}$ (respectively $\mathfrak{S}_{\bf d}^{\circ}$) if and only if $\phi^{-1}\cdot q({\bf 1},{\bf z})$ 
belongs to $\A_{\bd}(\xv)$ (respectively $\A_{\bf d}^{\circ}(\xv)$) for $ \phi $ defined using the angles $ e^{ix_{1}},\ldots,e^{ix_{n}} $. 

\begin{Def}\label{def: inv}
	Define an involution of polynomials in $\C[ {\bf z}]_{\leq {\bf d}}$ by
	\begin{align*}
		p(z_{1},\ldots,z_{n})\mapsto & p^{\dagger}(z_{1},\ldots,z_{n})  = \z^{\bd}\overline{p\left(\bar{z_{1}}^{-1},\ldots,\bar{z_{n}}^{-1}\right)}, \ \text{ namely}\\
	\sum_{\alpha} a_{\alpha} {\bf z}^{\alpha} \mapsto & \sum_{\alpha} {\overline{a_{\alpha}}} {\bf z}^{{\bf d} - \alpha}  =\sum_{\alpha} {\overline{a_{{\bf d} - \alpha}}} {\bf z}^{\alpha},		
	\end{align*}  
and define the set 
of polynomials in $\C[ {\bf z}]_{\leq {\bf d}}$ that are invariant under the involution
\[\C[ {\bf z}]_{\leq {\bf d}}^{\rm in}:=\set{p\in\C[ {\bf z}]_{\leq {\bf d}}\ :\ p=p\inv}=\left\{\sum_{\alpha\le\bd} a_{\alpha} {\bf z}^{\alpha}\ :\ a_{\alpha} = {\overline{a_{{\bf d} - \alpha}}}\ \text{ for all }\alpha\right\},\]
and the set of polynomials for which $ p\inv$ is a scalar multiple of $p $ by
\[\C\cdot \C[ {\bf z}]_{\leq {\bf d}}^{\rm in}:=\set{cp\ :\ c\in\C,p\in\C[ {\bf z}]_{\leq {\bf d}}^{\rm in}}=\set{p\in\C[ {\bf z}]_{\leq {\bf d}}\ :\ p\inv=cp\ \text{ for some $ c $ with } |c|=1 }.\]
	\end{Def}
The next lemma is straight forward.
\begin{lem}
	The set $ \C[ {\bf z}]_{\leq {\bf d}}^{\rm in} $ is a real vectorspace of dimension $ \prod_{j=1}^n(d_j+1) $, spanned by $({\bf z}^{\alpha} + {\bf z}^{{\bf d}-\alpha})$ and $i({\bf z}^{\alpha} - {\bf z}^{{\bf d}-\alpha})$ for $\alpha\leq {\bf d}$. The set $ \C\cdot \C[ {\bf z}]_{\leq {\bf d}}^{\rm in} $ is a semialgebraic set of dimension $1+ \dim(\C[ {\bf z}]_{\leq {\bf d}}^{\rm in})$ in the $\left(2\prod_{j=1}^n(d_j+1)\right)$-dimensional real vectorspace $\C[{\bf z}]_{\leq {\bf d}}$, from which it inherits the Euclidean topology.
	\end{lem}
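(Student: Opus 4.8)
The plan is to handle the two assertions in turn. For the statement about $\C[\z]_{\le\bd}^{\rm in}$, I would first record that $p\mapsto p^{\dagger}$ is an $\R$-linear involution of the $2\prod_j(d_j+1)$-dimensional real vector space $\C[\z]_{\le\bd}$: the identities $(cp)^{\dagger}=\bar c\,p^{\dagger}$ and $(p^{\dagger})^{\dagger}=p$ follow by substituting the coefficient formula twice. Thus $\C[\z]_{\le\bd}^{\rm in}$ is exactly the $(+1)$-eigenspace $V_{+}$ of this involution, hence a real subspace, and the $(-1)$-eigenspace $V_{-}$ satisfies $V_{-}=iV_{+}$ (again from $(ip)^{\dagger}=-i\,p^{\dagger}$), so that $\dim_{\R}V_{+}=\dim_{\R}V_{-}=\prod_j(d_j+1)=:N$ since $\C[\z]_{\le\bd}=V_{+}\oplus V_{-}$. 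For the explicit basis, partition the exponent set $\{\alpha:0\le\alpha\le\bd\}$ into the orbits of $\alpha\mapsto\bd-\alpha$. On the two complex coordinates indexed by a two-element orbit $\{\alpha,\bd-\alpha\}$, the defining relation $a_{\alpha}=\overline{a_{\bd-\alpha}}$ cuts out a $2$-real-dimensional space with $\R$-basis $\{\z^{\alpha}+\z^{\bd-\alpha},\ i(\z^{\alpha}-\z^{\bd-\alpha})\}$ (write $a_{\bd-\alpha}=s+it$), while a fixed point $\alpha=\bd-\alpha$ (which occurs only if $\bd\in2\Z^{n}$) contributes the real line spanned by $\z^{\alpha}$; summing over orbits reconstitutes $V_{+}$ and yields precisely $N$ basis vectors. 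This part is routine.

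For the second assertion, set $V_{+}=\C[\z]_{\le\bd}^{\rm in}$ and $W=\C\cdot V_{+}$. I would first justify the description in \Cref{def: inv}: for $p\in V_{+}$ and $c\in\C^{\ast}$ one has $(cp)^{\dagger}=\bar c\,p^{\dagger}=(\bar c/c)(cp)$ with $|\bar c/c|=1$, so every element of $W$ satisfies $q^{\dagger}=\lambda q$ for some $\lambda$ with $|\lambda|=1$; conversely, given such a $q$ and $\lambda$, choose $\mu$ with $\mu^{2}=\lambda$ and $|\mu|=1$, and check directly that $\mu q\in V_{+}$, whence $q=\bar\mu(\mu q)\in W$. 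Semialgebraicity then follows because $W$ is the image of the semialgebraic set $\C\times V_{+}$ under the map $(c,p)\mapsto cp$, which is polynomial (bilinear) in real coordinates, so Tarski--Seidenberg applies. For the dimension, note that $\R_{>0}V_{+}=V_{+}$ gives $W=\bigcup_{c\in\T}cV_{+}=\bigcup_{\theta\in[0,\pi)}e^{i\theta}V_{+}$, so $W$ is the image of $\T\times V_{+}$, a set of dimension $N+1$, and therefore $\dim W\le N+1$.

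The matching lower bound is the one genuinely substantive point, and the main (if mild) obstacle: one must rule out the degenerate scenario in which all the subspaces $e^{i\theta}V_{+}$ coincide, which would leave $\dim W=N$. To this end I would consider $\Psi\colon(0,\pi)\times V_{+}\to\C[\z]_{\le\bd}$, $\Psi(\theta,p)=e^{i\theta}p$, and compute $d\Psi_{(\theta_{0},p_{0})}(\dot\theta,\dot p)=e^{i\theta_{0}}(\dot p+i\dot\theta\,p_{0})$. Picking $p_{0}\in V_{+}\setminus\{0\}$, this vanishes only if $\dot p=-i\dot\theta\,p_{0}$; but $\dot p\in V_{+}$ while $-i\dot\theta\,p_{0}\in iV_{+}=V_{-}$ and $V_{+}\cap V_{-}=\{0\}$, forcing $\dot p=0$ and $\dot\theta\,p_{0}=0$, hence $\dot\theta=0$. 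So $\Psi$ is an immersion at $(\theta_{0},p_{0})$, its image (contained in $W$) has dimension $N+1$, and therefore $\dim W=1+\dim\C[\z]_{\le\bd}^{\rm in}$, as claimed. All the remaining work is bookkeeping with the relation $a_{\alpha}=\overline{a_{\bd-\alpha}}$ and one invocation of Tarski--Seidenberg.
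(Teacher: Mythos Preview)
Your proof is correct. The paper gives no proof at all for this lemma, merely remarking that it is ``straight forward,'' so your argument via the $(\pm 1)$-eigenspace decomposition of the $\R$-linear involution $p\mapsto p^{\dagger}$, together with the immersion computation for the dimension of $\C\cdot V_{+}$, supplies exactly the kind of routine verification the authors had in mind.
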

\begin{remark}
	Note that from the polynomial $q = cp$ with $c= e^{ix}$ for $x\in [0,\pi)$ and 
	$p\in \C[ {\bf z}]_{\leq {\bf d}}^{\rm in}$, 
	we can recover $c/\overline{c} = e^{2ix}$ as $q({\bf z})/q^{\dagger}({\bf z})$ and $c = (c/\overline{c})^{1/2} =  e^{ix}$. 
\end{remark}
The image of $\C\cdot \R[{\bf y}, {\bf z}]_{\bf d}$ under the map $q\mapsto \phi^{-1}\cdot q({\bf 1},{\bf z})$ 
coincides with $\C\cdot\C[ {\bf z}]_{\leq {\bf d}}^{\rm in}$:
\[\C\cdot \C[ {\bf z}]_{\leq {\bf d}}^{\rm in}=\left\{c \sum_{\alpha\le\bd}a_{\alpha}(-i)^{|\alpha|}(\z+\exp(i{\xv}))^{\alpha}(\z-\exp(i{\xv}))^{\bd-\alpha}\ :\ a_{\alpha}\in\R, c\in \C\right\}.\]
Note that for $p({\bf z}) = \sum_{\alpha\le\bd}a_{\alpha}(-i)^{|\alpha|}(\z+\exp(i{\xv}))^{\alpha}(\z-\exp(i{\xv}))^{\bd-\alpha}$ with $a_{\alpha}\in \R$, using the notation $ \z^{-1}=(\frac{1}{z_{1}},\ldots,\frac{1}{z_{n}}) $, we have 
\begin{align*}
p\inv(\z) & = {\bf z}^{\bd}\sum_{\alpha\le\bd}a_{\alpha}(i)^{|\alpha|}(\z^{-1}+\exp(-i{\xv}))^{\alpha}(\z^{-1}-\exp(-i{\xv}))^{\bd-\alpha}\\
& = (\exp(-i{\xv}))^{\bd}\sum_{\alpha\le\bd}a_{\alpha}(i)^{|\alpha|}(\exp(i{\xv})+\z)^{\alpha}(\exp(i{\xv})-\z)^{\bd-\alpha}=  (-\exp(-i{\xv}))^{\bd}p(\z),
\end{align*}
and so $ cp\in \C[ {\bf z}]_{\leq {\bf d}}^{\rm in}$ for $ c=(i\exp(-i{\xv}/2))^{\bd} $.

\begin{Def}\label{def: Tlambda}
	Let $\mathcal{D}_{\lambda,\xv}:\C[{\bf z}]_{\bf d}\to \C[ {\bf z}]_{\bf d}$ denote the linear operator 
	corresponding to $\mathcal{D}_{\lambda}$ and tuple of Mobius transformations $\phi = (\phi_1, \hdots, \phi_n)$ where $\phi_j$ is defined as in 	\eqref{eq:Mobius} with $\theta = x_j$ for $\xv = (x_1, \hdots, x_n)\in \R^n$. 
Namely $p\mapsto (\phi^{-1}\circ \mathcal{D}_{\lambda}\circ \phi \cdot p^{\hom})|_{{\bf y}={\bf 1}}$, where $p^{\rm hom} = {\bf y}^{\bf d}p(z_1/y_1, \hdots, z_n/y_n)$.  
Explicitly, for $p({\bf z}) = \sum_{\alpha\le\bd}a_{\alpha}(-i)^{|\alpha|}(\z+\exp(i{\xv}))^{\alpha}(\z-\exp(i{\xv}))^{\bd-\alpha}$, 
	\[\mathcal{D}_{\lambda,\xv}p(\z)=p(\z)+\lambda\sum_{j=1}^{n}\sum_{\alpha\le\bd}\alpha_{j}a_{\alpha}(-i)^{|\alpha|}(\z+\exp(i{\xv}))^{\alpha-e_{j}}(\z-\exp(i{\xv}))^{\bd-\alpha}.\]
\end{Def}
\begin{cor} \label{cor:LY_top}
For any $p\in \A_{\bf d}({\xv})$ and $\lambda>0$ and $\phi$ defined as above, 
	$\mathcal{D}_{\lambda,\xv}(p) \in \A_{\bf d}({\xv})$ and $(\mathcal{D}_{\lambda,\xv})^{|{\bf d}|}(p) \in \A^{\circ}_{\bf d}({\xv})$.
	The interior of $\A_{\bf d}({\xv})$  in the Euclidean topology on $\C\cdot\C[{\bf z}]_{\leq \bf d}^{\rm in}$ is nonempty and contains $\A^{\circ}_{\bf d}({\xv})$.
	Moreover, $\A_{\bf d}({\xv})$ is contained in the closure of $\A^{\circ}_{\bf d}({\xv})$.
	\end{cor}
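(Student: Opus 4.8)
The plan is to transport \Cref{prop:perturbStable}, \Cref{prop:stableInterior}, and \Cref{prop:Nuij} from $\mathfrak{S}_{\bf d}$ and $\mathfrak{S}^{\circ}_{\bf d}$ over to $\A_{\bf d}(\xv)$ and $\A^{\circ}_{\bf d}(\xv)$ along the M\"obius change of variables of \eqref{eq:Mobius} (with $\theta=x_j$ in the $j$-th coordinate). Let $\Psi$ denote the map $q\mapsto\phi^{-1}\cdot q({\bf 1},{\bf z})$ from the discussion preceding \Cref{def: Tlambda}; it is a linear isomorphism $\C[{\bf y},{\bf z}]_{\bf d}\to\C[{\bf z}]_{\leq{\bf d}}$ with inverse $p\mapsto(\phi\cdot p^{\rm hom})|_{{\bf y}={\bf 1}}$, and \Cref{def: Tlambda} is precisely the statement that $\mathcal{D}_{\lambda,\xv}=\Psi\circ\mathcal{D}_{\lambda}\circ\Psi^{-1}$, hence $(\mathcal{D}_{\lambda,\xv})^{|{\bf d}|}=\Psi\circ\mathcal{D}_{\lambda}^{|{\bf d}|}\circ\Psi^{-1}$. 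The facts about $\Psi$ I would first record, both already contained in that discussion, are: $\Psi$ restricts to a homeomorphism $\C\cdot\R[{\bf y},{\bf z}]_{\bf d}\to\C\cdot\C[{\bf z}]^{\rm in}_{\leq{\bf d}}$ (so in particular $\A_{\bf d}(\xv)\subseteq\C\cdot\C[{\bf z}]^{\rm in}_{\leq{\bf d}}$ and the topological statements make sense), and, via the elementary computation ${\rm coeff}(\phi\cdot p,{\bf z}^{\bf d})=p(\exp(i\xv))$, it carries the functional ``coefficient of ${\bf z}^{\bf d}$'' to ``evaluation at $\exp(i\xv)$''.

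The key step is then the dictionary $\A_{\bf d}(\xv)=\Psi(\C^{*}\!\cdot\mathfrak{S}_{\bf d})$ and $\A^{\circ}_{\bf d}(\xv)=\Psi(\C^{*}\!\cdot\mathfrak{S}^{\circ}_{\bf d})$. Indeed, since $\A_{\bf d}$ is invariant under scaling by $\C^{*}$ and every $p\in\A_{\bf d}(\xv)$ has $p(\exp(i\xv))\ne0$, its preimage $\Psi^{-1}(p)$ has nonzero ${\bf z}^{\bf d}$-coefficient; writing $c=p(\exp(i\xv))$ and $q=\Psi^{-1}(p)/c$ gives $q\in\C[{\bf y},{\bf z}]_{\bf d}$ with unit leading coefficient, and the equivalence ``$q\in\mathfrak{S}_{\bf d}$ iff $\Psi(q)\in\A_{\bf d}(\xv)$'' recorded just above \Cref{def: inv} (and its $\circ$-analogue) then gives $q\in\mathfrak{S}_{\bf d}$, respectively $q\in\mathfrak{S}^{\circ}_{\bf d}$. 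The first assertion of the corollary is now immediate: for $p\in\A_{\bf d}(\xv)$ write $\Psi^{-1}(p)=cq$ as above; by \Cref{prop:perturbStable}, $\mathcal{D}_{\lambda}q\in\mathfrak{S}_{\bf d}$ and $\mathcal{D}_{\lambda}^{|{\bf d}|}q\in\mathfrak{S}^{\circ}_{\bf d}$, so that $\mathcal{D}_{\lambda,\xv}(p)=\Psi(c\,\mathcal{D}_{\lambda}q)\in\Psi(\C^{*}\!\cdot\mathfrak{S}_{\bf d})=\A_{\bf d}(\xv)$ and $(\mathcal{D}_{\lambda,\xv})^{|{\bf d}|}(p)=\Psi(c\,\mathcal{D}_{\lambda}^{|{\bf d}|}q)\in\A^{\circ}_{\bf d}(\xv)$.

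For the interior and closure statements I would first lift \Cref{prop:stableInterior} and \Cref{prop:Nuij} from the ``unit sections'' to the cones $\C^{*}\!\cdot\mathfrak{S}_{\bf d}$ and $\C^{*}\!\cdot\mathfrak{S}^{\circ}_{\bf d}$ in $\C\cdot\R[{\bf y},{\bf z}]_{\bf d}$. On the open set $\{P:{\rm coeff}(P,{\bf z}^{\bf d})\ne0\}$, which contains both cones, the map $P\mapsto\bigl({\rm coeff}(P,{\bf z}^{\bf d}),\,P/{\rm coeff}(P,{\bf z}^{\bf d})\bigr)$ is a homeomorphism onto $\C^{*}\times\{q\in\R[{\bf y},{\bf z}]_{\bf d}:{\rm coeff}(q,{\bf z}^{\bf d})=1\}$ sending $\C^{*}\!\cdot\mathfrak{S}_{\bf d}$ to $\C^{*}\times\mathfrak{S}_{\bf d}$ and $\C^{*}\!\cdot\mathfrak{S}^{\circ}_{\bf d}$ to $\C^{*}\times\mathfrak{S}^{\circ}_{\bf d}$; since $\C^{*}$ is open, \Cref{prop:stableInterior} upgrades to $\C^{*}\!\cdot\mathfrak{S}^{\circ}_{\bf d}\subseteq\operatorname{int}(\C^{*}\!\cdot\mathfrak{S}_{\bf d})$ and \Cref{prop:Nuij} to $\C^{*}\!\cdot\mathfrak{S}_{\bf d}\subseteq\overline{\C^{*}\!\cdot\mathfrak{S}^{\circ}_{\bf d}}$ in $\C\cdot\R[{\bf y},{\bf z}]_{\bf d}$. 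Pushing both relations forward by the homeomorphism $\Psi$ and invoking the dictionary above gives $\A^{\circ}_{\bf d}(\xv)\subseteq\operatorname{int}(\A_{\bf d}(\xv))$ and $\A_{\bf d}(\xv)\subseteq\overline{\A^{\circ}_{\bf d}(\xv)}$ in $\C\cdot\C[{\bf z}]^{\rm in}_{\leq{\bf d}}$, and the interior is nonempty because $\mathfrak{S}^{\circ}_{\bf d}\ne\emptyset$ (e.g. $\mathcal{D}_{1}^{|{\bf d}|}({\bf z}^{\bf d})\in\mathfrak{S}^{\circ}_{\bf d}$ by \Cref{prop:Nuij}).

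The only part with real content, as opposed to bookkeeping, is the first paragraph: confirming that $\Psi$ is a homeomorphism between the semialgebraic sets $\C\cdot\R[{\bf y},{\bf z}]_{\bf d}$ and $\C\cdot\C[{\bf z}]^{\rm in}_{\leq{\bf d}}$, that the leading-coefficient normalization on the source matches the evaluation functional $p\mapsto p(\exp(i\xv))$ on the target, and that the fixed phase $(i\exp(-i\xv/2))^{\bf d}$ relating the real form $\Psi(\R[{\bf y},{\bf z}]_{\bf d})$ to $\C[{\bf z}]^{\rm in}_{\leq{\bf d}}$ does not affect any of the arguments. Once these are settled, every claim of the corollary is a direct transport of \Cref{prop:perturbStable}, \Cref{prop:Nuij}, and \Cref{prop:stableInterior}.
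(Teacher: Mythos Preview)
Your proof is correct and follows the same route as the paper's: set up the dictionary $\A_{\bf d}(\xv)=\Psi(\C^{*}\mathfrak{S}_{\bf d})$, $\A^{\circ}_{\bf d}(\xv)=\Psi(\C^{*}\mathfrak{S}^{\circ}_{\bf d})$ via the M\"obius transformation, and then transport \Cref{prop:perturbStable}, \Cref{prop:Nuij}, and \Cref{prop:stableInterior}. The paper's proof is two sentences that assert exactly this; you have unpacked the bookkeeping (the homeomorphism $\Psi$, the match between leading coefficient and evaluation at $\exp(i\xv)$, and the passage from the normalized sections to the $\C^{*}$-cones) in more detail than the paper does, but there is no difference in approach.
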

\begin{proof}
Note that $p\in \A_{\bf d}({\xv})$, resp. $\A^{\circ}_{\bf d}({\xv})$, if and only if 
the homogenezation of $\phi\cdot p$ belongs to $\C^*\mathfrak{S}_{\bf d}$, resp. $\C^*\mathfrak{S}^{\circ}_{\bf d}({\xv})$. 
The result then follows from Propositions~\ref{prop:perturbStable}, \ref{prop:Nuij}, and \ref{prop:stableInterior}.
\end{proof}

\begin{remark}
The set of Lee-Yang polynomials is connected but not contractible, even in $\P(\C[{\bf z}]_{\leq \bf d}^{\rm in})$. 
For example, the set of univariate Lee-Yang polynomials $p$ of degree-one, modulo global scaling, 
is parametrized by $z - e^{i\theta}$, for $\theta\in [0,2\pi]$, showing this set to be a circle. 
\end{remark}

\begin{remark}
The proof of Proposition~\ref{prop:Nuij} gives an explict contraction of $\A_{\bd}({\xv})$ (modulo scaling) to a polynomial $p^*\in \A_{\bd}^{\circ}({\xv})$, namely $\phi^{-1}\circ \mathcal{D}_{1}^{|{\bf d}|}{\bf z}^{\bf d}$, which we can explicitly compute. 
The space of real stable polynomials is contracted to 
\[
\mathcal{D}_{1}^{|{\bf d}|}{\bf z}^{\bf d} = (1 + \sum_{j=1}^n y_j\partial_{z_j})^{|{\bf d}|} \cdot {\bf z}^{\bf d}
= \sum_{\alpha}\binom{|{\bf d}|}{\alpha} {\bf y}^{\alpha} \partial_{{\bf z}}^{\alpha} {\bf z}^{\bf d}
= \sum_{\alpha\leq {\bf d}}\binom{|{\bf d}|}{\alpha} \frac{{\bf d}!}{\alpha!} {\bf y}^{\alpha} {\bf z}^{{\bf d} - \alpha}
\]
where the sum in the third term is taken over all $\alpha\in \Z_{\geq 0}^n$ with $|\alpha|\leq |{\bf d}|$, 
$\binom{|{\bf d}|}{\alpha} = \frac{|{\bf d}|!}{(|{\bf d}|-|\alpha|)!\alpha_1!\cdots \alpha_n!}$, 
and $\frac{{\bf d}!}{\alpha!} = \prod_{j=1}^n \left(\frac{d_j!}{\alpha_j!}\right)$.   
Taking $\phi$ as in \eqref{eq:Mobius}, we find that $\P(\A_{\bd}({\xv}))$ is contracted to 
\[
p^*(\z) = \phi^{-1}\cdot \mathcal{D}_{1}^{|{\bf d}|}{\bf z}^{\bf d} = 
\sum_{\alpha\leq {\bf d}}\binom{|{\bf d}|}{\alpha} \frac{{\bf d}!}{\alpha!}(\z-\exp(i\xv))^{\alpha}(-i(\z+\exp(i\xv)))^{\bd-\alpha} 
.
\]
\end{remark}

As above let $\C[{\bf z}]_{\leq {\bf d}}^{\rm in}$ denote
the real vectorspace of polynomials in $\C[{\bf z}]_{\leq {\bf d}}$ that are invariant under the involution
$\sum_{\alpha} a_{\alpha}{\bf z}^{\alpha} \mapsto \sum_{\alpha} {\overline{a_{{\bf d} - \alpha}}} {\bf z}^{\alpha}$.

\begin{thm} \label{thm: LYd full dimension}
For any ${\bf d}\in \Z_{\geq 0}^n$, the set of Lee-Yang polynomials $\A_{\bf d}$ is a full-dimensional semialgebraic subset of $\C\cdot \C[\bf z]_{\leq {\bf d}}^{\rm in}$. That is, $ \dim(\A_{\bf d})=\prod_{j=1}^n(d_j+1)+1 $.
Its interior in $\C\cdot \C[\bf z]_{\leq {\bf d}}^{\rm in}$ is nonempty and contains
\[
\A_{\bf d}^{\circ} = \{p\in \A_{\bf d} :p \text{ and }\nabla p\ \text{have no common zeros in } \T^{n}\}
\]
and $\A_{\bf d}$ is contained in the closure of $\A_{\bf d}^{\circ}$. 
\end{thm}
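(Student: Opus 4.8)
The plan is to deduce \Cref{thm: LYd full dimension} from \Cref{cor:LY_top} by integrating over the choice of basepoint $\xv\in[0,2\pi)^n$. Recall from \Cref{cor:LY_top} that for each fixed $\xv$, the set $\A_{\bd}(\xv)$ of Lee-Yang polynomials not vanishing at $\exp(i\xv)$ sits inside the real vectorspace $\C\cdot\C[{\bf z}]_{\leq\bd}^{\rm in}$ (after the identification $q\mapsto \phi^{-1}\cdot q({\bf 1},{\bf z})$ with $\phi$ built from the angles $x_1,\ldots,x_n$), has nonempty interior there containing $\A_{\bd}^{\circ}(\xv)$, and is contained in the closure of $\A_{\bd}^{\circ}(\xv)$. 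Since every Lee-Yang polynomial $p$ of multidegree $\bd$ vanishes on a proper subvariety of $\T^n$, there is certainly some $\xv$ with $p(\exp(i\xv))\neq 0$, so $\A_{\bd} = \bigcup_{\xv}\A_{\bd}(\xv)$ and $\A_{\bd}^{\circ} = \bigcup_{\xv}\A_{\bd}^{\circ}(\xv)$.

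First I would establish the dimension count. Each $\A_{\bd}(\xv)$ has nonempty interior in $\C\cdot\C[{\bf z}]_{\leq\bd}^{\rm in}$, which by the lemma following \Cref{def: inv} is a semialgebraic set of dimension $1+\prod_{j=1}^n(d_j+1)$. Hence $\dim\A_{\bd}\geq 1+\prod_{j=1}^n(d_j+1)$. For the reverse inequality, note that $\A_{\bd}$ is contained in $\C\cdot\C[{\bf z}]_{\leq\bd}^{\rm in}$: indeed if $p\in\A_{\bd}$ then $p^{\dagger}$ is also Lee-Yang of multidegree $\bd$ with the same zero set on $\T^n$ (the involution corresponds to $z_j\mapsto 1/\bar z_j$, which swaps $\D^n$ and $(\C\setminus\overline\D)^n$ and fixes $\T^n$), and a standard argument — e.g. via \Cref{prop:realStable1}(c,d) applied after a Mobius transform, as in the discussion preceding \Cref{def: Tlambda} — shows $p^{\dagger}=cp$ for a unimodular constant $c$. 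So $\A_{\bd}$ lies in a semialgebraic set of dimension $1+\prod_{j=1}^n(d_j+1)$, giving equality. That $\A_{\bd}$ is semialgebraic follows because stability with respect to $\D^n$ and $(\C\setminus\overline\D)^n$ is a closed semialgebraic condition (Hurwitz plus a quantifier over a semialgebraic set, eliminable by Tarski–Seidenberg), and fixing the multidegree is also semialgebraic.

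Next, the interior statement: I claim $\A_{\bd}^{\circ}$ is contained in the interior of $\A_{\bd}$ in $\C\cdot\C[{\bf z}]_{\leq\bd}^{\rm in}$. Given $p\in\A_{\bd}^{\circ}$, pick $\xv$ with $p(\exp(i\xv))\neq 0$, so $p\in\A_{\bd}^{\circ}(\xv)$. By \Cref{cor:LY_top}, $\A_{\bd}^{\circ}(\xv)$ lies in the interior of $\A_{\bd}(\xv)$ in $\C\cdot\C[{\bf z}]_{\leq\bd}^{\rm in}$, so there is a neighborhood $\mathcal{U}$ of $p$ in this vectorspace with $\mathcal{U}\subseteq\A_{\bd}(\xv)\subseteq\A_{\bd}$. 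Hence $p$ is an interior point of $\A_{\bd}$. In particular $\A_{\bd}$ is full-dimensional. Finally, for density: given any $p\in\A_{\bd}$, choose $\xv$ with $p(\exp(i\xv))\neq 0$, so $p\in\A_{\bd}(\xv)$; by \Cref{cor:LY_top}, $\A_{\bd}(\xv)\subseteq\overline{\A_{\bd}^{\circ}(\xv)}\subseteq\overline{\A_{\bd}^{\circ}}$, so $p$ lies in the closure of $\A_{\bd}^{\circ}$. The main obstacle I anticipate is the bookkeeping around the identification of $\C\cdot\C[{\bf z}]_{\leq\bd}^{\rm in}$ with the target of $q\mapsto\phi^{-1}\cdot q({\bf 1},{\bf z})$: one must verify that this ambient vectorspace is \emph{the same} for every $\xv$ — which it is, since by the displayed formula preceding \Cref{def: Tlambda} the image of $\C\cdot\R[{\bf y},{\bf z}]_{\bd}$ under that map equals $\C\cdot\C[{\bf z}]_{\leq\bd}^{\rm in}$ independently of $\xv$ — so that the local statements from \Cref{cor:LY_top} at different basepoints all refer to the same topological space and can be glued. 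Everything else is routine.
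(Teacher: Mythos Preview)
Your proposal is correct and follows essentially the same approach as the paper: pick $\xv$ with $p(\exp(i\xv))\neq 0$, invoke \Cref{cor:LY_top} to get the interior and density statements at that basepoint, and note that the ambient set $\C\cdot\C[{\bf z}]_{\leq\bd}^{\rm in}$ is independent of $\xv$. The paper's proof is terser (it does not spell out the containment $\A_{\bd}\subset\C\cdot\C[{\bf z}]_{\leq\bd}^{\rm in}$ or the gluing issue you flag), but the logic is the same.
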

\begin{proof}
Note that the set 
\[
\{(p,{\bf a}, {\bf b}) \in \C\cdot \C[{\bf z}]_{\leq {\bf d}}^{\rm in} \times \R^n\times \R^n : p({\bf a} + i {\bf b}) = 0 \text{ and }(( a_j^2+b_j^2 < 1 \forall j) \text{ or }(
a_j^2+b_j^2 > 1 \forall j)) \}
 \]
is semialgebraic. By the Tarski-Seidenberg theorem its projection on to $\C\cdot \C[{\bf z}]_{\leq {\bf d}}^{\rm in}$ is also semialgebraic, as 
is the complement of the image of this projection, $\A_{\bf d}$.

Suppose that $p\in \A_{\bf d}$ and fix $\xv\in [0,2\pi)^n$ with $p(\exp(i\xv))\neq 0$. 
Then $p\in \A_{\bf d}(\xv)$ and we invoke Corollary~\ref{cor:LY_top}.
If $p$ and $\nabla p$ have no common zeros in  $\T^{n}$, then $p$ belongs to 
$\A_{\bf d}^{\circ}(\xv)$, which is contained in the interior of $ \A_{\bf d}(\xv) \subseteq  \A_{\bf d}$
in $\C\cdot \C[\bf z]_{\leq {\bf d}}^{\rm in}$.  Otherwise, $p$ is contained in the closure of $\A_{\bf d}^{\circ}(\xv)\subseteq \A_{\bf d}^{\circ}$. \end{proof}

\section{The torus zero set $ \Sigma_{p} $}\label{sec: torus zero set}

\begin{figure}
	\includegraphics[width=0.4\textwidth]{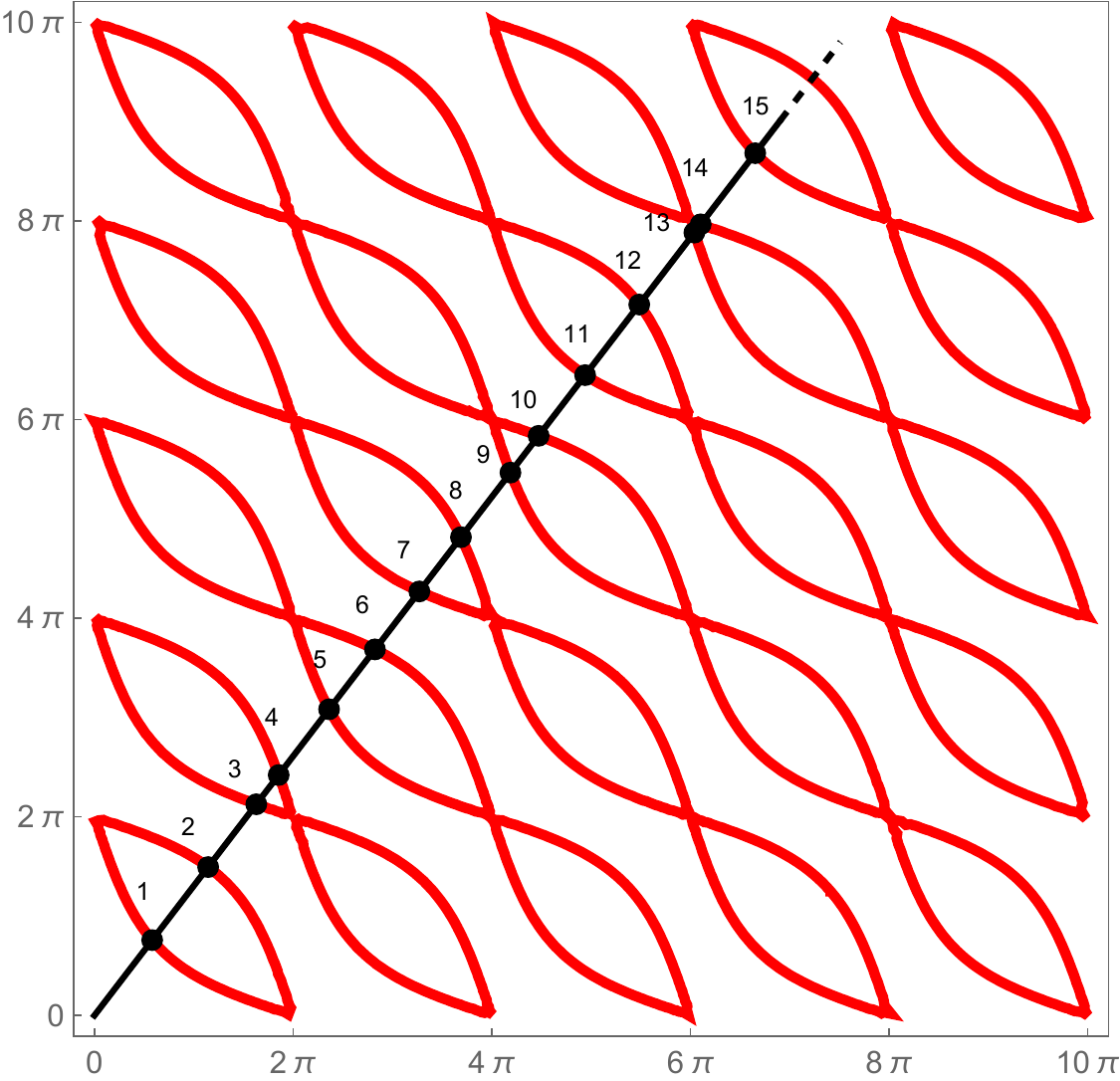}  \hspace{.5in}	\includegraphics[width=0.4\textwidth]{folded.pdf} 
	\caption{
		(Left) The zero set of $p(e^{ix},e^{iy})$ (in red) and the line $(x,y)=t\lv$ (in black) as in \Cref{fig: flow on secman}.
		(Right) $ \Sigma_{p} $, the zero set modulo $ 2\pi $, and the line $(x,y)=t\lv\mod 2\pi$.}
	\label{fig: mod 2p}
\end{figure}

It is a simple, yet fruitful, observation that the zeros of $ x\mapsto p(\exp(ix\lv)) $ correspond to intersection points of the line $ \set{x\lv\mod{2\pi}\ :\ x\in\R}\subset \R^{n}/2\pi\Z^{n}  $ with the zero set 
 \[\Sigma_{p}:=\set{\xv\in\R^{n}/2\pi\Z^{n}\ :\ p(\exp(i\xv))=0 }.\]
See \Cref{fig: mod 2p}. In particular, certain properties of $ \mu_{p,\lv} $ are determined by the structure of $ \Sigma_{p} $, regardless of the choice of $ \lv\in\R_{+}^{n} $ with $ \Q $-independent entries. 
\begin{lem}[dimension and singularity]\label{lem: dimension and singularity}
	Given $ p\in\A_{\bd}(n) $, its torus zero set $ \Sigma_{p}\subset \R^{n}/2\pi\Z^{n}$ is a real analytic variety of dimension $ n-1 $, and
	\begin{enumerate}
		\item The set of singular points, $ \sing(\Sigma_{p}) $, is a subvariety of dimension at most $ n-2 $. If $ p $ has no square factors (i.e., square free), then $\xv\in\Sigma_{p}$ is singular if and only if $ \nabla p|_{\z=\exp(i\xv)}=0 $, or equivalently, its multiplicity is $ \mult(\xv)>1 $.
		\item Every irreducible factor of $ p $ is a Lee-Yang polynomial. If $ p $ is irreducible, then $ \Sigma_{p} $ is irreducible in the following sense: the zero set of $ q(\exp(i\xv)) $ in $ \Sigma_{p} $, for any polynomial $ q\in\C[z_{1},\ldots,z_{n}] $, is a subvariety of smaller dimension (at most $ n-2 $ dimensional), unless $ p $ is a factor of $ q $ in which case $ q(\exp(i\xv)) $ vanishes on $ \Sigma_{p} $. 
	\end{enumerate}
\end{lem}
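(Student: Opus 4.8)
The plan is to treat the variety $\Sigma_p \subset \R^n/2\pi\Z^n$ through its complexification: the zero set $\tilde{\Sigma}_p := \{\z \in (\C^*)^n : p(\z) = 0\}$, which is a (complex) hypersurface in the algebraic torus $(\C^*)^n$ by the Nullstellensatz, hence has complex dimension $n-1$, and $\Sigma_p$ is exactly its intersection with the real torus $\T^n = \{\z : |z_j| = 1\}$. To see that $\Sigma_p$ has real dimension $n-1$, I would use Proposition~\ref{prop:TorusToReal}: for any fixed $\lv \in \R_+^n$ with $\Q$-linearly independent entries, the univariate function $f(x) = p(\exp(ix\lv))$ is real-rooted and not identically zero, so it has a real zero $x_0$ with $\z_0 = \exp(ix_0\lv) \in \Sigma_p$; more robustly, one can intersect $\tilde\Sigma_p$ with the $n-1$ real coordinate subtori to produce real points, or invoke the fact that a real-analytic hypersurface defined by a single real-analytic equation $\re(e^{-i\psi}p(e^{i\xv})) = \im(e^{-i\psi}p(e^{i\xv})) = 0$ — here using that $p \in \C\cdot\C[\z]^{\rm in}_{\le\bd}$, so $p(e^{i\xv})$ lies on a fixed line through the origin in $\C$ for $\xv$ real, reducing the two real equations to one — has dimension $n-1$ wherever it is nonempty, combined with a local coordinate computation. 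That $\Sigma_p$ is a real-analytic variety is immediate: it is the common zero set of the two real-analytic functions $\re p(e^{i\xv}), \im p(e^{i\xv})$ (and in fact one, by the $\C\cdot\C[\z]^{\rm in}_{\le\bd}$ observation).

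For part (1), I would first reduce to the square-free case. Write $p = \prod q_j^{c_j}$ with the $q_j$ distinct irreducible, and let $p_{\rm red} = \prod q_j$; then $\Sigma_p = \Sigma_{p_{\rm red}}$ as sets, so statements about the variety $\Sigma_p$ reduce to $p$ square-free. For square-free $p$, the singular locus of $\Sigma_p$ (as a real-analytic variety, or equivalently of the complex hypersurface $\tilde\Sigma_p$ intersected with $\T^n$) is contained in $\{p = 0, \nabla p = 0\}$; this is a standard fact, but the nontrivial direction — that $\nabla p|_{\z = e^{i\xv}} \ne 0$ forces $\xv$ to be a smooth point of the \emph{real} variety $\Sigma_p$ — needs the $\C\cdot\C[\z]^{\rm in}_{\le\bd}$ structure again, so that the real codimension of $\Sigma_p$ at such a point is $1$ (not $2$), via the implicit function theorem applied to the single real equation. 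Conversely, if $\nabla p(e^{i\xv}) = 0$ then all first partials of $\xv \mapsto p(e^{i\xv})$ vanish, so $\xv$ is a singular point and its multiplicity $\mult(\xv) > 1$; the equivalence $\mult(\xv) > 1 \iff \nabla p(e^{i\xv}) = 0$ for square-free $p$ follows from Lemma~\ref{lem: all multiplicities agree} together with square-freeness (a multiplicity-$\ge 2$ zero of every univariate slice forces a shared factor, contradicting square-freeness unless $\nabla p$ genuinely vanishes there). Finally, the dimension bound $\dim \sing(\Sigma_p) \le n-2$: the singular locus of a square-free (hence generically reduced) complex hypersurface in $(\C^*)^n$ is a proper subvariety, of complex dimension $\le n-2$, so its real points have real dimension $\le n-2$ as well; for non-square-free $p$ one just notes $\Sigma_p = \Sigma_{p_{\rm red}}$ and applies the square-free case.

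For part (2), that every irreducible factor of a Lee-Yang polynomial is Lee-Yang is classical and follows directly from the definition: if $p = q_1 q_2$ and $p$ has no zeros in $\D^n$ nor in $(\C\setminus\overline\D)^n$, then neither does $q_1$ (any zero of $q_1$ is a zero of $p$). For irreducibility of $\Sigma_p$ when $p$ is irreducible: given $q \in \C[\z]$ not divisible by $p$, the set $\{q(e^{i\xv}) = 0\} \cap \Sigma_p$ is cut out inside the $(n-1)$-dimensional variety $\Sigma_p$ by an extra equation; since $p$ is irreducible and does not divide $q$, the complex varieties $\{p = 0\}$ and $\{q = 0\}$ share no common component in $(\C^*)^n$, so $\{p = q = 0\}$ has complex dimension $\le n-2$, whence its real points in $\T^n$ have real dimension $\le n-2$. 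The one subtlety to handle carefully is passing between "real dimension of $\tilde\Sigma_p \cap \T^n$" and "complex dimension of $\tilde\Sigma_p$" — this is where I would invoke that $\Sigma_p$ is Zariski-dense in $\tilde\Sigma_p$ (equivalently, $\tilde\Sigma_p$ is the complexification of $\Sigma_p$), which holds because a polynomial vanishing on $\Sigma_p$ must be divisible by $p_{\rm red}$; this last claim is itself essentially the square-free $n=1$ restriction statement (a trigonometric polynomial whose slices all vanish on enough points is divisible by $p$). I expect this real-to-complex dimension bookkeeping, and the repeated use of the $\C\cdot\C[\z]^{\rm in}_{\le\bd}$ structure to cut the expected real codimension from $2$ down to $1$, to be the main obstacle — everything else is standard algebraic geometry over $\C$ combined with the already-established chain-rule and multiplicity lemmas.
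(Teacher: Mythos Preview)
Your proposal is essentially correct and follows the same overall architecture as the paper's proof --- pass to the complex hypersurface $Z_p\subset(\C^*)^n$, use the chain rule (\Cref{prop:chain rule}) to transfer the singularity/multiplicity criterion between $\Sigma_p$ and $Z_p$, and invoke Zariski density of $Z_p\cap\T^n$ in $Z_p$ for part~(2). The difference is that the paper simply cites \cite{Agler06toral} for both the fact that $\dim_\R(Z_p\cap\T^n)=n-1$ and for Zariski density, whereas you attempt to derive these from the involution structure $\A_{\bd}\subset\C\cdot\C[\z]^{\rm in}_{\le\bd}$ (\Cref{thm: LYd full dimension}). That route is legitimate and more self-contained, but two points need tightening: first, $p(e^{i\xv})$ does not lie on a \emph{fixed} line --- the phase rotates with $\xv$ --- though your real conclusion survives, since $G(\xv):=e^{-i\psi}e^{-i\langle\bd,\xv\rangle/2}p(e^{i\xv})$ is genuinely real-valued and cuts out $\Sigma_p$ by a single real equation, with $\nabla G(\xv)=0$ at a zero iff $\nabla p(e^{i\xv})=0$. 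Second, your sketch of Zariski density (``slices vanish on enough points'') is too vague to stand on its own; the cleanest argument is that $\T^n$ is a maximal totally real torus in $(\C^*)^n$, so any real-analytic subset of $\T^n$ of real dimension $n-1$ has complex-analytic Zariski closure of dimension $\ge n-1$, forcing it to contain an irreducible component of $Z_p$ --- which is exactly what \cite{Agler06toral} establishes. The paper's approach buys brevity; yours buys independence from that reference, at the cost of supplying the totally-real argument yourself.
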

\begin{proof}
	Since the real and imaginary parts of $ F(\xv)= p(\exp(i\xv)) $ are real analytic, then $ \Sigma_{p} $ is a real analytic variety. As such, its singular set $ \sing(\Sigma_{p}) $ is subvariety of lower dimension. To see why $ \Sigma_{p} $ is $ n-1 $ dimensional, let $ p\in\A_{\bd}(n) $ and let $ Z_{p} $ denote its zero set in $ \C^{n} $. As seen in \cite{Agler06toral}, if $ p $ is Lee-Yang, then $ Z_{p}\cap\T^{n} $ has real dimension $ n-1 $ and therefore $ \dim(\Sigma_{p})=n-1 $ by the homeomorphism $ \xv\mapsto\exp(i\xv) $ between them. Moreover, $ Z_{p}\cap\T^{n} $ is Zariski dense in $ Z_{p} $, according to \cite{Agler06toral}, which proves Part (2). 
	
	For part (1) suppose that $ p $ is square-free, so that the singular points of $ Z_{p} $ are exactly the points in $ Z_{p} $ where $ \nabla p=0 $ (if $ p $ has square factors this criteria fails at zeros of any multiple factor), or equivalently with multiplicity $ >1 $. Due to \Cref{prop:chain rule} with $ \varphi(x)=e^{ix} $, $ \xv\in\sing(\Sigma_{p}) $ if and only if $ \z=\exp(i\xv)\in\sing(Z_{p}) $. 
\end{proof}
\subsection{The layers structure of $ \Sigma_{p} $} 
It was shown in \cite[Lemma 4.14]{AloBanBer_conj}, for Lee-Yang polynomials arising from quantum graphs, that $ \Sigma_{p} $ is the union of $ 2n $ layers, each homeomorphic to $ (0,2\pi]^{n-1} $. These special polynomials are square free and have $ \bd=(2,2,\ldots,2) $ so $ 2n=|\bd| $. In this chapter, it is shown that for any $ p\in\A_{\bd} $, $ \Sigma_{p} $ is the union of $ |\bd| $ such layers, and in the case of polynomials with square factors multiplicities should be taken into account. 
\begin{prop}[Layers structure]\label{prop: Layer structure}
	Given $ p\in\A_{\bd}(n) $, $ \Sigma_{p} $ is the union of $ |\bd| $ layers, 
	\[\Sigma_{p}=\bigcup_{j=1}^{|\bd|}\Sigma_{p,j},\]
	each layer homeomorphic to $ (0,2\pi]^{n-1} $ through the parameterization $ \varphi_{j}:(0,2\pi]^{n-1}\to\Sigma_{p,j} $, 
	\[\varphi_{j}(\yv):=(\yv,0)+\theta_{j}(\yv,0)\textbf{1}\mod{2\pi},\]
	where $ \textbf{1}=(1,1,\ldots,1) $ and $ \theta_{j}:\R^{n}\to\R $ is a continuous function. Each $ \varphi_{j} $ is real analytic on the open set $\varphi_{j}^{-1}\left(\reg(\Sigma_{p})\right)\subset(0,2\pi]^{n-1} $ which has full Lebesgue measure. The multiplicity of $ \xv $ as a zero of $ p(\exp(i\xv)) $ is equal to the number of layers  $ \Sigma_{p,j} $ containing $ \xv $. In particular,if $ p $ is square free, then 
	\[\sing(\Sigma_{p})=\bigcup_{1\le i< j\le |\bd|}\Sigma_{p,i}\cap\Sigma_{p,j}.\]
\end{prop}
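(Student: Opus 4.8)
The plan is to study the variety $\Sigma_p$ via the "sweep" parameterization suggested by the $\mathbf{1}$-direction: for each base point $\yv$ in the slice torus $(0,2\pi]^{n-1} \cong \{\xv : x_n = 0\}$, look at the univariate Lee-Yang polynomial $s\mapsto p(\yv,0)_s := p(e^{i y_1}s, \hdots, e^{iy_{n-1}}s, s)$ obtained by restricting $p$ to the line $\{(\yv,0) + t\mathbf{1}\}$ in the torus. Equivalently this is $p_{\xv}(s)$ from the earlier definition with $\xv=(\yv,0)$, a degree $|\bd|$ polynomial in $s$; by \Cref{prop:TorusToReal}(b) (or rather the Lee-Yang property applied to $\lv = \mathbf{1}$) all of its roots lie on $\T$, so we may write them as $e^{i\theta_1(\yv,0)}, \hdots, e^{i\theta_{|\bd|}(\yv,0)}$ with $0\le \theta_1 \le \cdots \le \theta_{|\bd|} < 2\pi$, counted with multiplicity. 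First I would define $\theta_j$ this way, observe that $\xv = (\yv,0) + \theta_j(\yv,0)\mathbf{1} \bmod 2\pi$ lies on $\Sigma_p$ precisely because $e^{i\theta_j}$ is a root, and conversely that \emph{every} point of $\Sigma_p$ arises this way (given $\xv\in\Sigma_p$, set $\yv = (x_1 - x_n, \hdots, x_{n-1}-x_n) \bmod 2\pi$ and $t = x_n$; then $\exp(i\xv) = \exp(i((\yv,0)+t\mathbf 1))$, so $s=e^{it}$ is a root of $p_{(\yv,0)}$, hence $t \equiv \theta_j(\yv,0)$ for some $j$). This gives $\Sigma_p = \bigcup_j \Sigma_{p,j}$ with $\Sigma_{p,j} = \varphi_j((0,2\pi]^{n-1})$, and the multiplicity claim: $\mult(\xv)$ equals the multiplicity of $e^{it}$ as a root of $p_{(\yv,0)}$ by \Cref{lem: all multiplicities agree}(c)=(d), which is exactly the number of indices $j$ with $\theta_j(\yv,0) = t \bmod 2\pi$, i.e. the number of layers through $\xv$. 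The square-free consequence $\sing(\Sigma_p) = \bigcup_{i<j}\Sigma_{p,i}\cap\Sigma_{p,j}$ then follows from \Cref{lem: dimension and singularity}(1), since for square-free $p$ a point is singular iff $\mult(\xv) > 1$ iff it lies on at least two layers.

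The real work is continuity and analyticity of $\theta_j$, i.e. that the ordered roots of $p_{(\yv,0)}$ vary continuously in $\yv$ and real-analytically away from collisions. Continuity of \emph{unordered} roots of a monic (after normalizing the leading coefficient, which is nonvanishing since $p$ has full multidegree $\bd$, so the coefficient of $\z^\bd$ is nonzero --- this should be checked, or one works projectively as in $\mathfrak S_\bd$) family of polynomials with coefficients depending continuously on a parameter is standard; because the roots stay on the compact circle $\T$ and are bounded away from nothing problematic, the ordered tuple $(\theta_1 \le \cdots \le \theta_{|\bd|})$ is continuous as well. For the real-analytic statement: at a point $\yv$ where $\theta_j(\yv,0)$ is a \emph{simple} root of $p_{(\yv,0)}$ --- equivalently $\nabla p \ne 0$ at $\exp(i\varphi_j(\yv))$, equivalently $\varphi_j(\yv) \in \reg(\Sigma_p)$ by \Cref{lem: dimension and singularity}(1) in the square-free case, and in general $\reg(\Sigma_p)$ is the locus where $\Sigma_p$ is a smooth manifold --- the implicit function theorem applied to $F(\varphi_j(\yv)) = 0$ (using that the $\mathbf 1$-directional derivative $D_{\mathbf 1}F = \frac{d}{ds}p_{(\yv,0)}(e^{is})\cdot i e^{is}$ is nonzero there) gives $\theta_j$ real-analytic locally. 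One must also argue $\varphi_j^{-1}(\reg\Sigma_p)$ has full Lebesgue measure in $(0,2\pi]^{n-1}$: the complement is contained in the preimage of $\sing(\Sigma_p)$, which has dimension $\le n-2$ by \Cref{lem: dimension and singularity}, and $\varphi_j$ is Lipschitz (being built from continuous, piecewise-analytic pieces), so it cannot blow up an $(n-2)$-dimensional set to positive $(n-1)$-measure --- alternatively, the bad set is a proper real-analytic subvariety of the domain, hence measure zero.

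The main obstacle I anticipate is precisely the bookkeeping around the ordering of roots and analyticity through collision points: the functions $\theta_j$ are globally continuous but are \emph{not} globally analytic (they can fail to be smooth exactly where roots collide, i.e. on $\bigcup_{i<j}\Sigma_{p,i}\cap\Sigma_{p,j}$), and one needs to be careful that this singular locus is genuinely where the layers meet and is small (measure zero, contained in an $(n-2)$-variety). A secondary subtlety is the periodicity wrap-around: the labeling $\theta_1 \le \cdots \le \theta_{|\bd|} < 2\pi$ has an artificial cut at $0 \equiv 2\pi$, so as $\yv$ varies the roots can rotate past this cut and the individual $\theta_j$ can jump by $2\pi$ while $\Sigma_{p,j}$ (a subset of the torus, hence insensitive to the jump) stays continuous; the cleanest fix is to define the \emph{layers} $\Sigma_{p,j}$ as the closures of the sheets of $\reg(\Sigma_p)$ over the slice, or to track roots by continuation and only then relabel, and to phrase $\varphi_j$ as a map into $\R^n/2\pi\Z^n$ throughout so that these $2\pi$ ambiguities are invisible. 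Once these points are handled, the identities for multiplicity and for $\sing(\Sigma_p)$ drop out immediately from \Cref{lem: all multiplicities agree} and \Cref{lem: dimension and singularity}.
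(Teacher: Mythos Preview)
Your proposal is correct and follows essentially the same approach as the paper: the paper also parameterizes $\Sigma_p$ by the slice $\{x_n=0\}$ swept in the $\mathbf{1}$-direction, defines the layers via the ordered roots of $p_{(\yv,0)}$, and deduces the multiplicity and singularity statements from \Cref{lem: all multiplicities agree} and \Cref{lem: dimension and singularity}. Regarding the wrap-around obstacle you flag, the paper takes exactly your second proposed fix (``track roots by continuation''): it defines $\theta_j$ not as $[0,2\pi)$-valued functions on the slice but as continuous lifts $\theta_j:\R^n\to\R$ of the root maps $s_j:\R^n\to S^1$, using that $\R^n$ is simply connected, and records their behavior under $2\pi\Z^n$-translations separately; this makes the continuity and the homeomorphism claim (with inverse $\xv\mapsto (x_1-x_n,\ldots,x_{n-1}-x_n)$) clean.
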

See Figure \ref{fig: tilt} for example of the layers structure of $ \Sigma_{p} $ for $ p $ from \Cref{ex:running}.
\begin{remark}[Square factors, overlaps, and multiplicities]
	Suppose that $ p=\prod_{j=1}^{N}q_{j}^{c_{j}} $ where $ (q_{j})_{j=1}^{N} $ are the distinct irreducible factors, each raised to the power $ c_{j}\in\N $. Define the reduced polynomial $ p^{\mathrm{red}}:=\prod_{j=1}^{N}q_{j} $, so that it is  square free and has the same zero set as $ p $, so $ \Sigma_{p^{\mathrm{red}}}=\Sigma_{p} $, but the total degree of $ p^{\mathrm{red}} $ may be smaller, in which case $ \Sigma_{p^{\mathrm{red}}} $ would have fewer layers than $ \Sigma_{p} $. This means that the layers coming from $ p $ must overlap, resulting in multiplicity. 
	Note that a given layer $\Sigma_{p,j}$ might comprise of pieces of the varieties of several different irreducible factors of $p$, each coming with their own multiplicities, which can differ. 
\end{remark}

 To prove \Cref{prop: Layer structure} the continuous \emph{phase functions} $ \theta_{j}:\R^{n}\to\R $, for $ j=1,\ldots,|\bd| $, are introduced in the next proposition.
\begin{figure}[h!]
	\includegraphics[width=0.9\textwidth]{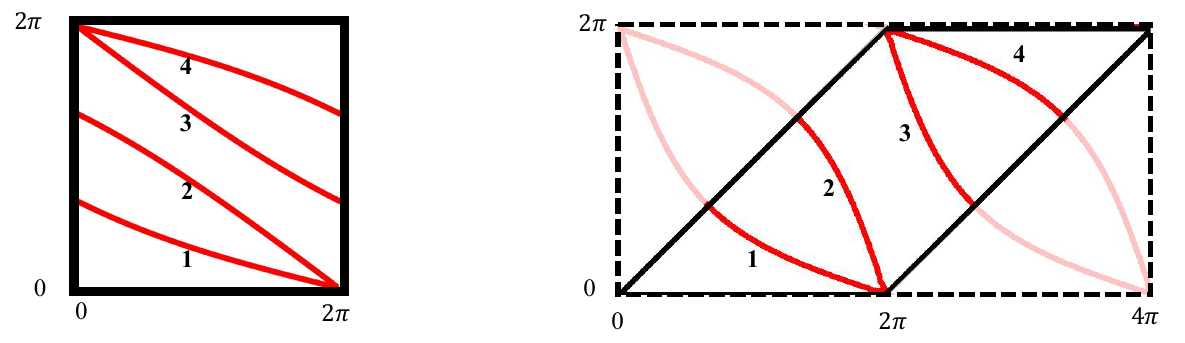}
	\caption{(right) The four layers of $ \Sigma_{p} $, presented in the tilted fundamental domain, for $ p\in\A_{(2,2)} $ given in \Cref{ex:running}. (left) The graphs of $\theta_j(y,0)$ for $y\in (0,2\pi]$.}
	\label{fig: tilt}
\end{figure}
\begin{Def}\label{def: px and m}
	Given $p = \sum_{\alpha}a_{\alpha}\z^{\alpha}\in \A_{\bd}(n)$ and $\xv \in\R^{n}$, define the univariate polynomial $ p_{{\xv}}(s)\in\C[s] $ by
	\begin{equation}\label{eq: def of pk}
		p_{{\xv}}(s)=p(se^{ix_{1}},\ldots,se^{ix_{n}} )=\sum_{j=0}^{|\bd|}\left(\sum_{|\alpha|=j}a_{\alpha}e^{i\langle \xv,\alpha\rangle}\right)s^{j} .
	\end{equation} 
	$ p_{\xv} $ has degree $ |\bd| $, with leading term $ a_{\bd}e^{i\langle \xv,\bd\rangle}s^{|\bd|} $, and all roots on the unit circle, say $ (e^{i\theta_{j}(\xv)})_{j=1}^{|\bd|} $. Let $ m(\theta_{j},\xv)$ denote the multiplicity of $ e^{i\theta_{j}(\xv)} $ as a root of $ p_{\xv} $, which agree with the multiplicity of $ \xv$ as a zero of $ p(\exp(i\xv)) $ when $ e^{i\theta_{j}(\xv)}=1 $, by \Cref{lem: all multiplicities agree}.
\end{Def}

\begin{prop}[Phase functions]\label{prop: phase functions} Given $ p\in\A_{\bd}(n) $, its \emph{phase functions} are $ |\bd| $ continuous functions $ \theta_{j}:\R^{n}\to\R $, such that $ (e^{i\theta_{1}(\xv)},\ldots,e^{i\theta_{|\bd|}(\xv)})$ are the roots of $ p_{\xv} $, ordered as follows $ \theta_{1}(\xv)\le\ldots\le\theta_{|\bd|}(\xv)\le\theta_{1}(\xv)+2\pi $, for all $ \xv\in\R^{n} $. Let $ \widehat{\Sigma}_{p} $ denote the lift of $ \Sigma_{p} $ to $ \R^{n} $, so that  
\begin{equation}\label{eq: p with thetaj}
	p(\exp(i\xv))=a_{\bd} e^{i\langle \bd, \xv\rangle} \prod_{j=1}^{|\bd|}\left(1-e^{i\theta_{j}(\xv)}\right),\ \text{  and  }\ \widehat{\Sigma}_{p}=\bigcup_{j=1}^{|\bd|}\theta_{j}^{-1}(2\pi \Z).
\end{equation}
The phase functions enjoy the following properties:
	\begin{enumerate}
		\item Each $ \theta_{j} $ satisfy $ \theta_{j}(\xv+t\textbf{1})=\theta_{j}(\xv)-t $, for all $ \xv\in\R^{n}$ and $ t\in\R $. More generally, $ \theta_{j} $ is monotonically decreasing when restricted to lines in any non-negative direction $ \lv\in\R_{\ge0}^{n} $, with upper and lower bounds on the slope $ -t\lv_{\mathrm{max}}\le\theta_{j}(\xv+t\lv)-\theta_{j}(\xv)\le-t\lv_{\mathrm{min}} $,
		 where $ \lv_{\mathrm{min}} $ and $ \lv_{\mathrm{max}} $ are the minimal and maximal entries of $ \lv $. 
		\item Each $ \theta_{j} $ is real analytic on $ \reg(\widehat{\Sigma}_{p}) $. It is also real analytic around any $ \xv\in\R^{n} $ which is not a discontinuity point of $  m(\theta_{j},\xv) $, the multiplicity of $ e^{i\theta(\xv)} $ as a root of $ p_{\xv} $. The discontinuity set of $  \xv\mapsto m(\theta_{j},\xv) $, denoted by $M_j\subset\R^{n}$, is a closed set of dimension $ \dim(M_{j})\le n-1 $, and
		$ \sing(\widehat{\Sigma}_{p})=\bigcup_{j=1}^{|\bd|}(\theta_{j}^{-1}(2\pi \Z)\cap M_{j})$.
			\item The sum of the phase functions is linear in $ \xv\in\R^{n} $
		\begin{equation*}
			\sum_{j=1}^{|\bd|}\theta_{j}(\xv)= \langle\bd,\xv\rangle+\sum_{j=1}^{|\bd|}\theta_{j}(0).
		\end{equation*}
		\item Translations by the lattice $ 2\pi\Z^{n} $ acts on the ordered tuple $ (\theta_{1},\ldots,\theta_{|\bd|}) $ by
		\[\theta(\xv+2\pi\textbf{n})\equiv\sigma^{\langle\bd,\textbf{n}\rangle}\theta(\xv)\mod{2\pi},\]
		for all $ \textbf{n}\in\Z^{n} $, where $ \sigma $ is the permutation $ (1,2,\ldots,|\bd|)\mapsto (|\bd|,1,2,\ldots,|\bd|-1)$.
		\end{enumerate}
\end{prop}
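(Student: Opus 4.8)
The plan rests on \Cref{def: px and m} together with the elementary identity $p_{\xv+t\textbf{1}}(s)=p_{\xv}(se^{it})$, equivalently $p(\exp(i(\xv+\theta\textbf{1})))=p_{\xv}(e^{i\theta})$. I would first record that $p_{\xv}$ has degree exactly $|\bd|$, leading coefficient $a_{\bd}e^{i\langle\bd,\xv\rangle}\neq0$, all $|\bd|$ roots on $\T$, and that its lifted multiset of root--phases equals $\{\theta\in\R:\xv+\theta\textbf{1}\in\widehat{\Sigma}_{p}\}$, a $2\pi$-periodic set with $|\bd|$ points per period counted with multiplicity. To build the $\theta_{j}$, note that the root multiset varies continuously in $\xv\in\R^{n}$ (continuity of roots in the coefficients); since along $\textbf{1}$ the root--phases move rigidly and strictly monotonically, every $\textbf{1}$-line meets $\widehat{\Sigma}_{p}$ in a monotone $2\pi$-periodic stack of $|\bd|$ sheets, so $\widehat{\Sigma}_{p}/2\pi\Z\textbf{1}$ carries a canonical cyclic order over the simply connected base $\R^{n}/\R\textbf{1}\cong\R^{n-1}$; this produces $|\bd|$ continuous functions, which I normalize by $\theta_{1}\le\cdots\le\theta_{|\bd|}\le\theta_{1}+2\pi$ (fixing the remaining global $2\pi\Z$-shift at one base point). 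With the $\theta_{j}$ in hand, the factorization in \eqref{eq: p with thetaj} is just $p_{\xv}(1)=a_{\bd}e^{i\langle\bd,\xv\rangle}\prod_{j}(1-e^{i\theta_{j}(\xv)})$, and $\widehat{\Sigma}_{p}=\bigcup_{j}\theta_{j}^{-1}(2\pi\Z)$ because $\xv\in\widehat{\Sigma}_{p}\iff p_{\xv}(1)=0\iff$ some $\theta_{j}(\xv)\in2\pi\Z$.

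For (1), the equality $\theta_{j}(\xv+t\textbf{1})=\theta_{j}(\xv)-t$ is immediate from $p_{\xv+t\textbf{1}}(s)=p_{\xv}(se^{it})$. For the slope bounds along $\lv\in\R_{\ge0}^{n}$ I would argue at regular points: wherever $\xv+\theta_{j}(\xv)\textbf{1}\in\reg(\widehat{\Sigma}_{p})$, \Cref{prop: positive derivatives any degree} gives $\nabla F=e^{i\beta}\mathbf{r}$ with $\mathbf{r}\in\R_{\ge0}^{n}$, $\mathbf{r}\ne\mathbf{0}$, so $\widehat{\Sigma}_{p}$ locally coincides with the smooth hypersurface $\{\re(e^{-i\beta}F)=0\}$, the $\textbf{1}$-line is transverse to it since $\mathbf{r}\cdot\textbf{1}>0$, and the analytic implicit function theorem gives $\nabla\theta_{j}=-\mathbf{r}/(\mathbf{r}\cdot\textbf{1})$; thus each $\partial_{k}\theta_{j}\in[-1,0]$ with $\sum_{k}\partial_{k}\theta_{j}=-1$, whence $\tfrac{d}{dt}\theta_{j}(\xv+t\lv)=-(\mathbf{r}\cdot\lv)/(\mathbf{r}\cdot\textbf{1})\in[-\lv_{\mathrm{max}},-\lv_{\mathrm{min}}]$. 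Since $t\mapsto p_{\xv+t\lv}$ is real analytic, $\theta_{j}$ restricted to any line is continuous and piecewise analytic, so this pointwise bound integrates to $-t\lv_{\mathrm{max}}\le\theta_{j}(\xv+t\lv)-\theta_{j}(\xv)\le-t\lv_{\mathrm{min}}$ (the non-regular pieces handled by the locally-constant-multiplicity analysis of (2) below).

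For (3), the product of the roots of $p_{\xv}$ equals $(-1)^{|\bd|}a_{\mathbf{0}}/(a_{\bd}e^{i\langle\bd,\xv\rangle})$ (ratio of constant to leading coefficient, both nonzero in a Lee--Yang polynomial), so $e^{\,i(\langle\bd,\xv\rangle+\sum_{j}\theta_{j}(\xv))}$ is constant; the continuous real function $\langle\bd,\xv\rangle+\sum_{j}\theta_{j}(\xv)$ then takes values in a discrete coset of $2\pi\Z$, hence is constant on the connected space $\R^{n}$, which is the asserted linear formula after evaluating at $\xv=\mathbf{0}$. For (4), $p_{\xv+2\pi\textbf{n}}=p_{\xv}$ shows that $\theta(\xv+2\pi\textbf{n})$ and $\theta(\xv)$ are two lifts of the same multiset on $\T$ with the normalization $\theta_{1}\le\cdots\le\theta_{|\bd|}\le\theta_{1}+2\pi$; any two such lifts differ by a power of the cyclic shift $\sigma$, and one application of $\sigma$ lowers $\sum_{j}\theta_{j}$ by $2\pi$, so $\theta(\xv+2\pi\textbf{n})=\sigma^{k(\textbf{n})}\theta(\xv)$ with $k(\textbf{n})$ independent of $\xv$ (continuity and connectedness), and matching the sums via (3) forces $k(\textbf{n})=\langle\bd,\textbf{n}\rangle$.

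For (2), $\theta_{j}$ is real analytic on $\reg(\widehat{\Sigma}_{p})$ by the analytic implicit function theorem with the transversality of the second paragraph, and wherever $m(\theta_{j},\cdot)\equiv m$ is locally constant the root $e^{i\theta_{j}(\xv')}$ is a simple root of $\partial_{s}^{m-1}p_{\xv'}$, so $\theta_{j}$ is analytic there as well. The multiplicity $m(\theta_{j},\cdot)$ is upper semicontinuous --- near $\xv_{0}$ the $m(\theta_{j},\xv_{0})$ roots of $p_{\xv'}$ close to $e^{i\theta_{j}(\xv_{0})}$ trap $e^{i\theta_{j}(\xv')}$ --- so each $\{m(\theta_{j},\cdot)\ge k\}$ is closed, hence $M_{j}$ is closed; being subanalytic and disjoint from the dense open locus of locally constant multiplicity, $\dim M_{j}\le n-1$. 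Finally, a point of $\widehat{\Sigma}_{p}$ lies in some $\theta_{j}^{-1}(2\pi\Z)$ and is singular there iff $1$ is a multiple root of $p_{\xv}$ (\Cref{lem: dimension and singularity}); were that multiplicity locally constant, the analyticity above would exhibit an $(n-1)$-dimensional piece of $\sing(\widehat{\Sigma}_{p})$, contradicting $\dim\sing(\widehat{\Sigma}_{p})\le n-2$, so such a point lies in $M_{j}$; conversely a point of $M_{j}\cap\theta_{j}^{-1}(2\pi\Z)$ has $m(\theta_{j},\xv)\ge2$ by upper semicontinuity and is therefore singular. The step I expect to be the real obstacle, as is typical for such ``secular manifold'' statements, is the first one: turning the continuously varying unordered root multiset into $|\bd|$ globally continuous functions respecting $\theta_{1}\le\cdots\le\theta_{|\bd|}\le\theta_{1}+2\pi$, the delicate point being the behavior at collision loci, which the rigid $\textbf{1}$-flow together with the bound $\dim\sing(\widehat{\Sigma}_{p})\le n-2$ is designed to control; granting the $\theta_{j}$, properties (1), (3), (4) are short and (2) is a sequence of implicit-function and semicontinuity arguments.
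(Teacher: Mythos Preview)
Your outline matches the paper's proof almost step for step: the construction via continuity of roots over the simply connected $\R^n$, the identity $p_{\xv+t\textbf{1}}(s)=p_{\xv}(se^{it})$ for (1) together with $\nabla\theta_j\in\R_{\le 0}^n$ via \Cref{prop: positive derivatives any degree}, the constant-to-leading-coefficient ratio for (3), and the implicit function theorem applied to $\partial_s^{m-1}p_{\xv}$ for analyticity in (2) are exactly what the paper does. Your argument for (4) is in fact tidier than the paper's: you observe that the ordered lifts of a fixed multiset on $\T$ form a single orbit under $\tilde\sigma:(\theta_1,\ldots,\theta_N)\mapsto(\theta_N-2\pi,\theta_1,\ldots,\theta_{N-1})$, that one application of $\tilde\sigma$ lowers $\sum_j\theta_j$ by exactly $2\pi$, and then read off the exponent directly from (3). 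The paper instead writes $\theta(\xv+2\pi e_1)=\sigma^r\theta(\xv)+2\pi\textbf{k}$, uses (3) and the monotonicity from (1) to pin down that $\textbf{k}$ has precisely $d_1$ entries equal to $-1$, and then argues that $r=d_1$ from the ordering; your route bypasses this bookkeeping.

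There are two soft spots in your treatment of (2), both fixable. First, you assert that $M_j$ is subanalytic in order to bound its dimension, but this is not justified: since $\theta_j$ is not yet known to be analytic on $M_j$, the level sets $\{m(\theta_j,\cdot)\ge k\}$ are not manifestly (sub)analytic, and a closed set with empty interior can still have full dimension. The paper avoids this by using the $\textbf{1}$-invariance $m(\theta_j,\xv+t\textbf{1})=m(\theta_j,\xv)$ to identify $\bigcup_j M_j$ with the $\textbf{1}$-cylinder $\{\xv:\xv+t\textbf{1}\in\sing(\widehat\Sigma_p)\text{ for some }t\in\R\}$, which is the projection of an analytic set of dimension $\le n-1$ and hence a closed subanalytic set of that dimension. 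Second, your equivalence ``$\xv$ singular $\Leftrightarrow$ $1$ is a multiple root of $p_{\xv}$'' invokes \Cref{lem: dimension and singularity} in the form that is only stated for square-free $p$; if $p$ has a repeated factor, a locally constant multiplicity $\ge 2$ does \emph{not} force singularity, so your contradiction with $\dim\sing\le n-2$ breaks down. The paper's substitute is to characterize $\sing(\widehat\Sigma_p)$ as the \emph{discontinuity} locus of the multiplicity function $\mult(\cdot)$ on $\widehat\Sigma_p$, which is valid for arbitrary $p$ and matches the definition of $M_j$ on the nose.
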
 
\begin{remark}\label{rem: prefixed ordering}
	The choice of such phase functions is not unique. However, given any $ \xv_{0}\in\R^{n} $ which is a zero of $ p(\exp(i\xv)) $ of multiplicity $ m<|\bd| $, there is a unique choice of phase functions as in \Cref{prop: phase functions}, such that
	\[0=\theta_{1}(\xv_{0})=\ldots=\theta_{m}(\xv_{0})<\theta_{m+1}(\xv_{0})\le\ldots\le\theta_{|\bd|}(\xv_{0})< 2\pi.\] 
\end{remark}
The proof of \Cref{prop: phase functions} includes a proof of \Cref{rem: prefixed ordering}.
\begin{proof}[Proof of \Cref{prop: phase functions}] Fix arbitrary $ \xv_{0}\in\R^{n} $ such that $ p(\exp(i\xv_{0}))=0 $ with multiplicity $ m<|\bd| $, so that $ s=1 $ is a root of $ p_{\xv_{0}}(s) $ of multiplicity $ m $, by \Cref{lem: all multiplicities agree}. Let $ (s_{j}(\xv_{0}))_{j=1}^{|\bd|} $ denote the roots of $ p_{\xv_{0}} $, so we can write $ s_{j}(\xv_0)=e^{i\theta_{j}(\xv_{0})} $ such that
	\[0=\theta_{1}(\xv_{0})=\ldots=\theta_{m}(\xv_{0})<\theta_{m+1}(\xv_{0})\le\ldots\le\theta_{|\bd|}(\xv_{0})< 2\pi.\]
	The roots of a uni-variate polynomial changes continuously with its coefficients, as a result of Rouché's Theorem. The coefficients of $ p_{\xv} $ are analytic in $ \xv\in\R^{n}$, so the roots of $ p_{\xv_{0}} $ can extend continuously to the roots 
$ (s_{j}(\xv))_{j=1}^{|\bd|} $ of $ p_{\xv} $ for any $ \xv\in\R^{n} $, since $ \R^{n} $ is simply connected, and we may do it while maintaining their counter-clockwise ordering. Each $ s_{j}:\R^{n}\to S^{1} $ can be lifted to a (unique) continuous function $ \theta_{j}:\R^{n}\to\R $ with $ \theta_{j}(\xv_{0}) $ as prescribed above. Since the roots were kept in a counterclockwise order throughout $ \R^{n} $, then the relation $ \theta_{1}\le\ldots\le\theta_{|\bd|}\le\theta_{1}+2\pi $ holds everywhere. Since the leading coefficient of $ p_{\xv} $ is $ a_{\bd} e^{i\langle \bd, \xv\rangle} $, as stated in \Cref{def: px and m}, we may write $ p_{\xv}(s)=a_{\bd} e^{i\langle \bd, \xv\rangle} \prod_{j=1}^{|\bd|}\left(s-e^{i\theta_{j}(\xv)}\right) $. In particular, 
\[p(\exp(i\xv))=p_{{\xv}}(1)=a_{\bd} e^{i\langle \bd, \xv\rangle} \prod_{j=1}^{|\bd|}\left(1-e^{i\theta_{j}(\xv)}\right).\]
Since $ \widehat{\Sigma}_{p} $ is the zero set of $ p(\exp(i\xv)) $ in $ \R^{n} $, then it is the union of $ \theta_{j}^{-1}(2\pi\Z) $.

The univariate polynomial changes along the line $ \{\xv+t\textbf{1}:t\in\R\}$, for  $ \xv\in\R^{n}$, by
\[p_{\xv+t\textbf{1}}(s):=p(se^{i(t+x_{1})},\ldots,se^{i(t+x_{n})})=p_{\xv}(se^{it}).\]
Together with the continuity and ordering of the phase functions, it gives 
\begin{equation}\label{eq: 111 shift}
	\theta_{j}(\xv+t\textbf{1})=\theta_{j}(\xv)-t.
\end{equation}

	(Proof of (2)) The function $ \xv\mapsto m(\theta_{j},\xv) $ is integer valued, so it is continuous at a point if it is constant in a neighborhood of that point. Therefore $ M_{j} $, its set of discontinuity points, is closed. Let $ p_{\xv}^{(k)}(s) $ denote the $ k $-th derivative (in $ s $) of $ p_{\xv}(s) $. Given a point $ \xv\in\R^{n}\setminus M_{j} $ with $ m=m(\theta_{j},\xv) $, every $ \xv' $ in is some small neighborhood of $ \xv $ satisfy $ p_{\xv'}^{(k)}(s_{j}(\xv'))=0 $ for all $ k<m  $ and $ p_{\xv'}^{(m)}(s_{j}(\xv'))\ne0 $. Then, $ s_{j} $ is analytic around $ \xv $, by the implicit function theorem for analytic functions, as the $ s(\xv') $ solution of $p_{\xv'}^{(m-1)}(s)=0 $ around the point $ (s,\xv')=(s_{j}(\xv),\xv) $. We conclude that $ s_{j} $ is analytic on $ \R^{n}\setminus M_{j} $, and therefore $ \theta_{j} $ is real analytic on the same domain. Since $ \theta_{j}(\xv+t\textbf{1})=\theta_{j}(\xv)-t$ holds for all $ j $ simultaneously, then $ m(\theta_{j},\xv+t\textbf{1})=m(\theta_{j},\xv) $ for all $ \xv\in\R^{n} $ and $ t\in\R $. In particular, $ m(\theta_{j},\xv') $ is locally constant around a point $ \xv\in\theta_{j}^{-1}(2\pi k)\subset\widehat{\Sigma}_{p} $, namely $ \xv\notin M_{j} $, if and only if it is constant in some neighborhood of $ \xv $ in the level set $ \theta_{j}^{-1}(2\pi k) $. Since the multiplicity $ m(\xv) $ of $ \xv $ as a zero of $p(\exp(i\xv)) $ agree with $ m(\theta_{j},\xv)$ for $ \xv\in\theta_{j}^{-1}(2\pi\Z)\subset\Sigma_{p} $, by \Cref{lem: all multiplicities agree}, and the discontinuity set of $ m(\xv) $ over $ \widehat{\Sigma}_{p} $ is exactly $\sing(\widehat{\Sigma}_{p})$, we conclude that $ \sing(\widehat{\Sigma}_{p})=\cup_{j=1}^{|\bd|}(\theta_{j}^{-1}(2\pi\Z)\cap M_{j})$.	Next we show that $ \cup_{j=1}^{|\bd|}M_{j} $ is the projection of an analytic variety of dimension $ n-1 $, from which it follows that $ \dim(M_{j})\le n-1 $ for each $ M_{j} $. By \eqref{eq: 111 shift}, as discussed above, each $ M_{j} $ is invariant under translations in direction $ \textbf{1} $. In particular, using \eqref{eq: 111 shift} again, $ \xv\in \cup_{j=1}^{|\bd|}M_{j}$ if and only if $\xv+t\textbf{1}\in\sing(\widehat{\Sigma}_{p})$ for some $ t\in\R $. According to \Cref{lem: dimension and singularity}, $ \sing(\widehat{\Sigma}_{p}) $ is an analytic variety of dimension at most $ n-2 $, so $ \{(\xv,t)\in\R^{n}\times\R:\xv+t\textbf{1}\in\sing(\widehat{\Sigma}_{p})\} $ is an analytic variety of dimension at most $ n-1 $, and $ \cup_{j=1}^{|\bd|}M_{j} $ is the projection of this variety to $ \R^{n} $ and it is closed since each $ M_{j} $ is. We conclude that $ \cup_{j=1}^{|\bd|}M_{j} $ is a closed subanalytic set with dimension at most $ n-1 $ (locally around any point), see \cite{BierMilm98} for the definitions. 
	
	(Proof of (1)) 
		We claim that $ \nabla\theta_{j}(\xv)\in\R_{\le0}^{n} $ for all $ j $ and all $ \xv\in\R^{n}\setminus\cup_{j=1}^{|\bd|}M_{j} $. To see that, let $ \xv\in\R^{n}\setminus\cup_{j=1}^{|\bd|}M_{j} $, and since $ \nabla\theta_{j}|_{\xv}=\nabla\theta_{j}|_{\xv+t\textbf{1}} $ by \eqref{eq: 111 shift}, we may assume that $ \theta_{j}(\xv)=0 $. In particular, $ \xv\in\reg(\widehat{\Sigma}_{p}) $. Note that 	$\widehat{\Sigma}_{p}$ can also be written as the zero set of 
$F^{\rm red}(\xv) = p^{\rm red}(\exp(i\xv))$ for the reduced polynomial $p^{\rm red}$ of $p$.  
Since $\xv \in {\rm reg}(\widehat{\Sigma}_{p})$, then it has multiplicity one as a zero of $ F^{\rm red}(\xv) $, and there is a well defined normal vector to $\widehat{\Sigma}_{p}$ at $\xv$, which is proportional to both 
$\nabla\theta_j(\xv)$ and $\nabla F^{\rm red}({\bf x})$.
According to \Cref{prop: positive derivatives any degree}, the nonzero 
coordinates of $\nabla F^{\rm red}({\bf x})$  have the same phase, 
and therefore the nonzero coordinates of $\nabla \theta_j(\xv)\in \R^n$ all have the same sign. Since \eqref{eq: 111 shift} gives $ \nabla\theta_{j}(\xv)\cdot \textbf{1}=-1 $,  we find that $ \nabla\theta_{j}(\xv)\in\R_{\le0}^{n} $.
				
		It follows that $ \theta_{j}(\xv_{1})\ge \theta_{j}(\xv_{2}) $ whenever $ \xv_{2}-\xv_{1}\in\R_{\ge 0}^{n} $. To see why, we may use continuity to assume that both $ \xv_{1} $ and $ \xv_{2} $ lie in the open dense set $ \R^{n}\setminus X $ for $X= \cup_{j=1}^{|\bd|}M_{j} $. Consider all possible smooth curves $ \varphi:[1,2]\to\R^{n} $ with $ \varphi(1)=\xv_{1},\varphi(2)=\xv_{2} $ and $ \varphi'(t)\in\R_{\ge 0}^{n} $ for all $ t $. For such $ \varphi $, the composition $ \theta_{j}\circ\varphi $ is continuous for all $ t $, and smooth with non-positive derivative as long as $ \varphi(t)\notin X $. Since $ X $ is a closed subanalytic set of dimension at most $ n-1 $, there exists such $ \varphi $ that either intersects $ X $ transversely in a discrete set of points, or doesn't intersect $ X $ at all, by \cite[Theorem 1.2]{BierMilm98} and dimension count. For such $ \varphi $, $ \theta(\varphi(2))\ge \theta(\varphi(1))$.
		
		Now let $ \xv\in\R^{n}, t\in\R $, and $ \lv\in\R_{\ge 0}^{n} $. Consider the three points $\xv_{1}=\xv+t\lv_{\mathrm{min}}\textbf{1},\ \xv_{2}=\xv+t\lv,$ and $\xv_{3}=\xv+t\lv_{\mathrm{max}}\textbf{1}$, so  $ \xv_{3}-\xv_{2}\in\R_{\ge 0}^{n} $ and $ \xv_{2}-\xv_{1}\in\R_{\ge 0}^{n} $, which gives 
	\[\theta_{j}(\xv+t\lv_{\mathrm{max}}\textbf{1})\le\theta_{j}(\xv+t\lv)\le\theta_{j}(\xv+t\lv_{\mathrm{min}}\textbf{1}),\]
	and therefore, using \eqref{eq: 111 shift}, 
	\[\theta_{j}(\xv)-t\lv_{\mathrm{max}}\le\theta_{j}(\xv+t\lv)\le\theta_{j}(\xv)-t\lv_{\mathrm{min}}.\]

(Proof of (3)) Recall that $ p_{\xv}(s) =a_{\bd}e^{i\langle \bd, \xv\rangle} \prod_{j=1}^{|\bd|}\left(s-e^{i\theta_j({\xv})}\right) $ and by substituting $ s=0 $ we get $ p(\textbf{0})=p_{\xv}(0) =a_{\bd} (-1)^{|\bd|}e^{i(\langle \bd, \xv\rangle+\sum_{j=1}^{|\bd|}\theta_{j}(\xv))}\ne0 $
for all $ \xv\in\R^{n} $. Since $ \langle \bd, \xv\rangle+\sum_{j=1}^{|\bd|}\theta_{j}(\xv) $ is continuous and $ e^{i(\langle \bd, \xv\rangle+\sum_{j=1}^{|\bd|}\theta_{j}(\xv))}=(-1)^{|\bd|}\frac{p(\textbf{0})}{a_{\bd}} $ is constant, then 
\[\langle \bd, \xv\rangle+\sum_{j=1}^{|\bd|}\theta_{j}(\xv)=\sum_{j=1}^{|\bd|}\theta_{j}(\textbf{0}),\ \text{  for all  }\xv\in\R^{n}.\]

(Proof of (4)) To prove that $ \theta(\xv+2\pi\textbf{n})\equiv\sigma^{\langle\bd,\textbf{n}\rangle}\theta(\xv)\mod{2\pi} $ holds for all $ \xv\in\R^{n} $ and $ \textbf{n}\in\Z^{n} $, where $ \sigma $ is the permutation $ (1,2,\ldots,|\bd|)\mapsto(|\bd|,1,2,\ldots,|\bd|-1) $ and $ \theta(\xv)=(\theta_{1}(\xv),\ldots,\theta_{|\bd|}(\xv)) $, it is enough to consider standard basis vectors, namely  $ \textbf{n}=e_{i}$. We only consider $\textbf{n}= e_{1} $ but the proof holds for every $ e_{i} $. For every $ \xv\in\R^{n} $, the polynomials $ p_{\xv} $ and $ p_{\xv+2\pi e_{1}} $ are equal by \Cref{def: px and m}, so their roots are equal as a set but may have different counterclockwise numbering, which means that $ \theta(\xv+2\pi e_{1})=\sigma^{r}\theta(\x)+2\pi \textbf{k} $ for some integer $ 0\le r\le |\bd| $ and $ \textbf{k}\in \Z^{n} $ that may a-priori depend on $ \xv $. Notice that if the roots of $ p_{\xv} $ are all simple, then $ r $ and $ \textbf{k} $ are uniquely determined, however if all the roots have multiplicity two for example, then there can be two choice $ r $ and $ r+1 $. Nevertheless, as the roots of $ p_{\xv} $ and $ p_{\xv+2\pi e_{1}} $ changes continuously in $ \xv $ in the same manner, then there is a continuous (hence constant) choice of $ r $ and $ \textbf{k} $. It is therefore enough to show that $ r=d_{1} $ for some point $ \xv_{0} $ that minimise $ \min_{j\le |\bd|}m(\theta_{j},\xv) $, and as this quantity is invariant to translations in direction $ \textbf{1} $ then we may take $ \xv_{0}\in\widehat{\Sigma}_{p} $. Let $ m=m(\xv_{0})<|\bd| $, and by \Cref{rem: prefixed ordering} we may assume that 
\[0=\theta_{1}(\xv_{0})=\ldots=\theta_{m}(\xv_{0})<\theta_{j+1}(\xv_{0})\le\ldots\le\theta_{|\bd|}(\xv_{0})<2\pi.\]
Let $ r $ and $ \textbf{k}=(k_{1},\ldots,k_{n}) $ such that $ \theta(\xv_{0}+e_{1})=\sigma^{r}\theta(\xv_{0})+2\pi \textbf{k} $. Using part (3) and the fact that the sum of $ \sigma^{r}\theta(\xv_{0}) $ and $ \theta(\xv_{0}) $ is the same, we get 
\[\sum_{j=1}^{\bd}k_{j}=\sum_{j=1}^{\bd}\theta(\xv_{0}+e_{1})-\sum_{j=1}^{\bd}\theta(\xv_{0})=-2\pi d_{1}. \]
By part (1), $ \theta_{j}(\xv_{0}+e_{1})-\theta_{j}(\xv_{0})\in[0,2\pi] $. Since $ 2\pi k_{j}=\theta_{j}(\xv_{0}+e_{1})-\theta_{j'}(\xv_{0}) $ for some $ j' $, and $ |\theta_{j'}(\xv_{0})-\theta_{j}(\xv_{0})|<2\pi $, then $ k_{j}\in\{0,-1\} $ for all $ j $. The equation for the sum  above implies that $ \textbf{k} $ has exactly $ d_{1} $ entries equal to $ -1 $ and the rest are zero.

Denote $ v:=\sigma^{r}\theta(\xv_{0}) $ so that 
\[v=(\theta_{|\bd|-r+1}(\xv_{0}),\theta_{|\bd|-r+2}(\xv_{0}),\ldots,\theta_{|\bd|}(\xv_{0}),\theta_{1}(\xv_{0}),\ldots,\theta_{|\bd|-r}(\xv_{0})),\]
then $v_{1}\le\ldots\le v_{r} $ and $ v_{r+1}\le\ldots\le v_{|\bd|} $ with $ v_{r+1}=0<v_{r}<2\pi $, while $ v+2\pi\textbf{k} $ is ordered increasingly. We conclude that $ k_{j}=-1 $ for $ j\le r $ and $ k_{j}=0 $ for $ j>r $, which means that $ r=d_{1} $. This proves part (4).

\end{proof}
We are now in position to prove \Cref{prop: Layer structure} using \Cref{prop: phase functions}.
\begin{proof}[Proof of \Cref{prop: Layer structure}]
	Consider the linear transformation $ \ \mathcal{T}(\yv,t)=(\yv,0)+t\textbf{1} $, and the quotient map $ \pi:\R^{n}\to\R^{n}/2\pi\Z^{n} $. Consider $ \Omega:=\mathcal{T}((0,2\pi]^{n}) $, which is a fundamental domain of $ 2\pi\Z^{n} $ so $\pi:\Omega\to\R^{n}/2\pi\Z^{n} $ is bijective. The map $ \yv\mapsto \mathcal{T}(\yv,\theta_{j}(\yv,0)) $ is continuous with a continuous inverse $ \xv\mapsto (x_{1}-x_{n},\ldots,x_{n-1}-x_{n})$, since $ \theta_{j} $ is continuous by \Cref{prop: phase functions}, and therefore $\varphi_{j}(\yv)= \pi(\mathcal{T}(\yv,\theta_{j}(\yv,0)))$ is a homeomorphism between $ (0,2\pi]^{n-1} $ and its image, which we denote by $ \Sigma_{p,j} $.
	
	Notice that $ \theta_{j}(\mathcal{T}(\yv,t))=\theta_{j}(\yv,0)-t $ by Part (1) of \Cref{prop: phase functions}, so 
	\begin{equation}\label{eq: varphi T}
		\theta_{j}(\mathcal{T}(\yv,t))\in 2\pi\Z\iff \varphi_{j}(\yv)=\pi(\mathcal{T}(\yv,t)).
	\end{equation}
	
	($ \Sigma_{p,j}\subset\Sigma_{p} $) Given $ \yv\in(0,2\pi]^{n-1} $ let $ t=\theta_{j}(\yv,0) $, so that $ \theta_{j}(\mathcal{T}(\yv,t))=\theta_{j}(\yv,0)-t=0 $. Therefore, $ \mathcal{T}(\yv,t)\in\widehat{\Sigma}_{p} $ which means that $ \pi(\mathcal{T}(\yv,t))=\varphi(\yv)\in\Sigma_{p} $ by \eqref{eq: varphi T}. 
	
	($ \Sigma_{p}\subset \cup_{j=1}^{|\bd|}\Sigma_{p,j} $) Consider $ \Omega:=\mathcal{T}((0,2\pi]^{n}) $, which is a fundamental domain of $ 2\pi\Z^{n} $ so $\pi:\Omega\to\R^{n}/2\pi\Z^{n} $ is bijective, and therefore any $ \xv\in\Sigma_{p}\subset\R^{n}/2\pi\Z^{n} $ has a unique point $ (\yv,t)\in(0,2\pi]^{n-1}\times(0,2\pi] $ for which $ \pi(\mathcal{T}(\yv,t))=\xv $. In such case, $ \mathcal{T}(\yv,t)\in\widehat{\Sigma}_{p} $ so $ \theta_{j}(\mathcal{T}(\yv,t))\in 2\pi\Z$ for some $ j $, and $ \varphi_{j}(\yv)=\pi(\mathcal{T}(\yv,t))=\xv $ by \eqref{eq: varphi T}.

	(Multiplicity and singularity) Let $ \xv=\pi(\mathcal{T}(\yv,t))\in \Sigma_{p} $ as above. The number of layers containing $ \xv $ is the number of $ j $'s for which $ \varphi_{j}(\yv)=\xv $, which are those $ j $'s for which $ e^{i\theta_{j}(\mathcal{T}(\yv,t))}=1$. This is exactly $ m(\xv) $, the multiplicity of $ \xv $ as a zero of $ p(\exp(i\xv)) $, by \Cref{lem: all multiplicities agree}. If $ p $ is square-free, then $ \xv\in\sing(\Sigma_{p})\iff m(\xv)>1\iff \xv\in\Sigma_{p,i}\cap\Sigma_{p,j}  $ for some $ i\ne j $, by \Cref{lem: dimension and singularity}.

	(Real analyticity) Suppose that $ \varphi_{j}(\yv_{0})=\pi(\mathcal{T}(\yv_{0},t))\in\reg(\Sigma_{p}) $, then $ \mathcal{T}(\yv_{0},t)\in\reg(\widehat{\Sigma}_{p}) $ which means that $ \theta_{j} $ is real analytic around $ \mathcal{T}(\yv_{0},t)=(\yv_{0},0)+t\textbf{1} $, according to \Cref{prop: phase functions}. Therefore, $ \theta_{j} $ is real analytic around $ (\yv_{0},0) $, due to the shift $ \theta_{j}((\yv,0)+v)=\theta_{j}(\mathcal{T}(\yv,t)+v)+t $ for all $ v\in\R^{n},\yv\in\R^{n} $. It follows that $ \yv\mapsto \mathcal{T}(\yv,\theta_{j}(\yv,0)) $, and hence $ \varphi_{j} $, are real analytic around $ \yv_{0} $. 
\end{proof}

%
%
%



\section{Zeros density - proof of Theorem \ref{thm: zeros density and upper bound}}\label{sec: proof of thm zeros density and upper bound}
If $p $ has the form $ p(\z)=\det(1-\diag(\z)U) $, then Theorem \ref{thm: zeros density and upper bound} holds for $ p $ by the proof of Weyl's Law for quantum graphs in \cite[Lemma 3.7.4]{BerKuc_graphs}. The proof for a general Lee-Yang polynomial $ p $ is similar, where the roots $ (e^{i\theta_{j}(\xv)})_{j=1}^{|\bd|} $ of the univariate polynomial $ p_{\xv}(s) $ replace the eigenvalues of $ \diag(\exp(i\xv))U $. 
\begin{proof}[Proof of Theorem \ref{thm: zeros density and upper bound}]
	Let $ p\in \A_{\bd} $ and consider the phase functions $ (\theta_{j})_{j=1}^{|\bd|} $ described in \Cref{prop: phase functions}. Given $ \lv\in\R_{+}^{n} $, a point $ x\in\R $ is a zero of $f(x)=p(\exp(ix\lv)) $ of multiplicity $ m $ if and only if exactly $ m $ of the phase functions satisfy $ \theta_{j}(x\lv)\in 2\pi\Z $, by \Cref{lem: all multiplicities agree} and \Cref{prop: phase functions}. The number of zeros of $ p(\exp{(ix\lv)}) $, counted with multiplicities, in an interval $ [a,b]\subset\R $ is therefore
	\[\mu_{p,\lv}([a,b])=  \sum_{j=1}^{|\bd|}|\set{x\in[a,b]~~:~~\theta_{j}(x\lv)\in 2\pi\Z}|.\] 
	According to \Cref{prop: phase functions} part (1), $ \theta_{j}(a\lv)>\theta_{j}(b\lv) $ for each $ j $, and the map $ x\mapsto\theta_{j}(x\lv) $ is a bijection between $ [a,b] $ and the interval $ [\theta_{j}(b\lv),\theta_{j}(a\lv)]\subset\R $ of length $ \theta_{j}(a\lv)-\theta_{j}(b\lv) $. Therefore,   
	\[|\set{x\in[a,b]:\theta_{j}(x\lv)\in 2\pi\Z}|=\left|[\theta_{j}(b\lv),\theta_{j}(a\lv)]\cap 2\pi\Z\right|=\frac{\theta_{j}(a\lv)-\theta_{j}(b\lv)}{2\pi}+\mathrm{err}_{j},\]
	with $ |\mathrm{err}_{j}|\le 1 $. Let $ \mathrm{err}:=\sum_{j=1}^{|\bd|}\mathrm{err}_{j} $, so it is bounded by $ |\mathrm{err}|\le|\bd| $, and 
\[
		\mu_{p,\lv}([a,b]) \ = \ \sum_{j=1}^{|\bd|}\frac{\theta_{j}(a\lv)-\theta_{j}(b\lv)}{2\pi} +\mathrm{err}
		 \  = \   \frac{\langle\bd,\lv\rangle}{2\pi}|b-a|+\mathrm{err}.
\]
	In the last equality we used part (3) of \Cref{prop: phase functions}. This proves part (1) of the theorem, by substituting $ [a,b]=[x,x+T] $ and $ \mathrm{err}(x,T)=\mathrm{err} $.\\
	
	For  part (2) of the theorem, let $ x_{j+1}>x_{j} $ be consecutive zeros of $ f(x) $ and consider an arbitrary interval $ I\subset(x_{j},x_{j+1}) $, so 
	\[0=\mu_{p,\lv}(I)\le \frac{\langle\bd,\lv\rangle}{2\pi}|I|+\mathrm{err}\quad\Rightarrow\quad|I|\le 2\pi\frac{|\mathrm{err}|}{\langle\bd,\lv\rangle}\le2\pi\frac{|\bd|}{\langle\bd,\lv\rangle}  ,\]
	and $ |I| $ can get arbitrarily close to $ x_{j+1}-x_{j} $.
\end{proof}

\section{Ergodic dynamics on $ \Sigma_{p} $}\label{sec: ergodicity}
To prove the existence of a gap distribution for the eigenvalues of a quantum graph, Barra and Gaspard introduced an $ \lv $-dependent ``first return'' dynamical system on $ \Sigma_{p} $, for the associated Lee-Yang polynomial $ p $, which is uniquely ergodic when $ \lv $ is $ \Q $-linearly independent \cite{BarGas_jsp00}. The same holds for any Lee-Yang $ p $, as shown in this section.

Given $ \lv\in\R_{+}^{n} $, consider the linear flow on $ \R^{n}/2\pi\Z^{n} $ induced by the constant vector field $ \lv $. That is, the flow at time $ t $ is a map $ \xv\mapsto\xv+t\lv\mod 2\pi$ from $ \R^{n}/2\pi\Z^{n} $ to itself. The minimal $ t>0 $ for which a point $ \xv\in\Sigma_{p} $ gets back to $ \Sigma_{p} $ is called the \emph{first return time} $ \tau_{\lv}(\xv) $, and $ \xv\mapsto\xv+\tau_{\lv}(\xv)\lv\mod{2\pi} $ is a map from $ \Sigma_{p} $ to itself that defines a dynamical system.
\begin{remark}
	Throughout this subsection we omit the ``$\text{mod }{2\pi} $'' when it is clear from the context.
\end{remark} 
\begin{figure}[h!]
	\includegraphics[width=0.9\textwidth]{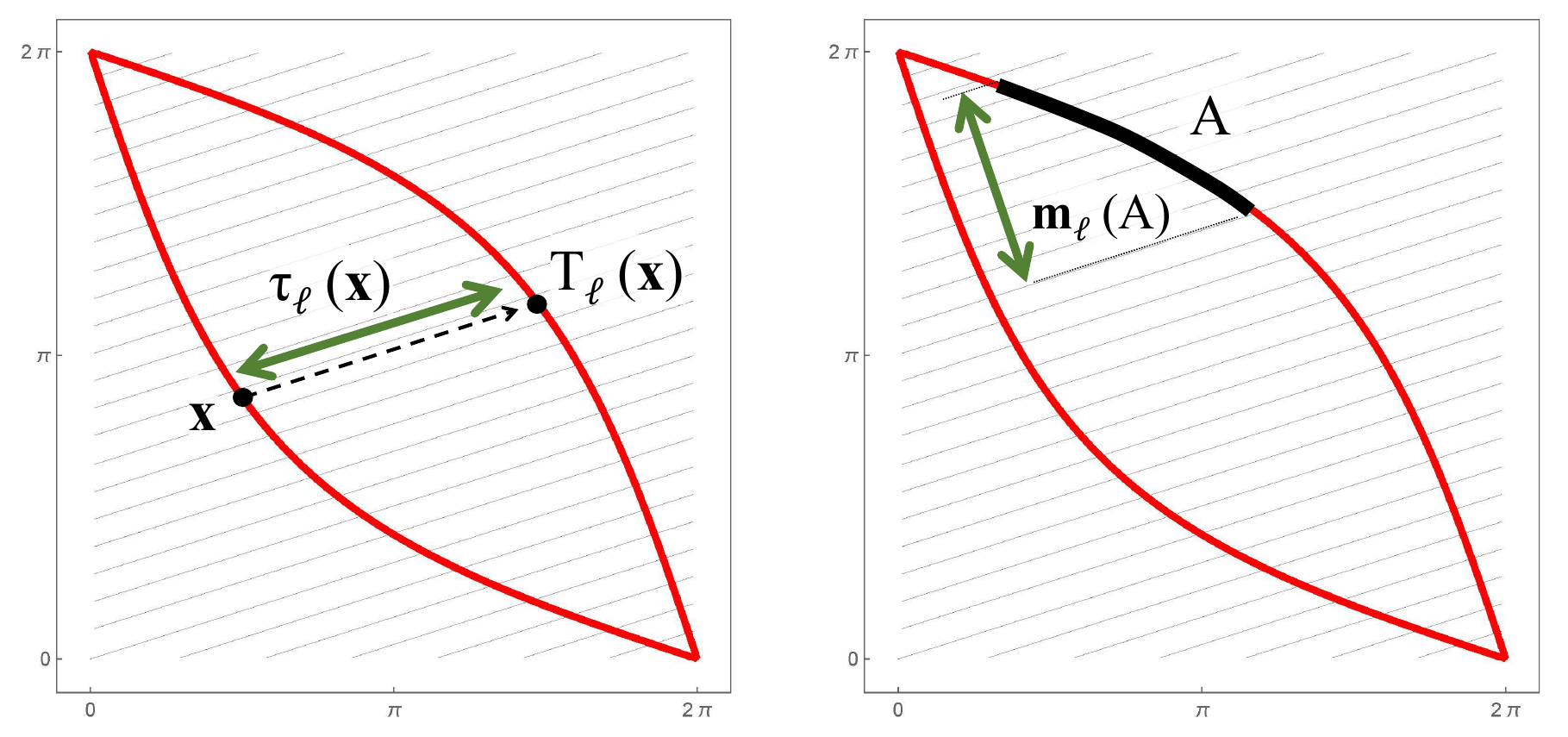}
	\caption{Illustration of $ T_{\lv},\tau_{\lv} $, and the measure $ \mm_{\lv} $, as in \Cref{def: Dynamical system}, for the Lee-Yang polynomial $ p $ from \Cref{ex:running} and $ \lv=(\pi,1) $. In the background the line $ (x,y)=t\lv\mod{2\pi} $ for $ t\in[0,44] $.}
	\label{fig: dynamics}
\end{figure}
\begin{Def}[Dynamical system on $ \Sigma_{p} $]\label{def: Dynamical system} Let $ p\in\A_{\bd}(n) $ and $ \lv\in\R_{+}^{n} $. The \emph{first-return time} $ \tau_{\lv}:\Sigma_{p}\to \R_{+} $ and the \emph{first-return map} $ T_{\lv}:\Sigma_{p}\to\Sigma_{p} $ are defined by,
	\[\tau_{\lv}(\xv):=\min\{t>0:\xv+t\lv\in\Sigma_{p}\},\ \text{ and }\ T_{\lv}(\xv):=\xv+\tau_{\lv}(\xv)\lv.\]
	The measure $ \mm_{\lv} $ is a Borel measure on $ \Sigma_{p} $ defined for any Borel subset $ A\subset\Sigma_{p} $ by
	\[\mm_{\lv}(A):=\lim_{\epsilon\to 0}\frac{\vol_{n}(A_{\epsilon\lv})}{2\epsilon},\quad\text{with}\quad A_{\epsilon\lv}:=\set{\xv+t\lv\ :\ \xv\in A,\ |t|<\epsilon },\]
	where $ \vol_{n} $ is $ n $-dimensional volume (Lebesgue measure) in $ \R^n/2\pi\Z^{n} $.   
\end{Def}

\begin{Def}
	A function $ h:\Sigma_{p}\to\C $ is called \emph{Riemann integrable} if the closer of its discontinuity set has zero volume in $ \Sigma_{p} $, with respect to the $ n-1 $ dimensional volume form induced by the $ n $-dimensional volume form on $ \R^{n}/2\pi\Z^{n} $.  
\end{Def}
Recall that if $ p $ has a decomposition into distinct irreducible factors $ p=\prod_{j=1}^{M}q^{c_{j}}_{j} $, then the reduced polynomial is $ p^{\mathrm{red}}:=\prod_{j=1}^{M}q_{j} $ and its multi-degree is denoted by $ \bd^{\mathrm{red}} $. Let $ \mult(\xv) $ denote the multiplicity of $ \exp(i\xv) $ as a zero of $p $. 
\begin{thm}[Unique Ergodicity]\label{thm: ergodicity}
	Let $ p\in\A_{\bd}(n) $, let $ \lv\in\R_{+}^{n} $ with $ \Q $-linearly independent entries, and fix an arbitrary point $ \xv_{0}\in\Sigma_{p} $. Let $ (x_{j})_{j\in\Z} $ denote the zeros of $f(x)=p(\exp(i(\xv_{0}+x\lv)))$, ordered increasingly with multiplicities, and consider $ (T_{\lv}^{j}(\xv_{0}))_{j\in\Z} $, the $ T_{\lv} $ orbit of $ \xv_{0} $.
	Then the averages of any bounded Riemann integrable $ h:\Sigma_{p}\to\C $ over the orbit $ (T_{\lv}^{j}(\xv_{0}))_{j\in\N} $, and over the sequence $ (\xv_{0}+x_{j}\lv)_{j\in\N} $, are independent of $ \xv_{0} $ and are given by
	\begin{align}\label{eq: ergodicity 1}
				\lim_{N\to\infty}\frac{1}{N}\sum_{j=1}^{N}h(T_{\lv}^j(\xv_{0})) & =\frac{1}{(2\pi)^{n-1}\langle\bd^{\mathrm{red}},\lv\rangle}\int_{\Sigma_{p}}h(\xv)d\mm_{\lv}(\xv)\\
		\lim_{N\to\infty}\frac{1}{N}\sum_{j=1}^{N}h(\xv_{0}+x_{j}\lv) & =\frac{1}{(2\pi)^{n-1}\langle\bd,\lv\rangle}\int_{\Sigma_{p}}\mult({\bf x})h(\xv)d\mm_{\lv}(\xv),\label{eq: ergodicity 2}		
	\end{align} 
where $ \mm_{\lv}(\Sigma_{p})=(2\pi)^{n-1}\langle\bd^{\mathrm{red}},\lv\rangle $ and $ \int_{\Sigma_{p}}\mult(\xv)\ d\mm_{\lv}(\xv)=(2\pi)^{n-1}\langle\bd,\lv\rangle  $. 
\end{thm}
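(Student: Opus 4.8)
The plan is to establish \Cref{thm: ergodicity} by reducing the first-return dynamical system on $\Sigma_p$ to a classical fact: the linear flow $\xv\mapsto\xv+t\lv$ on the torus $\R^n/2\pi\Z^n$ is uniquely ergodic when $\lv$ has $\Q$-linearly independent entries. The first-return map $T_\lv$ is then the Poincar\'e section map of this flow with respect to the cross-section $\Sigma_p$, and the abstract theory of first-return maps of uniquely ergodic flows (or an elementary direct argument using the layers structure) gives that $T_\lv$ is uniquely ergodic with invariant measure proportional to $\mm_\lv$. Concretely, I would first verify that $\tau_\lv$ is bounded and bounded away from zero: the upper bound $\tau_\lv(\xv)\le 2\pi\frac{|\bd^{\mathrm{red}}|}{\langle\bd^{\mathrm{red}},\lv\rangle}$ follows from \Cref{thm: zeros density and upper bound}(2) applied to $p^{\mathrm{red}}$ (whose zero set is the same $\Sigma_p$), and positivity is clear since $\Sigma_p$ is a proper analytic subvariety. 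I would also note $\tau_\lv$ is Riemann integrable: its discontinuity set is contained in the projection of $\sing(\Sigma_p)$ and the tangency locus, which has measure zero in $\Sigma_p$ by \Cref{lem: dimension and singularity}.

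Next I would set up the key geometric identity relating time averages along the flow to the section measure. For a bounded Riemann integrable $g$ on the torus, unique ergodicity of the linear flow gives $\frac1T\int_0^T g(\xv_0+t\lv)\,dt\to \frac{1}{(2\pi)^n}\int g\,d\vol_n$ for \emph{every} $\xv_0$. Applying this to functions of the form $g=\1_{A_{\epsilon\lv}}$ for $A\subset\Sigma_p$ Borel with $\mm_\lv(\partial A)=0$, and using the definition $\mm_\lv(A)=\lim_{\epsilon\to0}\vol_n(A_{\epsilon\lv})/(2\epsilon)$, I would relate the frequency with which the orbit segment $\{\xv_0+t\lv : 0\le t\le T\}$ passes through $A$ to $\frac{2\epsilon}{(2\pi)^n}\mm_\lv(A)\cdot(1+o(1))/(2\epsilon)$, i.e.\ to $\mm_\lv(A)/(2\pi)^n$ per unit time. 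Since by \Cref{thm: zeros density and upper bound} (for $p^{\mathrm{red}}$) the orbit meets $\Sigma_p$ (counted without multiplicity) at rate $\frac{\langle\bd^{\mathrm{red}},\lv\rangle}{2\pi}$ per unit time, the proportion of the first $N$ returns landing in $A$ converges to $\mm_\lv(A)/\big((2\pi)^{n-1}\langle\bd^{\mathrm{red}},\lv\rangle\big)$, which is \eqref{eq: ergodicity 1} for indicator functions; a standard approximation argument extends it to all bounded Riemann integrable $h$, and the total mass $\mm_\lv(\Sigma_p)=(2\pi)^{n-1}\langle\bd^{\mathrm{red}},\lv\rangle$ drops out by taking $h\equiv1$. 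The $\xv_0$-independence is automatic because unique ergodicity gives the same limit for every starting point.

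For \eqref{eq: ergodicity 2}, the difference is that the zeros $(x_j)$ of $f$ are counted \emph{with multiplicity}, whereas the $T_\lv$-orbit visits each point of $\Sigma_p$ once regardless of multiplicity. Using the layers structure of \Cref{prop: Layer structure}, a point $\xv\in\Sigma_p$ of multiplicity $\mult(\xv)$ lies in exactly $\mult(\xv)$ layers, so the sequence $(\xv_0+x_j\lv)_j$ visits $\xv$ with frequency $\mult(\xv)$ times that of the $T_\lv$-orbit. Equivalently, averaging $h$ over $(\xv_0+x_j\lv)$ is the same as averaging $\mult\cdot h$ over the $T_\lv$-orbit but normalized by the total number of zeros-with-multiplicity, which by \Cref{thm: zeros density and upper bound}(1) for $p$ itself grows at rate $\frac{\langle\bd,\lv\rangle}{2\pi}$. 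This yields \eqref{eq: ergodicity 2} with normalization $(2\pi)^{n-1}\langle\bd,\lv\rangle$, and taking $h\equiv1$ gives $\int_{\Sigma_p}\mult(\xv)\,d\mm_\lv(\xv)=(2\pi)^{n-1}\langle\bd,\lv\rangle$.

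The main obstacle I anticipate is making the indicator-function argument fully rigorous near the singular and tangency loci of $\Sigma_p$: there the map $t\mapsto \xv_0+t\lv$ may be tangent to $\Sigma_p$ or pass through several layers simultaneously, so the "one crossing per visit" bookkeeping and the limit $\vol_n(A_{\epsilon\lv})/(2\epsilon)\to\mm_\lv(A)$ require care. I would handle this by working on the full-measure regular part $\reg(\Sigma_p)$ (where $\Sigma_p$ is a smooth hypersurface transverse to $\lv$ by \Cref{prop: positive derivatives any degree}, since $\nabla\theta_j\in\R_{\le0}^n$ forces $\langle\nabla\theta_j,\lv\rangle<0$), showing the exceptional set contributes negligibly to both the time average (it has $\vol_n$-measure zero in the torus and is swept out transversally in a set of times of measure zero) and to the counting measures (its preimage under $x\mapsto \xv_0+x\lv$ has density zero, again by unique ergodicity applied to a Riemann-integrable indicator). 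The boundedness of $\tau_\lv$ is what guarantees the number of returns in $[0,T]$ is comparable to $T$, so the passage between "per unit time" and "per return" limits is legitimate.
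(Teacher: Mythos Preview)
Your proposal is correct and follows essentially the same route as the paper: reduce to Kronecker--Weyl on the torus via indicators of the thickened sets $A_{\epsilon\lv}$, use the definition of $\mm_\lv$ as $\lim_{\epsilon\to 0}\vol_n(A_{\epsilon\lv})/(2\epsilon)$, invoke \Cref{thm: zeros density and upper bound} (for $p$ and for $p^{\mathrm{red}}$) to obtain the normalizations, and pass from indicators to general Riemann integrable $h$ by Darboux approximation; the multiplicity-weighted version \eqref{eq: ergodicity 2} then follows from \eqref{eq: ergodicity 1} applied to $\mult\cdot h$.

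The one place where the paper is more concrete than your sketch is exactly the obstacle you flag. Rather than arguing directly on $\reg(\Sigma_p)$ and controlling tangencies, the paper takes $A$ inside a \emph{single layer} $\Sigma_{p,j}$. Then the crossing times $\{k_i:T_\lv^i(\xv_0)\in A\}$ are contained in $\{t:\theta_j(\xv_0+t\lv)\in 2\pi\Z\}$, which is uniformly discrete because $t\mapsto\theta_j(\xv_0+t\lv)$ is strictly monotone with slope in $[-\lv_{\max},-\lv_{\min}]$ (\Cref{prop: phase functions}(1)). This makes the ``one crossing per $2\epsilon$-interval'' bookkeeping exact for small $\epsilon$, with no separate analysis of singular or tangency loci; the latter are absorbed automatically since $\mm_\lv(\sing(\Sigma_p))=0$. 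Your plan would also work, but the layer-by-layer reduction is the clean way to execute it.
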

For Lee-Yang polynomials associated to quantum graphs, this is shown in \cite{BarGas_jsp00,BerWin_tams10,CdV_ahp15}. A proof for any Lee-Yang polynomial is provided for completeness.
\begin{proof}	Let $ \{x_{i}\}_{i\in\N} $ denote the positive zeros of $ f(x)=p(\exp(i(\xv_{0}+x\lv))) $ ordered \emph{with multiplicity}, and let $ x_{0}=0 $, since $ \xv_{0}\in\Sigma_{p} $. Let $ \{k_{i}\}_{i\in\N} $ denote the \emph{distinct} zeros of $ f $, ordered \emph{without multiplicity}, with $ k_{0}=0 $, so that $ T_{\lv}^{i}(\xv_{0})=\xv_{0}+k_{i}\lv $ and $ m(T_{\lv}^{i}(\xv_{0})) $ is the multiplicity of $ k_{i} $ as a zero of $ f $ for all $ i\in\N $. The first step of the proof is showing that for any bounded Riemann integrable $ h:\Sigma_{p}\to\C $,
	\begin{equation}\label{eq: ergodic average of h}
	\lim_{R\to\infty}\frac{1}{R}\sum_{k_{i}\le R}h(T_{\lv}^{i}(\xv_{0}))=\frac{1}{(2\pi)^{n}}\int_{\Sigma_{p}}hd\mm_{\lv}.
\end{equation}
	Consider a layer $ \Sigma_{p,j} $ as in \Cref{prop: Layer structure} and let $ h=\chi_{A} $ be the indicator function of a Borel set $A\subset \Sigma_{p,j} $ with boundary of zero volume in $ \Sigma_{p} $. The set $ A_{\epsilon\lv}=\set{\xv+t\lv:(\xv,t)\in A\times[-\epsilon,\epsilon]}  $ is then a Borel set with boundary of zero volume in $ \R^{n}/2\pi\Z^{n} $. Since $ \lv $ has $ \Q $-linearly independent entries, the Kronecker-Weyl Theorem gives 
	\begin{equation}\label{eq: Kronecker}
		\frac{\vol_{n}(A_{\epsilon\lv})}{(2\pi)^{n}}=\lim_{R\to\infty}\frac{\mathrm{length}\left(\{t\in[0,R]:\xv_{0}+t\lv\in A_{\epsilon\lv}\}\right)}{R}.
	\end{equation}
Let $ \mathcal{A}=\{k_{i}:T_{\lv}^{i}(\xv_{0})\in A\}\subset\R $, so that $ \theta_{j}(\xv_{0}+k_{i}\lv)\in 2\pi\Z $ for all $ k_{i}\in\mathcal{A} $ since $ A\subset\Sigma_{p,j} $. The function $ t\mapsto\theta_{j}(\xv_{0}+t\lv) $ is strictly monotone with uniform upper and lower bounds on its slope, by \Cref{prop: phase functions} part (1), so $ \mathcal{A}$ is uniformly discrete, and therefore, for small enough $ \epsilon>0 $, the $ 2\epsilon $-intervals $ [k_{i}-\epsilon,k_{i}+\epsilon] $ for $ k_{i}\in\mathcal{A} $ are mutually disjoint. The set $ \{t\in[0,R]:\xv_{0}+t\lv\in A_{\epsilon\lv}\} $ is the intersection of these disjoint $ 2\epsilon $-intervals with $ [0,R] $, so up to an error of $ 2\epsilon $, its lengths is $ 2\epsilon \left|\mathcal{A}\cap[0,R]\right|=2\epsilon\sum_{k_{i}\le R}h(T_{\lv}^{i}(\xv_{0})) $. Substituting this estimate into \eqref{eq: Kronecker} gives
	\begin{equation}\label{eq: estimate O eps}
		\frac{\vol_{n}(A_{\epsilon\lv})}{(2\pi)^{n}}=\lim_{R\to\infty}\left(2\epsilon\frac{1}{R}\sum_{k_{i}\le R}h(T_{\lv}^{i}(\xv_{0}))+\frac{1}{R}O(\epsilon)\right)=2\epsilon\lim_{R\to\infty}\frac{1}{R}\sum_{k_{i}\le R}h(T_{\lv}^{i}(\xv_{0})). 
	\end{equation}
Dividing both sides by $ 2\epsilon $ and taking $ \epsilon\to 0 $ proves \eqref{eq: ergodic average of h} for the indicator function $ h=\chi_{A} $. Both sides of \eqref{eq: ergodic average of h} are linear in $ h $, so it holds for any step function $ \sum_{j=1}^{N}c_{j}\chi_{A_{j}} $ such that the sets $ A_{j}\subset\Sigma_{p} $ are Borel with boundary of zero volume in $ \Sigma_{p} $. Such functions can approximate (in the sup-norm) any non-negative bounded Riemann integrable function from below and above to any given precision, by taking the upper and lower Darboux sums as they converge to the Riemann integral of $ h $. We conclude that \eqref{eq: ergodic average of h} holds for any bounded Riemann integrable function $ h:\Sigma_{p}\to\C $, as it can be written as $ h=h_{1}-h_{2}+i(h_{3}-h_{4}) $ such that each $ h_{j} $ is real non-negative, bounded, and Riemann integrable and hence can be approximated by step functions for which \eqref{eq: ergodic average of h} holds.

The second step is calculating $ \mm_{\lv}(\Sigma_{p}) $ and $ \int_{\Sigma_{p}}m(\xv)d\mm_{\lv}(\xv) $. The sum of multiplicities of distinct zeros up to $ T $ is the number of repeated zeros up to $ R $, $ \sum_{k_{i}\le R}m(T_{\lv}^{i}(\xv_{0}))=|\{x_{i}<R\}| $ which equals to $ \frac{\langle\bd,\lv\rangle}{2\pi}R+O(1) $, by \Cref{thm: zeros density and upper bound}, and applying \eqref{eq: ergodic average of h} to $ h=m $ gives
\[\frac{1}{(2\pi)^{n}}\int_{\Sigma_{p}}m(\xv)d\mm_{\lv}(\xv)=\lim_{R\to\infty}\frac{1}{R}\sum_{k_{i}\le T}m(T_{\lv}^{i}(\xv_{0}))=\lim_{R\to\infty}\frac{|\{x_{i}<R\}|}{R}=\frac{\langle\bd,\lv\rangle}{2\pi}.\]
It follows that $ \int_{\Sigma_{p}}m(\xv)d\mm_{\lv}(\xv)=(2\pi)^{n-1}\langle\bd,\lv\rangle $, and by replacing $ p $ with $ p^{\mathrm{red}} $ we get that $ \mm_{\lv}(\Sigma_{p})=\int_{\Sigma_{p}}d\mm_{\lv}=(2\pi)^{n-1}\langle\bd^{\mathrm{red}},\lv\rangle $. To see why notice that the torus zero set of $ p^{\mathrm{red}} $ is equal to $ \Sigma_{p} $, with the same measure $ \mm_{\lv} $, but with multiplicity functions which is one for every $ \xv\in\reg(\Sigma_{p}) $. The complement has $ \mm_{\lv}(\sing(\Sigma_{p}))=0  $, since $ \dim(A)\le n-2 $ for $ A=\sing(\Sigma_{p}) $, which means that $ \dim(A_{\epsilon\lv})\le n-1 $ and so $ \vol_{n}(A_{\epsilon\lv})=0 $.

To prove \eqref{eq: ergodicity 1} apply \eqref{eq: ergodic average of h} twice and divide the two limits
\begin{equation*}
	\frac{\int_{\Sigma_{p}}hd\mm_{\lv}}{\mm_{\lv}(\Sigma_{p})}=\lim_{R\to\infty}\frac{ \sum_{k_{i}\le R}h(T_{\lv}^{i}(\xv_{0}))}{\left|\{k_{i}\le R\}\right|}=\lim_{N\to\infty}\frac{\sum_{i=1}^{N}h(T_{\lv}^{i}(\xv_{0}))}{N}.
\end{equation*}
Since $ \sum_{x_{i}\le R}h(\xv_{0}+x_{i}\lv)=\sum_{k_{i}\le R}m(T_{\lv}^{i}(\xv_{0}))h(T_{\lv}^{i}(\xv_{0}))  $, the same argument gives
\begin{equation*}
	\frac{\int_{\Sigma_{p}}m(\xv)h(\xv)d\mm_{\lv}(\xv)}{\int_{\Sigma_{p}}m(\xv)d\mm_{\lv}(\xv)}=\lim_{R\to\infty}\frac{ \sum_{x_{i}\le R}h(\xv_{0}+x_{i}\lv)}{\left|\{x_{i}\le R\}\right|}=\lim_{N\to\infty}\frac{\sum_{i=1}^{N}h(\xv_{0}+x_{i}\lv)}{N}.
\end{equation*}
\end{proof}
\subsection{Properties of $ \tau_{\lv} $ and $ \mm_{\lv} $}
The gap distributions in \Cref{sec: gap distributions} are defined in terms of $ \tau_{\lv} $ and $ \mm_{\lv} $. The needed properties of $ \tau_{\lv} $ and $ \mm_{\lv} $ are stated in the next two lemmas. 

In what follows, consider $ \reg(\Sigma_{p}) $ as a smooth Riemannian manifold with volume form $ d\sigma $, induced by $ d\vol_{n} $ in $ \R^{n}/2\pi\Z^{n} $, and the normal vector field $ \hat{n} $ with $ \hat{n}(\xv)\in\R_{\ge 0}^{n} $ for all $ \xv\in\reg(\Sigma_{p}) $, as guaranteed by \Cref{prop: positive derivatives any degree}. The  $ n-1 $ form with $ dx_{j} $ missing is denoted by $ dx_{1}\wedge dx_{2}\wedge\ldots\wedge \widehat{dx}_{j}\wedge\ldots\wedge dx_{n} $.

\begin{lem}\label{lem: BG measure} The measure $\mm_{\lv} $ is absolutely continuous with respect to $ d\sigma $, the volume form on $ \reg(\Sigma_{p}) $, with a strictly positive distribution
	\begin{equation}\label{eq: dm}
		d\mm_{\lv}=\langle\hat{n},\lv\rangle d\sigma= \sum_{j=1}^{n}\lv_{j}(-1)^{j+1}dx_{1}\wedge dx_{2}\wedge\ldots\wedge \widehat{dx}_{j}\wedge\ldots\wedge dx_{n}.
	\end{equation}
	For each layer $ \Sigma_{p,j} $, with parameterization $ \varphi_{j}:(0,2\pi]^{n-1}\to \Sigma_{p,j} $ as in \Cref{prop: Layer structure}, and for every measurable $ h:\Sigma_{p,j}\to\C $,
	\begin{equation}\label{eq: dm2}
		\int_{\Sigma_{p,j}}h(\xv) d\mm_{\lv}(\xv)=-\int_{(0,2\pi]^{n-1}}h(\varphi_{j}(\yv))\langle \nabla \theta_{j}(\yv,0),\lv\rangle d\yv,
	\end{equation}
	and in particular, for $ \lv=\textbf{1} $, 
	\begin{equation}\label{eq: dm3}
		\int_{\Sigma_{p,j}}h(\xv) d\mm_{\textbf{1}}(\xv)=\int_{(0,2\pi]^{n-1}}h(\varphi_{j}(\yv))d\yv.
	\end{equation}
\end{lem}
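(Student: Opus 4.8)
The plan is to compute $\mm_{\lv}$ from its defining limit $\mm_{\lv}(A)=\lim_{\epsilon\to0}\vol_n(A_{\epsilon\lv})/(2\epsilon)$ on a regular piece of $\Sigma_p$. First I would work locally on $\reg(\Sigma_p)$: near a regular point the zero set is a smooth hypersurface with unit normal $\hat n(\xv)$, which by \Cref{prop: positive derivatives any degree} can be chosen with all entries in $\R_{\ge 0}^n$. For a small Borel set $A\subset\reg(\Sigma_p)$, the ``tube'' $A_{\epsilon\lv}=\{\xv+t\lv: \xv\in A,\ |t|<\epsilon\}$ is, to first order in $\epsilon$, a skew cylinder over $A$ with generator $\lv$; its $n$-dimensional volume is $\vol_n(A_{\epsilon\lv}) = 2\epsilon\,\langle\hat n,\lv\rangle\,\sigma(A) + o(\epsilon)$, since the volume of a slab of slant height $2\epsilon$ over a hypersurface patch of area $\sigma(A)$ with normal $\hat n$ and generator $\lv$ is $2\epsilon$ times the projected ``thickness'' $\langle\hat n,\lv\rangle$. (Here one uses that $\langle\hat n,\lv\rangle>0$ strictly because $\hat n\in\R_{\ge0}^n\setminus\{0\}$ and $\lv\in\R_+^n$, so the flow is everywhere transverse to $\Sigma_p$ and the map $(\xv,t)\mapsto\xv+t\lv$ is a local diffeomorphism for small $\epsilon$.) Dividing by $2\epsilon$ and letting $\epsilon\to0$ gives $d\mm_{\lv}=\langle\hat n,\lv\rangle\,d\sigma$, which is absolutely continuous w.r.t.\ $d\sigma$ with strictly positive density, and extends to all of $\Sigma_p$ since $\sing(\Sigma_p)$ has dimension $\le n-2$ (hence $d\sigma$-measure zero by \Cref{lem: dimension and singularity}).

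Next I would rewrite this in the form $d\mm_{\lv}=\sum_{j=1}^n \lv_j(-1)^{j+1}dx_1\wedge\cdots\wedge\widehat{dx_j}\wedge\cdots\wedge dx_n$. This is the standard identity: if $\omega = \sum_j (-1)^{j+1}v_j\, dx_1\wedge\cdots\wedge\widehat{dx_j}\wedge\cdots\wedge dx_n$ is the contraction of the Euclidean volume form $dx_1\wedge\cdots\wedge dx_n$ with the constant vector field $\lv$, then its pullback to an oriented hypersurface with unit normal $\hat n$ equals $\langle\hat n,\lv\rangle\,d\sigma$. (This is exactly the divergence-theorem flux form.) So the two displayed expressions for $d\mm_{\lv}$ agree, giving \eqref{eq: dm}.

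For \eqref{eq: dm2} I would pull back along the chart $\varphi_j:(0,2\pi]^{n-1}\to\Sigma_{p,j}$ from \Cref{prop: Layer structure}, $\varphi_j(\yv)=(\yv,0)+\theta_j(\yv,0)\textbf 1 \bmod 2\pi$. On $\varphi_j^{-1}(\reg\Sigma_p)$, which has full measure, $\theta_j$ is real analytic and $\nabla\theta_j$ is a (nonzero, non-positive by \Cref{prop: phase functions}(1)) normal covector to $\widehat\Sigma_p$; since $\langle\nabla\theta_j,\textbf 1\rangle=-1$, the outward unit normal with non-negative entries is $\hat n = -\nabla\theta_j/\|\nabla\theta_j\|$. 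A direct computation of the pullback $\varphi_j^*\!\left(\sum_j\lv_j(-1)^{j+1}dx_1\wedge\cdots\wedge\widehat{dx_j}\wedge\cdots\wedge dx_n\right)$: writing $\varphi_j$ as $x_i = y_i + \theta_j(\yv,0)$ for $i<n$ and $x_n = \theta_j(\yv,0)$, one finds the pullback equals $-\langle\nabla\theta_j(\yv,0),\lv\rangle\,dy_1\wedge\cdots\wedge dy_{n-1}$ (the sign works out because $\langle\nabla\theta_j,\textbf 1\rangle=-1$ makes several Jacobian-minor contributions telescope; this is the one genuinely computational step). Integrating gives \eqref{eq: dm2}, and specializing to $\lv=\textbf 1$ with $\langle\nabla\theta_j,\textbf 1\rangle=-1$ immediately yields \eqref{eq: dm3}.

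The main obstacle is the bookkeeping in the pullback computation for \eqref{eq: dm2}: keeping track of signs of the $(n-1)\times(n-1)$ minors of the Jacobian of $\varphi_j$ and seeing that they assemble into $-\langle\nabla\theta_j,\lv\rangle$. Everything else (the first-order tube-volume estimate, the flux-form identity, the measure-zero of the singular locus) is either standard differential geometry or already available from the cited results; the slant-cylinder volume estimate should be stated carefully but is routine once transversality $\langle\hat n,\lv\rangle>0$ is in hand.
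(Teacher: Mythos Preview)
Your proposal is correct and follows essentially the same approach as the paper: the slant-cylinder tube estimate for $d\mm_{\lv}=\langle\hat n,\lv\rangle\,d\sigma$, the flux-form identification, and the pullback via $\varphi_j$ are all exactly what the paper does. For the Jacobian bookkeeping you flag as the main obstacle, the paper organizes it by augmenting the $n\times(n-1)$ derivative matrix $D\varphi_j$ with $\lv$ as a final column and computing the resulting $n\times n$ determinant via a row reduction and Schur complement, which yields $-\langle\nabla\theta_j(\yv,0),\lv\rangle$ cleanly.
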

\begin{proof}[Proof of \Cref{lem: BG measure}] It was shown in the proof of \Cref{thm: ergodicity}
	that $ \mm_{\lv}(\sing(\Sigma_{p}))=0 $, so $ \mm_{\lv} $ is supported on $ \reg(\Sigma_{p}) $.  To show \eqref{eq: dm} it is enough to consider a small open set $ A\subset\reg(\Sigma_{p}) $. If $ A $ is sufficiently small, for $ \epsilon>0 $ sufficiently small, we can choose local coordinates $ \kappa=(\kappa_{1},...,\kappa_{n-1}) $ such that $ d\kappa=d\sigma $, which extend to local coordinates in a neighborhood of $ A_{\epsilon\lv} $ by adding a coordinate $ t $ in the normal direction $ \hat{n} $. The fact $ d\sigma $ is induced from $ d\vol_{n} $ gives $ d\vol_{n}=d\kappa dt $. Therefore, $ \vol_{n}(A_{\epsilon\lv})=\int_{A}|\langle\hat{n}(\kappa),2\epsilon\lv\rangle|d\kappa=2\epsilon\int_{A}\langle\hat{n}(\kappa),\lv\rangle d\kappa>0 $, using that $ \lv\in\R_{+}^{n} $ and $ \hat{n}(\kappa)\in\R_{\ge 0}^{n} $ for all $ \kappa\in A $. We conclude that 
	\[\mm_{\lv}(A):=\lim_{\epsilon\to 0}\frac{\vol_{n}(A_{\epsilon\lv})}{2\epsilon}=\int_{A}\langle\hat{n},\lv\rangle d\sigma.\]
	By definition, the form $ \langle\hat{n},\lv\rangle d\sigma $ agree with the $ n-1 $ form
	\[\omega=\sum_{j=1}^{n}\lv_{j}(-1)^{j+1}dx_{1}\wedge dx_{2}\wedge\ldots\wedge \widehat{dx}_{j}\wedge\ldots\wedge dx_{n},\] when restricted to $ \reg(\Sigma_{p}) $. 
	
	We are left with deducing \eqref{eq: dm2} from \eqref{eq: dm} by simple change of variables. Let $ \yv\in(0,2\pi]^{n-1} $ such that $ \varphi_{j}(\yv)\in\reg(\Sigma_{p})$, let $D=D\varphi_{j}|_{\yv}$ be the $n\times (n-1)$ matrix of derivatives whose $(s,i)$th entry is $ \frac{\partial (\varphi_{j})_{s} }{\partial y_{i}}|_{\yv} $. Then the change of variables formula for $ \xv=\varphi_{j}(\yv) $ is
	\[\sum_{k=1}^{n}\lv_{k}(-1)^{k+1}dx_{1}\wedge dx_{2}\wedge\ldots\wedge \widehat{dx}_{k}\wedge\ldots\wedge dx_{n}=\sum_{j=k}^{n}\lv_{k}(-1)^{k+1}D_{k}d\yv,\]
	where $ D_{j} $ denotes the $(n-1)\times (n-1) $ minor of $ D $ obtained by removing the $ j $-th row. Adding $ \lv $ as a column vector gives an $ n\times n $ matrix $ M=\begin{pmatrix}
		D & \lv
	\end{pmatrix} $, whose determinant is exactly $ \det(M)=\sum_{k=1}^{n}\lv_{k}(-1)^{k+1}D_{k} $, by expanding according to the column $ \lv $. We need to show that $ \det(M)=-\langle\nabla\theta_{j}(\yv,0),\lv\rangle $. Let $v =(\frac{\partial \theta_{j}(\yv,0)}{\partial y_{1}},\ldots,\frac{\partial \theta_{j}(\yv,0)}{\partial y_{n-1}})\in\R^{n-1} $, so that the entries of $ D $ are $ D_{s,i}=v_{i} $ if $ i\ne s $, and $ D_{i,i}=v_{i}+1 $, since $ \varphi_{j}(\yv)= (\yv,0)+\theta_{j}(\yv,0)\textbf{1} $. Subtracting the last row of $ M $ from all other gives the matrix
\[\tilde{M}=\begin{pmatrix}
	\mathrm{id}_{n-1} & \tilde{\lv}\\
	v & \lv_{n}
\end{pmatrix},\ \text{  with  }\  \tilde{\lv}=(\lv_{1}-\lv_{n},\lv_{2}-\lv_{n},\ldots,\lv_{n-1}-\lv_{n}),\] so that $ \det(M)=\det(\tilde{M})=\lv_{n}-\langle v,\tilde{\lv}\rangle, $
using Schur complement in the last equality. Notice that $ \langle v,\tilde{\lv}\rangle=\langle \nabla\theta_{j}(\yv,0),\lv-\lv_{n}\textbf{1}\rangle=\langle \nabla\theta_{j}(\yv,0),\lv\rangle-\lv_{n}$ since $ \langle \nabla\theta_{j}(\yv,0),\textbf{1}\rangle=-1 $ by \Cref{prop: phase functions} part (1). We conclude that $ \det(M)=-\langle \nabla\theta_{j}(\yv,0),\lv\rangle $ which proves \eqref{eq: dm2}, and \eqref{eq: dm3} follows from $ \langle \nabla\theta_{j}(\yv,0),\textbf{1}\rangle=-1 $ again.
\end{proof}

\begin{lem}\label{lem: tau}
	Let $ p\in\A_{\bd}(n)$, $  \lv\in\R_{+}^{n}  $, and let $ \{x_{j}\}_{j\in\Z} $ denote the ordered zeros of $ f(x)=p(\exp(ix\lv)) $. Then $ \tau_{\lv}(x_{j}\lv)=x_{j+1}-x_{j} $ whenever $ x_{j+1}>x_{j} $, where the function $ \tau_{\lv}:\Sigma_{p}\to\R_{+} $, introduced in \Cref{def: Dynamical system}, is bounded by $ \frac{2\pi |\bd|}{\langle\bd,\lv\rangle} $ and satisfies  
		\begin{align*}
			\{\tau_{\lv}(\xv):\xv\in\sing(\Sigma_{p})\} & \subset \overline{\{\tau_{\lv}(\xv):\xv\in\reg(\Sigma_{p})\} },\ \text{and}\\
		\mathrm{inf}\{\tau_{\lv}(\xv):\xv\in\reg(\Sigma_{p})\}& =0  \iff \sing(\Sigma_{p})\neq\emptyset.
	\end{align*}
Furthermore, the map $(\xv,\lv)\mapsto  \tau_{\lv}(\xv,\lv) $ is continuous on $ \reg(\Sigma_{p})\times\R_{+}^{n} $, and real analytic on   $\{(\xv,\lv)\in\reg(\Sigma_{p})\times\R_{+}^{n}:T_{\lv}(\xv)\in\reg(\Sigma_{p})\} $, which is an open subset of $ \reg(\Sigma_{p})\times\R_{+}^{n} $.	 
\end{lem}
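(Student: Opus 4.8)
The plan is to establish the five assertions in turn, using as the main tool the \emph{phase functions} $\theta_1,\dots,\theta_{|\bd|}$ of $p$ from \Cref{prop: phase functions}. The key dictionary is: for $\xv\in\Sigma_p$ with lift $\hat\xv\in\widehat\Sigma_p$, the $\lv$‑flow line returns to $\Sigma_p$ exactly when some $\theta_j(\hat\xv+t\lv)$ re‑enters $2\pi\Z$; since each $t\mapsto\theta_j(\hat\xv+t\lv)$ is strictly decreasing with slope in $[-\lv_{\max},-\lv_{\min}]$, there is a well‑defined \emph{crossing time} $c_j(\xv)>0$, namely the time needed to decrease $\theta_j$ to the largest element of $2\pi\Z$ that is strictly below $\theta_j(\hat\xv)$, and since $\widehat\Sigma_p=\bigcup_j\theta_j^{-1}(2\pi\Z)$ we get $\tau_{\lv}(\xv)=\min_{1\le j\le|\bd|}c_j(\xv)$. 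Assertion~1 is immediate from $\tau_{\lv}(x_j\lv)=\min\{t>0:f(x_j+t)=0\}$. For Assertion~2, fix $\xv\in\Sigma_p$, put $q(\z)=p(\exp(i\hat\xv)\z)$, which lies in $\A_{\bd}$ by \Cref{thm: p dependence}(1) and satisfies $q(\exp(i\cdot 0\cdot\lv))=p(\exp(i\hat\xv))=0$; then $\tau_{\lv}(\xv)$ is the first positive zero of $q(\exp(ix\lv))$, hence the gap between $0$ and the next atom of $\mu_{q,\lv}$, which is at most $2\pi|\bd|/\langle\bd,\lv\rangle$ by \Cref{thm: zeros density and upper bound}(2).

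For Assertions 3 and 4 the crucial input is that $\sing(\Sigma_p)=\sing(\Sigma_{p^{\mathrm{red}}})$, so at any $\xv^*\in\sing(\Sigma_p)$ the multiplicity $\mult(\xv^*)$ (w.r.t.\ $p$) is at least the multiplicity $m^*$ of $\xv^*$ as a zero of $p^{\mathrm{red}}(\exp(i\cdot))$, and $m^*\ge 2$ by \Cref{lem: dimension and singularity}(1) applied to the square‑free $p^{\mathrm{red}}$. For Assertion~4, ``$\Leftarrow$'': pick two phase functions $\vartheta_1\le\vartheta_2$ of $p^{\mathrm{red}}$ vanishing at $\hat\xv^*$ (two layers of $\Sigma_{p^{\mathrm{red}}}$ through $\xv^*$), and let regular points $\xv$ approach $\xv^*$ along the layer $\vartheta_1^{-1}(0)$ (regular points are dense there by \Cref{prop: Layer structure}). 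Then $\vartheta_2(\hat\xv)\ge\vartheta_1(\hat\xv)=0$, with $\vartheta_2(\hat\xv)>0$ (equality would put $\xv$ on two layers of the square‑free $p^{\mathrm{red}}$, hence in $\sing$) and $\vartheta_2(\hat\xv)\to0$, so $\tau_{\lv}(\xv)\le\vartheta_2(\hat\xv)/\lv_{\min}\to0$; thus $\inf_{\reg(\Sigma_p)}\tau_{\lv}=0$. Conversely, ``$\Rightarrow$'': if $\sing(\Sigma_p)=\emptyset$ then $\reg(\Sigma_p)=\Sigma_p$ is a compact manifold, $\tau_{\lv}$ is continuous on it (established below) and everywhere positive (the zeros of any $f(x)=p(\exp(i(\hat\xv+x\lv)))$ are isolated), so $\inf\tau_{\lv}>0$.

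For Assertion~3 the subtlety — and the main obstacle — is to approach $\xv^*$ along the \emph{top} layer rather than a lower one. Set $m=\mult(\xv^*)\ge 2$ and choose, via \Cref{rem: prefixed ordering}, phase functions of $p$ with $0=\theta_1(\hat\xv^*)=\dots=\theta_m(\hat\xv^*)<\theta_{m+1}(\hat\xv^*)\le\dots<2\pi$, and work with the layer $a:=m$ (parametrized by $\varphi_m$, with regular points dense). Using the lift normalized so $\theta_a(\hat\xv)=0$, for regular $\xv$ near $\xv^*$ on this layer the ordering forces $\theta_j(\hat\xv)\le\theta_a(\hat\xv)=0$ for $j\le a$, while $\theta_j(\hat\xv)\in(0,2\pi)$ for $j>a$ by continuity. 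Consequently every $c_j$ is \emph{continuous} at $\xv^*$ along this layer: for $j\le a$, $c_j(\xv)$ is the time to decrease $\theta_j$ down to $-2\pi$, which varies continuously as $\theta_j(\hat\xv)\in(-\epsilon,0]\to0$; for $j>a$ it is the time to decrease $\theta_j$ to $0$, again continuous. (It is exactly because for $j<a$ one approaches $2\pi\Z$ from below, not from above, that no crossing time collapses — the collapse exploited in Assertion~4 occurs only on lower layers, where some $\theta_j\to0^+$.) Hence $\tau_{\lv}=\min_j c_j$ is continuous at $\xv^*$ along the top layer, and choosing regular $\xv_k\to\xv^*$ on it gives $\tau_{\lv}(\xv_k)\to\tau_{\lv}(\xv^*)$, so $\tau_{\lv}(\xv^*)\in\overline{\{\tau_{\lv}(\xv):\xv\in\reg(\Sigma_p)\}}$.

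Finally, for the joint continuity of $(\xv,\lv)\mapsto\tau_{\lv}(\xv)$ at $(\xv_0,\lv_0)\in\reg(\Sigma_p)\times\R_+^n$ with $t_0=\tau_{\lv_0}(\xv_0)$: the compact segment $\{\xv_0+t\lv_0:\delta\le t\le t_0-\delta\}$ is disjoint from the closed set $\Sigma_p$, so this persists for $(\xv,\lv)$ in a neighborhood; transversality of $\Sigma_p$ to $\lv_0$ at the regular point $\xv_0$ (the unit normal lies in $\R_{\ge0}^n\setminus\{0\}$ by \Cref{prop: positive derivatives any degree}, and $\lv_0\in\R_+^n$) gives a uniform exit time, so $\xv+t\lv\notin\Sigma_p$ for $t\in(0,\delta)$ nearby; and choosing $b$ with $\theta_b(\hat\xv_0+t_0\lv_0)\in2\pi\Z$, joint continuity of $(\hat\xv,t,\lv)\mapsto\theta_b(\hat\xv+t\lv)$ together with strict monotonicity shows the flow re‑enters $\widehat\Sigma_p$ at some time in $(t_0-\delta,t_0+\delta)$. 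Hence $\tau_{\lv}(\xv)\to t_0$. On the set $\{(\xv,\lv):T_{\lv}(\xv)\in\reg(\Sigma_p)\}$ — open since $T_{\lv}$ maps continuously into $\Sigma_p$ and $\reg(\Sigma_p)$ is open — letting $b$ be the layer index of $T_{\lv}(\xv)$, the equation $\theta_b(\hat\xv+t\lv)=2\pi N$ is solved for $t$ near $t_0$ by the analytic implicit function theorem ($\theta_b$ is real analytic near the regular point $T_{\lv}(\xv)$ by \Cref{prop: phase functions}(2), and $\partial_t\theta_b=\langle\nabla\theta_b,\lv\rangle\le-\lv_{\min}\ne0$), and this solution coincides with $\tau_{\lv}$ locally because, by the continuity just shown, $\tau_{\lv}$ stays near $t_0$ where $\theta_b$ provides the only nearby crossing.
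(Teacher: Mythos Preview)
Your proof is correct and follows essentially the same route as the paper's: both rely on the phase functions of \Cref{prop: phase functions} and analyze the behavior of the first return time via the layer structure near regular versus singular points. The paper's organization differs slightly --- it reduces to the square-free polynomial $p^{\mathrm{red}}$ at the outset and translates the basepoint to the origin, then handles Assertions~3 and~4 in a single local analysis (the layer index $j$ of the approaching sequence determines whether $\tau\to\tau_{\lv}(\xv^*)$ when $j=m$, or $\tau\to 0$ when $j<m$); you separate the two assertions and invoke $p^{\mathrm{red}}$ only for the ``$\Leftarrow$'' direction of Assertion~4. Your continuity argument is phrased more geometrically (compactness of the open segment plus transversality at the regular starting point) whereas the paper argues directly that $\tau$ is the implicit solution of $\theta_2(\xv+\tau\lv)=0$; both are fine.

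One small circularity to fix: for Assertion~2 you cite \Cref{thm: p dependence}(1) to conclude that $q(\z)=p(\exp(i\hat\xv)\z)\in\A_{\bd}$, but that theorem is proved \emph{after} this lemma (and its proof of the gap-distribution equality uses this lemma). The fact you need --- that substituting $z_j\mapsto e^{ix_j}z_j$ preserves the Lee-Yang property and the multidegree --- is elementary (it preserves $\D^n$ and $(\C\setminus\overline\D)^n$), so just state it directly, as the paper does by translating $p$ without citation.
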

\begin{proof} Since $ p $ and the reduced polynomial $ p^{\mathrm{red}} $ share the same torus zeros set, then they share the same $ \tau_{\lv} $, and so we may assume that $ p $ is square free. Consider $ \widehat{\Sigma_{p}} $,the lift of $ \Sigma_{p} $ from $ \R^{n}/2\pi\Z^{n} $ to $ \R^{n} $, and define $ \tau(\xv,\lv):=\tau_{\lv}(\xv\mod{2\pi}) $ for $ (\xv,\lv)\in\widehat{\Sigma_{p}}\times\R_{+}^{n} $. Namely, $ \tau(\xv,\lv)=\min\{t>0:p(\exp(i(\xv+t\lv)))=0\} $. By definition, $ \tau(x_{j}\lv,\lv)=x_{j+1}-x_{j} $ whenever $ x_{j+1} $ and $ x_{j} $ are distinct consecutive zeros of $ p(\exp(ix\lv)) $. 
	
	Fix an arbitrary point $ (\xv_{0},\lv)\in\widehat{\Sigma_{p}}\times\R_{+}^{n}  $. By replacing $ p(\z) $ with $ p(e^{ix_{1}'}z_{1},\ldots,e^{ix_{n}'}z_{n}) $ if needed, we may assume that $ \xv_{0}=\textbf{0}\in \widehat{\Sigma_{p}}$. For the bound, assume by contradiction that $ \tau(\textbf{0},\lv)\ge c> \frac{2\pi |\bd|}{\langle\bd,\lv\rangle}$, so $ p(\exp(it\lv))\ne 0 $ for all $ t\in(0,c) $, which contradicts Theorem \ref{thm: zeros density and upper bound}. 
	
	The two cases of $ \textbf{0} $ being a regular or a singular point of $ \widehat{\Sigma_{p}} $ are treated separately.

	($ \textbf{0}\in\reg(\widehat{\Sigma_{p}}) $) If $ \textbf{0} $ is a regular point of $ \widehat{\Sigma_{p}} $, then it is a zero of $ p(\exp(i\xv)) $ of multiplicity one, since $ p $ is square free. The phase functions (defined in \Cref{prop: phase functions}) can be chosen according to \Cref{rem: prefixed ordering}, such that 
\[0=\theta_{1}(\textbf{0})<\theta_{2}(\textbf{0})\le\ldots\le \theta_{|\bd|}(\textbf{0})<2\pi,\]
Taking $ U\subset\widehat{\Sigma}_{p} $ a small enough neighborhood of $ \textbf{0} $, we can ensure that $ \theta_{1}(\xv)=0 $ and $ \theta_{j}(\xv)\in(0,2\pi) $ for all $ \xv\in U $ and $ j\ge2 $. In particular, the minimal $ t>0 $ for which $ p(\exp(i(\xv+t\lv))=0 $ must satisfy $ \theta_{2}(\xv+t\lv)=0 $, for any $ (\xv,\lv)\in U\times\R_{+}^{n} $, by the ordering and strict monotonicity of the phase functions as shown in \Cref{prop: phase functions}. 
%
	In such case, $ \tau=\tau(\xv,\lv) $ is the unique solution 
	to $ \theta_{2}(\xv+\tau\lv)=0 $ and is therefore continuous in $ (\xv,\lv) $, by the continuity of $ (\xv,\lv,t)\mapsto \theta_{2}(\xv+t\lv) $ and the implicit function theorem for monotone continuous functions. As a result $ (\xv,\lv)\mapsto T_{\lv}(\xv)=\xv+\tau(\xv,\lv)\lv $ is also continuous in $ U\times\R_{+}^{n} $, and therefore the set $ \{(\xv,\lv)\in U\times\R_{+}^{n}:T_{\lv}(\xv)\in\reg(\widehat{\Sigma}_{p})\} $ is an open subset of $ U\times\R_{+}^{n} $. If $ (\xv',\lv')\in\Omega $, then $ \theta_{2}(\xv+t\lv) $ is real analytic in $ (\xv,\lv,t) $ around $ (\xv',\lv',\tau(\xv',\lv')) $, by \Cref{prop: phase functions}, and so $ \tau(\xv,\lv) $ is real analytic around $ (\xv',\lv') $ by the implicit function theorem for real analytic functions. We conclude that $ \tau(\xv,\lv) $ is continuous on $ \reg(\widehat{\Sigma}_{p})\times\R_{+}^{n} $, and real analytic on   $\{(\xv,\lv)\in\reg(\widehat{\Sigma}_{p})\times\R_{+}^{n}:T_{\lv}(\xv)\in\reg(\widehat{\Sigma}_{p})\} $, which is an open subset of $ \reg(\widehat{\Sigma}_{p})\times\R_{+}^{n} $. 
	
	It follows that if $ \sing(\Sigma_{p})=\emptyset $, then $ \tau_{\lv}:\Sigma_{p}\to \R_{+} $ is continuous, for every $ \lv\in\R_{+}^{n} $. Since $ \Sigma_{p} $ is compact, $ \tau_{\lv} $ obtains its minimum at some point $ \tilde{\xv} $, so $ \mathrm{inf}\{\tau_{\lv}(\xv):\xv\in\Sigma_{p}\}=\tau_{\lv}(\tilde{\xv})>0 $.

	($ \textbf{0}\in\sing(\widehat{\Sigma_{p}}) $) If $ \textbf{0} $ is a singular point of $ \widehat{\Sigma}_{p} $, then it has multiplicity $ m=m(\textbf{0}) $ as a zero of $ p(\exp(i\xv)) $. Choose the phase functions according to \Cref{rem: prefixed ordering} such that \[0=\theta_{1}(\textbf{0})=\ldots=\theta_{m}(\textbf{0})<\theta_{m+1}(\textbf{0})\le\ldots\le \theta_{|\bd|}(\textbf{0})<2\pi.\]
	For small enough $ \epsilon>0 $, the neighborhood of $ \textbf{0} $, $ U=\set{\xv\in\widehat{\Sigma}_{p}: \|\xv\|<\epsilon} $, has the form
	\[U=\cup_{j=1}^{m}U_{j},\quad \text{ with }\ U_{j}:=\{\xv\in\R^{n}:\|\xv\|<\epsilon,\ \theta_{j}(\xv)=0\},\]
	 since $ \widehat{\Sigma}_{p}=\cup_{j=1}^{|\bd|}\theta_{j}^{-1}(2\pi\Z) $ and the phase function are continuous. Define $ t_{j}(\xv,\lv) $ as the unique $ t$-solution to $ \theta_{j}(\xv+t\lv)=0 $. As before, $ t_{j} $ is continuous on $ U\times\R_{+}^{n} $, and $ \tau(\textbf{0},\lv)=t_{m+1}(\textbf{0},\lv) $ (where $ \theta_{|\bd|+1}=\theta_{1}+2\pi $ if $ m=|\bd| $). Furthermore, for any $ j\le m $ and $ \xv\in U_{j} \cap \reg(\widehat{\Sigma}_{p})$, $ \theta_{j+1}(\xv)>0 $, and so $ \tau(\xv,\lv)=t_{j+1}(\xv,\lv) $. Consider a converging sequence $ \xv_{n}\to\textbf{0} $, with $ \xv_{n}\in\reg(\widehat{\Sigma}_{p}) $ for all $ n $, and by taking a subsequence if needed, we may assume $ \xv_{n}\in\reg(\widehat{\Sigma}_{p})\cap U_{j} $ for all $ n $, for some specific $ j $. So $\tau(\xv_{n},\lv)=t_{j+1}(\xv_{n},\lv) $ for all $ n $, and 
	\[\lim_{n\to\infty}\tau(\xv_{n},\lv)=t_{j+1}(\textbf{0},\lv)=\begin{cases}
		\tau(\textbf{0},\lv) & \mbox{if}\ j=m\\
		0 & \mbox{if}\ 1\le j<m-1
	\end{cases},\]
by continuity of $ t_{j+1} $, using that $ \theta_{j+1}(\textbf{0})=0 $ when $ j+1\le m $. It follows that $ \tau_{\lv} $ is upper semi continuous, that $ \mathrm{inf}\{\tau_{\lv}(\xv):\xv\in\reg(\Sigma_{p})\}=0 $, and that $ 	\{\tau_{\lv}(\xv):\xv\in\sing(\Sigma_{p})\} \subset \overline{\{\tau_{\lv}(\xv):\xv\in\reg(\Sigma_{p})\} }$.    
 
\end{proof}


\section{Proof of Theorem \ref{thm: irreduicible}}\label{sec: proof of elaboration}
Let $ p\in\A_{\bd}(n) $ with decomposition $ p=\prod_{j=1}^{N}q_{j}^{c_{j}} $ into distinct irreducible polynomials, and let $ \lv\in\R_{+}^{n} $ with $ \Q $-linearly independent entries. Each factor $ q_{j}$ is a Lee-Yang polynomial by definition. Let $ m_{p}(x) $ denote the multiplicity of $ x $ as a zero of $ f_{p}(x)=p(\exp(ix\lv)) $, with $ m_{p}(x)=0 $ if $ f_{p}(x)\ne0 $, and similarly let $ m_{j}(x) $ denote the multiplicity with respect to $ f_{j}(x)=q_{j}(\exp(ix\lv)) $. Since $ f(x)=\prod_{j=1}^{N}\left(f_{j}(x)\right)^{c_{j}} $ and multiplicity of zeros is additive under multiplication of functions, then $ m(x)=\sum_{j=1}^{N}c_{j}m_{j}(x) $. As a result 
\[\mu_{p,\lv}=\sum_{x\in\Lambda}m_{p}(x)\delta_{x}=\sum_{j=1}^{N}c_{j}\sum_{x\in\Lambda_{j}}m_{j}(x)\delta_{x}=\sum_{j=1}^{N}c_{j}\mu_{q_{j},\lv},\]
where $ \Lambda $ denotes the zero set of $ f $ and $ \Lambda_{j} $ the zero set of $ f_{j} $. Clearly, $ \Lambda=\bigcup_{j=1}^{N}\Lambda_{j}  $. The proof of \Cref{thm: irreduicible} follows from the next lemma and proposition, considering the case of $ p $ being irreducible and either binomial or not.
\begin{lem}[Binomial]\label{lem: binoimal}
	If $ p\in\A_{\bd}(n) $ is binomial, normalized such that $ p({\bf 0})=1 $, then $ p(\z)=1-e^{-i\varphi}\z^{\bd} $ for some $ \varphi\in\R $. In such case, for any $ \lv_{+}^{n} $, the zeros of $ f(x)=p(\exp(ix\lv)) $ are simple and form an infinite arithmetic progression $ \{\varphi+\frac{2\pi}{\langle \bd,\lv\rangle }k\ :\ k\in\Z \} $. 
\end{lem}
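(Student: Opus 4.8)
The plan is to prove the two assertions in order: first the normal form $p(\z)=1-e^{-i\varphi}\z^{\bd}$, then the description of the zero set of $f(x)=p(\exp(ix\lv))$.

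For the normal form, I would argue as follows. Since $p$ is binomial with $p(\mathbf 0)=1$, one of its two monomials is the constant $1$ (if the constant term were $0$ then $p$ would vanish on a coordinate hyperplane $z_j=0$, which lies in the closure of $\D^n$; more directly, $p(\mathbf 0)=1\neq 0$ forces the constant monomial to appear). Thus $p(\z)=1+c\,\z^{\balpha}$ for some nonzero $c\in\C$ and some $\balpha\le\bd$ with $\balpha\neq\mathbf 0$. The multidegree of $p$ is exactly $\balpha$, so $\balpha=\bd$ and $p(\z)=1+c\,\z^{\bd}$. It remains to see $|c|=1$, so that $c=-e^{-i\varphi}$ for some $\varphi\in\R$ (the sign is a convention). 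This uses the Lee-Yang property: pick any $\z\in\D^n$ with $\z^{\bd}=r$ for $0<r<1$ arbitrary (possible since $\bd\neq\mathbf 0$, e.g. set most coordinates to a fixed small value and solve for one of them), so $p(\z)=1+cr\neq 0$ forces $|c|\le 1/r$ for all $r<1$, hence $|c|\le 1$; symmetrically, evaluating on $(\C\setminus\overline\D)^n$ with $\z^{\bd}=R$ for $R>1$ arbitrary gives $|c|\ge 1/R$... that only gives $|c|\ge 0$, so instead use that $\z^{\bd}$ ranges over all of $\C\setminus\overline{\D}$ on $(\C\setminus\overline\D)^n$: if $|c|<1$ then $-1/c$ has modulus $>1$, so there is $\z$ in the outer polydisc with $\z^{\bd}=-1/c$, giving $p(\z)=0$, contradiction. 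Hence $|c|=1$ and we may write $c=-e^{-i\varphi}$, giving $p(\z)=1-e^{-i\varphi}\z^{\bd}$.

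For the zero set, substitute: $f(x)=1-e^{-i\varphi}\exp(ix\langle\bd,\lv\rangle)=1-\exp\!\big(i(x\langle\bd,\lv\rangle-\varphi)\big)$. This vanishes precisely when $x\langle\bd,\lv\rangle-\varphi\in 2\pi\Z$, i.e. $x\in\{\varphi/\langle\bd,\lv\rangle+\tfrac{2\pi}{\langle\bd,\lv\rangle}k:k\in\Z\}$, an arithmetic progression with common difference $2\pi/\langle\bd,\lv\rangle$ (here $\langle\bd,\lv\rangle>0$ since $\bd\neq\mathbf 0$ and $\lv\in\R_+^n$). Simplicity of the zeros is immediate since $f'(x)=-i\langle\bd,\lv\rangle\exp(i(x\langle\bd,\lv\rangle-\varphi))$ is nowhere zero; alternatively it follows from \Cref{lem: all multiplicities agree} together with the fact that $p_{\xv}(s)=1-e^{-i\varphi}e^{i\langle\xv,\bd\rangle}s^{|\bd|}$ has $|\bd|$ distinct roots on the unit circle. (One should note the lemma as stated writes $\varphi$ where I have $\varphi/\langle\bd,\lv\rangle$; I would match the paper's normalization, absorbing the scaling into $\varphi$ if that is their convention, or simply state the progression as written.)

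The only mildly delicate point is pinning down $|c|=1$; everything else is direct computation. The outer-polydisc evaluation — realizing an arbitrary value of modulus $>1$ as $\z^{\bd}$ for $\z\in(\C\setminus\overline\D)^n$ — is the key step, and it is easy once one observes that the monomial map $\z\mapsto\z^{\bd}$ surjects $(\C\setminus\overline\D)^n$ onto $\C\setminus\overline\D$ (and $\D^n$ onto $\D$). I would spell that surjectivity out in one line and the rest writes itself.
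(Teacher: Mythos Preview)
Your overall strategy matches the paper's: both deduce $p(\z)=1+c\,\z^{\bd}$ from the multidegree constraint, then use the Lee--Yang property to force $|c|=1$, and finish by the obvious computation on $f$. The paper's version of the $|c|=1$ step is slightly slicker: it takes $\omega$ to be any $|\bd|$-th root of $-1/c$ and observes that if $|c|\neq 1$ then the diagonal point $(\omega,\dots,\omega)$ lies in $\D^n$ or in $(\C\setminus\overline{\D})^n$ and kills $p$.

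There is one genuine slip to fix. Your argument for $|c|\le 1$ is not valid as written: from $1+cr\neq 0$ for real $0<r<1$ you only get $c\neq -1/r$, i.e.\ $c\notin(-\infty,-1)$, which is far weaker than $|c|\le 1$. You evidently noticed the analogous weakness on the outer side and switched to the surjectivity argument (if $|c|<1$ then $-1/c\in\C\setminus\overline{\D}$ is realized as $\z^{\bd}$ for some $\z$ in the outer polydisc), but you never went back and repaired the inner case. Do the symmetric thing: if $|c|>1$ then $-1/c\in\D$, and since $\z\mapsto\z^{\bd}$ maps $\D^n$ onto $\D$ (indeed the diagonal already suffices), you get a zero of $p$ in $\D^n$. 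With that one line added your proof is complete and essentially identical to the paper's. Your remark about $\varphi$ versus $\varphi/\langle\bd,\lv\rangle$ is also correct; the paper's statement and proof both carry this harmless misprint.
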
  
\begin{proof}
	If $ p\in\A_{\bd}(n) $ then $ p({\bf 0})\ne0$ and the coefficient of $ \z^{\bd} $ is non-zero. If it has only two monomials and $ p({\bf 0})=1$, then $ p(\z)=1+a\z^{\bd} $. Assume by contradiction that $ |a|\ne 1 $, then for any $|\bd|$-th root $\omega\in \C$ of $a$, the 
	point ${\bf z} = (\omega, \hdots, \omega)$ will be a root of $p$ in $\D^n$ or $(\C\backslash \overline{\D})^n$, in contradiction to  $ p\in\A_{\bd}(n) $. Therefore $ p(\z)=1-e^{-i\varphi}\z^{\bd} $, and so $ f(x)=1-e^{i(\langle \bd,\lv\rangle x -\varphi)} $, for some $ \varphi\in\R $. Hence, $ f(x)=0\iff x-\varphi\in\frac{2\pi}{\langle \bd,\lv\rangle }\Z  $ in which case $ f'(x)\ne0 $.   
\end{proof}
 \begin{prop}[Non-binomial]\label{prop: non-binomial}
 	Let $ p\in\A_{\bd}(n) $ be irreducible and non-binomial, $ \lv\in\R_{+}^{n} $ be $ \Q $-linearly independent, and $ f(x)=p(\exp(ix\lv)) $ with zero set $ \Lambda $ and multiplicities $ (m(x))_{x\in\Lambda} $. Then,
 	\begin{enumerate}
 		\item $ m(x)\le|\bd| $ for all $ x\in\Lambda $ and $ \lim_{R\to\infty}\frac{\left|\{|x|<R\ :\ x\in\Lambda,\ m(x)=1\}\right|}{\left|\{|x|<R\ :\ x\in\Lambda\}\right|}=1 $.
 		\item For any $ N\in\N $ and any set $ \Gamma\in\R $ with $ \dim_{\Q}(\Gamma)=N $, $ |\Lambda\cap\Gamma|\le c $, with uniform bound $ c=c(|\bd|,N) $ that only depends on $ |\bd| $ and $ N $. In particular, $ \dim_{\Q}(\Lambda)=\infty $.  
 	\end{enumerate} 
 \end{prop}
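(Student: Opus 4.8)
The plan is as follows. Part (1) is quick given the machinery already in place. For $m(x)\le|\bd|$, invoke \Cref{lem: all multiplicities agree}: the multiplicity of $x$ as a zero of $f(x)=p(\exp(ix\lv))$ equals that of $s=1$ as a root of the univariate polynomial $q(s)=p(s\exp(ix\lv))$, which has degree $|\bd|$. For the density of simple zeros, note that $p$ irreducible forces $p$ square free, so by \Cref{lem: dimension and singularity}(1) a zero $x$ of $f$ has $m(x)>1$ precisely when $\exp(ix\lv)\in\sing(\Sigma_{p})$, a subvariety of dimension $\le n-2$. Hence $\chi_{\sing(\Sigma_{p})}$ is bounded and Riemann integrable (the closure of its discontinuity set lies in the closed set $\sing(\Sigma_{p})$, of vanishing $(n-1)$-volume) and $\mm_{\lv}(\sing(\Sigma_{p}))=0$; plugging $h=\chi_{\sing(\Sigma_{p})}$ into the ergodic average \eqref{eq: ergodicity 1} of \Cref{thm: ergodicity} shows the proportion of distinct zeros of $f$ with $m>1$ equals $\mm_{\lv}(\sing(\Sigma_{p}))/\mm_{\lv}(\Sigma_{p})=0$, and comparing counting functions on $\{|x|<R\}$ gives the claim.

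For part (2), fix $\Gamma$ with $\dim_{\Q}(\Gamma)=N$ and a $\Q$-basis $\gamma_{1},\dots,\gamma_{N}$. Given finitely many $x_{1},\dots,x_{k}\in\Lambda\cap\Gamma$, choose $b\in\N$ clearing all denominators, so $x_{j}=b^{-1}\sum_{s}a_{js}\gamma_{s}$ with $a_{js}\in\Z$, hence $\exp(ix_{j}\lv)=\prod_{s}u_{s}^{a_{js}}$ with $u_{s}:=\exp(ib^{-1}\gamma_{s}\lv)\in\T^{n}$. Thus every $\exp(ix_{j}\lv)$ lies in the finitely generated group $G=\langle u_{1},\dots,u_{N}\rangle\le\T^{n}$, of rank $\le N$, and is a zero of $p$ (and $x\mapsto\exp(ix\lv)$ is injective since $\lv$ is $\Q$-independent and necessarily $n\ge2$ here). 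So it suffices to bound, by a constant depending only on $|\bd|$ and $N$, the number of zeros of $p$ in a rank-$\le N$ subgroup of $\T^{n}$. Writing $p=\sum_{\alpha}c_{\alpha}\z^{\alpha}$ with $T:=|\supp(p)|\le\prod_{j}(d_{j}+1)$ terms, I would partition such zeros $\z$ by the minimal nonempty $A\subseteq\supp(p)$ with $\sum_{\alpha\in A}c_{\alpha}\z^{\alpha}=0$ (so $|A|\ge2$). When $|A|\ge3$ this sub-equation is non-degenerate, and dividing by one of its monomials turns it into an $S$-unit equation $\sum a_{i}y_{i}=1$ in $|A|-1\le T-1$ variables lying in a group of rank $\le N$, whose number of non-degenerate solutions is $\le c(T,N)$ by the Evertse--Schlickewei--Schmidt uniform bound (quantitative subspace theorem).

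It remains to treat the minimal case $|A|=2$, a binomial relation $c_{\alpha}\z^{\alpha}+c_{\beta}\z^{\beta}=0$: since $\langle\alpha-\beta,\lv\rangle\neq0$, the relevant $x_{j}$ all lie in an arithmetic progression of step $2\pi/\langle\alpha-\beta,\lv\rangle$ inside $\Gamma$, and restricting $f$ to that progression (parametrized by $m\in\Z$) merges the frequencies of the monomials $\alpha$ and $\beta$, producing an exponential polynomial in $m$ with strictly fewer than $T$ distinct frequencies, to which the same argument applies recursively (now inside a subgroup of rank $\le1$). Since the frequency count strictly drops at each such step, the recursion has depth $\le T$ and branching $\le2^{T}$, so combining the leaf bounds gives $|\Lambda\cap\Gamma|\le c(|\bd|,N)$. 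Crucially, this requires knowing that an irreducible non-binomial Lee-Yang polynomial cannot vanish identically on a translate of a positive-dimensional subtorus of $(\C^{*})^{n}$ -- equivalently, that none of the restrictions met along the recursion is identically zero -- which is exactly where irreducibility, non-binomiality, and the Lee-Yang property are used together (the codimension-one case follows from \Cref{lem: dimension and singularity}(2), since a polynomial dividing a binomial is binomial, but the lower-dimensional cases genuinely need the Lee-Yang hypothesis). I expect this, together with organizing the recursion so that the constant depends only on $|\bd|$ and $N$, to be the main obstacle. Finally, since $\mu_{p,\lv}$ has positive density by \Cref{thm: zeros density and upper bound}, $\Lambda$ is infinite, so taking $\Gamma$ to be the $\Q$-span of $\Lambda$ forces $\dim_{\Q}(\Lambda)=\infty$.
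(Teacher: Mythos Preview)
Your argument for part (1) is essentially the paper's: the multiplicity bound comes from \Cref{lem: all multiplicities agree}, and the density statement follows by applying the ergodic \Cref{thm: ergodicity} to the indicator of $\sing(\Sigma_p)$, which has $\mm_\lv$-measure zero. The paper handles the two-sided limit by treating the negative zeros via $p^\dagger$, but this is a cosmetic difference.

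For part (2), the paper takes a cleaner route than your unit-equation recursion. It applies Lang's $G_m$ theorem (in the effective form of Evertse--Schlickewei--Schmidt) directly to the hypersurface $V=\{p=0\}\subset(\C^*)^n$ and the division group $\overline{G}=\{\exp(it\lv):t\in\Gamma\}$: this gives $\overline{G}\cap V\subset\bigcup_{i=1}^{r}\z_iH_i$ with each $\z_iH_i\subset V$ a torus coset and $r\le r(|\bd|,N)$. Since an $(n-1)$-dimensional torus coset is the zero set of a binomial, irreducibility and non-binomiality of $p$ force $\dim H_i\le n-2$. The paper then proves the elementary \Cref{lem: torus cosets}: a torus coset of dimension $\le n-2$ contains at most one point of the form $\exp(ik\lv)$ when $\lv$ is $\Q$-independent (just because $\dim_\Q((k-k')\lv)=n$ whenever $k\ne k'$). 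Hence $|\Lambda\cap\Gamma|\le r$.

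Your recursion is morally the proof of Lang's theorem unpacked, but your diagnosis of the obstacle is off. You claim you need that an irreducible non-binomial Lee-Yang polynomial cannot vanish on any positive-dimensional subtorus coset, and that the Lee-Yang hypothesis is essential for the lower-dimensional cases. Neither is true: the paper's argument shows that one only needs to exclude codimension-one torus cosets (which follows from irreducibility and non-binomiality alone), and the Lee-Yang property is not used anywhere in part (2). What actually disposes of the lower-dimensional cosets is the linear-algebra fact in \Cref{lem: torus cosets}, not anything about stability. If you want to salvage your recursion, the first restriction (to the codim-1 coset $\{c_\alpha\z^\alpha+c_\beta\z^\beta=0\}$) is indeed nonzero by irreducibility, and after that you should observe that the points $\exp(i(x_0+m\Delta)\lv)$ lie on a fixed coset of dimension $n-1$; any further binomial constraint cuts down to dimension $\le n-2$, and then \Cref{lem: torus cosets} (not a Lee-Yang argument) gives at most one $x$ per branch, terminating the recursion immediately rather than continuing it.
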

\begin{proof}[Proof of \Cref{prop: non-binomial} part (1)]
	The bound $ m(x)\le|\bd| $ follows from \Cref{thm: zeros density and upper bound} part (1). Numbering the distinct zeros of $ f(x) $ by $ (k_{j})_{j\in\Z} $, with $ k_{j}>0 $ for $ j>0 $ and $ k_{j}<0 $ for $ j<0 $. We need to show that 
	\[\lim_{N\to\infty}\frac{\left|\{-N\le j\le N\ : \ m(k_{j})>1\}\right|}{2N}=0.\]
	Let $ p\inv\in\A_{\bd} $ as in \Cref{def: inv}, so that $ p\inv $ is also irreducible, non-binomial, and has $ p\inv(\exp(ix\lv))=0\iff p(\exp(-ix\lv))=0 $ with the same multiplicities, so it is enough to prove the one sided limit 	
	\[\lim_{N\to\infty}\frac{\left|\{1\le j\le N\ : \ m(k_{j})>1\}\right|}{N}=0.\]
	By \Cref{lem: all multiplicities agree} and since $ p $ is irreducible, $ m(k_{j})>1 $ if and only if $ k_{j}\lv\in\sing(\Sigma_{p}) $. Notice that $ k_{j}\lv=T_{\lv}^{j}(k_{0}\lv) $, using the fact that the $ k_{j} $s are the distinct zeros. Let $ h $ be the indicator function of $ \sing(\Sigma_{p}) $, so that $ \left|\{1\le j\le N\ : \ m(k_{j})>1\}\right|=\sum_{j=1}^{N}h(T_{\lv}^{j}(k_{0}\lv)) $. Then, $ h $ is bounded Riemann integrable and \Cref{thm: ergodicity} gives  
		\[\lim_{N\to\infty}\frac{1}{N}\sum_{j=1}^{N}h(T_{\lv}^{j}(k_{0}\lv))\propto\int h(\xv)d\mm_{\lv}(\xv)=\mm_{\lv}(\sing(\Sigma_{p}))=0.\]
\end{proof}

\begin{remark}
The proof of \Cref{prop: non-binomial} part (2) is a consequence of \cite[Theorem 1.2]{Evertse00},  often known as Lang's $ G_{m} $ Theorem. 
\end{remark}

To state we consider $ (\C^{*})^{n} $ as multiplicative group, and it will be convenient to define the notions of \emph{rank, division group,} and \emph{algebraic torus cosets} in terms of the exponent map $ \exp:\C^n\to(\C^{*})^{n} $. 
\begin{Def}
	A subgroup $ G\subset(\C^{*})^{n} $ has \emph{rank} $ N $, if $ N $ is the minimal integer for which $ G=\{\exp(A\textbf{k}):\textbf{k}\in\Z^{N}\} $ for some matrix $ A\in\C^{n\times N} $. Its \emph{division group} is defined by $ \overline{G}=\{\exp(A\textbf{k}):\textbf{k}\in\Q^{N}\} $ for the same $ A $. An \emph{algebraic torus} of dimension $ d $ \emph{torus} in $ (\C^{*})^{n} $ has the form $ H=\set{\exp{(B\yv)}:\yv\in\C^{d}} $ for some integer matrix $ B\in\Z^{n\times d}  $ of rank $ d $. The \emph{algebraic torus coset} $ \z H $ for $ \z=\exp(\xv) $ is the set $ \z H=\set{\exp{(\xv+A\yv)}~~|~~\yv\in\C^{d}} $, for the same matrix $ B $. It also has dimension $ d $.
\end{Def}
\begin{thm*}[Lang's $ G_{m} $ Theorem]\cite[Theorem 1.2]{Evertse00}
	Let $V\subset (\C^{*})^n$ be an algebraic variety of dimension $ N $ and degree $ D $, and let $ G $ be a subgroup of $ (\C^{*})^n $, of rank $ N $, with division group $ \overline{G}$. Then $ \overline{G}\cap V $ is contained in a union of at most $ r $ algebraic torus cosets $ \z_{j}H_{j}\subset V $ for $ r\le e^{(N+1)\left(6D\binom{n+D}{D}\right)^{\left(5D\binom{n+D}{D}\right)}} $. 
\end{thm*}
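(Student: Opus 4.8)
The final statement is the quantitative Mordell--Lang theorem for $\mathbb{G}_{m}^{n}$, and I would prove it via the now-standard reduction to the quantitative Subspace Theorem. \textbf{Reduction.} Since $G$ has finite rank and $V$ is cut out by finitely many polynomials, everything is defined over a finitely generated extension of $\Q$; a specialization of the transcendental parameters that preserves $\dim V$, $\deg V$, and $\operatorname{rank}(G)$ reduces us to $V,G$ defined over $\overline{\Q}$, where heights are available. Splitting off the torsion, write $G$ as free of rank $N$ on a basis $\textbf{g}_{1},\ldots,\textbf{g}_{N}$, so $\overline{G}=\{\textbf{g}_{1}^{t_{1}}\cdots \textbf{g}_{N}^{t_{N}}:t_{i}\in\Q\}$. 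Choose hypersurfaces of degree $\le D$ cutting out $V$ (e.g.\ via its Chow form); each is a polynomial $f=\sum_{j}c_{j}\z^{\alpha_{j}}$ with at most $\binom{n+D}{D}$ terms, and $f=0$ on $\overline{G}$ becomes an equation $\sum_{j}c_{j}u_{j}=0$ in which $u_{j}=\prod_{i}(\textbf{g}_{i}^{\alpha_{j}})^{t_{i}}$ ranges over a multiplicative group of rank $\le N$. So the problem becomes bounding the solutions of a bounded system of ``unit equations.''

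\textbf{Subspace Theorem core.} For a single relation $\sum_{j=0}^{m}c_{j}u_{j}=0$ with the $u_{j}$ in a rank-$\le N$ group, partition the solution set by which \emph{minimal} sub-sums vanish. A solution in which no proper sub-sum vanishes lies, projectively, in at most $\mathcal{B}(m,N)$ values, where $\mathcal{B}(m,N)$ is the explicit Evertse--Schlickewei bound of shape $\exp\big((cm)^{c'm}(N+1)\big)$ (with $c,c'$ absolute) supplied by the quantitative Subspace Theorem. Every remaining (``degenerate'') solution forces an extra multiplicative dependence among the $u_{j}$, i.e.\ the corresponding point of $(\C^{*})^{n}$ lies on one of at most $2^{\binom{n+D}{D}}$ proper algebraic-torus cosets $\z H$ determined by that dependence.

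\textbf{Induction and bookkeeping.} For each coset $\z H$ produced above: if $\z H\subseteq V$, record it; otherwise $\z H\cap V$ is a variety of strictly smaller dimension and $\overline{G}\cap\z H$ lies in the division group of a multiplicative group of strictly smaller rank, so induction on $\dim V$ (base case $\dim V=0$, where $|\overline{G}\cap V|\le D$) covers $\overline{G}\cap(\z H\cap V)$ by a bounded number of subtorus cosets contained in $V$. B\'ezout keeps all degrees polynomially controlled by $D$ and $n$ (e.g.\ $\deg(\z H\cap V)\le D\deg H$), so the parameter $m\le D\binom{n+D}{D}$ governing $\mathcal{B}(m,N)$ is uniform across the $\le N+1$ levels of the recursion. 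Multiplying the per-level count $\mathcal{B}(m,N)\cdot 2^{\binom{n+D}{D}}$ over these levels collapses to the displayed bound $e^{(N+1)(6D\binom{n+D}{D})^{5D\binom{n+D}{D}}}$; finally one undoes the specialization and re-incorporates the torsion translates.

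\textbf{Main obstacle.} The genuine content is the Subspace-Theorem input of the second step: an \emph{explicit} upper bound for the number of non-degenerate solutions of $\sum c_{j}u_{j}=0$ with the $u_{j}$ in a group of bounded rank is itself the Evertse--Schlickewei(--Schmidt) theorem, and extracting precisely the constants in the displayed form requires its sharpest version together with careful height estimates for the associated families of subspaces. A secondary difficulty is ensuring the recursion propagates the \emph{structural} conclusion --- that the Zariski closure of $\overline{G}\cap V$ is a union of subtorus cosets lying in $V$ (the qualitative Laurent / Mordell--Lang theorem) --- and not merely a cardinality bound, so that the cosets $\z_{j}H_{j}$ in the statement genuinely satisfy $\z_{j}H_{j}\subseteq V$.
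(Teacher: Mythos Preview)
The paper does not prove this statement at all: it is quoted verbatim as \cite[Theorem 1.2]{Evertse00} and used as a black box in the proof of \Cref{prop: non-binomial} part (2). So there is no ``paper's own proof'' to compare against; your sketch is a reasonable outline of the Evertse--Schlickewei--Schmidt approach to the quantitative Mordell--Lang theorem for $\mathbb{G}_m^n$, but for the purposes of this paper you are simply expected to cite the result.
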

In what follows, recall our notation $ \dim_{\Q}(A) $, for a set $ A\subset\R $, that stands for the dimension (as a $ \Q $-vector space) of the $ \Q $-linear span of the elements of $ A $, and that $ \dim_{\Q}(\textbf{a}) $ for a real vector $ \textbf{a}=(a_{1},\ldots,a_{N}) $ means $ \dim_{\Q}(\{a_{1},\ldots,a_{N}\}) $. 
\begin{lem}\label{lem: torus cosets}
	Suppose that $ \z H\subset(\C^{*})^{n} $ is an algebraic torus coset of dimension $d\le n-2$, and that $\lv\in\R^{n}$ has $\Q$-linearly independent entries. Then there is at most one $k\in\R$ such that $\exp{(ik\ell)}\in \z H$.    
\end{lem}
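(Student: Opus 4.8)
The plan is to translate the multiplicative condition $\exp(ik\ell)\in \z H$ into a linear-algebraic condition over $\Q$ on the vector $\ell$, and then use the $\Q$-linear independence hypothesis to force uniqueness. First I would write $\z H = \{\exp(\xv + B\yv) : \yv\in\C^d\}$ for some integer matrix $B\in\Z^{n\times d}$ of rank $d$ and some fixed $\xv\in\C^n$ with $\exp(\xv)=\z$. Then $\exp(ik\ell)\in \z H$ means $ik\ell \equiv \xv + B\yv \pmod{2\pi i \Z^n}$ for some $\yv\in\C^d$; that is, $ik\ell - \xv - B\yv \in 2\pi i\Z^n$. Suppose both $k$ and $k'$ work, with corresponding $\yv,\yv'$. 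Subtracting, $i(k-k')\ell = B(\yv-\yv') + 2\pi i\,\mathbf{n}$ for some $\mathbf{n}\in\Z^n$. Dividing by $i$, the real vector $(k-k')\ell$ lies in the complex span of the columns of $B$ together with $\Z^n$ — and since $(k-k')\ell$ is real and the columns of $B$ and $\mathbf{n}$ are real (here $2\pi i\cdot i^{-1}$ is real, wait: $2\pi i \mathbf n / i = 2\pi \mathbf n$), taking real parts we get $(k-k')\ell \in \mathrm{span}_{\R}(\text{columns of }B) + 2\pi\Z^n \subseteq \mathrm{span}_{\R}(B)$ since the columns of $B$ are integer vectors and... actually $2\pi\Z^n$ need not lie in $\mathrm{span}_\R(B)$, so I keep it: $(k-k')\ell$ lies in a $\Q$-rational subspace of dimension at most $d+ \mathrm{(something)}$. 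Let me be more careful below.

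The key step is the following rationality observation. Since $B$ has integer entries and rank $d\le n-2$, the real row space considerations give: if $(k-k')\ell = B\wv + 2\pi\mathbf{n}$ for some real $\wv\in\R^d$ and $\mathbf n\in\Z^n$, then $(k-k')\ell$ lies in the $\R$-span $W$ of the columns of $B$ and the standard basis vectors indexed appropriately — but crucially, I should instead argue that $(k-k')\ell$ satisfies every integer linear relation satisfied by all vectors in $\mathrm{colspan}_{\R}(B)+2\pi\Z^n$. Concretely, pick an integer covector $\mathbf{c}\in\Z^n$ in the left kernel of $B$ (such exist since $\mathrm{rank}(B)=d\le n-2 < n$, in fact the left kernel has dimension $n-d\ge 2$). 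Then $\langle \mathbf{c}, (k-k')\ell\rangle = \langle \mathbf c, B\wv\rangle + 2\pi\langle\mathbf c,\mathbf n\rangle = 2\pi\langle\mathbf c,\mathbf n\rangle \in 2\pi\Z$. So for every $\mathbf c$ in the (rank $\ge 2$) integer left-kernel lattice of $B$, $(k-k')\langle\mathbf c,\ell\rangle \in 2\pi\Z$. Choosing two $\Q$-linearly independent such covectors $\mathbf c_1,\mathbf c_2$, we get $(k-k')\langle \mathbf c_i,\ell\rangle = 2\pi m_i$ with $m_i\in\Z$, $i=1,2$. If $k\ne k'$, then $\langle\mathbf c_1,\ell\rangle / \langle \mathbf c_2,\ell\rangle = m_1/m_2 \in\Q$ (the denominator $\langle\mathbf c_2,\ell\rangle\ne 0$ because $\ell$ has $\Q$-linearly independent entries and $\mathbf c_2\ne 0$), which yields a nontrivial $\Q$-linear relation $m_2\langle\mathbf c_1,\ell\rangle - m_1\langle\mathbf c_2,\ell\rangle = 0$ among the entries of $\ell$ — contradicting $\Q$-linear independence of $\ell$'s entries, since $m_2\mathbf c_1 - m_1 \mathbf c_2$ is a nonzero integer covector (nonzero because $\mathbf c_1,\mathbf c_2$ are $\Q$-linearly independent and $(m_1,m_2)\ne(0,0)$, the latter since $k\ne k'$ forces at least one $m_i\ne 0$). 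Hence $k=k'$.

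I expect the main obstacle to be the bookkeeping around the ambient lattice $2\pi i\Z^n$ and making sure the left-kernel covectors can be chosen to be integral and $\Q$-linearly independent — this is where the hypothesis $d\le n-2$ is used in an essential way (it guarantees the left kernel of $B$ has rank at least $2$, so two independent integer covectors exist; with $d=n-1$ the argument would only produce one covector and fail, which is exactly why periodic/binomial factors behave differently). The rest is elementary linear algebra over $\Q$. One should also note at the outset that replacing $p$ or the coset by a torus translate does not affect the statement, so there is no loss in the normalizations above.
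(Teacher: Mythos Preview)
Your argument is correct and is essentially the paper's proof unpacked: the paper phrases the key step as $\dim_{\Q}((k-k')\ell)\le d+1<n$ (using that the integer left kernel of $B$ has rank $n-d$, so $\dim_{\Q}(B\yv)\le d$, and that $2\pi\mathbf n$ contributes at most one more $\Q$-dimension), while you make this explicit by choosing two $\Q$-independent integer covectors in the left kernel of $B$ and deriving a nontrivial rational relation among the entries of $\ell$. The hesitation in your first paragraph is unnecessary, and the slip of writing $\wv\in\R^d$ rather than $\C^d$ is harmless, since only $\langle\mathbf c,B\wv\rangle=0$ is used.
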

\begin{proof}
	Let $ B\in\Z^{n\times d}  $ of rank $ d $ such that $ H=\set{\exp{(B\yv)}:\yv\in\C^{d}} $, and suppose that both $ \exp{(ik\ell)} $ and $ \exp{(ik'\ell)} $ lie in $ \z H $. Then, $ \exp{(i(k-k')\ell)}\in H $ and therefore $ (k-k')\ell=B \yv+2\pi \textbf{k} $ for some $\textbf{k}\in\Z^{n}, \yv\in \C^d $. The left kernel of $ B $ in $ \C^{n} $ contains an $(n-d)$-dimensional $ \Q $-linear vector space of vectors orthogonal to $ B\yv $ so $ \dim_{\Q}(B\yv)\le d $,	and therefore,
	\[\dim_{\Q}((k-k')\ell)=\dim_{\Q}(B\yv+2\pi \textbf{k})\le d+1<n.\]
	However, if $ k-k'\ne 0 $ then $\dim_{\Q}((k-k')\ell)= \dim_{\Q}(\lv)=n $, a contradiction.
\end{proof}

\begin{proof}[\Cref{prop: non-binomial} part (2)] Let $ p,\lv $, and $ \Lambda $ as in \Cref{prop: non-binomial}. Let $V\subset(\C^{*})^{n}$ be the zero set of $ p $ in $ (\C^{*})^{n} $. The degree of $ V $ is finite and only depends on $ |\bd| $. Given $ N\in\N $, let $ \Gamma\subset\R $ of $ \dim_{\Q}(\Gamma)=N $, so $ \Gamma=\set{\langle \textbf{a},\textbf{k}\rangle\ : \ \textbf{k}\in\Q^{N}} $ for some $ \textbf{a}\in\R^{N} $. Define the matrix $ A\in\C^{n\times N} $ whose $ j $-th row is the vector $ i\lv_{j}\textbf{a}\in\C^{N} $, and let $ G=\{\exp(A\textbf{k}):\textbf{k}\in\Z^{N}\} $ so that its division group is $ \overline{G}=\{\exp(A\textbf{k}):\textbf{k}\in\Q^{N}\}=\set{\exp(it\lv)\ :\ t\in\Gamma}$. In particular,
	\[x\in\Lambda\cap\Gamma\iff \exp(ix\lv)\in \overline{G}\cap V.\]
	Since $ G $ has rank at most $ N $, Lang's $ G_{m} $ Theorem says that there are at most $ r=r(|\bd|,N) $ algebraic torus cosets $\z_{i}H_{i}\subset V$ such that $ \overline{G}\cap V\subset\z_{1}H_{1}\cup\ldots\cup \z_{r}H_{r} $. In particular, any $ x\in \Lambda\cap\Gamma $ satisfies $ \exp(ix\lv)\in \z_{i}H_{i}  $ for some $ i $. An algebraic torus coset of dimension $ n-1 $ is the zero set of a binomial polynomial, and since $ p $ is irreducible and not binomial, then $ \dim(\z_{i}H_{i})\le\dim(V)-1=n-2 $ for every $ i=1,\ldots,r $. By Lemma \ref{lem: torus cosets}, each $ \z_{i}H_{i} $ contains at most one point $ \exp(ix\lv) $ for $ x\in\R $. We conclude that $ \Lambda\cap \Gamma$ contains at most $ r $ points.   
\end{proof}


\section{Proof of Theorem \ref{thm: minimal gap}}\label{sec: minimal gap}
\begin{proof}[Proof of Theorem \ref{thm: minimal gap}]
Suppose that $ n\ge2 $. Say that $ p\in \A_{\bd}(n)$ satisfies (i) if $ p $ and $ \nabla p $ have no common zeros in $ \T^{n} $, and satisfies (ii) if $ p $ has a non-binomial factor. Say that $ \mu_{p,\lv} $ satisfies (*) if it is non-periodic, with unit coefficients and has a uniformly discrete support. The proof of \Cref{thm: minimal gap} consists of three parts.

((i)+(ii)$\Rightarrow$(*)) It follows from \Cref{thm: irreduicible} that $ \mu_{p,\lv} $ is non-periodic when $ \lv $ is $ \Q $-linearly independent and $ p $ satisfies (ii). It is left to show that if $ p $ satisfies (i), then $ \mu_{p,\lv} $ has unit coefficients and uniformly discrete support for any $ \lv\in\R_{+}^{n} $. Assume that $ p $ satisfies (i) and $ \lv\in\R_{+}^{n} $. Property (i) is equivalent to $ \sing(\Sigma_{p})=\emptyset $ and $ \mult(\xv)\equiv 1 $ for all $ \xv\in\Sigma_{p} $. According to \Cref{lem: all multiplicities agree}, this means that the multiplicities of the zeros of $ f(x)=p(\exp(ix\lv)) $, which are the coefficients in $ \mu_{p,\lv} $, are all equal to one. According to \Cref{lem: tau}, $ \sing(\Sigma_{p})=\emptyset $ implies that $ r=\mathrm{inf}\{\tau_{\lv}(\xv):\xv\in\Sigma_{p}\}>0 $. The zeros of $ f $ are distinct, so their gaps are given by $ \tau_{\lv} $, as seen in \Cref{lem: tau}, providing uniform lower bound $ x_{j+1}-x_{j}=\tau_{\lv}(x_{j}\lv)\ge r>0 $.

((*)$\Rightarrow$(i)+(ii)) Let $ p\in\A_{\bd}(n) $ with $\Q  $-linearly independent $ \lv\in\R_{+}^{n} $, and assume that $ \mu_{p,\lv} $ satisfies (*). Let $ \Lambda $ be the support of $ \mu_{p,\lv} $, so it is non-periodic and uniformly discrete. If $ p $ had only binomial factors, then $ \Lambda $ would be a union of infinite arithmetic progressions, by \Cref{thm: irreduicible}, and such a union is either periodic or it has gaps as small as we wish. We conclude that $ p $ satisfies (ii), and it is left to show (i), namely that $ \sing(\Sigma_{p})=\emptyset $ and $ m(\xv)\equiv 1 $. Let $ (x_{j})_{j\in\Z} $ be the zeros of $ f(x)=p(\exp(ix\lv)) $, ordered increasingly, so by (*) they are all simple and $ \tau_{\lv}(x_{j}\lv)=x_{j+1}-x_{j}\ge r>0 $ uniformly for some given $ r>0 $. 
Note that $ x_{j}\lv\in\reg(\Sigma_{p}) $ with $ m(x_{j}\lv)=1 $ for all $ j\in\Z $, since every $ x_{j} $ has multiplicity one. The sequence $ \{x_{j}\lv\}_{j\in\Z} $ is dense in $ \reg(\Sigma_{p}) $ since $ \lv $ is $ \Q $-linearly independent, so $ m(\xv)=1 $ for all $ \xv\in\reg(\Sigma_{p}) $ and  $\mathrm{inf}\{\tau_{\lv}(\xv):\xv\in\reg(\Sigma_{p})\}=\mathrm{inf}\{\tau_{\lv}(x_{j}\lv):j\in\Z\}\ge r>0  $, by continuity of $ \tau_{\lv} $ and $ m $ on $ \reg(\Sigma_{p}) $. Then $ \sing(\Sigma_{p})=\emptyset $, by \Cref{lem: tau}, which means that $ \mult(\xv)\equiv1 $.

(Genericity)
By \Cref{thm: LYd full dimension}, 
For any ${\bf d}\in \Z_{> 0}^n$, the subset $ \A_{\bf d}^{\circ}\subset \A_{\bf d} $ of $p\in \A_{\bf d}(n)$ that satisfy (i), is a semialgebraic open, dense subset of $\A_{\bf d}(n)$. Furthermore, for any nonzero $p\in \A_{\bf d}(n)$, we can chose ${\bf x}\in [0,2\pi)^n$ for which $p(\exp(i{\bf x}))\neq 0$. By \Cref{cor:LY_top}, for any $\lambda>0$, the 
polynomial $(\mathcal{D}_{\lambda,\xv})^{|{\bf d}|}p$ satisfies (i). As seen in \Cref{def: Tlambda}, every application of $\mathcal{D}_{\lambda,\xv}$ 
contributes one to the degree of $\lambda$ and so the result, $(\mathcal{D}_{\lambda,\xv})^{|{\bf d}|}p$ can be expressed 
as a polynomial of degree $ |\bd| $ in $ \lambda $.

For (ii), consider the set $B_{\alpha}$ of polynomials $p\in \A_{\bf d}(n)$ that has a binomial factor of multi-degree $ \alpha\le \bd,\alpha\ne \bd $. We will see that $B_{\alpha}$ is a semialgebraic subset of $ \A_{\bf d}(n)$ of positive codimension. By \Cref{lem: binoimal}, the binomial factor of $p$ has the form  
$(1+a{\bf z}^{\alpha})$ for some $a\in \C^*$ with $|a| = 1$. Therefore $B_{\alpha}=\set{(1+a{\bf z}^{\bf \alpha})q({\bf z})\ :\ |a|=1,\ q\in \A_{\bd-\alpha}}.$
From this and \Cref{thm: LYd full dimension}, we see that $ B_{\alpha} $ is semi-algebraic of dimension 
\[\dim(B_{\alpha}) = 1+\dim(\A_{\bd-\alpha})= 2+\prod_{j=1}^n(d_j-\alpha_j+1).\] 
Since $\alpha\neq 0$, there is some $\alpha_i\geq 1$. We then calculate that
%
\[\prod_{j=1}^n(d_j-\alpha_j+1)\le(d_i-\alpha_i+1)\prod_{j\ne i}(d_j+1)
=\prod_{j=1}^n(d_j+1)-\alpha_i\prod_{j\ne i}(d_j+1)
 <\prod_{j=1}^n(d_j+1)-1,\]
using that $ \alpha_i\prod_{j\ne i}(d_j+1)\ge 2^{n-1} > 1$ since $ n\ge 2 $ and $ d_{j}+1\ge 2 $ for all $ j $. 
This shows that $\dim(B_{\alpha}) <\dim(\A_{\bd})$ for any $0\lneq \alpha \leq {\bf d}$. 

Together these show that the set of polynomials in $\A_{\bf d}(n)$ satisfying (i) and (ii) is a semialgebraic, open dense subset of $\A_{\bf d}(n)$.\end{proof}

\section{Gap distributions}\label{sec: gap distributions}

	The existence of a gap distribution $ \rho_{p,\lv} $ was previously known for specific type of Lee-Yang polynomials, those for which the zeros of $ p(\exp(ix\lv)) $ are the square-root eigenvalues of a quantum graph that has $ n $ edges of lengths $ \lv=(\lv_{1},\ldots,\lv_{n}) $, assuming these lengths are $ \Q $-linearly independent \cite{BarGas_jsp00,BerWin_tams10,CdV_ahp15}. The existence of a gap distribution of $ \mu_{p,\lv} $ for any choice of Lee-Yang $ p $ and positive $ \lv $ is proven in this chapter. In particular, this includes the case of quantum graphs with edge lengths that are not $ \Q $-linearly independent.


Recall that if $ p $ has multi-degree $ \bd $ and it decomposes as $ p=\prod_{j=1}^{N}q_{j}^{c_{j}} $ into distinct irreducible $ q_{j} $'s, then $ p^{\mathrm{red}}=\prod_{j=1}^{N}q_{j} $ is the reduced square-free polynomial and we denote its multi-degree by $ \bd^{\mathrm{red}} $. In particular, $ \bd^{\mathrm{red}}\le\bd $ element-wise, with equality if and only if $ p $ is square free. As seen in \Cref{lem: tau}, if we number the zeros of $ f(x)=p(\exp(ix\lv)) $ increasingly with multiplicity, then the positive gaps are described by $ \tau_{\lv}:\Sigma_{p}\to\R_{+} $,
\begin{equation}\label{eq: xj+1-xj}
	x_{j+1}-x_{j}=\tau_{\lv}(x_{j}\lv)\quad\text{whenever}\ x_{j+1}\ne x_{j},
\end{equation}
as can be seen in \Cref{fig: mod 2p}. To prove \Cref{thm: existance of gaps distribution}, let us define the measure $ \nu_{p,\lv} $.
\begin{Def}\label{defn: nu}
	Let $ p\in\A_{\bd}(n) $, $ \lv\in\R_{+}^{n} $, and let $ (\tau_{\lv})_{*}\mm_{\lv} $ denote the push-forward of $ \mm_{\lv} $ by $ \tau_{\lv} $. Define the measure $ \nu_{p,\lv} $ on $ \R_{\ge0} $ by
	\begin{equation}\label{eq: nu(p,l)}
		\nu_{p,\lv}:= c_{0}\delta_{0}+c_{\tau}(\tau_{\lv})_{*}\mm_{\lv},\quad \text{ with  }\ 
		c_{0}:=\frac{\langle\bd-\bd^{\mathrm{red}},\lv\rangle}{\langle\bd,\lv\rangle},\ c_{\tau}:=\frac{1}{(2\pi)^{n-1}\langle\bd,\lv\rangle}.
	\end{equation}
That is, for any continuous $ f:\R_{\ge0}\to\C $,
\begin{equation}\label{eq: int f}
	\int f d\nu_{p,\lv}:=c_{0}f(0)+c_{\tau}\int_{\Sigma_{p}}f(\tau_{\lv}(\xv))\ d\mm_{\lv}(\xv).
\end{equation}
\end{Def}
\begin{remark}
	The measure $ \nu_{p,\lv} $ is normalized, $ \int d\nu_{p,\lv}=1 $, since $ \int d(\tau_{\lv})_{*}\mm_{\lv}=\mm_{\lv}(\Sigma_{p}) $,
	\[ c_{\tau}=\frac{1}{\int_{\Sigma_{p}}\mult(\xv)d\mm_{\lv}(\xv)}, \text{ and }\ c_{0}=\frac{\int_{\Sigma_{p}}(\mult(\xv)-1)d\mm_{\lv}(\xv)}{\int_{\Sigma_{p}}\mult(\xv)d\mm_{\lv}(\xv)}=1-c_{\tau}\mm_{\lv}(\Sigma_{p}).\]
\end{remark}

\begin{proof}[Proof of \Cref{thm: existance of gaps distribution} and \Cref{thm: p dependence}]
	Fix $ \mu $, an $ \N $-FQ, and let $ n\in\N,p\in\A_{\bd}(n), $ and $ \lv\in\R_{+}^{n} $ with $ \Q $-linearly independent entries, such that $ \mu=\mu_{p,\lv} $, as guaranteed by \cite{AloCohVin}. Consider the decomposition $ p=\prod_{j=1}^{N}q_{j}^{c_{j}} $ into distinct irreducible Lee-Yang polynomials. Let $ (x_{j})_{j\in\Z} $ be the zeros of $ p(\exp(ix\lv)) $, numbered increasingly with multiplicity.	
	
	The proofs of \Cref{thm: existance of gaps distribution} and \Cref{thm: p dependence} interlace according to the following sequence of lemmas, which will be proven afterwards. 
	For each, we take the assumptions listed above. 

	\begin{lem}\label{lem:step1}
	The gap distribution $\rho= \rho_{p,\lv} $ exists and is equal to $ \nu_{p,\lv} $. 
	That is, for any continuous $ f:\R\to\C $, 
		\[\lim_{N\to\infty}\frac{1}{N}\sum_{n=1}^{N}f(x_{j+1}-x_{j})=\int f\nu_{p,\lv}.\]
		Moreover, $ \nu_{p,\lv}=\nu_{q,\lv} $ when $ q(\z):=p(\exp(i\xv_{0})\z) $ for any fixed $ \xv_{0}\in\R^{n} $ (\Cref{thm: p dependence}(1)).
	\end{lem}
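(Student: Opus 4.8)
The plan is to obtain the Ces\`aro limit directly from the unique ergodicity of the first-return map $T_\lv$ (\Cref{thm: ergodicity}), after rewriting the averaged quantity as the ergodic average of a single function on $\Sigma_p$, and then to derive the ``moreover'' clause from translation invariance of the construction. Write $f(x)=p(\exp(ix\lv))$ and let $(x_j)_{j\in\Z}$ be its zeros ordered increasingly with multiplicity; the key combinatorial observation is that a distinct zero of multiplicity $m$ occupies a block of $m$ consecutive entries of $(x_j)$, producing $m-1$ consecutive gaps equal to $0$ and then one gap equal to $\tau_\lv$ evaluated at that point (by \Cref{lem: tau}, $\tau_\lv(x_j\lv)$ records the gap to the next \emph{distinct} zero). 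Averaging over such a block, each of its $m$ entries contributes on average $\frac{1}{m}\big((m-1)f(0)+f(\tau_\lv(x_j\lv))\big)$. Hence, if we define $h\colon\Sigma_p\to\C$ by
\[
h(\xv)\;=\;f(0)\;+\;\frac{f(\tau_\lv(\xv))-f(0)}{\mult(\xv)},
\]
then $\sum_{j=1}^N f(x_{j+1}-x_j)=\sum_{j=1}^N h(x_j\lv)+O(1)$, the error being bounded uniformly in $N$ because every gap lies in the compact interval $[0,2\pi|\bd|/\langle\bd,\lv\rangle]$ (\Cref{thm: zeros density and upper bound}), on which $f$ is bounded.

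Next I would check that $h$ is bounded and Riemann integrable on $\Sigma_p$. It is bounded since $\tau_\lv$ is bounded (\Cref{lem: tau}) and $\mult\ge 1$ everywhere. For Riemann integrability: $\tau_\lv$ is continuous on $\reg(\Sigma_p)$ (\Cref{lem: tau}) and $\mult$ is locally constant there, so the discontinuity set of $h$ is contained in the closed set $\sing(\Sigma_p)$, which has dimension at most $n-2$ by \Cref{lem: dimension and singularity} and hence zero $(n-1)$-volume in $\Sigma_p$. Therefore \eqref{eq: ergodicity 2} applies to $h$ (with a suitable choice of base point in $\Sigma_p$, namely a lift of a zero of $f$, so that the sequence $(\xv_0+x_j\lv)_j$ in \Cref{thm: ergodicity} becomes $(x_j\lv\bmod 2\pi)_j$), and since $\int_{\Sigma_p}\mult\,d\mm_\lv=(2\pi)^{n-1}\langle\bd,\lv\rangle$ and $\mm_\lv(\Sigma_p)=(2\pi)^{n-1}\langle\bd^{\mathrm{red}},\lv\rangle$ (\Cref{thm: ergodicity}), it gives
\[
\lim_{N\to\infty}\frac1N\sum_{j=1}^N f(x_{j+1}-x_j)
=\frac{\int_{\Sigma_p}\mult(\xv)h(\xv)\,d\mm_\lv(\xv)}{(2\pi)^{n-1}\langle\bd,\lv\rangle}
=\frac{\langle\bd-\bd^{\mathrm{red}},\lv\rangle}{\langle\bd,\lv\rangle}f(0)+\frac{1}{(2\pi)^{n-1}\langle\bd,\lv\rangle}\int_{\Sigma_p}f(\tau_\lv(\xv))\,d\mm_\lv(\xv),
\]
where the last equality uses $\mult(\xv)h(\xv)=(\mult(\xv)-1)f(0)+f(\tau_\lv(\xv))$. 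By \Cref{defn: nu} and \eqref{eq: int f} the right-hand side is exactly $\int f\,d\nu_{p,\lv}$, so $\rho_{p,\lv}$ exists and equals $\nu_{p,\lv}$.

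For the ``moreover'' clause, if $q(\z)=p(\exp(i\xv_0)\z)$ for a fixed $\xv_0\in\R^n$ then $\Sigma_q=\Sigma_p-\xv_0$ in $\R^n/2\pi\Z^n$, and translation by $-\xv_0$ is a diffeomorphism $\Sigma_p\to\Sigma_q$ preserving the ambient volume form and the normal directions (hence $\mm_\lv$), commuting with the flow $\xv\mapsto\xv+t\lv$ (hence intertwining $\tau_\lv$, $T_\lv$ and the function $\mult$), and leaving the multidegree $\bd$ and the reduced multidegree $\bd^{\mathrm{red}}$ unchanged. Consequently $c_0$, $c_\tau$ and the pushforward $(\tau_\lv)_*\mm_\lv$ in \Cref{defn: nu} are unchanged, so $\nu_{q,\lv}=\nu_{p,\lv}$.

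The one step needing genuine care is the combinatorial reduction in the first paragraph — correctly accounting for the vanishing gaps inside each multiplicity block and packaging their average into the factor $\mult(\xv)h(\xv)=(\mult(\xv)-1)f(0)+f(\tau_\lv(\xv))$ — together with the (routine) verification that $h$ is bounded and Riemann integrable so that \Cref{thm: ergodicity} is applicable; everything else is a direct application of the ergodic theorem and of the elementary properties of $\tau_\lv$ in \Cref{lem: tau}.
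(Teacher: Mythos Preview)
Your proof is correct and follows essentially the same approach as the paper: the same auxiliary function $h(\xv)=\frac{\mult(\xv)-1}{\mult(\xv)}f(0)+\frac{1}{\mult(\xv)}f(\tau_\lv(\xv))$, the same block-by-block combinatorial reduction, and the same application of \eqref{eq: ergodicity 2}. The only cosmetic differences are that the paper records exact equality of the two partial sums at those $N$ with $x_N<x_{N+1}$ and then invokes \Cref{lem: partial sum}, whereas you absorb the tail into an $O(1)$ term; and for the ``moreover'' clause the paper notes that $q(\exp(it\lv))=p(\exp(i(\xv_0+t\lv)))$ and invokes the base-point independence built into \Cref{thm: ergodicity}, whereas you verify directly that translation by $-\xv_0$ carries all the data defining $\nu_{p,\lv}$ to that defining $\nu_{q,\lv}$.
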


	\begin{lem}[\Cref{thm: p dependence}(2),(3)]\label{lem:step2}
The distribution  $ \nu_{p,\lv} $ has an atom at $ \Delta=0 $ if and only if $ p $ is not square free. 
It  has an atom at $ \Delta>0 $ if and only if some (not necessarily distinct) pair of factors, $ q_{i} $ and $ q_{j} $ are related by $q_{j}(\z)= q_{i}(\exp(i\Delta \lv)\z) $ for all $ \z $. Moreover, if this holds and $ q_{i}=q_{j} $ then $ q_{i} $ is binomial. 
	\end{lem}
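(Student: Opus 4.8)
The plan is to analyze the atoms of $\nu_{p,\lv}$ via the description in \Cref{defn: nu}: $\nu_{p,\lv}$ has an atom at $\Delta$ iff either $\Delta=0$ and $c_0>0$, or $c_\tau\cdot \mm_\lv(\tau_\lv^{-1}(\Delta))>0$. By \Cref{lem:step1} we may freely replace $p$ by a torus-translate, and by \Cref{thm: irreduicible} (and \Cref{lem: binoimal}) the measure splits as $\mu_{p,\lv}=\sum_j c_j\mu_{q_j,\lv}$ with $\Lambda=\bigcup_j\Lambda_j$, so a gap $\Delta$ in $\Sigma_p$ of positive $\mm_\lv$-measure must arise from the interaction of two (possibly equal) irreducible factors. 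The atom at $0$ is the easy case: $c_0 = \langle\bd-\bd^{\mathrm{red}},\lv\rangle/\langle\bd,\lv\rangle$, which is positive precisely when $\bd\ne\bd^{\mathrm{red}}$, i.e. when $p$ has a square factor; this is immediate from \Cref{defn: nu} and the remark following it, using that $\mm_\lv(\sing(\Sigma_p))=0$ from the proof of \Cref{thm: ergodicity}.

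For $\Delta>0$: first I would reduce to the square-free case, since $\tau_\lv$ depends only on $p^{\mathrm{red}}$ (as noted in \Cref{lem: tau}), so WLOG $p=\prod_{j=1}^N q_j$ with the $q_j$ distinct irreducible. The key observation is that $\tau_\lv^{-1}(\Delta)$ has positive $\mm_\lv$-measure iff it contains a full $(n-1)$-dimensional piece of some layer, equivalently iff there is a positive-measure set $W\subseteq\reg(\Sigma_p)$ with $\xv+\Delta\lv\in\Sigma_p$ for all $\xv\in W$ and no intermediate return in $(0,\Delta)$. Writing $\Sigma_p=\bigcup_j\Sigma_{q_j}$, such a $W$ must lie (up to measure zero) in a single $\Sigma_{q_i}$ and map into a single $\Sigma_{q_j}$ under the shift by $\Delta\lv$. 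Now I would invoke the irreducibility statement in \Cref{lem: dimension and singularity}(2): the polynomial $\z\mapsto q_j(\exp(i\Delta\lv)\z)$ is again an irreducible Lee-Yang polynomial, its zero set is $\Sigma_{q_j}-\Delta\lv$, and if this shifted variety shares an $(n-1)$-dimensional piece with the irreducible variety $\Sigma_{q_i}$, then the two must coincide as varieties; since $\deg_k$ of an irreducible Lee-Yang polynomial determines it up to a scalar on its zero set (an $(n-1)$-dimensional real-analytic variety is Zariski-dense in the complex variety, by \cite{Agler06toral}, and a Lee-Yang polynomial is determined up to scalar by its complex zero set when irreducible), we get $q_j(\exp(i\Delta\lv)\z) = c\, q_i(\z)$. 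Normalizing at $\z=\mathbf 0$ and using that $q_j(\exp(i\Delta\lv)\mathbf 0)=q_j(\mathbf 0)$ forces $c$ to be a unit phase, which we can absorb; after a torus translation (allowed by \Cref{lem:step1}) we may take $c=1$, giving $q_j(\z)=q_i(\exp(i\Delta\lv)\z)$. Conversely, if such a relation holds, then for generic $\xv\in\Sigma_{q_i}$ the point $\xv+\Delta\lv$ lies in $\Sigma_{q_j}\subseteq\Sigma_p$, and one checks (using the strict monotonicity of the phase functions from \Cref{prop: phase functions}(1), which gives that the first return before $\Delta$ is also controlled) that on a positive-measure subset $\Delta$ is exactly the return time, producing an atom.

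For the ``moreover'': suppose $\Delta>0$, $q_i=q_j=:q$, and $q(\z)=q(\exp(i\Delta\lv)\z)$. Iterating, $q(\z)=q(\exp(ik\Delta\lv)\z)$ for all $k\in\Z$. The plan is to derive a contradiction with irreducibility unless $q$ is binomial: the map $\z\mapsto\exp(i\Delta\lv)\z$ generates an action on the (finite) zero set of $q$ inside $\T^n$... more precisely, fix a root $\exp(i\xv_0)\in\Sigma_q$; then $\exp(i(\xv_0+k\Delta\lv))\in\Sigma_q$ for all $k$, and since $\Sigma_q$ has dimension $n-1$ while $\{\xv_0+k\Delta\lv : k\in\Z\}$ is an infinite $\mathbf 1$-transverse progression, one shows the orbit forces $\Sigma_q$ to be invariant under the one-parameter subgroup $\{\exp(it\Delta\lv):t\in\R\}$ — here I would use that $q(\exp(i\xv)) = a_\bd e^{i\langle\bd,\xv\rangle}\prod(1-e^{i\theta_j(\xv)})$ together with the periodicity to show each $\theta_j$ is constant along the direction $\Delta\lv$, hence (by \Cref{prop: phase functions}(1), since $\nabla\theta_j$ has all coordinates of one sign and $\nabla\theta_j\cdot\mathbf 1=-1$) that $\nabla\theta_j$ is a fixed multiple of $\mathbf 1$, forcing $\Sigma_q$ to be a union of translates of the hyperplane $\{\langle\mathbf 1,\xv\rangle=\text{const}\}$; such a variety is cut out by a binomial, so by irreducibility $q$ itself is binomial.

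The main obstacle I anticipate is the bookkeeping in the converse direction and in the ``moreover'': namely, carefully arguing that a positive-measure coincidence of the shifted variety with $\Sigma_p$ genuinely produces a positive-measure set on which $\Delta$ is the \emph{first} return (not merely \emph{a} return), which requires ruling out that an earlier layer intervenes on all of $W$ — here the strict monotonicity and the layer structure (\Cref{prop: Layer structure}, \Cref{prop: phase functions}) do the work, but the argument must be phrased at the level of the measure $\mm_\lv$ and its absolute continuity with respect to $d\sigma$ (\Cref{lem: BG measure}). The algebraic step identifying the two irreducible Lee-Yang polynomials from an $(n-1)$-dimensional common piece is also delicate and relies essentially on the toral density result of \cite{Agler06toral} quoted in \Cref{lem: dimension and singularity}.
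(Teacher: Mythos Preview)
Your treatment of the atom at $\Delta=0$ and of the forward implication for $\Delta>0$ matches the paper: positive $\mm_{\lv}$-measure of $\tau_{\lv}^{-1}(\Delta)$ forces $\tau_{\lv}\equiv\Delta$ on an open set by real-analyticity of $\tau_{\lv}$ on $\reg(\Sigma_p)\cap T_{\lv}^{-1}(\reg(\Sigma_p))$ (\Cref{lem: tau}), and then \Cref{lem: dimension and singularity}(2) identifies the two irreducible factors. Your caution about the converse direction (relation $\Rightarrow$ atom) is also reasonable; the paper's written proof in fact addresses only the ``only if'' direction for $\Delta>0$.

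The genuine gap is in your ``moreover'' argument. You try to pass from the \emph{discrete} invariance $q(\z)=q(\exp(i\Delta\lv)\z)$ to \emph{continuous} invariance of $\Sigma_q$ under $t\mapsto\exp(it\lv)$, equivalently to ``each $\theta_j$ is constant along the direction $\lv$''. This step is not justified and is in fact false for the very binomials you are trying to characterize: for $q(\z)=1-e^{-i\varphi}\z^{\alpha}$ one has $q(\exp(i\Delta\lv)\z)=q(\z)$ whenever $\Delta\langle\alpha,\lv\rangle\in 2\pi\Z$, yet $\Sigma_q$ is not invariant under the continuous flow. (Note that the orbit $\{k\Delta\lv\bmod 2\pi\}_{k\in\Z}$ need not be dense in $\T^n$: $\Q$-independence of $\lv_1,\dots,\lv_n$ does not imply $\Q$-independence of $\Delta\lv_1,\dots,\Delta\lv_n,2\pi$.) Worse, if $\theta_j$ \emph{were} constant along $\lv\in\R_+^n$, then $\nabla\theta_j\cdot\lv=0$ together with $\nabla\theta_j\in\R_{\le 0}^n$ would force $\nabla\theta_j=0$, contradicting $\nabla\theta_j\cdot\textbf{1}=-1$; so your chain of implications ends in a contradiction rather than in the binomial form.

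The paper's argument for this clause is much shorter and bypasses phase functions: from $q(\z)=q(\exp(i\Delta\lv)\z)$, the one-dimensional zero set $\Lambda$ of $x\mapsto q(\exp(ix\lv))$ satisfies $x\in\Lambda\Rightarrow x+\Delta\in\Lambda$, hence contains an infinite arithmetic progression. For irreducible non-binomial $q$ and $\Q$-independent $\lv$, \Cref{thm: irreduicible}(2) bounds $|\Lambda\cap A|$ uniformly for any $A$ with $\dim_{\Q}(A)<\infty$, giving the contradiction. An equally direct route, closer in spirit to what you attempted, is coefficient comparison: $q(\z)=q(\exp(i\Delta\lv)\z)$ forces $\Delta\langle\alpha,\lv\rangle\in 2\pi\Z$ for every $\alpha\in\supp(q)$, and $\Q$-independence of $\lv$ then forces all nonzero $\alpha\in\supp(q)$ to lie on a single integer ray, so $q(\z)=g(\z^{\gamma})$ for a univariate $g$; irreducibility gives $\deg g=1$.
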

		
	\begin{lem}\label{lem:step3}
$ \nu_{p,\lv} $ has no singular continuous part. 
	\end{lem}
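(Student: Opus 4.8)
The plan is to write $\nu_{p,\lv}$ as an atom at $0$ plus a push-forward of the absolutely continuous measure $\mm_{\lv}$ under the map $\tau_{\lv}$, and to exploit that $\tau_{\lv}$ is real analytic $\mm_{\lv}$-almost everywhere, so that this push-forward splits into an atomic and an absolutely continuous part. First I would record, from \Cref{defn: nu}, that $\nu_{p,\lv}=c_{0}\delta_{0}+c_{\tau}(\tau_{\lv})_{*}\mm_{\lv}$, so it suffices to show $(\tau_{\lv})_{*}\mm_{\lv}$ has no singular continuous part. By \Cref{lem: BG measure} (and the computation in the proof of \Cref{thm: ergodicity}), $\mm_{\lv}$ is supported on $\reg(\Sigma_{p})$, where it is absolutely continuous with respect to the induced $(n-1)$-dimensional volume with strictly positive density; in particular $\mm_{\lv}(\sing(\Sigma_{p}))=0$. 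By \Cref{lem: tau}, $\tau_{\lv}$ is continuous on $\reg(\Sigma_{p})$ and real analytic on the open set $U:=\{\xv\in\reg(\Sigma_{p}):T_{\lv}(\xv)\in\reg(\Sigma_{p})\}$.

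Next I would show that $\mm_{\lv}(\reg(\Sigma_{p})\setminus U)=0$, so that $\mm_{\lv}=\mm_{\lv}|_{U}$ and the real-analytic structure applies $\mm_{\lv}$-almost everywhere. If $\xv\in\reg(\Sigma_{p})\setminus U$, then $\yv:=T_{\lv}(\xv)=\xv+\tau_{\lv}(\xv)\lv$ lies in $\sing(\Sigma_{p})$ with $0<\tau_{\lv}(\xv)\le 2\pi|\bd|/\langle\bd,\lv\rangle$ by \Cref{lem: tau}; hence $\xv$ lies in the image of $S:=\{(\yv,t):\yv\in\sing(\Sigma_{p}),\ 0<t\le 2\pi|\bd|/\langle\bd,\lv\rangle,\ \yv-t\lv\in\Sigma_{p}\}$ under the affine map $(\yv,t)\mapsto\yv-t\lv$. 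For each fixed $\yv$ the phase functions $t\mapsto\theta_{j}(\yv-t\lv)$ are strictly monotone with slopes bounded above and below by \Cref{prop: phase functions}(1), so each enters $2\pi\Z$ only finitely often on a bounded $t$-interval; thus $S$ has finite fibers over $\sing(\Sigma_{p})$, which has dimension at most $n-2$ by \Cref{lem: dimension and singularity}, so $\dim S\le n-2$, its image is a subset of $\Sigma_{p}$ of dimension at most $n-2$, and therefore it is $\mm_{\lv}$-null. (Alternatively one may use that the first-return map $T_{\lv}$ preserves $\mm_{\lv}$, being the Poincar\'e section map of the $\vol_{n}$-preserving linear flow, whence $\mm_{\lv}(T_{\lv}^{-1}(\sing(\Sigma_{p})))\le\mm_{\lv}(\sing(\Sigma_{p}))=0$.)

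Finally I would split $U$ into the open set $U_{0}$ of points at which $\tau_{\lv}$ is locally constant and $U_{1}=U\setminus U_{0}$. On $U_{0}$ the function $\tau_{\lv}$ is constant on each of the at most countably many connected components, so $(\tau_{\lv})_{*}(\mm_{\lv}|_{U_{0}})$ is a countable sum of point masses and hence purely atomic. On $U_{1}$, $\tau_{\lv}$ is real analytic and non-constant in every neighbourhood of every point, so locally some partial derivative is not identically zero and its vanishing set is a proper real-analytic subset; covering $U_{1}$ by countably many such charts shows the critical set $Z=\{\xv\in U_{1}:\nabla\tau_{\lv}(\xv)=0\}$ has dimension at most $n-2$, hence $\mm_{\lv}(Z)=0$. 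On $U_{1}\setminus Z$ the map $\tau_{\lv}$ is a submersion onto $\R$, hence locally a coordinate projection, so the $\tau_{\lv}$-preimage of any Lebesgue-null subset of $\R$ has zero $(n-1)$-volume and therefore zero $\mm_{\lv}$-measure; thus $(\tau_{\lv})_{*}(\mm_{\lv}|_{U_{1}\setminus Z})$ is absolutely continuous with respect to Lebesgue measure on $\R$. Since $\mm_{\lv}=\mm_{\lv}|_{U_{0}}+\mm_{\lv}|_{U_{1}\setminus Z}$ after discarding the $\mm_{\lv}$-null sets $Z$ and $\reg(\Sigma_{p})\setminus U$, we conclude that $(\tau_{\lv})_{*}\mm_{\lv}$, and hence $\nu_{p,\lv}$, is the sum of an atomic measure and an absolutely continuous measure, with no singular continuous component.

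The step I expect to be the main obstacle is the second one, controlling the set $\reg(\Sigma_{p})\setminus U$ of points whose first return under the flow lands on the singular locus of $\Sigma_{p}$; once $\tau_{\lv}$ is known to be real analytic $\mm_{\lv}$-a.e., the remaining push-forward analysis is routine.
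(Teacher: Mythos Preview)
Your proposal is correct and follows essentially the same approach as the paper: both restrict to the full-measure open set $A=\reg(\Sigma_{p})\cap T_{\lv}^{-1}(\reg(\Sigma_{p}))$ on which $\tau_{\lv}$ is real analytic, separate the locally-constant part (giving atoms) from the rest, and argue that the push-forward from the non-critical locus is absolutely continuous (the paper via the co-area formula, you via the equivalent submersion argument). You are in fact more careful than the paper in explicitly justifying that $\mm_{\lv}(\reg(\Sigma_{p})\setminus A)=0$, which the paper simply asserts.
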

		
\Cref{lem:step2} and \Cref{lem:step3} then give the following. 
		\begin{cor}[\Cref{thm: existance of gaps distribution}(1)]
The distribution $ \nu_{p,\lv} $ has finitely many atoms and no singular continuous part.   
		\end{cor}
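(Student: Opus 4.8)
The plan is to assemble the corollary directly from the two preceding lemmas, which is almost immediate once their content is unpacked. The statement has two parts: finitely many atoms, and no singular continuous part. The second part is exactly the conclusion of \Cref{lem:step3}, so nothing further is needed there. For the first part, the idea is to use \Cref{lem:step2} to match each atom of $\nu_{p,\lv}$ with a combinatorial datum attached to the irreducible factorization $p=\prod_{j=1}^{N}q_{j}^{c_{j}}$, and then bound the number of such data by a finite quantity.

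First I would handle the atom at $\Delta=0$: by \Cref{lem:step2} it is present if and only if $p$ is not square free, contributing at most one atom there. Next, for atoms at $\Delta>0$: \Cref{lem:step2} says such a $\Delta$ must arise from an (unordered) pair of irreducible factors $q_i,q_j$ of $p$ with $q_j(\z)=q_i(\exp(i\Delta\lv)\z)$, and moreover the diagonal case $q_i=q_j$ forces $q_i$ to be binomial. So it remains to argue that each relevant pair determines at most one value of $\Delta$, hence there are finitely many positive atoms. For a fixed pair $(q_i,q_j)$ with $q_i\ne q_j$, the set of $\Delta$ with $q_j(\z)=q_i(\exp(i\Delta\lv)\z)$ is at most a single point: if both $\Delta$ and $\Delta'$ worked, then $q_i(\exp(i(\Delta-\Delta')\lv)\z)=q_i(\z)$, and by the ``$q_i=q_i$'' clause of \Cref{lem:step2} this would force $q_i$ binomial, but a binomial Lee-Yang polynomial $1-e^{-i\varphi}\z^{\bd}$ is invariant under $\z\mapsto\exp(i\Delta''\lv)\z$ only when $\langle\bd,\Delta''\lv\rangle\in 2\pi\Z$, which (since $\lv$ has $\Q$-linearly independent entries, or directly since $0\le\Delta''<$ the max gap bound) pins $\Delta-\Delta'$ down; in any case the atoms of a probability measure are discrete, so only finitely many of the candidate values can carry positive mass. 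For the diagonal pairs $q_i=q_i$ with $q_i$ binomial, \Cref{lem: binoimal} gives $q_i(\z)=1-e^{-i\varphi_i}\z^{\bd^{(i)}}$, and $q_i(\exp(i\Delta\lv)\z)=q_i(\z)$ iff $\langle\bd^{(i)},\lv\rangle\Delta\in 2\pi\Z$, a discrete set; combined with the support bound $\Delta\le 2\pi|\bd|/\langle\bd,\lv\rangle$ from \Cref{thm: l dependenece } (or \Cref{thm: zeros density and upper bound}), this is a finite set of candidates.

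Putting these together: the atoms of $\nu_{p,\lv}$ in $(0,\infty)$ are contained in the (finite) union, over unordered pairs $\{q_i,q_j\}$ of distinct irreducible factors, of the single value of $\Delta$ they determine, together with the finitely many diagonal candidates from binomial factors; adding the possible atom at $0$ gives finitely many atoms in total. Since $\nu_{p,\lv}$ is a finite (probability) measure, its atomic part $\sum_{j}\nu_{p,\lv}(\{r_j\})\delta_{r_j}$ is well defined, and \Cref{lem:step3} says the remainder $\nu_{p,\lv}-\sum_j\nu_{p,\lv}(\{r_j\})\delta_{r_j}$ has no singular continuous component, hence is absolutely continuous with respect to Lebesgue measure. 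This is precisely the claim of \Cref{thm: existance of gaps distribution}(1). I do not anticipate a genuine obstacle here — the corollary is a bookkeeping step — the only mild care needed is the argument that each pair of factors contributes at most one positive atom, which as sketched above follows from the ``$q_i=q_i$ forces binomial'' half of \Cref{lem:step2} together with \Cref{lem: binoimal} and the support bound.
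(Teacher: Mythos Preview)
Your proposal is correct and follows the same approach as the paper, which simply records the corollary as an immediate consequence of \Cref{lem:step2} and \Cref{lem:step3}; the explicit counting you carry out (pairs of factors, uniqueness of $\Delta$ for non-binomial pairs, discrete lattice intersected with the support bound for binomial ones) is exactly what the paper later spells out in the proof of \Cref{cor: atoms}. One minor slip to fix: the parenthetical ``in any case the atoms of a probability measure are discrete, so only finitely many of the candidate values can carry positive mass'' is not a valid argument on its own --- a probability measure may have countably many atoms --- but you do not need it, since your preceding sentence already shows the candidate set of $\Delta$'s for each pair is finite via the support bound $\Delta\le 2\pi|\bd|/\langle\bd,\lv\rangle$.
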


	\begin{lem}[\Cref{thm: existance of gaps distribution}(3)]\label{lem:step4}
 For any $ \Delta=x_{j+1}-x_{j}$ and any open interval $ I $ that contains $ \Delta $, $ \nu_{p,\lv}(I)>0 $.  
	\end{lem}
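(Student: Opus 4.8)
\textbf{Proof plan for \Cref{lem:step4}.}
The plan is to exploit the ergodicity already established, together with the explicit description of $\nu_{p,\lv}$ in terms of $\tau_\lv$ and $\mm_\lv$ in \Cref{defn: nu}. Fix a gap $\Delta = x_{j+1}-x_j$ and an open interval $I\ni\Delta$. I want to show $\nu_{p,\lv}(I)>0$. There are two cases depending on whether $\Delta=0$ or $\Delta>0$, though the second case is the substantive one.

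First, if $\Delta>0$, then by \Cref{lem: tau} we have $\tau_\lv(x_j\lv)=\Delta$, so the point $\xv_0:=x_j\lv\bmod 2\pi$ lies in $\Sigma_p$ and satisfies $\tau_\lv(\xv_0)=\Delta$. I would then split into subcases. If $\xv_0\in\reg(\Sigma_p)$ and $T_\lv(\xv_0)\in\reg(\Sigma_p)$, then by \Cref{lem: tau} the map $\xv\mapsto\tau_\lv(\xv)$ is continuous (indeed real analytic) in a neighborhood $U\subset\reg(\Sigma_p)$ of $\xv_0$; shrinking $U$ so that $\tau_\lv(U)\subset I$, the set $\tau_\lv^{-1}(I)$ contains $U$, which is open and nonempty in $\Sigma_p$, hence has strictly positive $\mm_\lv$-measure since $\mm_\lv$ has a strictly positive density with respect to the volume form on $\reg(\Sigma_p)$ by \Cref{lem: BG measure}. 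Then $\nu_{p,\lv}(I)\ge c_\tau\,\mm_\lv(\tau_\lv^{-1}(I))>0$. If instead $\xv_0$ or $T_\lv(\xv_0)$ is singular, I would use the second part of \Cref{lem: tau}: the value $\tau_\lv(\xv_0)$ lies in the closure of $\{\tau_\lv(\xv):\xv\in\reg(\Sigma_p)\}$, so there exists a regular point $\xv_1$ with $\tau_\lv(\xv_1)\in I$; if moreover $T_\lv(\xv_1)$ can be taken regular (which holds for $\xv_1$ in a full-measure subset of $\reg(\Sigma_p)$, since $\sing(\Sigma_p)$ and its $\tau_\lv$-preimage have measure zero — this uses that the discontinuity/singular locus is lower-dimensional, cf.\ \Cref{lem: dimension and singularity} and \Cref{prop: phase functions}), then continuity of $\tau_\lv$ near $\xv_1$ gives an open set mapped into $I$, and we conclude as before.

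Second, if $\Delta=0$, then some atom of $\mu$ has multiplicity $>1$, so $p$ is not square free, and by \Cref{lem:step2} the measure $\nu_{p,\lv}$ has an atom at $0$, i.e.\ $c_0=\nu_{p,\lv}(\{0\})>0$; since $0\in I$, we get $\nu_{p,\lv}(I)\ge c_0>0$. (Alternatively this follows directly from $c_0 = \langle\bd-\bd^{\mathrm{red}},\lv\rangle/\langle\bd,\lv\rangle>0$ when $\bd\neq\bd^{\mathrm{red}}$.)

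The main obstacle I anticipate is the bookkeeping in the singular case: I need to guarantee that one can perturb $\xv_0$ to a regular point $\xv_1$ at which $\tau_\lv$ is \emph{continuous}, i.e.\ at which $T_\lv(\xv_1)$ is also regular, while keeping $\tau_\lv(\xv_1)$ inside $I$. The continuity statement in \Cref{lem: tau} is only asserted on the open set where $T_\lv(\xv)\in\reg(\Sigma_p)$, so I must argue that this set is dense in $\reg(\Sigma_p)$ — which should follow because its complement within $\reg(\Sigma_p)$ is $T_\lv^{-1}(\sing(\Sigma_p))$, a set of lower dimension (the first-return map is, off a measure-zero set, a real-analytic local diffeomorphism, and $\sing(\Sigma_p)$ has codimension $\ge 1$ in $\Sigma_p$). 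Combining density with the closure statement $\tau_\lv(\sing)\subset\overline{\tau_\lv(\reg)}$ from \Cref{lem: tau} then lets me find the desired $\xv_1$. Everything else is a routine application of positivity of the density of $\mm_\lv$ from \Cref{lem: BG measure}.
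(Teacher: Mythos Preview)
Your $\Delta>0$ case is essentially the paper's, but you are misreading \Cref{lem: tau}: it asserts that $(\xv,\lv)\mapsto\tau_\lv(\xv)$ is \emph{continuous} on all of $\reg(\Sigma_p)\times\R_+^n$, with real analyticity (not mere continuity) requiring the extra condition $T_\lv(\xv)\in\reg(\Sigma_p)$. So the ``obstacle'' you anticipate does not exist: once $\xv_0\in\reg(\Sigma_p)$, the set $U=\{\xv\in\reg(\Sigma_p):\tau_\lv(\xv)\in I\}$ is automatically open and contains $\xv_0$, and you are done by \Cref{lem: BG measure}. The paper's proof proceeds exactly this way, invoking the closure statement from \Cref{lem: tau} only when $\xv_0$ itself is singular.

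Your $\Delta=0$ case has a genuine gap. From $x_{j+1}=x_j$ you deduce that the atom at $x_j$ has multiplicity $>1$, and then claim ``so $p$ is not square free.'' This inference is false: a square-free $p$ can have points of multiplicity $>1$ on $\Sigma_p$ --- these are precisely the singular points (cf.\ \Cref{lem: dimension and singularity}(1)). For instance, $p(z_1,z_2)=(z_1-1)(z_2-1)$ is square free, yet $\exp(i\cdot 0\cdot\lv)=(1,1)$ is a zero of multiplicity $2$ for any $\lv$. In that situation $c_0=0$ and your argument gives nothing. The paper repairs this by splitting again: if $\sing(\Sigma_p)\neq\emptyset$, then \Cref{lem: tau} gives $\inf_{\xv\in\reg(\Sigma_p)}\tau_\lv(\xv)=0$, so $U=\tau_\lv^{-1}(I)\cap\reg(\Sigma_p)\neq\emptyset$ and one concludes via $\mm_\lv(U)>0$ as before; only when $\sing(\Sigma_p)=\emptyset$ does multiplicity $>1$ force a square factor, and then $c_0>0$ applies.
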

 Together with \Cref{thm: irreduicible}, this gives the following:
		\begin{cor}[\Cref{thm: existance of gaps distribution}(2)]
		If $ \mu_{p,\lv} $ is periodic then $ \nu_{p,\lv} $ is purely atomic. Conversely, if $ \mu_{p,\lv} $ is not periodic, with support $ \Lambda $, then at least one of the following holds:
		\begin{enumerate}
			\item $ \Lambda $ contains two arithmetic progressions with periods $ \Delta_{1},\Delta_{2} $ such that $ \frac{\Delta_{1}}{\Delta_{2}}\notin\Q $.
			\item $ \dim_{\Q}(\Lambda)=\infty $. 
			\end{enumerate}
Each one of these ensures that there are infinitely many gap values, hence the support of $ \nu_{p,\lv} $ is not finite. In particular $ \nu_{p,\lv} $ must have an absolutely continuous part. 
		\end{cor}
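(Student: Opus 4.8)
The plan is to deduce this corollary from the structural theorem \Cref{thm: irreduicible} together with the already-established \Cref{lem: binoimal}, \Cref{lem:step2}, \Cref{lem:step3} and \Cref{lem:step4}, plus two elementary combinatorial observations about discrete subsets of $\R$. The whole argument reduces the analytic statement ``$\nu_{ac}\neq 0$'' to the purely combinatorial statement ``the gap set $\{x_{j+1}-x_j:j\in\Z\}$ is infinite'', which we then read off from the factorization of $p$.

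First I would dispose of the periodic case. If $\mu_{p,\lv}$ is periodic with period $T>0$, then both its support $\Lambda$ and its multiplicity function are $T$-periodic, so listing the atoms increasingly with multiplicity as $(x_j)_{j\in\Z}$ and putting $M:=\mu_{p,\lv}([0,T))\in\N$ one has $x_{j+M}=x_j+T$ for all $j$; hence the gap sequence $(x_{j+1}-x_j)_j$ is $M$-periodic in $j$ and takes at most $M$ distinct values. Every empirical gap measure $\frac1N\sum_{j=1}^N\delta_{x_{j+1}-x_j}$ is then supported on that finite set, and so is its weak limit $\rho_{p,\lv}=\nu_{p,\lv}$; thus $\nu_{p,\lv}$ is purely atomic.

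For the converse I would first set up the dichotomy and then extract infinitely many gap values from it. Writing $p=\prod_j q_j^{c_j}$ into distinct irreducible Lee-Yang factors and $\Lambda=\bigcup_j\Lambda_j$ as in \Cref{thm: irreduicible}: if some $q_j$ is non-binomial, then \Cref{thm: irreduicible}(2) gives $\dim_\Q(\Lambda)\ge\dim_\Q(\Lambda_j)=\infty$, which is alternative (2); otherwise every $\Lambda_j$ is an arithmetic progression of period $T_j=2\pi/\langle\alpha_j,\lv\rangle$ (with $\alpha_j$ the multidegree of $q_j$) by \Cref{lem: binoimal}, and I would check that if all the $T_j$ were pairwise commensurable — i.e. rational multiples of a common value — then, after clearing denominators, the finite union $\mu_{p,\lv}=\sum_j c_j\mu_{q_j,\lv}$ would share a common period, contradicting non-periodicity; hence $T_i/T_j\notin\Q$ for some $i,j$, which is alternative (1). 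I would then argue that each alternative forces the gap set to be infinite: in case (2), if there were only finitely many gap values $g_1,\dots,g_m$ then each $x_j$ would lie in the coset $x_0+\Z g_1+\dots+\Z g_m$, forcing $\dim_\Q(\Lambda)\le m+1<\infty$, a contradiction; in case (1), the non-cyclic subgroup $\Z T_i+\Z T_j$ is dense in $\R$, so $\Lambda\supseteq(a_i+T_i\Z)\cup(a_j+T_j\Z)$ contains pairs of points $p<q$ with $q-p$ arbitrarily small, and splitting $[p,q]$ along the finitely many points of $\Lambda$ inside it yields a gap of $\Lambda$ of size at most $q-p$, whence $\inf_j(x_{j+1}-x_j)=0$, impossible for a finite set of positive reals.

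Finally I would assemble the conclusion: by \Cref{lem:step4} every value $\Delta=x_{j+1}-x_j$ lies in $\mathrm{supp}(\nu_{p,\lv})$, so infinitely many gap values make $\mathrm{supp}(\nu_{p,\lv})$ infinite; but by \Cref{lem:step2} the set of atoms of $\nu_{p,\lv}$ is finite (atoms at positive $\Delta$ correspond to the finitely many pairs of irreducible factors related by a torus translation, plus possibly the atom at $0$) and by \Cref{lem:step3} there is no singular continuous part, so in $\nu_{p,\lv}=\nu_{ac}+\nu_{pp}$ the part $\nu_{pp}$ is finitely supported, and $\nu_{ac}=0$ would force $\mathrm{supp}(\nu_{p,\lv})$ finite — contradiction. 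I expect the main obstacle to be the case (1) step: making rigorous that incommensurable arithmetic subprogressions inside $\Lambda$ force arbitrarily small gaps, in particular handling the extra points of $\Lambda$ between a close pair of points (they only help, but this must be stated carefully); the periodicity check ``all $T_j$ commensurable $\Rightarrow\mu_{p,\lv}$ periodic'' and the multidegree bookkeeping through \Cref{lem: binoimal} are routine.
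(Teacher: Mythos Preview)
Your proposal is correct and follows the same approach the paper indicates: the paper states this corollary as an immediate consequence of \Cref{lem:step4} and \Cref{thm: irreduicible} (together with the already-established finiteness of atoms and absence of singular continuous part), without spelling out details, and you have supplied exactly those details. The dichotomy via the factorization, the $\dim_{\Q}$ bound in case~(2), and the ``arbitrarily small positive gaps'' argument in case~(1) are all the intended steps.
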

Once proven, these statements complete the proof of  \Cref{thm: existance of gaps distribution} and \Cref{thm: p dependence}.
	\end{proof}

\begin{proof}[Proof of \Cref{lem:step1}]
	Let $ f:\R\to\C $ be continuous, so the composition $ f\circ\tau_{\lv} $ is Riemann integrable, since $ \tau_{\lv} $ is bounded and continuous on an open full measure set $ \reg(\Sigma_{p}) $, see \Cref{lem: tau}. Therefore, the function
	\[h(\xv):=\frac{\mult(\xv)-1}{\mult(\xv)}f(0)+\frac{1}{\mult(\xv)}f(\tau_{\lv}(\xv)),\]
	is bounded and Riemann integrable. By \Cref{thm: ergodicity} we get, 
	\[\lim_{N\to\infty}\frac{1}{N}\sum_{j=1}^{N}h(x_{j}\lv)=  \frac{\langle\bd-\bd^{\mathrm{red}},\lv\rangle}{\langle\bd,\lv\rangle}f(0)+\frac{1}{(2\pi)^{n-1}\langle\bd,\lv\rangle}\int_{\Sigma_{p}}f(\tau_{\lv}(\xv))\ d\mm_{\lv}(\xv)\\
	= \int f d\nu_{p,\lv}.\]
	Whenever $ x_{j-1}<x_{j}=x_{j+1}=\ldots=x_{j+(m-1)}<x_{j+m} $,
	we have $ \tau_{\lv}(x_{i}\lv)=x_{j+m}-x_{j+(m-1)} $ and $ \mult(x_{i}\lv)=m $ for all $ i\in\{j,\ldots,j+m-1\} $, so 
	\[\sum_{i=j}^{j+m-1}h(x_{i}\lv)=(m-1)f(0)+f(x_{j+m}-x_{j+(m-1)})=\sum_{i=j}^{j+m-1}f(x_{i+1}-x_{i}).\]
	Therefore, given any $ N\in\N $ such that $ x_{N}<x_{N+1} $,
	\begin{equation}\label{eq: sum h equal sum f}
		\frac{1}{N}\sum_{j=1}^{N}h(x_{j}\lv)=\frac{1}{N}\sum_{j=1}^{N}f(x_{j+1}-x_{j}).
	\end{equation}
	The left-hand-side of \eqref{eq: sum h equal sum f} converges to $ \int f d\nu_{p,\lv} $ as $ N\to \infty $. The equality in \eqref{eq: sum h equal sum f} holds for infinitely many $ N $ values (those for which $ x_{N}<x_{N+1} $) whose spacing is bounded by the maximum multiplicity $ |\bd| $, so according to Lemma \ref{lem: partial sum},
	\begin{equation*}
		\lim_{N\to\infty}\frac{1}{N}\sum_{j=1}^{N}f(x_{j+1}-x_{j})=\int f d\nu_{p,\lv}.
	\end{equation*}
	Given any fixed $ \xv_{0}\in\R^{n}/2\pi\Z^{n} $, let $ q(\z):=p(\exp(i\xv_{0})\z) $ and let $ (t_{j})_{j\in\Z} $ denote the repeated ordered zeros of $ t\mapsto q(\exp(it\lv))=p(\exp(i(\xv_{0}+t\lv))) $, then according to \Cref{thm: ergodicity}, 
	\[\lim_{N\to\infty}\frac{1}{N}\sum_{j=1}^{N}f(x_{j+1}-x_{j})=\lim_{N\to\infty}\frac{1}{N}\sum_{j=1}^{N}f(t_{j+1}-t_{j}),\]
	namely $ \nu_{p,\lv}=\nu_{q,\lv} $.
\end{proof}	
		
\begin{proof}[Proof of \Cref{lem:step2}]
	By \Cref{defn: nu}, $ \nu_{p,\lv} $ has an atom at $ \Delta=0 $ if and only if the multidegrees of $p$ and $p^{\rm red}$ differ, which occurs if and only if $ p $ is not square free. 

	Suppose that $ \nu_{p,\lv} $ has an atom at $ \Delta>0$. Then, $ (\tau_{\lv})_{*}\mm_{\lv} $ has an atom at $ \Delta $, which means that the level set $ \tau_{\lv}^{-1}(\Delta) $ has positive measure $ \mm_{\lv}(\tau_{\lv}^{-1}(\Delta))>0 $. The set $ A:=\reg(\Sigma_{p})\cap T_{\lv}^{-1}(\reg(\Sigma_{p})) $ is an open subset of $\reg(\Sigma_{p}) $ of full $ \mm_{\lv} $ measure and $ \tau_{\lv} $ is real analytic on $ A $ by \Cref{lem: tau}. Since $ \mm_{\lv} $ is absolutly continuous with respect to the volume measure on $ A $, then $ A\cap\tau_{\lv}^{-1}(\Delta) $ has positive volume, and therefore $ \tau_{\lv} $ is identically $ \Delta $ on some open set $ U\subset A  $. By taking $ U $ sufficiently small, there are two (not necessarily distinct) irreducible factors of $ p $, say $ q_{1} $ and $ q_{2} $, such that $ q_{1}(\exp(i\xv))=0 $ and $ q_{2}(\exp(i(\xv+\Delta\lv)))=0 $ for all $ \xv\in U $. It follows that $ q_{1}(\z)=q_{2}(\exp(i\Delta\lv)\z) $ for all $ \z\in\C^{n} $ by \Cref{lem: dimension and singularity} part (3), since $ q_{1} $ and $ q_{2} $ are irreducible Lee-Yang polynomials. 
	
	Now, suppose that  $ q_{1}=q_{2} $, so $ q_{1}(\z)=q_{1}(\exp(i\Delta\lv)\z) $ for all $ \z\in\C^{n} $. In particular, if $ \Lambda $ is the zero set of $ x\mapsto q_{1}(\exp(ix\lv)) $, then for any $ x\in\Lambda $ we have $ x+\Delta\in\Lambda $, and as a result, $ x+j\Delta\in\Lambda $ for any $ j\in\N $. Since $ q_{1} $ is irreducible Lee-Yang polynomial and $ \lv\in\R_{+}^{n} $ has $ \Q $-linearly independent entries, then $ q_{1} $ must be binomial, by \Cref{thm: irreduicible}. \end{proof}

\begin{proof}[Proof of \Cref{lem:step3}]
It follows from \Cref{lem:step2} that $ \nu_{p,\lv} $ has finitely many atoms, say $ (t_{i})_{i=1}^{N} $, so that $ \nu_{p,\lv}=\sum_{j=1}^{n}c_{j}\delta_{t_{j}}+\rho_{ac} $, with $ \rho_{ac} $ being a continuous measure (no atoms). We now show that $ \rho_{ac} $ is absolutely continuous with respect to Lebesgue measure. Let $ A:=\reg(\Sigma_{p})\cap T_{\lv}^{-1}(\reg(\Sigma_{p})) $.
We use $A_{nc}$ to denote the union of the connected components of $A$ on which $\tau_{\lv}$ is not constant. 
Then $ \rho_{ac} $ is ($ c_{\tau} $ times) the push-forward of $ \mm_{\lv} $ by the restriction of $ \tau_{\lv} $ to  $ A_{nc}$, using that $ A $ has full measure. It is left to show that for any set $ E\subset\R $ of Lebesgue measure zero, the set $ A_{nc}\cap\tau_{\lv}^{-1}(E) $ has zero $ \mm_{\lv} $ measure, or equivalently, due to $\text{\Cref{lem: BG measure}}$, zero volume in $ \reg(\Sigma_{p}) $.

Since $ \tau_{\lv} $ is real analytic on $ A $ and is not constant on any open set in $ A_{nc}$, then the set $ \Omega=\set{\xv\in A_{nc}\ :\ \nabla\tau_{\lv}(\xv)\ne 0  } $ is open in $A_{nc}$ and its complement in $ A_{nc} $ has zero volume. By the definition of $ \Omega $, $ \tau_{\lv} $ has no critical points in $ \Omega $, which means that for any compact connected $ K\subset\Omega $, the image of $ \tau_{\lv} $ over $ K $ is an interval $ [a,b] $ and the level sets $ K\cap\tau_{\lv}^{-1}(t) $ for $ t\in[a,b] $ are homotopic to one another. In particular, if we let $ \mathrm{area}(K,t)=\sigma_{n-2}(K\cap\tau_{\lv}^{-1}(t)) $ denote the $(n-2)$-dimensional volume of the level set, induced by the volume form $ d\sigma $ on $ \reg(\Sigma_{p}) $, then $ t\mapsto \mathrm{area}(K,t) $ is continuous in $ t\in[a,b] $, and so it is bounded by some constant. Let $ C $ be the maximum of $ \mathrm{area}(K,t) $ for $ t\in[a,b] $, and $ |\nabla \tau_{\lv}(\xv)|^{-1} $ for $ \xv\in K $. Then, 
\[\int_{K\cap\tau_{\lv}^{-1}(E)}d\sigma\le C\int_{K\cap\tau_{\lv}^{-1}(E)}\|\nabla\tau_{\lv}(\xv)\|d\sigma(\xv)=M\int_{t\in E}\mathrm{area}(K,t)dt\le C^2\int_{t\in E}dt=0,\]
using the co-area formula (or Disintegration Theorem) in the middle equality. It follows that $ \mm_{\lv}(K\cap\tau_{\lv}^{-1}(E))=0 $ for any compact connected $ K\subset\Omega $, hence $ \rho_{ac}(E)\propto\mm_{\lv}(\Omega\cap\tau_{\lv}^{-1}(E))=0 $. As it holds for any $ E $ of zero Lebesgue measure, $ \rho_{ac} $ is absolutely continuous.
\end{proof}
 
\begin{proof}[Proof of \Cref{lem:step4}]
Let $ \Delta=x_{j+1}-x_{j} $ for some arbitrary fixed choice of $ j $, let $ I\subset \R $ be any open interval with $ \Delta\in I $, and consider the open set $U:=\set{\xv\in\reg(\Sigma_{p}):\tau_{\lv}(\xv)\in I} $. It is enough to show that $ U\ne\emptyset $ to conclude that $ \mm_{\lv}(U)>0 $, by \Cref{lem: BG measure}, and so
	\[\nu_{p,\lv}(I)\ge c_{\tau}\mm_{\lv}(U)>0. \]
	Consider two cases, according to whether $ \Delta>0 $ or $ \Delta=0 $.
	
	($ \Delta>0 $) Suppose $ x_{j+1}>x_{j} $ and let $ \xv=x_{j}\lv\mod{2\pi} $, so $ \tau_{\lv}(\xv)=\Delta $. If $ \xv\in\reg(\Sigma_{p}) $, then $ \xv\in U $. Otherwise, if $ \xv\in\sing(\Sigma_{p}) $, then $ \Delta\in\overline{\{\tau_{\lv}(\xv):\xv\in\reg(\Sigma_{p})\}} $, by \Cref{lem: tau}, which means that $ U\ne\emptyset $.

	($ \Delta=0 $) Suppose $ x_{j+1}=x_{j} $. If $ \sing(\Sigma_{p})\ne\emptyset $ then $ \Delta=\inf_{\xv\in\reg(\Sigma_{p})}\tau_{\lv}(\xv)$ by \Cref{lem: tau}, and so $ U\ne\emptyset $. Otherwise, if $ \sing(\Sigma_{p})=\emptyset $, having $ x_{j+1}=x_{j} $ means that $ p $ has a square factor, and so
	\[\nu_{p,\lv}([\Delta-\epsilon,\Delta+\epsilon])\ge\nu_{p,\lv}(\{0\})=c_{0}>0. \]
	\end{proof}

\begin{proof}[Proof of \Cref{cor: atoms}] 
(2) and (3) were already discussed in \Cref{thm: irreduicible} and \Cref{lem:step2}, respectively. 
 For (1), if $ p $ is irreducible and not binomial, then its gap distribution cannot have any atoms by \Cref{thm: p dependence} part (3), and so it is absolutely continuous by \Cref{thm: existance of gaps distribution} part (1). Part (4) is a counting argument. Suppose that $ p $ has $ N+M $ distinct irreducible factors, $ M $ of which are binomial. There can be three types of atoms according to \Cref{lem:step2},  
 	\begin{enumerate}
		\item[(a)] an atom at zero.
	\item[(b)] an atom at positive $ \Delta>0 $ coming from a pair of distinct non-binomial factors related by $ q_{i}(\z)=q_{j}(\exp(i\Delta\lv)\z) $. 
	\item[(c)] an atom at a positive $ \Delta>0 $, coming from a pair of (not necessarily distinct) binomial factors related by $ q_{i}(\z)=q_{j}(\exp(i\Delta\lv)\z) $. 
	\end{enumerate} 
Notice that if $ q_{i} $ is binomial and $ q_{j} $ is non-binomial, then they cannot satisfy a relation of the form $ q_{i}(\z)=q_{j}(\exp(i\Delta\lv)\z) $, as such a relation means that the torus zero set $ \Sigma_{q_{i}} $, which is a torus, is a translation of the torus zero set $ \Sigma_{q_{j}} $, which is not a torus. It is left to bound the number of atoms of each type. There can be at most one (a) atom. 

For atoms of type (b), notice that a pair of non-binoimal factors cannot satisfy the relation $ q_{i}(\z)=q_{j}(\exp(i\Delta\lv)\z) $ for two different values of $ \Delta>0 $, say $ \Delta_{1}\ne\Delta_{2} $. Otherwise, we get $ q_{j}(\z)=q_{j}(\exp(i(\Delta_{1}-\Delta_{2})\lv)\z) $ in contradiction to $ q_{j} $ being non-binomial. Therefore, there are at most $ {N \choose 2}$ atoms of type (b), one for each possible pair.

To bound the number of type (c) atoms, consider a pair of (not necessarily distinct) binomial factors related by $ q_{i}(\z)=q_{j}(\exp(i\Delta\lv)\z) $. In particular, $q_{i}$ and $q_{j} $ share the same multi-degree, say $ \alpha $. According to \Cref{lem: binoimal}, the zero sets of $ f_{i}(x)=q_{i}(\exp(ix\lv)) $ and $ f_{j}(x)=q_{j}(\exp(ix\lv)) $ are arithmetic progressions of the same step size, say $ \Lambda_{i}=\{a+\frac{2\pi}{\langle \alpha,\lv\rangle}k\}_{k\in\Z} $ and $ \Lambda_{j}=\{a+\Delta+\frac{2\pi}{\langle \alpha,\lv\rangle}k\}_{k\in\Z} $ for some $ a\in\R $. Suppose that $ p $ has exactly $ M_{\alpha} $ binomial factors with multi-degree $ \alpha $, and let $ \Lambda_{\alpha} $ denote the union of their arithmetic progressions defined above. Then $ \Lambda_{\alpha} $ is $ \frac{2\pi}{\langle \alpha,\lv\rangle} $ periodic with $ M_{\alpha} $ points in a period, and therefore at most $ M_{\alpha} $ gap values between consecutive points. By partitioning the $ M $ binomial factors according to their multi-degrees we see that there are at most $ M $ atoms of type (c). To conclude, there are at most $ {N \choose 2}+M+1$ atoms
\end{proof}
\begin{proof}[Proof of \Cref{thm: l dependenece }]
By \Cref{lem:step1}, if $ p\in\A_{\bd}(n) $ and $ \lv\in\R_{+}^{n} $ has $ \Q $-linearly independent entries, then $\rho_{p,\lv}=\nu_{p,\lv}  $. It is left to show that $ \nu_{p,\lv} $ is weakly continuous in $ \lv $, namely, that for any fixed continuous $ f:\R\to\C $, the following integral is continuous in $\lv\in \R_{+}^{n} $,  
	\[\int f d\nu_{p,\lv}:=c_{0}f(0)+c_{\tau}\int_{\Sigma_{p}}f(\tau_{\lv}(\xv))\ d\mm_{\lv}(\xv).\]
	The weights $ c_{0} $ and $ c_{\tau} $, given in \Cref{defn: nu}, are continuous in $\lv\in \R_{+}^{n} $, and the remaining integral can be written as 
	\[\int_{\Sigma_{p}}f(\tau_{\lv}(\xv))\ d\mm_{\lv}(\xv)=\int_{\reg(\Sigma_{p})}f(\tau_{\lv}(\xv))\ d\mm_{\lv}(\xv)=\sum_{j=1}^{n}\lv_{j}\int_{\reg(\Sigma_{p})}f(\tau_{\lv}(\xv))\ d\mm_{e_{j}}(\xv),\]
	using that $ \mm_{\lv}(\sing(\Sigma_{p}))=0 $ in the first equality, and the linearity of $ \mm_{\lv}$ in $ 
	\lv $, by \Cref{lem: BG measure}, in the second equality. The integral $ \int_{\reg(\Sigma_{p})}f(\tau_{\lv}(\xv))\ d\mm_{e_{j}}(\xv) $ is continuous in $ \lv $ because $(\xv,\lv)\mapsto f(\tau_{\lv}(\xv)) $ is continuous over $ \reg(\Sigma_{p})\times\R_{+}^{n} $, by continuity of $ f $ and Lemma \ref{lem: tau}. 
\end{proof}
Let us now prove Theorem \ref{thm: the l to 1 case}.
\begin{proof}[Proof of Theorem \ref{thm: the l to 1 case}]
	Fix $ p\in \A_{\bd}(n) $, and for any $ \xv\in\R^{n} $ let $ p_{\xv}\in\A_{|\bd|}(1) $ be the univariate polynomial $ p_{\xv}(s)=p(se^{ix_{1}},se^{ix_{2}},\ldots,se^{ix_{n}}) $ whose degree is $ |\bd| $ and its roots lie on the unit circle. Let $ \theta_{j}:\R^{n}\to\R $, for $ j=1,2\ldots,|\bd| $, be the continuous phase functions given in \Cref{prop: phase functions}, so that $ (e^{i\theta_{j}(\xv)})_{j=1}^{|\bd|} $ are the roots of $ p_{\xv} $ numbered (counter-clockwise) increasingly including multiplicity, and let $ \theta_{|\bd|+1}=\theta_{1}+2\pi $. 
	We need to prove that for any continuous $ f:\R\to\C $,
	\[\int f d\nu_{p,\textbf{1}}=\frac{1}{(2\pi)^{n}}\int_{\xv\in[0,2\pi]^{n}}\left[\frac{1}{|\bd|}\sum_{j=1}^{|\bd|}f(\theta_{j+1}(\xv)-\theta_{j}(\xv))\right]d\xv,\]
	where $ \textbf{1}=(1,1,\ldots,1) $. Fix a continuous $ f:\R\to\C $ and define 
	\[h(\xv):=\frac{\mult(\xv)-1}{\mult(\xv)}f(0)+\frac{1}{\mult(\xv)}f(\tau_{\lv}(\xv)).\]
	Consider the layers $ \Sigma_{p,j} $ and their parameterizations $ \varphi_{j} $ as defined in \Cref{prop: Layer structure}, so that the multiplicity $ m(\xv) $ counts the number of layers containing $ \xv $, so that 
		\[\int_{\Sigma_{p}}\mult(\xv)h(\xv)d\mm_{\textbf{1}}(\xv)=\sum_{j=1}^{|\bd|}\int_{\Sigma_{p,j}}\mult(\xv)h(\xv)d\mm_{\textbf{1}}(\xv)=\sum_{j=1}^{|\bd|}\int_{(0,2\pi]^{n-1}}h(\varphi_{j}(\yv))d\yv,\]
		using \eqref{eq: dm3} from \Cref{lem: BG measure} in the last equality.   
	As in the proof of \Cref{lem:step1}, this gives
	\[\int f d\nu_{p,\textbf{1}}=  \frac{1}{(2\pi)^{n-1}|\bd|}\int_{\Sigma_{p}}\mult(\xv)h(\xv)\ d\mm_{\textbf{1}}(\xv)= \frac{1}{(2\pi)^{n-1}|\bd|}\sum_{j=1}^{|\bd|}\int_{(0,2\pi]^{n-1}}h(\varphi_{j}(\yv))d\yv.\]
	As seen in the proof of \Cref{lem: tau}, if $ \theta_{j+1}(\varphi_{j}(\yv))>\theta_{j}(\varphi_{j}(\yv)) $ then 
	$ \tau_{\textbf{1}}(\varphi_{j}(\yv)) $ is equal to the unique $ t\in\R $ such that $ \theta_{j+1}(\varphi_{j}(\yv)+t\textbf{1})=\theta_{j}(\varphi_{j}(\yv)) $. In such case, using \Cref{prop: phase functions} part (1) and the definition of $ \varphi_{j} $, we get $ \tau_{\textbf{1}}(\varphi_{j}(\yv))=\theta_{j+1}(\yv,0)-\theta_{j}(\yv,0) $. The number of $ j $'s for which $ \theta_{j+1}(\yv,0)=\theta_{j}(\yv,0) $ is exactly $ \sum_{j=1}^{|\bd|}\left(\mult(\varphi_{j}(\yv))-1\right)$, so 
	\[ \sum_{j=1}^{|\bd|}h(\tau_{\lv}(\varphi_{j}(\yv)))=\sum_{j=1}^{|\bd|}f\left(\theta_{j+1}(\yv,0)-\theta_{j}(\yv,0)\right),\]
	for every $ \yv $, and integrating gives, 
	\begin{equation}\label{eq: partial integral}
		\int f d\nu_{p,\textbf{1}}=\frac{1}{(2\pi)^{n-1}|\bd|}\int_{(0,2\pi]^{n-1}}\left[\sum_{j=1}^{|\bd|} f\left(\theta_{j+1}(\yv,0)-\theta_{j}(\yv,0)\right)dy\right]. 
	\end{equation}
	 Let $ g(\xv):=\sum_{j=1}^{|\bd|}f(\theta_{j+1}(\xv)-\theta_{j}(\xv)) $, and notice that $ g $ is continuous, satisfies $ g((\yv,0)+t\textbf{1})=g(\yv,0) $ by \Cref{prop: phase functions} part (1), and is $ 2\pi $ periodic by \Cref{prop: phase functions} part (4), so  
	\[\int_{\yv\in(0,2\pi]^{n-1}}g(\yv,0)d\yv=\frac{1}{2\pi}\int_{t=0}^{2\pi}\int_{\yv\in(0,2\pi]^{n-1}}g((\yv,0)+t\textbf{1})d\yv dt=\int_{\xv\in(0,2\pi]^{n}}g(\xv)d\xv.\]
	The needed result follows 
	\[\int f d\nu_{p,\textbf{1}}=\frac{1}{(2\pi)^{n-1}|\bd|}\int_{\xv\in(0,2\pi]^{n}}\left[\sum_{j=1}^{|\bd|} f\left(\theta_{j+1}(\xv)-\theta_{j}(\xv)\right)d\xv\right]\]   
\end{proof}




\appendix
\section{}\label{sec: appendix A}
The next lemma is being used throughout the paper.   
\begin{lem}\label{lem: partial sum}
	Let $ (a_{n})_{n\in\N} $ be a bounded sequence $ |a_{n}|<M $ and let $ (s_{n})_{n\in\N}  $ be the sequence of partial averages, $ s_{N}:=\frac{1}{N}\sum_{n=1}^{N}a_{n} $. Suppose that there exists a converging subsequence $ \lim_{j\to\infty}s_{n_{j}}= L $, with a uniform spacing bound $ n_{j+1}-n_{j}<M' $. Then, $ \lim_{n\to\infty}s_{n}= L $.
\end{lem}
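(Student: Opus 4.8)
The plan is to compare each partial average $s_N$ with the nearest earlier term $s_{n_j}$ of the convergent subsequence. Given $N$ large, let $j = j(N)$ be the unique index with $n_j \le N < n_{j+1}$; note that $j(N) \to \infty$ as $N \to \infty$ (since $n_j \le N$), so subsequential convergence will apply, and the spacing hypothesis gives $N - n_j < M'$. Starting from the telescoping identity $N s_N = n_j s_{n_j} + \sum_{n=n_j+1}^{N} a_n$, I would write
\[
s_N - L = \frac{n_j}{N}\bigl(s_{n_j} - L\bigr) - \frac{N - n_j}{N}\,L + \frac{1}{N}\sum_{n=n_j+1}^{N} a_n,
\]
and estimate the three terms separately.

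The first term is bounded by $|s_{n_j} - L|$, which is small once $j$ (equivalently $N$) is large, by the assumed convergence $s_{n_j} \to L$. For the remaining two, I would use that the block $\sum_{n=n_j+1}^N a_n$ has fewer than $M'$ summands, each of modulus $< M$, so it is bounded by $M' M$, while $N - n_j < M'$; also $|L| \le M$ because $|s_N| \le \tfrac1N\sum_{n\le N}|a_n| < M$ for every $N$, hence along the subsequence. Thus the last two terms together are at most $\tfrac{M'|L|}{N} + \tfrac{M'M}{N} \le \tfrac{2MM'}{N}$, which tends to $0$ as $N \to \infty$ uniformly in $j$. Combining: given $\varepsilon > 0$, choose $J$ with $|s_{n_j} - L| < \varepsilon$ for all $j \ge J$, then $N_0$ with $\tfrac{2MM'}{N} < \varepsilon$ for $N \ge N_0$; every $N \ge \max(n_J, N_0)$ then satisfies $|s_N - L| < 2\varepsilon$, and letting $\varepsilon \to 0$ finishes the proof.

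I do not expect any genuine obstacle here — the argument is a routine "block average" estimate. The only points requiring a word of care are that $j(N) \to \infty$ so that the subsequential hypothesis is usable, and that the uniform spacing bound $n_{j+1} - n_j < M'$ is what makes the correction block $\sum_{n=n_j+1}^N a_n$ negligible after dividing by $N$; both are immediate from the hypotheses.
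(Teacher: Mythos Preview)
Your proof is correct and follows essentially the same approach as the paper: both use the telescoping identity $N s_N = n_j s_{n_j} + \sum_{n=n_j+1}^N a_n$ and bound the block correction by $M'M/N$ (or $M'M/n_j$) via the uniform spacing hypothesis. Your version is more explicit in isolating and bounding each term of $s_N - L$, but the content is the same.
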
 
\begin{proof}
	Given any $ n_{j}\le n'\le n_{j+1} $, the uniform spacing bound gives $ \frac{n_{j}}{n'}\to 0 $ as $j\to\infty$, and we have 
	\[|s_{n'}-\frac{n_{j}}{n'}s_{n_{j}}|=\frac{|a_{n_{j}+1}+a_{n_{j}+2}+\ldots+a_{n'}|}{n'} \le\frac{M'M}{n_{j}}\to 0 \ \text{ as } j\to\infty.
	\]
\end{proof}

\bibliographystyle{plain}
\bibliography{ref,QGbib}

\end{document}